  \newcommand{\imgs}{}
  \newif\ifhthree
  \newif\ifhfive
\newtheorem{theorem}{Theorem}[section]
\newtheorem{lemma}[theorem]{Lemma}
\newtheorem{corollary}[theorem]{Corollary}
\newtheorem{observation}[theorem]{Observation}
\newcommand{\B}{\mathcal{B}}
\renewcommand{\to}{\ensuremath{\longrightarrow}}
\newcommand{\nto}{\ensuremath{\longarrownot\to}}
\newcommand{\toeq}{\raisebox{-2pt}{\makebox[0pt][l]{\ensuremath{\overset{=}{\phantom\to}}}}{\to}}
\newcommand{\M}{\mathcal{M}}
\newcommand{\N}{\mathcal{N}}
\newcommand{\MP}{\mathcal{P}}
\newcommand{\Tuple}[1]{\ensuremath{\langle#1\rangle}}
\newcommand{\Bip}[2]{\Tuple{#1,#2}}
\newcommand{\Cat}{\ensuremath{\bullet}}
\newcommand{\RR}{\hat{R}}
\newcommand{\LL}{\hat{L}}
\newcommand{\UU}{\hat{U}}
\newcommand{\epTest}{\operatorname{ep}^+}
\newcommand{\epAdd}{\operatorname{ep}^+}
\newcommand{\epRemove}{\operatorname{ep}^-}
\newcommand{\DS}[1]{\tilde{#1}}
\newcommand{\itemref}[3]{\ensuremath{\rm (\hyperref[def:#2-#3]{#1}_{\ref{def:#2-#3}})}}
\title{A certifying and dynamic algorithm for the recognition of proper circular-arc graphs}
\author{%
  Francisco J.\ Soulignac~\thanks{\small mail: \tt francisco.soulignac@unq.edu.ar}%
}
\date{CONICET and Departamento de Ciencia y Tecnología, Universidad Nacional de Quilmes, 
Buenos Aires, Argentina.}
\begin{document}
\maketitle

\begin{abstract}
  We present a dynamic algorithm for the recognition of proper circular-arc (PCA) graphs, that supports the insertion and removal of vertices (together with its incident edges).  The main feature of the algorithm is that it outputs a minimally non-PCA induced subgraph when the insertion of a vertex fails.  Each operation cost $O(\log n + d)$ time, where $n$ is the number vertices and $d$ is the degree of the modified vertex.  When removals are disallowed, each insertion is processed in $O(d)$ time.  The algorithm also provides two constant-time operations to query if the dynamic graph is proper Helly (PHCA) or proper interval (PIG).  When the dynamic graph is not PHCA (resp.\ PIG), a minimally non-PHCA (resp.\ non-PIG) induced subgraph is obtained.  
  
  \vspace*{.2\baselineskip} {\bf Keywords:} dynamic representation, certifying algorithm, proper circular-arc graphs, proper interval graph, proper Helly circular-arc graphs.
\end{abstract}

\section{Introduction}
\label{sec:introduction}

A \emph{circular-arc (CA) model} is a family of arcs of a circle.  A graph $G$ \emph{admits} a CA model $\M$ when its vertices are in a one-to-one correspondence with the arcs of $\M$ in such a way that two vertices of $G$ are adjacent if and only if their corresponding arcs have a nonempty intersection.  Those graphs that admit a CA model are called \emph{circular-arc (CA) graphs}.  Proper circular-arc graphs and proper interval graphs form two of the most studied subclasses of CA graphs.  A CA model $\M$ is \emph{proper} when no arc of $\M$ is properly contained in another arc of $\M$, while $\M$ is an \emph{interval (IG) model} when the union of its arcs does not cover the entire circle.  A graph is a \emph{proper circular-arc (PCA) graph} when it admits a proper CA model, while it is a \emph{proper interval (PIG) graph} when it admits a proper IG model.

The \emph{(static) recognition problem} for PCA (resp.\ PIG) graphs asks if an input graph $G$ is PCA (resp.\ PIG).  A recognition algorithm that outputs YES or NO is not that useful in practice for two reasons.  First, there are many applications in which PIG and PCA models of $G$ are looked for, while several algorithms work more efficiently when a PIG or PCA model of $G$ is available~\cite{Golumbic2004,HsuTsaiIPL1991,PirlotVincke1997}.  Second, and not less important, a buggy implementation can lead to incorrect answers that a user cannot corroborate.  A \emph{certifying} recognition algorithm yields a \emph{witness} $W$ proving that the output is correct for $G$.  Besides proving correctness, two additional properties are required for $W$~\cite{McConnellMehlhornNaeherSchweitzerCSR2011}.  First, there exists a \emph{checker} with a ``trivial'' implementation that, given $G$ and $W$, \emph{authenticates} that the output is correct for $G$.  Second, there is a simple proof that the existence of $W$ implies the output on $G$.  With these two ingredients, a user can test that the output for $G$ is correct, even when the implemented algorithm has bugs~\cite{McConnellMehlhornNaeherSchweitzerCSR2011}.

Witnesses are classified as \emph{positive} or \emph{negative} according to the output given for $G$.  The former prove that $G$ is PCA (YES output), while the latter prove that $G$ is not PCA (NO output).  A priori, there are many certifying algorithms for the recognition of PCA graphs, as we can chose different kinds of witnesses.  Although all of them can be used to authenticate the output, they need not be equally useful for the user.  This statement is obvious in those application where the goal is to produce a PCA model of $G$, but it also holds for those applications on PCA graphs that require a specific kind of input.  Thus, a positive witness with the required interface is better than one that has to be further processed.  Similarly, a negative witness should highlight the reason why $G$ is not PCA.  Arguably, PCA models are the most useful positive witnesses, while \emph{minimally forbidden subgraphs} are the most useful negative witnesses.

Unfortunately, the positive witnesses that we use in this article are not PCA models, but \emph{round representations}~\cite{DengHellHuangSJC1996}.  Roughly speaking, a round (resp.\ straight) representation $\Phi$ is like a PCA (resp.\ PIG) model in which the actual position of the arcs is missing.  Instead, we know the order of the arcs and which are the leftmost and rightmost arcs intersected by a given arc.  Fortunately, $\Phi$ is enough for all those applications in which knowing the actual position of an arc is not required, e.g.~\cite{HsuTsaiIPL1991}.  Also, it is trivial to obtain a PCA (PIG) model $\M$ \emph{associated} to $\Phi$ in $O(n)$ time, where $n$ is the number of arcs of $\M$; by \emph{associated}, we mean that the arcs of $\Phi$ and $\M$ appear in the same order.

In this article we consider a \emph{dynamic} version of the recognition problem for PCA graphs.  The goal is to keep a round representation $\Phi$ of a graph $G$ while some operations are applied.  We allow two kinds of updates: the insertion of a new vertex (and the edges incident to it), and the removal of an existing vertex (and its incident edges).  Those insertions that yield non-PCA graphs have no effects on $\Phi$; instead, an error message is obtained.  Also, the algorithm must answer if $G$ is PIG or not and, if affirmative, then $\Phi$ must be a straight representation.  Consequently, $\Phi$ can be immediately applied on algorithms that work on PIG graphs.  When efficiency does not matter, the dynamic problem is solved by applying any static recognition algorithm for each update.  The idea, however, is to reduce the complexity of the operations.

To motivate the development of dynamic algorithms for PIG graphs, Hell et al.~\cite{HellShamirSharanSJC2001} describe an application to \emph{physical mapping} of DNA.  The problem is to find a straight representation $\Phi$ of an input graph $G$ that encodes some biological data, or to prove that no such model exists.  As time goes by, further experiments may prove that the initial biological data is not accurate.  The resulting changes in the data correspond to the insertion and removal of vertices and edges from $G$.  Instead of building a new straight representation from scratch, the goal is to ``fix'' $\Phi$ efficiently.  

The concerns about the reliability and usefulness of the outputs, that we had for the static recognition algorithms, hold also for the dynamic ones.  The existence of a round representation $\Phi$ proves that $G$ is a PCA graph, thus $\Phi$ can be taken as the positive witness.  However, when the algorithm rejects an update claiming that it leads to a non-PCA graph, can we trust this claim blindly?  And, even if we do trust, we still want a negative witness to check if the input data is incorrect.  This is particularly true for the above application to physical mapping of DNA, since we expect the experiments to be inaccurate at some point, and we cannot assume the erroneous data yields a PCA graph.  A \emph{certifying} and dynamic algorithm for the recognition of PCA graphs outputs a minimally forbidden subgraph when some update is rejected.

Authenticating that a round representation $\Phi$ encodes $G$ or that $F$ is minimally forbidden subgraph of $G$ are trivial tasks, as desired.  However, the time required for these authentications is linear on the size of $G$.  Thus, we cannot expect the user to authenticate the witnesses after each operation, as doing so throws out the efficiency benefits of the dynamic algorithm.  The difference between static and dynamic algorithms is that the latter are not, strictly speaking, algorithms.  Instead, they are \emph{abstract data types} that keep a certain \emph{data structure} that reacts to different operations.  Thus, $\Phi$ is not given as output when an insertion or removal is applied and, so, $\Phi$ should be authenticated against $G$ only occasionally.  

We can conceive three types of checkers, which we call \emph{static}, \emph{dynamic}, and \emph{monitors}.  Static checkers are static algorithms that authenticate the witness against the static graph $G$.  Dynamic checkers are also static algorithms, but they check one update of the dynamic algorithm against the round representation $\Phi$.   Finally, monitors are dynamic algorithms that ensure the correctness of data structure $\DS{\Phi}$ implementing $\Phi$~\cite{BrightSullivan1995,McConnellMehlhornNaeherSchweitzerCSR2011}.  Thus, a monitor is an abstract data type that sits between the user and the recognition algorithm.  The user interacts with the monitor as if it were a round representation.  In turn, the monitor forwards each operation to $\DS{\Phi}$, while it checks the correct behavior of $\DS{\Phi}$ and the generated output.  In case of an error, the monitor raises an exception.  The main difference between checkers and monitors is that the latter may require access to operations that the are restricted to the user.  Checkers are usually simpler, as they have no knowledge of $\DS{\Phi}$, and can be implemented even when the source code of the recognition algorithm is unavailable.  However, the same reason could make them less efficient.  Thus, checkers and monitors are complementary tools.

\paragraph{Previous work.}
Linear-time algorithms for generating PIG models of graphs are known since more than twenty years, e.g.~\cite{CorneilKimNatarajanOlariuSpragueIPL1995,DengHellHuangSJC1996,HerreraMeidanisPicininIPL1995}.  While dealing with the correctness of their algorithm, Deng et al.~\cite{DengHellHuangSJC1996} prove that a minimally forbidden subgraph $F$ must exist when the algorithm fails.  Although it is not discussed in~\cite{DengHellHuangSJC1996}, $F$ can be obtained in $O(n)$ time.  A second way to find a $F$ is to apply the dynamic recognition algorithm by Hell et al.~\cite{HellShamirSharanSJC2001}.  In a first phase, the algorithm finds a set of vertices $V$ such that the subgraph $G[V]$ induced by $V$ is PIG, and $G[V \cup \{v\}]$ is not PIG.  In a second phase, the algorithm transforms $G[V \cup \{v\}]$ into $F$ by removing vertices from $V$.  This strategy, which is discussed in~\cite{SoulignacA2015} for PCA graphs, costs linear time for PIG graphs.  A similar approach using only incremental graphs is discussed in~\cite{McConnellMehlhornNaeherSchweitzerCSR2011} for planar graphs and in~\cite{KratschMcConnellMehlhornSpinradSJC2006} for interval graphs.  Arguably, the simpler linear time algorithm to find $F$ was presented in~2004 by Hell and Huang~\cite{HellHuangSJDM2004/05}, who extend the LexBFS algorithm by Corneil~\cite{CorneilDAM2004} to exhibit such a forbidden when the input is not PIG.  Meister~\cite{MeisterDAM2005} also applies LexBFS to find a negative witness, but this witness is not always a minimally forbidden subgraph.

The problem of building a PCA model of a graph $G$ is also well settled, but fewer algorithms are known~\cite{DengHellHuangSJC1996,KaplanNussbaumDAM2009,SoulignacA2015}.  Regarding the certification problem, Hell and Huang~\cite{HellHuangSJDM2004/05} show how to obtain a minimally forbidden subgraph $F$ when $G$ is co-bipartite and not PCA.  The first algorithm that shows how to obtain a negative witness when $G$ is not co-bipartite was presented in 2009 by Kaplan and Nussbaum~\cite{KaplanNussbaumDAM2009}.  Unfortunately, their witnesses are not forbidden subgraphs, but odd cycles of incompatibility graphs.  Up to this date, the only algorithm that is able to compute $F$ in linear time for every PCA graphs was given by the author~\cite{SoulignacA2015} in 2015.  The idea is to apply a dynamic recognition algorithm in two phases as discussed above.

Lin and Szwarcfiter~\cite{LinSzwarcfiterDM2009} survey different algorithms for the recognition of other classes of circular-arc graphs, while McConnell et al.~\cite{McConnellMehlhornNaeherSchweitzerCSR2011} discuss a theoretical framework for certifying algorithms and explain why they are preferred over non-certifying ones.  McConnell et al.\ survey certifying recognition algorithms for other classes of graphs as well.

In the last years, dynamic recognition algorithms for many classes of graphs were developed~\cite{CrespellePaulDAM2006,Crespelle2010,CrespellePaulA2010,GioanPaulDAM2012,HeggernesManciniDAM2009,IbarraATA2008,Ibarra2009,IbarraA2010,NikolopoulosPaliosPapadopoulosTCS2012,ShamirSharanDAM2004,TedderCorneil2007}.  Among these examples, the only one providing negative witnesses is the one by Crespelle and Paul~\cite{CrespellePaulDAM2006} for the recognition of directed cographs.  We remark that the minimally forbidden subgraphs for directed cographs have $O(1)$ vertices, thus they are generated when required.  On the other hand, a minimally forbidden subgraph for PCA graphs can have $\Theta(n)$ vertices with degree $O(1)$.  Thus, the computation of such a forbidden is dynamic.

\paragraph{Our results.}
We conceive our manuscript as the forth in a series of articles.  The series begins in 1996 with the recognition algorithm for PCA graphs developed by Deng et al.~\cite{DengHellHuangSJC1996}.  As part of their algorithm, Deng et al.\ devise a vertex-only incremental algorithm for the recognition of connected PIG graphs that runs in $O(d)$ time per vertex insertion, where $d$ is the degree of the inserted vertex.  The data structure that supports their recognition algorithm is a straight representation.  The second article of the series dates back to 2001, where Hell et al.~\cite{HellShamirSharanSJC2001} extend the algorithm by Deng et al.\ to solve the dynamic recognition of PIG graphs.  Their algorithm runs in $O(d + \epTest)$ time per vertex insertion, $O(d + \epRemove)$ time per vertex removal, $O(\epTest)$ time per edge insertion, and $O(\epRemove)$ time per edge removal.  The values of $\epTest$ and $\epRemove$ depend on the data structure employed to implement the straight representations, as depicted in Table~\ref{tab:ep}.  Note that the algorithm is optimal if only insertions are allowed, while it is almost optimal when both operations are allowed.  Indeed, Hell et al.\ prove that at least $\Omega(\log n/(\log\log n + \log b))$ amortized time is required by the fully dynamic algorithm in the cell probe model of computation with word size $b$.  Finally, in 2015, the author~\cite{SoulignacA2015} extended the algorithm by Hell el al.\ for the recognition of PCA graphs.  The algorithm works with round representations and has the same complexity as the one by Hell et al.  Moreover, the round representation is straight when the input graph is PIG, thus the algorithm solves the dynamic recognition of PIG graphs as well.

\begin{table}
  \centering
  \begin{tabular}{l|c|c}
    Type of data structure & $\epTest$   & $\epRemove$ \\\hline
    Incremental            & $O(1)$      & $O(n)$      \\
    Decremental            & $O(n)$      & $O(1)$      \\
    Fully dynamic          & $O(\log n)$ & $O(\log n)$ 
  \end{tabular}

  \caption{The actual values of $\epTest$ and $\epRemove$ according to the data structure employed}\label{tab:ep}
\end{table}

In this article we further extend the algorithm in~\cite{SoulignacA2015} to provide a certifying and dynamic algorithm for the recognition of PCA graphs as discussed above.  The algorithm is restricted to the insertion and removal of vertices, and we ignore the problem for edge operations.  Specifically, the algorithm implements a round representation $\Phi$ of the input graph $G$, and it yields a minimally forbidden subgraph when a vertex insertion fails.  Our algorithm is as efficient as the one by Hell et al., as it handles the insertion of a new vertex $v$ in $O(d + \epTest)$ time, while the removal of $v$ costs $O(d + \epRemove)$ time.  The user can also ask, at any point, if $G$ is a PIG graph; this query costs $O(1)$ time.  If affirmative, then $\Phi$ is a straight representation of $G$.  Otherwise, a minimally forbidden subgraph is obtained.  

We remark that when the insertion of a vertex $v$ with $O(1)$ neighbors fails, the minimally forbidden subgraph $F$ can be of size $\Theta(n)$.  However, only $O(\epTest)$ time is available to generate $F$.  Thus, besides keeping $\Phi$, the dynamic algorithm stores a \emph{partial} forbidden subgraph $\MP(\Phi)$.  When an insertion fails, $\MP(\Phi)$ is extended with $v$ to yield $F$.  This scheme is similar to the one used for the positive witness.  The difference is that $\MP(\Phi)$ is not accessible by the user, who observes the dynamic algorithm as an implementation of $\Phi$.  As is the case with $\Phi$, the output $F$ must provide an efficient and convenient interface to the user.  Of course, because of the inevitable aliasing between $F$ and $G$, no updates on $F$ are possible, and any modification on $G$ invalidates $F$.  If required, a copy of $F$ can be obtained in $O(|E(F)|) = O(n)$ time.

\paragraph{Organization of the manuscript.}  Section~\ref{sec:preliminaries} introduces the basic terminology and notation.  Section~\ref{sec:reception theorem} presents the \emph{Reception Theorem}, which characterizes when a graph $H$ is PCA knowing that $H \setminus \{v\}$ is PCA.  The Reception Theorem sums up the results we require from~\cite{DengHellHuangSJC1996,HellShamirSharanSJC2001,SoulignacA2015}, and guides the certifying algorithm that we develop later.  Section~\ref{sec:data structure} describes the data structure that we employ, and the algorithms required to update the partial forbidden subgraph $\MP(\Phi)$.  Section~\ref{sec:incremental} shows the certifying algorithm we use when a vertex is inserted.  Section~\ref{sec:recognition algorithms} depicts the complete recognition algorithm, including the insertion and removal of vertices and the query for the recognition of PIG graphs.  Section~\ref{sec:recognition algorithms} also discusses the authentication problems.  Finally, Section~\ref{sec:conclusions} has some further remarks and open problems.

\section{Preliminaries}
\label{sec:preliminaries}

For a graph $G$, we use $V(G)$ and $E(G)$ to denote the sets of vertices and edges of $G$, respectively, and $n = |V(G)|$ and $m = |E(G)|$.  The \emph{neighborhood} of a vertex $v$ is the set $N_G(v)$ of all the neighbors of $v$, while the \emph{closed neighborhood} of $v$ is $N_G[v] = N_G(v) \cup \{v\}$.  If $N_G[v] = V(G)$, then $v$ is a \emph{universal vertex}, while if $N_G(v) = \emptyset$, then $v$ is an \emph{isolated vertex}.  Two vertices $v$ and $w$ are \emph{twins} when $N_G[v] = N_G[w]$.  The cardinality of $N_G(v)$ is the \emph{degree} of $v$ and is denoted by $d_G(v)$.  We omit the subscripts from $N$ and $d$ when there is no ambiguity about $G$.

The subgraph of $G$ \emph{induced} by $V \subseteq V(G)$ is the graph $G[V]$ that has $V$ as its vertex set, where two vertices of $G[V]$ are adjacent if and only if they are adjacent in $G$.  A \emph{clique} is a subset of pairwise adjacent vertices.  We also use the term \emph{clique} to refer to the corresponding subgraph.  An \emph{independent set} is a set of pairwise non-adjacent vertices.  A \emph{semiblock} of $G$ is a nonempty set of twin vertices, and a \emph{block} of $G$ is a maximal semiblock.

The \emph{complement} of $G$, denoted by $\overline{G}$, is the graph that has the same vertices as $G$ and such that two vertices are adjacent in $\overline{G}$ if and only if they are not adjacent in $G$.  Each component of $\overline{G}$ is called a \emph{co-component} of $G$, and $G$ is \emph{co-connected} when $\overline{G}$ is connected.   The \emph{union} of two vertex-disjoint graphs $G$ and $H$ is the graph $G \cup H$ with vertex set $V(G) \cup V(H)$ and edge set $E(G) \cup E(H)$.  The \emph{join} of $G$ and $H$ is the graph $G + H = \overline{\overline{G} \cup \overline{H}}$, i.e., $G + H$ is obtained from $G \cup H$ by inserting all the edges $vw$, for $v \in V(G)$ and $w \in V(H)$. 

A graph $G$ is \emph{bipartite} when there is a partition $V_1, V_2$ of $V(G)$ such that both $V_1$ and $V_2$ are independent sets.  Contrary to the usual definition of a partition, we allow one of the sets $V_1$ and $V_2$ to be empty.  So, the graph with one vertex is bipartite for us.  The partition of $V(G)$ into $V_1, V_2$, denoted by $\Bip{V_1}{V_2}$, is called a \emph{bipartition} of $G$.  When $\overline{G}$ is bipartite, $G$ is a \emph{co-bipartite} graph and each bipartition of $\overline{G}$ is a \emph{co-bipartition} of $G$.

For $v \in V(G)$ and $W \subseteq V(G)$, we say that $v$ and $W$ are: \emph{adjacent} if $N(v) \cap W \neq \emptyset$, \emph{co-adjacent} if $W \setminus N(v) \neq \emptyset$, and \emph{fully adjacent} if $W \subseteq N(v)$.   Two disjoint semiblocks $B$ and $W$ are \emph{adjacent} if some vertex in $B$ is adjacent to some vertex in $W$; observe that $B$ and $W$ are adjacent if and only if every vertex in $B$ is fully adjacent to $W$.  If $B \cup W$ is a semiblock, then $B$ is a \emph{twin} of $W$.  A semiblock $B$ is \emph{universal} when its vertices are universal, and \emph{isolated} when the vertices in $B$ are not adjacent to $V(G) \setminus B$.  For a disjoint family of semiblocks $\B$, the subgraph $G[\B]$ of $G$ \emph{induced by $\B$} is obtained from $G[\bigcup \B]$ by removing all but one vertex from each semiblock of $\B$.  Clearly, $G[\B]$ is an induced subgraph of $G$.

\subsection{Orderings and ranges}

In this article, an \emph{order} is a pair $(S, R)$ where $S$ is a finite, and possibly empty, set that admits an enumeration $X = x_1, \ldots, x_n$ of its elements such that $R(x_{i}) = x_{i+1}$ for every $1 \leq i < n$ and $R(x_n) \in \{x_1, \bot\}$ for some undefined value $\bot \not\in S$.   We say that $(S, R)$ is \emph{linear} when $R(x_n) = \bot$, while $(S, R)$ is \emph{circular} when $R(x_n) = x_1$.  When $(S, R)$ is linear, $x_1$ and $x_n$ are the \emph{leftmost} and \emph{rightmost} elements of $(S, R)$.  The enumeration $X$ of $S$ is said to be an \emph{ordering} of $(S, R)$.  

Clearly, every enumeration $X$ of a finite set $S$ defines a linear order and a circular order $(S, R)$, both of which have $X$ as its ordering.  Thus, we say that $X$ is a \emph{linear (resp.\ circular) ordering} to mean that $(S, R)$ is a linear (resp.\ circular) order.  In such a case, we write $R^X$ as a shortcut for $R$; we omit the superscript $X$ when $X$ is clear by context.  For each $x \in X$, the element $R(x)$ is the \emph{right near neighbor} of $x$.   When we want to make no distinctions about $X$ being linear or circular, we simply state that $X$ is an \emph{ordering}.  Note, however, that every ordering is either linear or circular, and it cannot be both at the same time.

Every ordering $X = x_1, \ldots, x_n$ that we shall consider is embedded in some larger circular order.  Hence, all the operations on the indices of $X$ are taken modulo $n$, regardless of whether $X$ is linear or circular.  We use the standard interval notation applied to orders, though we call them \emph{ranges} to avoid confusions with interval graphs.  So, the \emph{range} $[x_i, x_j]$ of $X$ is the linear ordering $x_i, \ldots, x_j$ where, as stated before, $x_1 = x_{n+1}$.  Similarly, the \emph{range} $[x_i, x_j)$ is obtained by removing the rightmost element of $[x_i, x_j]$, the \emph{range} $(x_i, x_j]$ is obtained by removing the leftmost element from $[x_i, x_j]$, and the \emph{range} $(x_i, x_j)$ is obtained by removing both the leftmost and rightmost elements from $[x_i, x_j]$.  The \emph{reverse} of $X$ is the ordering $X^{-1}$ $=$ $x_n$, $\ldots$, $x_1$, where $X^{-1}$ is linear if and only if $X$ is linear.  We write $L^X$ as a shortcut for $R^{X^{-1}}$ and we omit $X$ when it is clear from context.  Note that $L(x_{i+1}) = x_i$ for every $1 \leq i < n$, while $L(x_1) \in \{x_n, \bot\}$ equals $x_n$ if and only if $X$ is circular.  For each $x \in X$, the element $L(x)$ is the \emph{near left neighbor} of $x$.  If $X$ and $Y$ $=$ $y_1$, $\ldots$, $y_m$ are linear orderings, then $X \Cat Y$ denotes the linear ordering $x_1$, $\ldots$, $x_n$, $y_1$, $\ldots$, $y_m$.

The range notation that we use for ranges clashes with the usual notation for ordered pairs.  Thus, we write $\Bip{x}{y}$ to denote the ordered pair $(x, y)$.  The unordered pair formed by $x$ and $y$ is, as usual, denoted by $\{x,y\}$.  Also, for the sake of notation, we sometimes write $\#S$ to denote the cardinality of a range $S$.  For any function $f$, we write $f^0$ to mean the identity and $f^{k+1}(x) = f^k(f(x))$.

\subsection{Contigs, round representations, and proper circular-arc graphs}

In this section we present an alternative definition of proper circular-arc graphs that follows from~\cite{DengHellHuangSJC1996,HuangJCTSB1995}.  These definitions are based on the notion of round representations, which are combinatorial views of proper circular-arc models; see~\cite{Bang-JensenGutin2009}.

A \emph{contig} is a pair $\phi = \Bip{\B(\phi)}{F_r^\phi}$ where $\B(\phi) = B_1, \ldots, B_n$ is an ordering of pairwise disjoint sets, and $F_r^\phi$ is a mapping from $\B(\phi)$ to $\B(\phi)$ such that:
\begin{enumerate}[(i)]
  \item $F_r^\phi(B_i) \in (B_i, F_r^\phi(B_{i+1})]$, for every $1 \leq i < n$, 
  \item if $\B(\phi)$ is linear, then $F_r^\phi(B_n) = B_n$; otherwise $F_r^\phi(B_n) \in [B_1, F_r^\phi(B_1)]$, and
  \item $B_i \not\in (F_r^\phi(B_i), F_r^\phi(F_r^\phi(B_i))]$ for every $1 \leq i \leq n$.
\end{enumerate}
We classify each contig $\phi$ as either linear or circular according to whether $\B(\phi)$ is linear or circular.  Note that $\phi$ is linear if and only if $F_r^\phi(B_n) = B_n$.  

We use a convenient notation for dealing with the range $(B, F_r^\Phi(B)]$.  For $B, W \in \B(\phi)$, we write $B \to_{\phi} W$ to mean that $W \in (B, F_r^\phi(B)]$.  Similarly, we write $B \nto_{\phi} W$ to indicate that $W \not \in (B, F_r^\phi(B)]$, and $B \toeq_{\phi} W$ to indicate that either $B = W$ or $B \to_{\phi} W$.  With the $\to$-notation, we can rewrite conditions (i)--(iii) as follows:
\begin{enumerate}[(i)]
  \item $B_i \to_{\phi} B_{i+1}$ and $B_{i+1} \toeq_{\phi} F_r(B_i)$ for every $1 \leq i < n$,
  \item if $\phi$ is linear, then $B_n \nto_{\phi} B_1$; otherwise, $B_n\to_{\phi} B_1$ and $B_1 \toeq_{\phi} F_r(B_n)$, and
  \item either $B_i \nto_{\phi} B_j$ or $B_j \nto_{\phi} B_i$ for every $1 \leq i \leq j \leq n$.
\end{enumerate}
Figure~\ref{fig:round representation} depicts three contigs with their corresponding $\to$ relation.

The sets in $\B(\phi)$ are referred to as \emph{semiblocks} of $\phi$, while $V(\phi) = \bigcup \B(\phi)$ is the set of \emph{vertices} of $\phi$.  For simplicity, we write $L^\phi$ and $R^\phi$ as shortcuts for $L^{\B(\phi)}$ and $R^{\B(\phi)}$.  Recall that $L^\phi(B)$ and $R^\phi(B)$ are the left and right near neighbors of $B$ for every $B \in \B(\phi)$.  Similarly, we say that $F_r^\phi(B)$ is the \emph{right far neighbor} of $B$.  The \emph{left far neighbor} of $B$ is the unique semiblock $F_l^\phi(B)$ such that $W \to_{\phi} B$ if and only if $W \in [F_l^\phi(B), B)$, for every $W \in \B(\phi)$.  Two other mappings are highly used in this manuscript: $U_r^\phi(B) = R^\phi(F_r^\phi(B))$ and $U_l^\phi(B) = L^\phi(F_l^\phi(B))$ are the \emph{unreached right} and \emph{unreached left} semiblocks, respectively.  As usual, we do not write the subscript and superscript $\phi$ for $L$, $R$, $F_l$, $F_r$, $U_l$, $U_r$, and $\to$ when $\phi$ is clear by context.  Figure~\ref{fig:round representation} shows the values of $R$, $F_r$, and $U_r$ for some contigs.  Note that $\phi$ is linear if and only if $R(B) = \bot$ (and $F_r(B) = B$) for some semiblock $B$.

\begin{figure}
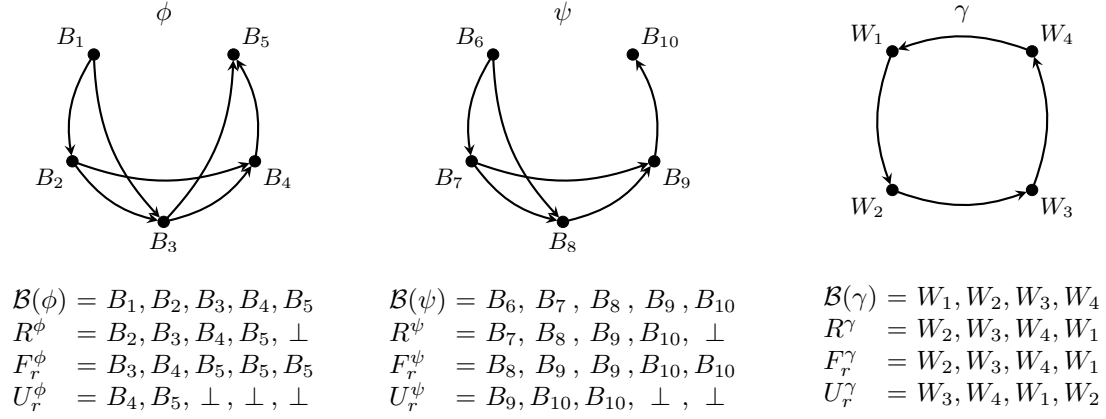

  \small
  \parbox{.33\linewidth}{%
    \centering
    $\phi$\\
    \input{srcFigContigA}
    
    ~
    
    \begin{tabular}{l@{\ $=$\ }c@{,\,}c@{,\,}c@{,\,}c@{,\,}c}
      $\B(\phi)$  & $B_1$ & $B_2$ & $B_3$ & $B_4$ & $B_5$ \\
      $R^\phi$   & $B_2$ & $B_3$ & $B_4$ & $B_5$ & $\bot$ \\
      $F_r^\phi$ & $B_3$ & $B_4$ & $B_5$ & $B_5$ & $B_5$ \\
      $U_r^\phi$ & $B_4$ & $B_5$ & $\bot$ & $\bot$ & $\bot$ 
    \end{tabular}%
  }%
  \parbox{.33\linewidth}{%
    \centering
    $\psi$\\
    \input{srcFigContigB}
    
    ~
    
    \begin{tabular}{l@{\ $=$\ }c@{,\,}c@{,\,}c@{,\,}c@{,\,}c}
      $\B(\psi)$  & $B_6$ & $B_7$    & $B_8$    & $B_9$    & $B_{10}$\\
      $R^\psi$    & $B_7$ & $B_8$    & $B_9$    & $B_{10}$ & $\bot$  \\
      $F_r^\psi$  & $B_8$ & $B_9$    & $B_9$    & $B_{10}$ & $B_{10}$ \\
      $U_r^\psi$  & $B_9$ & $B_{10}$ & $B_{10}$ & $\bot$ & $\bot$ 
    \end{tabular}%
  }%
  \parbox{.33\linewidth}{%
    \centering
    $\gamma$\\
    \input{srcFigContigC}
    
    ~
    
    \begin{tabular}{l@{\ $=$\ }c@{,\,}c@{,\,}c@{,\,}c}
      $\B(\gamma)$  & $W_1$ & $W_2$ & $W_3$ & $W_4$ \\
      $R^\gamma$   & $W_2$ & $W_3$ & $W_4$ & $W_1$ \\
      $F_r^\gamma$ & $W_2$ & $W_3$ & $W_4$ & $W_1$ \\
      $U_r^\gamma$ & $W_3$ & $W_4$ & $W_1$ & $W_2$ 
    \end{tabular}%
  }

  \caption{Two linear contigs $\phi$ and $\psi$ and a circular contig $\rho$.  Each contig $\bullet$ is depicted with their corresponding $\to_\bullet$ relations that corresponds to the semiblock ordering $\B(\bullet)$ and the mappings $R^\bullet$ and $F_r^\bullet$.  The mapping $U_r^\bullet = R^\bullet(F_r^\bullet)$ is also shown.}\label{fig:round representation}
\end{figure}

A \emph{round representation} is a family $\Phi = \{\phi_1, \ldots, \phi_k\}$ of vertex-disjoint contigs such that either $k = 1$ or $\phi_i$ is linear for every $1 \leq i \leq k$.  A \emph{straight representation} is a round representation whose contigs are all linear.  We extend the notation used for contigs to round representations:
\begin{itemize}
  \item $\B(\Phi) = \bigcup_{1 \leq i \leq k}\B(\phi_i)$ and $V(\Phi) = \bigcup_{1 \leq i \leq k}V(\phi_i)$, 
  \item if $B \in \B(\phi_i)$, then $f^\Phi(B) = f^{\phi_i}(B)$ for every $f \in \{L, R, F_l, F_r, U_l, U_r\}$,
  \item if $W \in \B(\phi_j)$, then $B \to_{\Phi} W$ if and only if $i = j$ and $B \to_{\phi_i} W$, and
  \item $B \nto_{\Phi} W$ if and only if $j \neq i$ or $B \nto_{\phi_i} W$.
\end{itemize}
As usual, we omit $\Phi$ from the previous notation.  Note that $\Phi$ is uniquely determined by the triplet $\Tuple{\B(\Phi), R, F_r}$, thus sometimes we write $\Tuple{\B(\Phi), R, F_r}$ as an alternative definition of $\Phi$.  (Note that $\B(\Phi)$ is a family of semiblocks and not a family of orderings.)  Any (linear) contig $\phi$ can be regarded as the (straight) round representation $\{\phi\} = \Tuple{\B(\phi), R^\phi, F_r^\phi}$, thus all the definitions that follow for round representations hold for contigs as well.  We say that a semiblock $B \in \B(\Phi)$ is a \emph{left (resp.\ right) end} semiblock when $F_l(B) = B$ (resp.\ $F_r(B) = B$).  Equivalently, $B$ is a left (resp.\ right) semiblock of $\Phi$ if and only if $B$ is the leftmost (resp.\ rightmost) of its contig, which happens if and only if $L(B) = \bot$ (resp.\ $R(B) = \bot$).  We treat $\bot$ as a special semiblock outside $\B(\Phi)$, one for which $f(\bot) = \bot$ for every $f \in \{L, R, F_r, F_l\}$.   In Figure~\ref{fig:round representation}, $\Phi = \{\phi, \psi\}$ and $\Gamma = \{\gamma\}$ are round representations, whereas $\{\phi, \gamma\}$ is not.  Moreover, $B_1$ and $B_6$ are the left end semiblocks of $\Phi$, while $\gamma$ has no end semiblocks.  Indeed, a round representation is straight if and only if it contains end semiblocks.

Each round representation $\Phi$ defines a graph $G(\Phi)$ that has $V(\Phi)$ as it vertex set, where $v \in B$ and $w \in W$ are adjacent, for $B, W \in \B(\Phi)$, if and only if $B \toeq W$, or $W \toeq B$.  Observe that: each contig $\phi \in \Phi$ defines a component $G(\phi)$ of $G(\Phi)$, each semiblock of $\Phi$ is a semiblock of $G(\Phi)$, and $N_{G(\Phi)}[v] = \bigcup [F_l(B), F_r(B)]$ for every vertex $v \in B$ of $\Phi$.  We say a semiblock of $\B(\Phi)$ is \emph{isolated} or \emph{universal} according to whether it is isolated or universal in $G(\Phi)$.  Similarly, two semiblocks of $\B(\Phi)$ are adjacent or twins when they are adjacent or twins in $\Phi$.  We write $N_\Phi(B)$ to denote the set of semiblocks adjacent to $B$ and $N_\Phi[B] = N_\Phi(B) \cup \{B\}$; we drop the subindex $\Phi$ as usual.  Note that $N[B] = [F_l(B), F_r(B)]$.  In Figure~\ref{fig:round representation}, $G(\{\gamma\})$ is obtained from a cycle with four vertices $w_1$, $w_2$, $w_3$, $w_4$ by adding $|W_i| - 1$ twins of $w_i$, for $1 \leq i \leq 4$.  Also, $B_3$ is universal in $\{\phi\}$ but not in $\{\phi, \psi\}$. 

A graph $G$ is a \emph{proper circular-arc (PCA)} graph if it is isomorphic to $G(\Phi)$ for some round representation $\Phi$.  In such a case, $G$ \emph{admits} $\Phi$, while $\Phi$ \emph{represents} $G$.  It is a well known fact that PCA graphs are precisely those graphs that admit a PCA model, as defined in Section~\ref{sec:introduction}~\cite{DengHellHuangSJC1996,HuangJCTSB1995,SoulignacA2015}.  PCA graphs are characterized by a family of minimal forbidden induced subgraphs, as in Theorem~\ref{thm:forbiddens PCA}.  There, $H^*$ denotes the graph that is obtained from $H$ by inserting an isolated vertex, while $C_n$ denotes the cycle with $n$ vertices.  Graph $\overline{C_3^*}$ is also denoted by \emph{$K_{1,3}$}.

\begin{theorem}[\cite{TuckerDM1974}]\label{thm:forbiddens PCA}
  A graph is a PCA graph if and only if it does not contain as induced subgraphs any of the following graphs: $C_n^*$ for $n \geq 4$, $\overline{C_{2n}}$ for $n \geq 3$, $\overline{C^*_{2n+1}}$ for $n \geq 1$, and the graphs $\overline{S_3}$, $\overline{H_2}$, $\overline{H_3}$, $\overline{H_4}$, $\overline{H_5}$ and $S_3^*$ (see Figure~\ref{fig:forbiddens PCA}).
\end{theorem}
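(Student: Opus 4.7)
The plan is to prove both directions of the biconditional separately, with the backward direction (every non-PCA graph contains one of the listed graphs as an induced subgraph) being the harder half.

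For the forward direction, I would verify that each listed graph fails to admit a round representation. The finite graphs $\overline{S_3}$, $\overline{H_2}$, $\overline{H_3}$, $\overline{H_4}$, $\overline{H_5}$, and $S_3^*$ can each be handled by exhausting the finitely many candidate orderings of their (block) vertices and checking the $\to$-constraints imposed on a contig. For the infinite families I would use structural arguments: in $C_n^*$ with $n \geq 4$ the isolated vertex forces any round representation to be straight, but then $C_n$ would have to be a PIG graph, contradicting the well-known fact that $C_n$ is not interval for $n \geq 4$; in $\overline{C_{2n}}$ with $n \geq 3$ and in $\overline{C^*_{2n+1}}$ with $n \geq 1$, the antipodal non-adjacencies prescribed by the complemented cycle cannot be realized along a circle by proper arcs, and the extra isolated vertex in the latter case again forces the representation to be straight. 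Minimality is verified by exhibiting explicit round representations for each single-vertex deletion.

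For the backward direction I would proceed by induction on $|V(G)|$, taking $G$ to be a minimum non-PCA graph, so that $G \setminus \{v\}$ admits a round representation $\Phi$ for every $v \in V(G)$. The main tool would be the Reception Theorem of Section~\ref{sec:reception theorem}, which classifies exactly when $\Phi$ can be extended by a new vertex with prescribed neighborhood. I would split cases according to whether $G$ is co-bipartite: in the co-bipartite case I would invoke the classical result of Hell and Huang to identify the obstruction within $\{\overline{C_{2n}}, \overline{C^*_{2n+1}}, \overline{S_3}, \overline{H_2}, \ldots, \overline{H_5}\}$; in the non-co-bipartite case I would analyze the failure of extension in terms of the rotational mappings $F_l$, $F_r$, $U_l$, and $U_r$ on $\B(\Phi)$, and trace each failure back to an induced copy of $C_n^*$, $S_3^*$, or one of the small finite obstructions.

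The main obstacle lies in the non-co-bipartite case of the backward direction: a single failed extension typically exhibits several apparent obstructions, and the delicate part is to show that the induced subgraph that one can actually extract is, after minimizing, always isomorphic to a member of the listed families. Controlling this requires careful combinatorial bookkeeping around the circular structure of $\Phi$, so as to rule out any ``new'' minimal obstruction lying outside the stated list, and it is precisely this exhaustiveness step that makes the enumeration of forbidden subgraphs complete.
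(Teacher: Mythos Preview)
The paper does not prove this theorem at all: it is quoted from Tucker~\cite{TuckerDM1974} as a classical result and used as a black box throughout, so there is no proof in the paper to compare your proposal against.

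Assessed on its own merits, your outline is plausible in spirit but carries a circularity hazard you must address. You plan to use the Reception Theorem (Corollary~\ref{thm:reception}) together with a failure-of-extension analysis to locate a listed obstruction inside any minimal non-PCA graph. That analysis is essentially what Section~\ref{sec:incremental} of the paper carries out algorithmically, but the paper's version repeatedly \emph{invokes} Theorem~\ref{thm:forbiddens PCA} as an ingredient: the proof of Theorem~\ref{thm:H not co-connected} appeals to it to force each join-factor to be co-bipartite; the proof of Lemma~\ref{lemma:insertion join} uses it for the same purpose; and the bound ``at most three iterations of Phase~1'' at the opening of Section~\ref{sec:recognition algorithms} rests on it as well. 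If you intend to recycle the paper's case analysis into a self-contained proof of Tucker's theorem, each of those appeals has to be replaced by a direct argument. Likewise, the ``classical result of Hell and Huang'' you cite for the co-bipartite case is itself a certifying recognition algorithm whose negative certificates are precisely the members of Tucker's list, so quoting it without care risks the same circularity.

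None of this is fatal---a modern re-proof along your inductive line is feasible---but the proposal as written does not yet isolate which parts of the machinery are independent of Theorem~\ref{thm:forbiddens PCA} and which need fresh arguments; that separation is the real work.
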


\begin{figure}
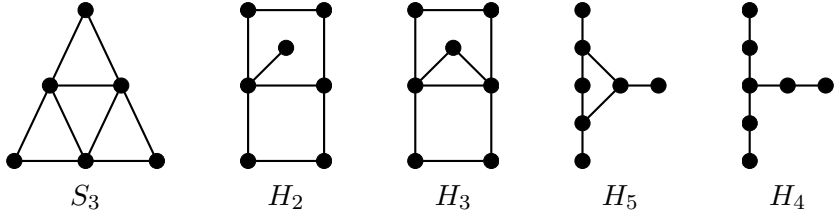

\centering
 \begin{tabular}{c@{\hspace{1cm}}c@{\hspace{1cm}}c@{\hspace{1cm}}c@{\hspace{1cm}}c}
  \input{srcS3} & \input{srcH2} & \input{srcH3} & \input{srcH4} & \input{srcH5}\\
  $S_3$ & $H_2$ & $H_3$ & $H_5$ & $H_4$
 \end{tabular}
 \caption{Complements of the forbidden induced subgraphs for PCA graphs}\label{fig:forbiddens PCA}
\end{figure}

Proper interval graphs are defined as PCA graphs, by replacing round representations with straight representations.  That is, a graph is a \emph{proper interval graph (PIG)} graph when it is isomorphic to $G(\Phi)$ for some straight representation $\Phi$.  It is well known that PIG graphs are precisely those graphs that admit PIG models~\cite{DengHellHuangSJC1996,HellShamirSharanSJC2001}.  PIG graphs are also characterized by minimal forbidden induced subgraphs.

\begin{theorem}[\cite{LekkerkerkerBolandFM1962/1963}]\label{thm:forbiddens PIG}
  A PCA graph is a PIG graph if and only if it does not contain $C_k$ for $k \geq 4$, and $S_3$ as induced subgraphs.
\end{theorem}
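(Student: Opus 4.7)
The approach is to leverage the classical Lekkerkerker--Boland characterization of PIG graphs, which states that a graph is PIG if and only if it contains no induced $K_{1,3}$, no induced $S_3$, and no induced $C_k$ for $k \geq 4$. With this result available, one direction of the theorem is automatic and the other reduces to a short PCA-specific observation.

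For the forward direction, I would first note that ``being PIG'' is hereditary: the restriction of a straight representation to an induced subgraph is again a straight representation, so every induced subgraph of a PIG graph is PIG. Consequently, it suffices to verify that neither $C_k$ (for $k \geq 4$) nor $S_3$ is itself a PIG graph. For $C_k$ this is a direct consequence of the contig axioms: in any straight representation, the leftmost semiblock $B$ satisfies $F_l(B) = B$, so $N[B] = [B, F_r(B)]$, and the monotonicity of $F_r$ implied by condition~(i) forces this range to be a clique; however, every vertex of $C_k$ with $k \geq 4$ has two non-adjacent neighbors, a contradiction. The analogous claim for $S_3$ is part of the Lekkerkerker--Boland result and can also be argued by a direct case analysis of where a degree-two vertex of $S_3$ could be placed in a straight representation.

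For the reverse direction, the plan is to invoke Lekkerkerker--Boland directly, using Theorem~\ref{thm:forbiddens PCA} to dispose of the $K_{1,3}$ obstruction. Since $G$ is PCA, Theorem~\ref{thm:forbiddens PCA} guarantees that $G$ contains no induced $\overline{C_3^*}$; but $\overline{C_3^*}$ is precisely $K_{1,3}$, as noted immediately before the statement of Theorem~\ref{thm:forbiddens PCA}. Combined with the assumptions that $G$ contains no induced $C_k$ for $k \geq 4$ and no induced $S_3$, all three Lekkerkerker--Boland forbidden subgraphs are excluded, so $G$ is PIG.

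The main obstacle: the real content is hidden inside the Lekkerkerker--Boland black box. A self-contained proof would start from a round representation $\Phi = \{\phi\}$ of $G$ and, assuming $\phi$ is circular, construct either an induced $C_k$ ($k \geq 4$) or an induced $S_3$. The natural construction iterates the unreached-right operator $U_r$ around the circle, producing a cyclic sequence of pairwise non-adjacent consecutive semiblocks that yields a short induced cycle; the delicate case is when this cycle has length $3$, where one must exploit the circular structure together with the absence of $K_{1,3}$ to produce a copy of $S_3$ instead. This case analysis is the technically subtle step, but the citation to Lekkerkerker--Boland bypasses it entirely.
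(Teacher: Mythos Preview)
The paper does not prove this theorem; it is stated with a citation to \cite{LekkerkerkerBolandFM1962/1963} and used as a black box. Your proposal therefore goes further than the paper does, and your sketch is mathematically sound: the hereditary forward direction plus the observation that Theorem~\ref{thm:forbiddens PCA} already rules out $K_{1,3}=\overline{C_3^*}$, reducing the converse to the classical forbidden-subgraph characterization of PIG graphs, is exactly the right outline. Your self-critique that the real work is hidden in that cited characterization is accurate and honest; the paper simply accepts the citation.

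One small remark on attribution rather than correctness: the statement you call ``the Lekkerkerker--Boland characterization of PIG graphs'' (no $K_{1,3}$, no $S_3$, no $C_k$ for $k\ge 4$) is usually credited to Wegner or Roberts; Lekkerkerker and Boland characterized interval graphs, not proper interval graphs. The paper itself uses the same loose attribution, so this is not a gap in your argument, only something to be aware of if you later want to cite precisely.
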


Two semiblocks $B, W$ of a round representation $\Phi$ are \emph{indistinguishable} when $F_l(B) = F_l(W)$ and $F_r(B) = F_r(W)$ (e.g., $B_7$ and $B_8$ in Figure~\ref{fig:round representation}).  Clearly, if $B \to W$, then all the semiblocks in $[B, W]$ are pairwise indistinguishable in $\Phi$.  It is not hard to see that $B$ and $W$ are twins when they are indistinguishable.  
We say that $\Phi$ and a round representation $\Psi$ are \emph{equal} when $\Phi$ can be obtained from $\Psi$ by permuting some indistinguishable semiblocks in the contigs of $\Psi$.  Of course, $\Psi$ is an alternative round representation of $G(\Phi)$.  
A PCA graph can also have non-equal representations.  Indeed, $\Phi^{-1} = \Tuple{\B(\Phi), L^\Phi, F_l^\Phi}$, which is called the \emph{reverse} of $\Phi$, is a representation of $G(\Phi)$.  By definition, $R^{\Phi^{-1}} = L^\Phi$, $F_l^{\Phi^{-1}} = F_r^\Phi$, and $\Phi^{-1} = \{\phi^{-1} \mid \phi \in \Phi\}$.  

For $\B \subseteq \B(\Phi)$, we write $\Phi|\B$ to denote the round representation $\Psi$ such that $\B(\Psi) = \B$ and $B \to_{\Psi} W$ if and only if $B \to_{\Phi} W$ for every $B, W \in \B$.  Observe that $\Psi$ is a round representation of $G(\Phi)[V(\Psi)]$, thus $\Psi$ is referred to as the representation of $\Phi$ \emph{induced by} $\B$.  Similarly, the \emph{removal of} $\B$ from $\Phi$ is the representation $\Phi \setminus \B = \Phi|(\B(\Phi) \setminus \B)$; this time, $G(\Phi \setminus \B) = G(\Phi) \setminus (\bigcup \B)$.

We extend the notion of ranges to round representations.  Let $B_i$ be a semiblock of a contig $\phi_i$ of the round representation $\Phi$, for $i \in \{1,2\}$.  When $\phi_1 = \phi_2$, the \emph{range} $[B_1, B_2]$ of $\Phi$ is defined as the range $[B_1, B_2]$ of $\B(\phi_1)$.  When $\phi_1 \neq \phi_2$, the \emph{range} $[B_1, B_2]$ of $\Phi$ is defined as the range $[B_1, B_2]$ of $\B(\phi_1) \Cat \B(\phi_2)$.  That is, $[B_1, B_2]$ is obtained by traversing $\phi_1$ from $B_1$ to its right end semiblock, followed by the range obtained by traversing $\phi_2$ from its left end semiblock to $B_2$.  This non-standard definition is useful to deal with the different possible orderings of the contigs of $\Phi$; in this case, $\phi_1$ would immediately precede $\phi_2$.  For instance, $[B_4, B_7]$ of $\{\phi, \psi\}$ in Figure~\ref{fig:round representation} is $B_4, B_5, B_6, B_7$.  The ranges $(B, W]$, $[B, W)$, and $(B, W)$ of $\Phi$ are defined analogously.

Our definition of ranges allows us to define some robust versions of $L$, $R$, $U_l$, and $U_r$.  By definition, any range $\B = [B_1, B_2]$ of $\Phi$ is a linear ordering, thus $R^\B(B)$ is the semiblock that follows $B$ in $\B$, for any $B \in [B_1, B_2)$.  Let $\phi_1$ and $\phi_2$ be the contigs that contain $B_1$ and $B_2$, respectively.  By definition, $\B$ could contain the right end semiblock $B_r$ of $\phi_1$ followed by the left end semiblock $B_l$ of $\phi_2$.  Although $R^\Phi(B_r) = \bot$ and $L^\Phi(B_l) = \bot$, the semiblocks $R^\B(B_r) = B_l$ and $L^\B(B_l) = B_r$ are well defined.  The \emph{robust} version $R^{\Bip{\Phi}{\B}}$ of $R^\Phi$ behaves exactly as $R^\Phi$, with the exception that it maps $B_r$ to $B_l$.  Similarly, the \emph{robust} version $L^{\Bip{\Phi}{\B}}$ of $L^\Phi$ maps $B_l$ to $B_r$ and $B$ to $L^\Phi(B)$ for $B \neq \B(\Phi) \setminus \{B_l\}$.  The \emph{robust} versions $U_l^{\Bip{\Phi}{\B}}$ of $U_l^\Phi$, and $U_r^{\Bip{\Phi}{\B}}$ of $U_r^\Phi$ are defined analogously.  For the sake of notation, we write $\LL$, $\RR$, $\UU_l$ and $\UU_r$ when $\Phi$ and $\B$ are clear.  Thus, if we consider the range $[B_4, B_7]$ of $\{\phi, \psi\}$ in Figure~\ref{fig:round representation}, then $\RR(B_5) = B_6$, $\LL(B_6) = B_5$, thus $\UU_r(B_3) = B_6$, and $\UU_l(B_8) = B_5$; note, however, that $\LL(B_1) = \RR(B_{10}) = \bot$.

Before we advance, we describe the rationale behind our definitions of round representations, ranges, and the robust mappings $\LL$ and $\RR$.  Our main goal in this article is to insert a new vertex $v$ into a round representation $\Phi$ to obtain a new round representation $\Psi$.  As we shall see, $v$ can have neighbors in at most two contigs $\phi_1$ and $\phi_2$ of $\Phi$ (possibly $\phi_1 = \phi_2$).  To insert $v$, we must join the semiblocks in $\B(\phi_1) \cup \B(\phi_2)$ together with $v$ into a new contig $\psi$ that ``replaces'' both $\phi_1$ and $\phi_2$, i.e., $\Psi = (\Phi \setminus \{\phi_1, \phi_2\}) \cup \{\psi\}$.  Since $\psi$ is a contig, $\B(\phi_1) \cup \B(\phi_2)$ must be somehow together in $\Psi$.  Prior to the insertion of $v$, any pair of contigs of $\Phi$ could play the role of $\phi_1$ and $\phi_2$, thus it is inconvenient to prefix an ordering of the contigs of $\Phi$.  As this ordering is absent, it makes no sense to define the follower (resp.\ predecessor) of a right (resp.\ left) end semiblock.  However, once $v$ and $N(v)$ are given, we have access to the neighbor semiblocks in $\phi_1$ and $\phi_2$.  A priori, there is no way of knowing if $\phi_1 = \phi_2$; all we can query is if $v$ is adjacent to end semiblocks.  Yet, since $\psi$ is a contig, the semiblocks adjacent to $v$ must appear consecutively in $\psi$.  In other words, $N(v)$ should be a range of $[B_1, B_2]$ of $\B(\phi_1) \Cat \B(\phi_2)$.  We want to make no case distinctions according to whether $\phi_1 = \phi_2$ or whether $[B_1, B_2]$ has end semiblocks.  This is the reason why ranges are defined for semiblocks in different contigs, and why the range of an ordering can include its rightmost element.  Finally, to test if $v$ can be inserted, we have to check some conditions on $R(B_m)$ for $B_m \in [B_1, B_2]$.  However, this semiblock is not well defined when $R(B_m) = \bot$ and, in this case, the role of this semiblock is played by the left end semiblock of $\phi_2$.  The robust definition of $\RR$ allows us to treat the case in which $R(B_m) = \bot$ in the same way as we treat the other case.

Although we need access to $\B$ for the robust versions to be efficient, there is one case in which specifying $\B$ is not required.  If $\Phi = \{\phi_1, \phi_2\}$ for (possibly equal) contigs $\phi_1, \phi_2$, then $\RR(B_r) = B_l$ and $\LL(B_l) = B_r$ for the left end semiblock $B_l$ of $\phi_i$ and the right end semiblock $B_r$ of $\phi_j$ (if existing), $i,j = \{1,2\}$, while $\RR(B) = R(B)$ and $\LL(B) = L(B)$ for every other semiblock.  In this case, we refer to $\Phi$ is being \emph{robust}.

By definition, each contig $\phi$ of a straight representation $\Phi$ is ``equivalent'' to a range $[B_l, B_r]$ of $\Phi$, where $B_l$ is a left end semiblock, $B_r$ is a right end semiblock, and $(B_l, B_r)$ has no end semiblocks.  The term ``equivalent'' employed here means that $\{\phi\}$ equals $\Phi|[B_l, B_r]$; moreover, $\Phi|[B_l, B_r]$ is a round representation of some component of $G'$ of $G(\Phi)$.  We refer to $[B_l, B_r]$ simply as a \emph{contig range} of $\Phi$ that \emph{describes} $G'$.  The following observation then follows.

\begin{observation}
  If $\Phi$ is a straight representation of a graph $G$, then every component of $G$ is described by a contig range.
\end{observation}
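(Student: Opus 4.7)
The plan is to unwind the definitions: a component of $G$ corresponds to a contig of $\Phi$, and the semiblocks of that contig form a contig range because $\Phi$ is straight.

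First I would let $G'$ be an arbitrary component of $G = G(\Phi)$. From the earlier remark that each contig $\phi \in \Phi$ induces a component $G(\phi)$ of $G(\Phi)$, and since the contigs of $\Phi$ are vertex-disjoint and partition $V(\Phi)$, there is a unique contig $\phi \in \Phi$ with $G' = G(\phi)$. Because $\Phi$ is straight, $\phi$ is linear, so its semiblock ordering $\B(\phi) = B_1, \dots, B_n$ satisfies $R^\phi(B_n) = \bot$ and $L^\phi(B_1) = \bot$. Hence $B_1$ is a left end semiblock ($F_l(B_1) = B_1$ equivalently $L(B_1) = \bot$) and $B_n$ is a right end semiblock.

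Next I would identify the contig range. Setting $B_l = B_1$ and $B_r = B_n$, the definition of the range $[B_l, B_r]$ for semiblocks of the same contig gives $[B_l, B_r] = B_1, \dots, B_n = \B(\phi)$. For $(B_l, B_r)$ to contain no end semiblocks, I would note that in a straight representation an end semiblock $B$ must be either the leftmost or rightmost semiblock of its contig; since every interior $B_i$ with $1 < i < n$ has both $L(B_i) \neq \bot$ and $R(B_i) \neq \bot$, none of them is an end semiblock. Thus $[B_l, B_r]$ meets the definition of a contig range.

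Finally I would verify that $[B_l, B_r]$ describes $G'$. Since $[B_l, B_r] = \B(\phi)$ and the $\to$-relation on this range inherited from $\Phi$ coincides with $\to_\phi$, the induced representation $\Phi|[B_l, B_r]$ equals $\{\phi\}$, which is a round representation of $G(\phi) = G'$. The main (and only) subtlety is being careful with the definitional point that in a straight representation the end semiblocks are precisely the extreme semiblocks of each contig, so that the interior of the range is free of them; everything else is a direct translation of the terminology just introduced.
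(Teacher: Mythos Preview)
Your proof is correct and follows exactly the approach the paper intends: the observation is stated without proof, immediately after the paragraph explaining that each linear contig $\phi$ is ``equivalent'' to the range $[B_l,B_r]$ between its leftmost and rightmost semiblocks, and you have simply written out those definitional details explicitly.
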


In a similar way, the co-components of co-bipartite PCA graphs are described by co-contig pairs (see Figure~\ref{fig:co-contigs}).  Let $\Phi$ be a round representation of a co-bipartite graph $G$; observe $\Phi$ is robust.  Say that a non-universal semiblock $B \in \B(\Phi)$ is a \emph{left co-end semiblock} of $\Phi$ if $B = \UU_r^2(B)$; similarly, $B$ is a \emph{right co-end semiblock} of $\Phi$ when $B = \UU_l^2(B)$.  A \emph{co-contig range} is a range $[B_l, B_r]$ such that $B_l$ is a left co-end semiblock, $B_r$ is a right co-end semiblock, and $(B_l, B_r)$ has no co-end semiblocks.  As it is shown in~\cite{HuangJCTSB1995,SoulignacA2015}, $\overline{\B} = [\UU_r(B_l), \UU_l(B_r)]$ is also a co-contig range for any co-contig range $\B = [B_l, B_r]$.  Moreover, $G' = G[\bigcup(\B \cup \overline{\B})]$ is a co-component of $G$, and $\Bip{\bigcup\B}{\bigcup \overline{\B}}$ is a co-bipartition of $G'$.  The pair $\Bip{\B}{\overline{\B}}$ is said to be a \emph{co-contig pair} of $\Phi$ that \emph{describes} $G'$, while $\Phi|(\B \cup \overline{\B})$ is a \emph{co-contig} of $\Phi$.  

\begin{theorem}[\cite{HuangJCTSB1995,SoulignacA2015}]
  If $\Phi$ is a round representation of a co-bipartite graph $G$, then $\Phi$ is robust and every co-component of $G$ with at least two vertices is described by a co-contig pair.
\end{theorem}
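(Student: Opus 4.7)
The plan is to verify the two conclusions in turn. For robustness, each contig of $\Phi$ corresponds to a distinct connected component of $G$, so it suffices to show that $G$ has at most two components. If $G$ had three components, picking a vertex from each would yield three pairwise non-adjacent vertices in $G$, i.e., a triangle in $\overline{G}$, contradicting bipartiteness. Hence $|\Phi| \leq 2$ and $\Phi$ is robust.

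For the co-contig description, let $G'$ be a co-component with $|V(G')| \geq 2$. Then $\overline{G'}$ is a connected bipartite graph on at least two vertices and admits a unique bipartition $V(G') = V_1 \cup V_2$ into nonempty classes; both $V_1$ and $V_2$ are cliques of $G$. Twins in $G$ have identical $\overline{G}$-neighborhoods and therefore lie on the same side of the bipartition, so every semiblock of $\Phi$ is fully contained in $V_1$, fully contained in $V_2$, or disjoint from $V(G')$. Let $\B_1, \B_2$ denote the respective semiblock families. The central technical claim is that $\B_1$ is a range of $\Phi$, and by symmetry so is $\B_2$. I would argue this by contradiction: if some $B^{(2)} \notin \B_1$ were to lie strictly between $B^{(1)}, B^{(3)} \in \B_1$ in the robust order, then $B^{(2)}$ would be adjacent to both $B^{(1)}$ and $B^{(3)}$ (since $V_1$ is a clique and hence $B^{(1)}, B^{(3)}$ are adjacent in $G$), and either the connectedness of $\overline{G'}$ (when $B^{(2)} \in \B_2$) or the complete join between distinct co-components of $G$ (when $B^{(2)} \notin V(G')$) would exhibit a semiblock whose position violates condition (ii) or (iii) of a contig.

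Granting the range, write $\B_1 = [B_l, B_r]$. Since $B_l$'s non-neighbors lie wholly within $V_2$, the semiblock $\UU_r(B_l) = \RR(F_r(B_l))$ is forced to be the leftmost element of $\B_2$; iterating the same reasoning for $\B_2$ gives $\UU_r^2(B_l) = B_l$, so $B_l$ is a left co-end semiblock. A symmetric argument shows $B_r$ is a right co-end, while any co-end semiblock strictly inside $(B_l, B_r)$ would split $\B_1$ and produce a second bipartition of $V(G')$ into cliques, contradicting the uniqueness of the bipartition of the connected graph $\overline{G'}$. The identification $[\UU_r(B_l), \UU_l(B_r)] = \B_2$ follows, and $\Bip{\B_1}{\B_2}$ is the desired co-contig pair. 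The principal difficulty is the range step for $\B_1$, which mixes conditions (ii) and (iii) of a contig with the co-bipartite hypothesis; the robust notation is precisely what allows the analysis to proceed uniformly for both single-contig and two-contig representations of $\Phi$, so that the remaining identities involving $\UU_r$ and $\UU_l$ are essentially forced.
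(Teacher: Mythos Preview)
The paper does not supply its own proof of this theorem; it is quoted from \cite{HuangJCTSB1995,SoulignacA2015} and used as a black box. So there is no in-paper argument to compare against.

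As for your sketch: the robustness half is clean and correct. For the second half you have the right architecture---each side of the unique co-bipartition of a nontrivial co-component should occupy a range of the robust circular order, and once that is known the identification of the co-end semiblocks via $\UU_r^2(B_l)=B_l$ and $\UU_l^2(B_r)=B_r$ is routine. The weak point is exactly where you flag it: the range step. Your contradiction outline (``some semiblock whose position violates condition (ii) or (iii)'') is not yet an argument. A workable line is the following. Pick $B\in\B_1$ with $\LL(B)\notin\B_1$; such $B$ exists because $\B_2\neq\emptyset$. Since $V_1$ is a clique, $\B_1\subseteq N[B]=[F_l(B),F_r(B)]$. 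The semiblock $\LL(B)$ is adjacent to $B$, hence is either in $\B_2$ or in a different co-component; in either case $\LL(B)$ is adjacent to \emph{every} element of $\B_1$ (complete join across co-components, or $V_2$-side vertex adjacent to all of $V_1$ would make $\overline{G'}$ disconnected---so in fact $\LL(B)$ must have a non-neighbour in $\B_1$, forcing $\LL(B)$ to lie outside the clique range $[F_l(B),B)$ and hence $\B_1\subseteq[B,F_r(B)]$). This still needs a careful split on whether $\LL(B)\in\B_2$ or not, and on which of $\LL(B)\to B$ or $B\to\LL(B)$ holds, using condition~(iii); the details are in Huang's paper. Your final uniqueness argument (``a co-end inside $(B_l,B_r)$ would produce a second bipartition'') is also too quick: what actually goes wrong is that such an interior co-end would force $\UU_r$ to land back in $\B_1$, contradicting $\UU_r(\B_1)\subseteq\B_2$.
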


Our definition of co-contig pairs above explicitly excludes universal semiblocks.  Clearly, each vertex in a universal semiblock induces a co-component of $G$.  We say that a universal semiblock $B$ is both a \emph{left co-end} and \emph{right co-end} semiblock.  Hence, $\B = [B, B]$ is a \emph{co-contig range}, $\Bip{\B}{\emptyset}$ is a \emph{co-contig pair} that describes $G' = G[\bigcup \B]$, and $\Phi|\{B\}$ is a \emph{co-contig} of $\Phi$.  

As defined so far, co-contigs only represent co-bipartite graphs.  For the sake of generality, we say that a round representation $\Phi$ is a \emph{co-contig} of $\Phi$ when $G(\Phi)$ is co-connected.  Note that, consequently, we may not assume that co-contigs are robust or have co-contig ranges.  Also, to make clear the parallelism between contigs and co-contigs, we use lowercase Greek letters to name the co-contigs of $\Phi$ when $G(\Phi)$ needs not be co-connected. 

\begin{figure}
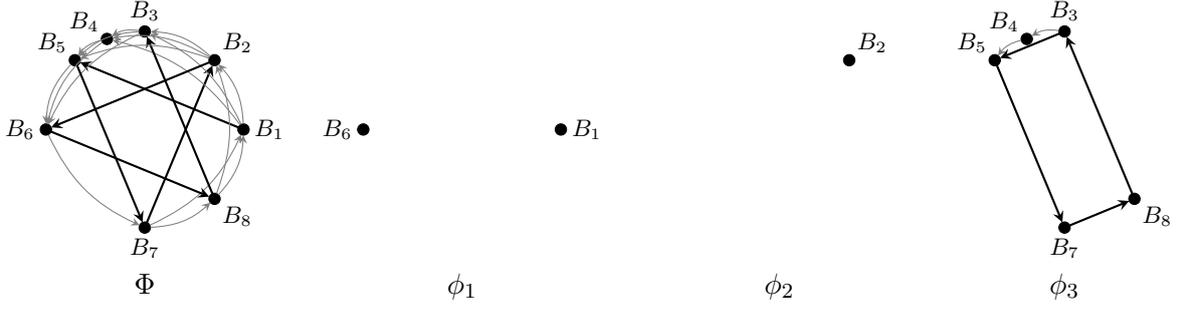

 \begin{tabular}{c@{\hspace{5mm}}c@{\hspace{5mm}}c@{\hspace{5mm}}c}
  \input{srcFigCoContigs} & \input{srcFigCoContigA} & \input{srcFigCoContigB} & \input{srcFigCoContigC}\\
  $\Phi$ & $\phi_1$ & $\phi_2$ & $\phi_3$
 \end{tabular}
  
  \caption{A round representation $\Phi$ with three co-contigs $\phi_1$, $\phi_2$, and $\phi_3$ (gray lines are used to depict the edges that are implied by black edges).  The left co-end semiblocks of $\Phi$ are $B_1$, $B_2$, $B_3$, $B_6$, and $B_7$, while its right co-end semiblocks are $B_1$, $B_2$, $B_5$, $B_6$, and $B_8$.  Note that each co-contig pair other than $\Bip{B_2}{\emptyset}$ describes a co-component of $G(\Phi)$.}\label{fig:co-contigs}
\end{figure}

\section{The Reception Theorem: a certification roadmap}
\label{sec:reception theorem}

Receptive pairs are the main concept used in~\cite{SoulignacA2015} for dealing with the insertion of a non-isolated vertex $v$ into $G$.  In simple terms, a pair of semiblocks is receptive when it witnesses that $H = G \cup \{v\}$ is PCA.  Its definition, however, depends on whether $v$ belongs to an end semiblock or not.  Suppose $H$ is represented by a round representation $\Psi$ and $\{v\}$ is a semiblock of $\Psi$.  Let $B_l = F_l(\{v\})$ if $\{v\}$ is not a left end semiblock, and $B_l = R(\{v\})$ otherwise.  Similarly, $B_r = F_r(\{v\})$ if $\{v\}$ is not a right end semiblock, while $B_r = L(\{v\})$ otherwise.  By definition, $\Phi = \Psi \setminus \{v\}$ is a round representation of $G$.  The pair $\Bip{B_l}{B_r}$ is referred to as being \emph{receptive} in $\Phi$, while $\Psi$ is the \emph{$\{v\}$-reception of $\Bip{B_l}{B_r}$} in $\Phi$.  Strictly speaking, $v$ plays no role when deciding if a pair is receptive; $\Bip{B_l}{B_r}$ is receptive if and only if $G \cup \{w\}$ is a PCA graph for any $w$ with $N(w) = N(v) = \bigcup [B_l, B_r]$ (recall $v$ is not isolated).  When applied to $H$ and $v$, we obtain that $H$ is a PCA graph if and only if $G$ admits a round representation $\Phi$ with a receptive pair $\Bip{B_l}{B_r}$ such that $N(v) = \bigcup [B_l, B_r]$.

As defined, the concept of receptive pairs applies to any round representation.  Yet, the dynamic algorithm deals with a rather restricted subset of round representations.  Say that a semiblock $B$ of a round representation $\Psi$ is a \emph{block} when $B$ is a block of $G(\Psi)$.  If every semiblock of $\Psi$ is a block, then $\Psi$ is a \emph{round block representation} and all its (co-)contigs are referred to as \emph{block (co-)contigs}.  When $\Psi$ is a round block representation, the round representation $\Phi = \Psi \setminus \{v\}$ is almost a block representation.  In fact, it can be easily observed that $\{L(B_l), B_l\}$ and $\{B_r, R(B_r)\}$ are the unique possible pairs of indistinguishable semiblocks of $\Phi$, while $\Phi$ has at most two universal semiblocks, one in $[B_l, B_r]$ and the other outside $[B_l, B_r]$.  For the sake of notation, we refer to $\Phi$ as \emph{$v$-receptive} when it contains a receptive pair $\Bip{B_l}{B_r}$ such that:
\begin{itemize}
  \item $N(v) = \bigcup [B_l, B_r]$, and
  \item no pair of semiblocks in $\B(\Phi) \setminus \{B_l, B_r\}$ are indistinguishable.
\end{itemize}

\begin{theorem}\label{thm:H is PCA iff v-reception}
  Let $H$ be a graph such that $v \in V(H)$ is not isolated.  Then, $H$ is PCA if and only if $H \setminus \{v\}$ admits a round representation $\Phi$ that is $v$-receptive.  Furthermore, the $\{v\}$-reception $\Phi$ is a round representation of $H$.
\end{theorem}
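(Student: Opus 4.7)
The plan is to handle the two directions separately, with the easy direction following from unwinding definitions and the hard direction requiring a constructive argument.

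\medskip

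\noindent\textbf{Backward direction ($\Leftarrow$).} This is immediate from the definitions. Suppose $\Phi$ is $v$-receptive with receptive pair $\Bip{B_l}{B_r}$ satisfying $N(v) = \bigcup [B_l, B_r]$. By the definition of receptive pair, there exists a $\{v\}$-reception $\Psi$ of $\Bip{B_l}{B_r}$ in $\Phi$ that is a round representation of $G(\Phi) \cup \{w\}$ for any $w$ with $N(w) = \bigcup[B_l, B_r]$. Taking $w = v$, we obtain that $\Psi$ represents $H$ and hence $H$ is PCA; this also establishes the ``Furthermore'' claim.

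\medskip

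\noindent\textbf{Forward direction ($\Rightarrow$).} The plan is to start from an arbitrary round representation of $H$ and massage it into the desired $v$-receptive shape in three steps. First, I would take any round representation $\Psi_0$ of $H$ and merge indistinguishable semiblocks pairwise until none remain; this yields a round block representation $\Psi$ of $H$ in which every semiblock is a block of $H$. Second, I would locate the unique semiblock $B^* \in \B(\Psi)$ containing $v$ and, if $|B^*| > 1$, split $B^*$ into two consecutive indistinguishable semiblocks $\{v\}$ and $B^- = B^* \setminus \{v\}$ (in either order), extending $R$ and $F_r$ in the only possible way so as to obtain a round representation $\Psi'$ of $H$. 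Third, I would set $\Phi := \Psi' \setminus \{v\}$ and define $B_l, B_r$ from $\{v\}$ in $\Psi'$ as in the definition of a receptive pair.

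\medskip

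It then remains to verify the three conditions of $v$-receptivity: (a) $\Bip{B_l}{B_r}$ is receptive in $\Phi$; (b) $N(v) = \bigcup[B_l, B_r]$; (c) the only possible indistinguishable pairs of $\Phi$ are $\{L(B_l), B_l\}$ and $\{B_r, R(B_r)\}$. Item (a) is immediate because $\Psi'$ itself serves as the $\{v\}$-reception witness, which also gives the ``Furthermore'' part. Item (b) is a small case analysis over whether $\{v\}$ is a left end, right end, or interior semiblock of $\Psi'$, invoking the assumption that $v$ is not isolated to rule out $\{v\}$ being its entire contig, and using $N_H[v] = \bigcup[F_l(\{v\}), F_r(\{v\})]$.

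\medskip

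\noindent\textbf{Main obstacle.} The delicate point is item (c). Because $\Psi$ is a block representation, the only indistinguishable pair introduced in passing to $\Psi'$ is $\{B^-, \{v\}\}$, and no other pair of semiblocks of $\Psi'$ is indistinguishable. When $\{v\}$ is removed to obtain $\Phi$, the values $F_l^\Phi, F_r^\Phi$ can differ from $F_l^{\Psi'}, F_r^{\Psi'}$ only for semiblocks whose extremal neighbor in $\Psi'$ was $\{v\}$; for these, the new leftmost (resp.\ rightmost) neighbor collapses to $B_l$ (resp.\ $B_r$). A short case analysis shows that the semiblocks that can share such a collapsed extremal neighbor are precisely $L(B_l)$ with $B_l$ (on the left side) and $B_r$ with $R(B_r)$ (on the right side); anything else would either duplicate an indistinguishable pair already present in $\Psi$ (impossible since $\Psi$ is a block representation) or force a semiblock outside $[B_l, B_r]$ to acquire $\{v\}$ as a neighbor (impossible since $N(v) = \bigcup[B_l, B_r]$). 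This yields the required restriction on indistinguishable pairs of $\Phi$ and completes the proof.
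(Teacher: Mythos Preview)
Your proposal is correct and follows essentially the same route the paper takes. The paper does not give a formal proof of this theorem; it treats it as a straightforward observation, sketching the forward direction in the paragraph immediately preceding the statement (start from a round block representation $\Psi$ of $H$ with $\{v\}$ a semiblock, set $\Phi=\Psi\setminus\{v\}$, and ``easily observe'' that $\{L(B_l),B_l\}$ and $\{B_r,R(B_r)\}$ are the only possible indistinguishable pairs of $\Phi$). Your three-step construction---merge to a block representation, separate $v$ from its block if it has twins, then delete---is exactly this argument made explicit, with the added virtue that you handle the twin case the paper glosses over.

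One minor imprecision: in your item~(c) you say the new extremal neighbor ``collapses to $B_l$ (resp.\ $B_r$)'', but in fact a semiblock $X$ with $F_r^{\Psi'}(X)=\{v\}$ gets $F_r^\Phi(X)=L^{\Psi'}(\{v\})$, not $B_r$. This does not affect your conclusion, since the only way a new indistinguishable pair can arise is still through the shift at $\{v\}$'s immediate neighbors, and tracing this through gives exactly the pairs $\{L(B_l),B_l\}$ and $\{B_r,R(B_r)\}$ you (and the paper) claim.
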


The above observation is quite straightforward, but it tells us little about the $v$-receptive representations of $G$.  In~\cite{DengHellHuangSJC1996,HellShamirSharanSJC2001,SoulignacA2015}, tools for efficiently finding and transforming $\Phi$ into a round representation of $H$ are developed.  The Receptive Pair Lemma of~\cite{SoulignacA2015}, that generalizes some results in~\cite{DengHellHuangSJC1996,HellShamirSharanSJC2001}, is one of such tools.  For the sake of simplicity, we present a unified view of~\cite{DengHellHuangSJC1996,HellShamirSharanSJC2001,SoulignacA2015}.

Let $\Phi$ be a round representation and $B_l \neq B_r$ be semiblocks of $\B(\Phi)$.  A semiblock $B_m \in \B(\Phi)$ \emph{witnesses that $\Bip{B_l}{B_r}$ is receptive} in $\Phi$ when (see Figure~\ref{fig:receptive}):
\begin{enumerate}[({wit}$_a$)]
  \item $B_m$ is an end semiblock, $B_l \toeq B_m$, and $B_m \toeq B_r$, or \label{def:wit-1}
  \item $B_l \toeq B_m$, $B_m \nto R(B_r)$, $L(B_l) \nto R(B_m)$, and $\RR(B_m) \toeq B_r$. \label{def:wit-2}
\end{enumerate}
\newcommand{\witnessref}[1]{\itemref{wit}{wit}{#1}}
The essence of the insertions methods in~\cite{DengHellHuangSJC1996,HellShamirSharanSJC2001,SoulignacA2015} is captured in the next lemma.  

\begin{figure}
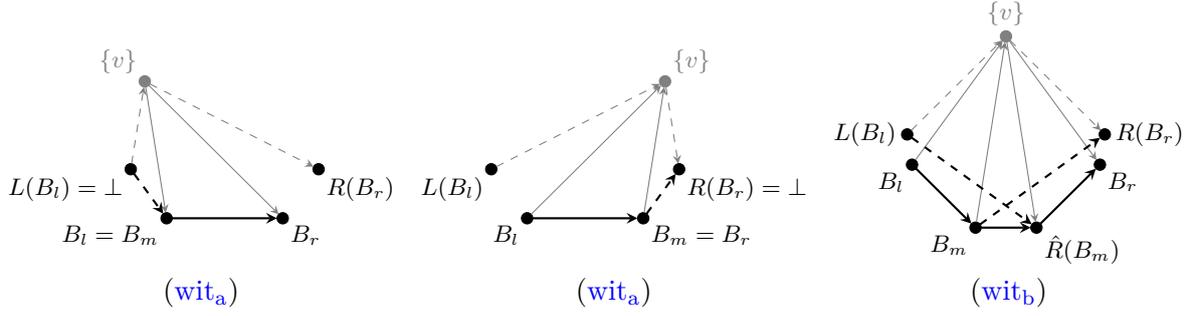

  \centering
  \begin{tabular}{c@{\hspace{3.5mm}}c@{\hspace{3.5mm}}c}
    \input{srcFigReceptiveA1} & \input{srcFigReceptiveA2} & \input{srcFigReceptiveB} \\
    \witnessref{1} & \witnessref{1} & \witnessref{2}
  \end{tabular}
  \caption{The possibilities for a semiblock $B_m$ to witness that $\Bip{B_l}{B_r}$ is receptive. A dashed edge from $X$ to $Y$ means that $X \nto Y$, while $X \to Y$ is possible when the edge is missing (and $\Phi$ is a round representation). Also, some of the depicted semiblocks could be equal.  The corresponding $\{v\}$-receptions are drawn in gray.}\label{fig:receptive}
\end{figure}

\begin{lemma}[Receptive Pair Lemma~\cite{DengHellHuangSJC1996,HellShamirSharanSJC2001,SoulignacA2015}]\label{lem:receptive representation}
  Let $\phi_1, \phi_2$ be two (possibly equal) contigs that contain the semiblocks $B_l$ and $B_r$, respectively.  Then, $\Bip{B_l}{B_r}$ is receptive in $\{\phi_1, \phi_2\}$ if and only if $B_m \in [B_l, B_r]$ witnesses that $\Bip{B_l}{B_r}$ is receptive in $\{\phi_1, \phi_2\}$.
\end{lemma}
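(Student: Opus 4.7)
The plan is to prove both implications separately, constructing the $\{v\}$-reception $\Psi$ from a witness $B_m$ for the sufficiency direction and extracting $B_m$ from $\Psi$ for the necessity direction.

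For sufficiency ($\Leftarrow$), given a witness $B_m \in [B_l, B_r]$ I would explicitly construct $\Psi$. In case \witnessref{1}, the conditions $B_l \toeq B_m$ and $B_m \toeq B_r$ together with $B_m$ being an end semiblock force $\phi_1 = \phi_2$ to be linear and $B_m \in \{B_l, B_r\}$, so I insert $\{v\}$ adjacent to $B_m$ as a new end of this contig, keeping $\Psi$ linear. In case \witnessref{2}, I insert $\{v\}$ strictly between $B_m$ and $\RR(B_m)$ in the combined ordering $\B(\phi_1) \Cat \B(\phi_2)$, setting $F_l^\Psi(\{v\}) = B_l$ and $F_r^\Psi(\{v\}) = B_r$; depending on whether $B_m$ is internal or is the right end of $\phi_1$, this amounts to either an internal insertion, closing a linear contig into a circle (when $\phi_1 = \phi_2$ and $R(B_m) = \bot$), or bridging $\phi_1$ and $\phi_2$ into a single linear contig (when $\phi_1 \neq \phi_2$). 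The four conditions of \witnessref{2} are exactly what makes the new representation satisfy the contig axioms: $B_l \toeq B_m$ gives every semiblock in $[B_l, B_m]$ a direct arrow to $\{v\}$; $\RR(B_m) \toeq B_r$ gives $\{v\}$ a direct arrow to every semiblock in $[\RR(B_m), B_r]$; $L(B_l) \nto R(B_m)$ blocks $L(B_l)$ from reaching $\{v\}$, so $F_l^\Psi(\{v\}) = B_l$ is correct; and $B_m \nto R(B_r)$ blocks $\{v\}$ from reaching $R(B_r)$, so $F_r^\Psi(\{v\}) = B_r$ is correct. Axiom (iii) for the pairs now involving $\{v\}$ follows from the same axiom in $\Phi$ together with these nto-conditions.

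For necessity ($\Rightarrow$), suppose $\Psi$ is a $\{v\}$-reception of $\Bip{B_l}{B_r}$. I would pick $B_m$ according to where $\{v\}$ sits in $\Psi$. If $\{v\}$ is a left end of $\Psi$, then $B_l = R^\Psi(\{v\})$ is a left end of $\Phi$; I set $B_m = B_l$, so that $B_l \toeq B_m$ trivially and $B_m \toeq B_r$ follows from $F_r^\Psi(\{v\}) = B_r$ combined with axiom (i) at $\{v\}$ in $\Psi$, establishing \witnessref{1}. The case in which $\{v\}$ is a right end is symmetric. Otherwise, $\{v\}$ is an internal semiblock in $\Psi$, and I set $B_m = L^\Psi(\{v\})$, which lies in $[B_l, B_r]$ by the definition of $F_l^\Psi$. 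The four conditions of \witnessref{2} then translate directly from $F_l^\Psi(\{v\}) = B_l$, $F_r^\Psi(\{v\}) = B_r$, and axiom (iii) applied to the pairs involving $L(B_l), \{v\}, R(B_r)$ in $\Psi$, once one observes that $R^\Psi(\{v\}) = \RR^\Phi(B_m)$ after $\{v\}$ is removed.

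The main obstacle is the case analysis forced by the robust definitions in the sufficiency direction. Depending on whether $\phi_1 = \phi_2$ and on the position of $B_m$, the resulting $\Psi$ is either a one-contig linear round representation, a one-contig circular representation, or a two-contig representation in which only one contig has changed. The robust mapping $\RR$ is designed precisely to unify these subcases, since $\RR(B_m) \toeq B_r$ in \witnessref{2} simultaneously covers the wrap-around scenarios (when $R(B_m) = \bot$ and $\RR(B_m)$ is either the left end of the other contig, or the left end of the same contig when it becomes circular) and the ordinary internal case. Verifying axiom (iii) for the pairs now involving $\{v\}$ is the most delicate check, and it reduces cleanly to the nto-conditions in \witnessref{2} combined with axiom (iii) in $\Phi$.
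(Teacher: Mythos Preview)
Your proposal is correct and matches the paper's own argument almost exactly. The paper gives only a brief sketch after the lemma: for sufficiency it inserts $\{v\}$ immediately to the right of $B_m$ (or to the left when $B_m = B_l$ in case \witnessref{1}), and for necessity it takes $B_m = L^\Psi(\{v\})$ (or $R^\Psi(\{v\})$ when $L^\Psi(\{v\}) = \bot$); your write-up fleshes out precisely these steps, including the case analysis on whether $B_m$ is an end semiblock and the verification of the contig axioms via the four conditions of \witnessref{2}.
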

\newcommand{\receptionref}[1]{\itemref{rec}{rec}{#1}}

The Receptive Pair Lemma can be proved with not much effort by following the definitions (see~\cite{SoulignacA2015} and Figure~\ref{fig:receptive}).  Indeed, if $B_m$ witnesses that $\Bip{B_l}{B_r}$ is receptive, then a $\{v\}$-reception is obtained by: inserting $\{v\}$ immediately to the right of $B_m$ if \witnessref{2} or $B_m \neq B_l$; or inserting $\{v\}$ immediately to the left of $B_m = B_l$ if \witnessref{1}.  On the other hand, if $\Psi$ is a $\{v\}$-reception of $\Bip{B_l}{B_r}$, then either $L(\{v\})$ or $R(\{v\})$ (if $L(\{v\}) = \bot$) witnesses that $\Bip{B_l}{B_r}$ is receptive.  The Receptive Pair Lemma is an asymmetric tool: it suffices to find one $v$-receptive representation of $G$ to claim that $H$ is PCA, while all the round representations of $G$ must be discarded before claiming that $H$ is not PCA.  Moreover, a round representation of $H$ is available when $H$ is PCA, whereas no minimally forbidden is found when $H$ is not PCA.  The Reception Theorem combines Theorem~\ref{thm:H is PCA iff v-reception} with a slight generalization of the Receptive Pair Lemma that takes $N(v)$ into account. For a better exposition, we consider only the case in which $H$ is connected.  Nevertheless, conditions \receptionref{1}--\receptionref{3} are general.

\begin{corollary}[Reception Theorem]\label{thm:reception}
  Let $H$ be a connected graph with a vertex $v$.  Then, $H$ is PCA if and only if $H \setminus \{v\}$ admits a round representation $\Phi$ that contains two semiblocks $B_l$, $B_r$ such that:
  \begin{enumerate}[({rec}$_1$)]
    \item $N(v) = \bigcup [B_l, B_r]$, \label{def:rec-1}
    \item no pair of semiblocks in $\Phi \setminus \{B_l, B_r\}$ are indistinguishable, \label{def:rec-2}
    \item $B_m \in [B_l, B_r]$ witnesses that $\Bip{B_l}{B_r}$ is receptive in $\Phi$. \label{def:rec-3}
  \end{enumerate}
\end{corollary}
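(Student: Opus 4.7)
The plan is to derive the corollary by assembling Theorem~\ref{thm:H is PCA iff v-reception} with Lemma~\ref{lem:receptive representation}. Both citations do the heavy lifting; the remaining work is to justify that their hypotheses apply here, and this is precisely where the connectedness of $H$ is consumed. The single nontrivial observation is that when $H$ is connected, the round representation $\Phi$ of $H \setminus \{v\}$ under consideration must consist of at most two contigs, so that the Receptive Pair Lemma (which is stated for a pair $\{\phi_1,\phi_2\}$) applies to $\Phi$ in its entirety.

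I would first dispose of the degenerate case $H = \{v\}$, in which $\Phi = \emptyset$ vacuously represents $H \setminus \{v\}$. Otherwise $v$ is not isolated since $H$ is connected. For the forward direction, suppose $H$ is PCA. By Theorem~\ref{thm:H is PCA iff v-reception}, there is a $v$-receptive round representation $\Phi$ of $H \setminus \{v\}$ with receptive pair $\Bip{B_l}{B_r}$; the definition of $v$-receptivity furnishes \receptionref{1} and \receptionref{2} directly. Since $H$ is connected, every component of $H \setminus \{v\}$ contains some neighbor of $v$; equivalently, every contig of $\Phi$ intersects $\bigcup[B_l, B_r]$. As $[B_l, B_r]$ spans at most two contigs by its very definition, $\Phi$ has at most two contigs, and Lemma~\ref{lem:receptive representation} produces a witness $B_m \in [B_l, B_r]$, establishing \receptionref{3}.

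For the backward direction, let $\Phi$ be a round representation of $H \setminus \{v\}$ satisfying \receptionref{1}--\receptionref{3}. By \receptionref{1}, the semiblocks in $N(v)$ all lie in $[B_l, B_r]$, which spans at most two contigs $\phi_1, \phi_2$ of $\Phi$ (possibly $\phi_1 = \phi_2$). Any third contig of $\Phi$ would describe a component of $H \setminus \{v\}$ disjoint from $N(v)$, contradicting connectedness of $H$. Hence $\Phi = \{\phi_1, \phi_2\}$, so Lemma~\ref{lem:receptive representation} together with \receptionref{3} yields that $\Bip{B_l}{B_r}$ is receptive in $\Phi$. Combined with \receptionref{1} and \receptionref{2}, this says $\Phi$ is $v$-receptive, and Theorem~\ref{thm:H is PCA iff v-reception} concludes that $H$ is PCA.

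The main obstacle --- such as it is --- is entirely contained in the two preceding results; what remains for the corollary is the bookkeeping translation between an arbitrary $\Phi$ and the two-contig case to which the Receptive Pair Lemma literally applies. The pivot of that translation is the connectedness hypothesis on $H$, which is what forces every component of $H \setminus \{v\}$ to meet $N(v)$ and thereby confines $\Phi$ to at most two contigs.
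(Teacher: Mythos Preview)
Your proposal is correct and follows essentially the same approach as the paper, which presents the Reception Theorem as an immediate corollary obtained by combining Theorem~\ref{thm:H is PCA iff v-reception} with the Receptive Pair Lemma (Lemma~\ref{lem:receptive representation}). You make explicit the one point the paper leaves implicit: connectedness of $H$ forces every contig of $\Phi$ to meet $N(v)=\bigcup[B_l,B_r]$, so $\Phi$ has at most two contigs and the Receptive Pair Lemma applies to $\Phi$ itself.
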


Technically speaking, \receptionref{1}--\receptionref{3} are statements dealing with pairs of semiblocks.  For the sake of simplicity, we say that a round representation $\Phi$ \emph{satisfies} a subset $P$ of \receptionref{1}--\receptionref{3} when $\Phi$ has two semiblocks $B_l$ and $B_r$ that simultaneously satisfies all the conditions in $P$.  

Despite the simplicity of the Reception Theorem, the problem of finding a $v$-receptive representation is not an easy task, specially when the time constraints are tight.  Most of the effort in~\cite{DengHellHuangSJC1996,HellShamirSharanSJC2001,SoulignacA2015} is spent on finding such a $v$-receptive representation efficiently.  The problem of finding a minimally forbidden is mostly, but not completely, ignored in these articles.  In fact, the Reception Theorem made its first appearance in~\cite{DengHellHuangSJC1996}, where the authors consider a rather restricted case in which $G$ is PIG and both $G$ and $H$ are connected.  The advantage of this case is that $G$ admits only two contigs, namely $\gamma$ and $\gamma^{-1}$.  By~\receptionref{1}, $N(v)$ must be a range of $\gamma$, which implies that there are exactly two contigs $\phi$ and $\phi^{-1}$ that can satisfy \receptionref{1} and \receptionref{2}.  In their proof of the Reception Theorem, Deng et al.~\cite{DengHellHuangSJC1996} exhibit a minimally forbidden of $H$ when either $N(v)$ is not consecutive in $\gamma$ or $\phi$ does not satisfy \receptionref{3}.  Although it is not explicit in~\cite{DengHellHuangSJC1996}, an $O(n)$ time algorithm for obtaining such a minimally forbidden, when $\gamma$ and $N(v)$ are give as input, follows as a by-product.  It is not hard to extend this certified algorithm to the case in which the PIG graph $G$ can be disconnected.

Our aim in the present manuscript is to design a certifying and dynamic algorithm for the recognition of PCA graphs, following the framework discussed in Section~\ref{sec:introduction}.  Thus, we ought to compute a minimally forbidden each time an insertion of a vertex $v$ fails.  The main idea is to prove the Reception Theorem following the same path as Deng et al.  That is, we show a minimally forbidden of $H$ when \textbf{no} round representation of $H \setminus \{v\}$ is $v$-receptive.  However, we spend $O(d(v)+\epTest)$ time to build the minimally forbidden.

\section{The data structure}
\label{sec:data structure}

As anticipated, the algorithm keeps two differentiated data structures.  One implements a round block representation to witness that $G$ is PCA, while the other represents an induced path of $G$ that is extended to a negative witness when the insertion of a vertex fails.  The following sections present the data types involved in the dynamic algorithm.

\subsection{Contigs}
\label{sec:data structure:contig}

The data structure we use to implement contigs is the same as discussed in~\cite{SoulignacA2015}.  For completeness, we describe its interface as an abstract data type; for implementations details see~\cite{SoulignacA2015}.

Each contig $\phi$ presents itself as the collection of semiblocks $\B(\phi)$, where each $B \in \B(\phi)$ stores its set of vertices.  Also, each $B \in \B(\phi)$ and each $v \in B$ allow the user to store some additional data.  The internal structure and the semiblocks and vertices of $\phi$ are exclusively owned by $\phi$, thus the modifications applied on $\phi$ have no impact on the data structures of other contigs.  Moreover, a user cannot directly access $\phi$, a semiblock $B \in \B(\phi)$, or a vertex $v \in B$.  Instead, a \emph{semiblock (resp.\ vertex) pointer} $\DS{B}$ (resp.\ $\DS{v}$) \emph{associated to} $\phi$ must be employed to access $B$ (resp.\ $v$).  For simplicity, we also say that $\phi$ is \emph{referenced by} $\DS{B}$ (resp.\ $\DS{v}$).  The pointer $\DS{B}$ is aware of the internal structure of $\phi$, thus it can be used used to manipulate both $B$ and $\phi$.  However, $\DS{B}$ knows nothing about the other semiblock pointers associated to $\phi$ or the semiblocks in $\B(\phi) \setminus \{B\}$.   Hence, there is no way to answer, in $O(1)$ time, if $\DS{B}$ is associated to the same contig as other pointer $\DS{W}$.  (Roughly speaking, $\phi$ is similar to doubly linked lists in which the semiblocks play the role of nodes and semiblock pointers are pointers to the nodes.)

The following functions that operate on contigs and semiblocks are provided in~\cite{DengHellHuangSJC1996,HellShamirSharanSJC2001,SoulignacA2015}.  As usual, we use lower case Greek letters for contigs, capital Roman letters for semiblocks, and tildes to indicate pointers.
\begin{description}
  \item [\tt newContig()] creates a new contig that contains only one block $B = \{v\}$ and returns the pointers to $B$ and $v$.  Complexity: $O(1)$ time.
  \item [{\tt vertices($\DS{B}$)}] returns (an iterator to) $\{\DS{v} \mid v \in B\}$. Complexity: $O(1)$ time.
  \item [{\tt semiblock($\DS{v}$)}] returns a pointer to the semiblock that contains $v$. Complexity $O(1)$ time.
  \item [$f(\DS{B})$] returns a semiblock pointer to $f(B)$ for $f \in \{L, R, F_l, F_r, U_r, U_l\}$.  Complexity: $O(1)$ time.
  \item [${[\DS{B}_1, \DS{B}_2]}$] returns a list of semiblocks pointers that represents the range $[B_1, B_2]$ of $\{\phi_1,\phi_2\}$, where $\phi_i$ is the contig referenced by $\DS{B}_i$ for $i \in \{1,2\}$.  The ranges $(\DS{B}_1, \DS{B}_2]$, $[\DS{B}_1, \DS{B}_2)$ and $(\DS{B}_1, \DS{B}_2)$ work similarly. Complexity: $O(\#[B_1, B_2])$ time.  
  \item [\tt receptive($\DS{B}_l$, $\DS{B}_r$)] is true when $\Bip{B_l}{B_r}$ is receptive in $\{\phi_1, \phi_2\}$, where $\phi_1$ and $\phi_2$ are the contigs referenced by $\DS{B}_l$ and $\DS{B}_r$, respectively.  Complexity: $O(\#[B_l, B_r])$ time.  
  \item [\tt reception($\DS{B}_l$, $\DS{B}_r$)] transforms $\phi_1$ and $\phi_2$ into the $\{v\}$-reception $\psi$ of $\Bip{B_l}{B_r}$, where $\phi_1$ and $\phi_2$ are the contigs referenced by $\DS{B}_l$ and $\DS{B}_r$, respectively.  Returns a pointer to $\{v\} \in \B(\psi)$.  Requires $\Bip{B_l}{B_r}$ to be receptive in $\{\phi_1, \phi_2\}$.  Complexity: $O(\#[B_l, B_r])$ time.  
  \item [\tt remove($\DS{B}$)] transforms the contig $\psi$ referenced by $\DS{B}$ into the contigs of $\{\psi\} \setminus \{B\}$ and the contig $\gamma$ whose only semiblock is $B$.  Note that $\{\psi\} \setminus \{B\}$ has either one or two contigs, thus at most three contigs are generated.   Complexity: $O(\#[F_l(B), F_r(B)])$ time.  
  \item [\tt separate($\DS{B}$, $W$)] transforms the contig $\gamma$ referenced by $\DS{B}$ into a contig $\phi$ representing $G(\gamma)$ that is obtained by splitting $B$ into two indistinguishable semiblocks $B \setminus W$ and $W$ in such a way that $R^\phi(W) = R^\gamma(B)$, $L^\phi(W) = B \setminus W$, and $L^\phi(B \setminus W) = L^\gamma(B)$.  The other semiblocks of $\gamma$ are not affected by this operation.  It requires $W \subseteq B$, and it has no effects when either $W = B$ or $W = \emptyset$.  Note that $W$ is not a semiblock pointer, but a set of vertex pointers.  Complexity: $O(|W|)$ time.  (See Figure~\ref{fig:operations}.)
  \item [\tt separate($W$, $\DS{B}$)] does the same as {\tt separate($\DS{B}$, $B\setminus W$)}.  Complexity: $O(|W|)$ time.
  \item [\tt compact($\DS{B}$)] transforms the contig $\phi$ referenced by $\DS{B}$ into a contig $\gamma$ representing $G(\phi)$ that is obtained by joining the indistinguishable semiblocks $B$ and $R(B)$ into one semiblock $B \cup R(B)$ in such a way that $L^\gamma(B \cup R(B)) = L^\phi(B)$ and $R^\gamma(B\cup R(B)) = R^\phi(B)$.  The other semiblocks of $\phi$ are not affected by this operation.  It has no effects when $B$ and $R(B)$ are not indistinguishable.  Complexity: $O(\min(|B|, |R(B)|)$ time. (See Figure~\ref{fig:operations}.)
  \item [\tt join($\DS{B}_1$, $\DS{B}_2$)] transforms $\phi_1$ and $\phi_2$ into a new contig $\psi$ that represents $G(\phi_1) + G(\phi_2)$, where $\phi_i$ is the contig referenced by $\DS{B}_i$ for $i \in \{1,2\}$.  The semiblocks of $\psi$ appear as in the ordering $[B_1, F_r^{\phi_1}(B_1)]$ $\Cat$ $[B_2, F_r^{\phi_2}(B_2)]$ $\Cat$ $(F_r^{\phi_1}(B_1), B_1)$ $\Cat$ $(F_r^{\phi_2}(B_2), B_2)$.  It has an undefined behavior when either 1.~$B_1$ or $B_2$ is not a left co-end block or 2.~$\DS{B}_1$ and $\DS{B}_2$ are associated to the same contig.  Time complexity: $O(u)$ time, where $u$ is the number of universal semiblocks in $\psi$. (See Figure~\ref{fig:operations}.)
  \item [\tt split($\DS{B}_1$, $\DS{B}_2$)] transforms the contig $\psi$ referenced by $\DS{B}_1$ and $\DS{B}_2$ into two co-contigs $\phi_1$ and $\phi_2$ in such a way that $G(\psi) = G(\phi_1) + G(\phi_2)$.  The semiblocks of $\phi_1$ appear as in the ordering $[B_1, B_2) \Cat (F_r^\psi(B_1), F_r^\psi(B_2)]$, while the semiblocks of $\phi_2$ appear as in the ordering $[B_2, F_r^\psi(B_1)] \Cat (F_r^\psi(B_2), B_1)$.  It has an undefined behavior when either 1.~$B_1$ or $B_2$ is not a left co-end block or 2.~$\DS{B}_1$ and $\DS{B}_2$ are associated to different contigs.  Note that this operation requires $B_1 \to B_2$; if this is not the case, then $B_2 \to B_1$, thus the operation works as if the parameters where inverted.  Time complexity: $O(u)$ time, where $u$ is the number of universal semiblocks in $\psi$. (See Figure~\ref{fig:operations}.)
\end{description}
Several of the above operations \emph{transform} a contig $\phi$ into a new contig $\psi$.  This means that the physical structure of $\phi$ is changed to obtain $\psi$, thus $\phi$ ceases to exist as such.  The semiblock pointers associated to $\phi$ are not invalidated, though; instead, they get associated to $\psi$.  We remark, also, that {\tt split($B_1, B_2$)} reverses the effects of {\tt join($B_1$, $B_2$)}, but the converse is not true in general.  The subtle problem is that {\tt split} outputs co-contigs, and co-contigs involve two contigs when they represents disconnected graphs (see Figure~\ref{fig:co-contigs} ($\phi_1$) and Figure~\ref{fig:operations} ($\phi_i$)).  We deal with this problem in Section~\ref{sec:data structure:round representations}.

\begin{figure}
  \begin{tabular}{c@{\hspace{3.5mm}}c@{\hspace{3.5mm}}c@{\hspace{3.5mm}}c}
    \input{srcFigCompact} & \input{srcFigSeparate} & \input{srcFigJoin} & \input{srcFigSplit}\\
    $\gamma$ & $\phi$ & $\psi$ & $\phi_i$
  \end{tabular}
  \caption{Effects of some operations on contigs: $\gamma$ equals {\tt compact($\DS{B}$)} for $B \in \B(\phi)$; $\phi$ equals {\tt separate($\DS{B}$, $W$)} for $B \in \B(\gamma)$; $\psi$ equals {\tt join($\DS{B}_1$, $\DS{B}_2$)} for $B_1 \in \phi_1$ and $B_2 \in \phi_2$; and $\{\phi_i,\phi_2\}$ equals {\tt split($\DS{B}_1$, $\DS{B}_2$)} for $B_1, B_2 \in \psi$.  Dashed lines represent non-edges, while missing lines represent edges that could be present of absent.  Note that some blocks may be equal, while $B_3$ (resp.\ $B_4$) does not exist when $B_1$ (resp.\ $B_2$) is universal.}\label{fig:operations}
\end{figure}

\subsection{Semiblock paths}

Together with each contig $\phi$, the dynamic algorithm keeps a path of $\phi$ that is used for the generation of negative witnesses.  Say that a semiblock $B \in \B(\phi)$ is \emph{long} when $F_r(B) \to F_l(B)$; those semiblocks that are not long are referred to as \emph{short}.  A \emph{semiblock path} $\MP$ of $\phi$ is an ordering $B_1, \ldots, B_k$ of a subset of $\B(\phi)$ such that:
\begin{itemize}
 \item $B_i \to B_{i+1}$ and $B_i \nto B_j$ for every $1 \leq i <  k$ and $j \neq i+1$.
 \item If $\phi$ is linear, then $\MP$ is linear and $B_1$ and $B_k$ are the end semiblocks of $\phi$.
 \item If $\phi$ is circular, then $\MP$ is circular and $B_k \to B_1$ and $B_k \nto B_2$.
 \item If $\phi$ contains a long semiblock, then $k = 3$.
\end{itemize}
Each semiblock path $\MP(\phi) = B_1, \ldots, B_k$ is implemented with a doubly linked list of semiblock pointers $\DS{B}_1, \ldots, \DS{B}_k$; the list is circular if and only if $\MP(\phi)$ is circular.  Conversely, each semiblock $B \in \B(\phi)$ has a \emph{path pointer} to the position that $\DS{B}$ occupies in $\MP(\phi)$; this is a null value when $B \not\in \MP(\phi)$.  Thus, $O(1)$ time is enough to detect if $B$ belongs $\MP(\phi)$, to access and remove $\DS{B}$ from $\MP(\phi)$, and to insert new semiblock pointers in $\MP(\phi)$ to the left or the right of $\DS{B}$.  

We now show how to efficiently update $\MP(\phi)$ (and the path pointers of $\phi$) each time a contig $\phi$ is updated.  As discussed in the previous section, there are eight operations that change the structure of a contig: {\tt newContig}, {\tt leftSeparate}, {\tt rightSeparate}, {\tt compact}, {\tt join}, {\tt split}, {\tt reception}, and {\tt remove}.  Updating $\MP(\phi)$ in $O(1)$ time is trivial for the first four operations.  Regarding {\tt join} and {\tt split}, note that the input and output contigs represent co-bipartite graphs.  Thus, the semiblock paths of the input contigs can be erased in $O(1)$ time because they have at most four semiblock pointers.  After the  semiblock paths are erased, we build the new semiblock paths from scratch as in the next lemma. 

\begin{lemma}
  Let $B_l$ be a left co-end semiblock of a contig $\phi$.  If a semiblock pointer to $B_l$ is given, then a semiblock path can be computed in $O(1)$ time.
\end{lemma}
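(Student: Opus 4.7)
The plan is to leverage the remark immediately preceding the lemma, that a co-contig—which is the setting in which this lemma is applied for {\tt join} and {\tt split}—admits a semiblock path with at most four semiblocks. Thus it suffices to read off a constant-size set of candidate semiblocks from $B_l$ using $O(1)$ pointer operations. Since $B_l$ is a left co-end, the identity $\UU_r^2(B_l) = B_l$ gives immediate access to the ``opposite'' left co-end $C := \UU_r(B_l)$, and together with $F_r(B_l)$ and $F_r(C)$ the set $\{B_l, F_r(B_l), C, F_r(C)\}$ contains every semiblock that can appear in a semiblock path of $\phi$. Each of these four candidates is reachable from $B_l$ in $O(1)$ time through the available pointer operations.

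I would proceed by case analysis, each case determined by $O(1)$-time tests on the candidates. If $B_l$ is universal (detectable by $\UU_r(B_l) = \bot$ or $F_r(B_l) = B_l$), the contig consists of a single semiblock and $\MP = (B_l)$. If $\phi$ is linear but non-trivial (detected from $R(B_l)$ and/or $F_r(B_l)$), then $\MP$ is a linear ordering running from the left end $B_l$ to the right end of $\phi$, which must be one of $F_r(B_l)$, $C$, or $F_r(C)$. Otherwise $\phi$ is circular and $\MP$ is a cyclic ordering: three entries if $\phi$ contains a long semiblock (detectable in $O(1)$ from the $F_l, F_r$ values of the candidates, by the fourth bullet of the definition of semiblock paths), and four entries otherwise. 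In both subcases the cyclic order is the natural one $B_l, F_r(B_l), C, F_r(C)$, and in the long-semiblock case one of the entries is suppressed so as to force $k = 3$.

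The hard part will be verifying that the chosen sequence genuinely is a semiblock path, that is, that $B_i \to B_{i+1}$ holds between consecutive entries and $B_i \nto B_j$ holds between non-consecutive ones. Here the co-bipartite structure of $\phi$ implied by $B_l$ being a left co-end is essential: the definitions of $F_r$ and $\UU_r$, applied to the extremes of the two sides of the co-bipartition, produce exactly these $\to$ and $\nto$ relations, so the check in each case reduces to an unfolding of the definitions of $L, R, F_r, F_l, U_r, U_l$ together with the structural facts on co-contigs recalled in Section~\ref{sec:preliminaries}. Once the sequence is validated, assembling the doubly-linked list of semiblock pointers and initializing the path pointers on the selected semiblocks takes $O(1)$ time, giving the claimed bound.
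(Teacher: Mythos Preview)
Your high-level plan---collect a constant-size set of candidate semiblocks reachable from $B_l$ in $O(1)$ pointer steps and assemble the path by case analysis---matches the paper's, but two of your concrete claims are wrong and would make the argument fail as written.

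First, in the linear case you assert that the path ``run[s] from the left end $B_l$''. A left co-end semiblock need not be a left end semiblock: in a linear contig with blocks $B_1,\dots,B_4$ and $F_r(B_i)=B_{i+1}$, both $B_1$ and $B_3$ are left co-ends, so if the input pointer is to $B_3$ your procedure starts at the wrong place. The paper handles this by first computing $W_l = F_l^3(B_l)$ and arguing (via co-bipartiteness) that $W_l$ is the genuine left end; the path is then $W_l, F_r(W_l),\dots,F_r^i(W_l)$ for $i\le 3$.

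Second, and more seriously, your claim that $\{B_l, F_r(B_l), C, F_r(C)\}$ ``contains every semiblock that can appear in a semiblock path'' is false in the circular case with a long semiblock. With the paper's notation $B_r=F_r(B_l)$, $W_l=U_r(B_l)$, $W_r=F_r(W_l)$, when a length-$3$ path exists but $B_l$ itself is not long, the correct path is $B_l,B_r,F_r(B_r)$ or $W_l,W_r,F_r(W_r)$, and $F_r(B_r)=F_r^2(B_l)$ generally lies strictly inside $(W_l,W_r)$, outside your four candidates. Merely ``suppressing one entry'' from $\{B_l,B_r,W_l,W_r\}$ does not yield a semiblock path here (for instance $B_r\to W_r$ can fail). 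The paper's fix is to test whether $F_r^2(B_r)\neq W_r$ (resp.\ $F_r^2(W_r)\neq B_r$) and, if so, output $B_l,B_r,F_r(B_r)$ (resp.\ $W_l,W_r,F_r(W_r)$); only when both tests fail does the four-element path $B_l,B_r,W_l,W_r$ work. So you need two more candidates and a sharper criterion than ``detectable from the $F_l,F_r$ values of the candidates''.
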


\begin{proof}
  First suppose $\phi$ is a linear contig and observe that, in this case, $W_l = F_l^3(B)$ is the left end semiblock of $\phi$.  Indeed, if $W_l$ is not a left end semiblock, then $F_l(W_l)$, $F_l^2(B_l)$, and $B_l$ are pairwise non-adjacent semiblocks, which contradicts the fact that $G(\phi)$ is co-bipartite.  Similarly, $F_r^3(W_l)$ is the right end semiblock of $\phi$.  Thus, $F_r^0(W_l), \ldots, F_r^i(W_l)$ is a semiblock path, where $0 \leq i \leq 3$ is the minimum such that $F_r^i(W_l)$ is a right co-end semiblock.  Clearly, this semiblock path can be computed in $O(1)$ time.

  Now suppose $\phi$ is circular and let $B_r = F_r(B_l)$, $W_l = U_r(B_l)$ and $W_r = F_r(W_l)$.  Recall that $W_l$ is a left co-end block while $B_r$ and $W_r$ are right co-end blocks.  If $F_l(B_l) = W_l$, then $B_l$ is long and $B_l$, $B_r$, $W_l$ is a semiblock path.  Otherwise, $[B_l, B_r]$ and $[W_l, W_r]$ is a partition of $\B(\phi)$.  Moreover, $B_r \to W_l$ and $W_r \to B_l$ because $\phi$ is circular.  We consider two cases:
  \begin{description}
    \item [Case 1:] $\phi$ contains a semiblock path $B_1$, $B_2$, $B_3$.  Note that at least one of $B_1$, $B_2$, $B_3$ belongs to $[B_l, B_r]$ (resp.\ $[W_l, W_r]$); say $B_1 \in [B_l, B_r]$ and $B_3 \in [W_l, W_r]$.  If $B_2 \in [B_l, B_r]$, then $B_r \to B_3$ and $B_3 \to B_l$, which implies that $B_l$, $B_r$, $F_r(B_r)$ is a semiblock path.  Similarly, if $B_2 \in [W_l, W_r]$, then $W_l$, $W_r$, $F_r(W_r)$ is a semiblock path.
    \item [Case 2:] $\phi$ contains no semiblock $B$ such that $F_r(B) \to F_l(B)$.  In this case, $B_l$, $B_r$, $W_l$, $W_r$ is a semiblock path.
  \end{description}
  Note that $F_r(B_r) \toeq W_r$, thus $B_l$, $B_r$, $F_r(B_r)$ is a semiblock path if and only if $F_r^2(B_r) \neq W_r$.  Similarly, $W_l$, $W_r$, $F_r(W_r)$ is a semiblock path if and only if $F_r^2(W_r) \neq B_r$.  By Cases 1~and~2, we can compute a semiblock path in $O(1)$ time.
\end{proof}

In the case of {\tt reception($\DS{B}_l$, $\DS{B}_r$)} we have to modify both $\MP(\phi_1)$ and $\MP(\phi_2)$ to obtain $\MP(\psi)$, where $\phi_1$ and $\phi_2$ are the contigs referenced by $\DS{B}_l$ and $\DS{B}_r$, and $\psi$ is the $\{v\}$-reception of $\Bip{B_l}{B_r}$.  This update is applied after {\tt reception} is completed, thus we have access to a semiblock pointer of $\{v\}$.  The following lemma shows how to obtain $\MP(\psi)$ spending no more time than the required for {\tt reception}.

\begin{lemma}
  Let $\phi_1$ and $\phi_2$ be two (possibly equal) contigs such that $\{\phi_1,\phi_2\}$ contains a receptive pair $\Bip{B_l}{B_r}$ for $B_l \in \B(\phi_1)$ and $B_r \in \B(\phi_2)$, and $\psi$ be the $\{v\}$-reception of $\Bip{B_l}{B_r}$ in $\{\phi_1,\phi_2\}$.  Given a semiblock pointer to $B = \{v\}$ in $\psi$, it takes $O(\#[B_l, B_r])$ time to transform the semiblocks paths $\MP(\phi_1)$ and $\MP(\phi_2)$ into a semiblock path of $\psi$. 
\end{lemma}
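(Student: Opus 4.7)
The plan is to exploit the locality of the structural change induced by the reception: the $F_l$ and $F_r$ values of any semiblock outside the window $W = [L(B_l), R(B_r)]$ of $\psi$ are unchanged by the insertion of $\{v\}$, so $\to_\psi$ agrees with $\to_{\{\phi_1,\phi_2\}}$ on those semiblocks. Because the defining conditions of a semiblock path are purely local to the $\to$-relation at each pivot, any pivot of $\MP(\phi_1) \cup \MP(\phi_2)$ lying strictly outside $W$ remains a valid pivot of $\MP(\psi)$, with at most a boundary adjustment at the pivot nearest each end of $W$. This is what should keep the total work proportional to $\#[B_l, B_r]$.

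First I would classify $\psi$ in $O(1)$ time by inspecting $\{v\}$ and its boundary semiblocks: $\psi$ is either linear, circular with a long semiblock, or circular without a long semiblock. In the first and third classes, the old paths are mostly preserved and the algorithm proceeds in two substeps: (a) traverse $W$ once via $\RR^\psi$ and remove from the path every pivot whose local $\to$-relation has changed (nulling its path pointer); (b) starting from the surviving pivot of $\MP(\phi_1)$ immediately preceding $W$, perform a greedy $F_r^\psi$-walk inside $W$, inserting each jump's target as a new pivot until the walk exits $W$ and reconnects with the first surviving pivot to its right. Both substeps touch each semiblock of $W$ only a constant number of times, so they cost $O(\#[B_l, B_r])$.

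The delicate case is when $\psi$ is circular with a long semiblock, because then $|\MP(\psi)| = 3$ and all stale pivots of $\MP(\phi_i)$ must be cleared, while those paths could a priori be much longer than $\#[B_l, B_r]$. The main obstacle, therefore, is to establish that whenever $\psi$ acquires a long semiblock, the total number of pivots in $\MP(\phi_1) \cup \MP(\phi_2)$ is already $O(\#[B_l, B_r])$. The intended argument is that the presence of a long semiblock $B$ in $\psi$ forces either $B = \{v\}$ with $B_r \to_\psi B_l$, so that $\bigcup[B_l, B_r]$ nearly covers its contig and the total number of semiblocks is itself $O(\#[B_l, B_r])$, or $B \in [B_l, B_r]$ with $F_r^\psi(B)$ and $F_l^\psi(B)$ collapsed onto each other through the insertion, which again forces the old contig to be spanned by $W$ up to a constant. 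Once this structural fact is checked, clearing the stale path pointers by walking the old paths fits within the budget, and the new three-pivot path can be rebuilt in $O(1)$ time by querying the end- and co-end semiblocks around $\{v\}$ as in the previous lemma.
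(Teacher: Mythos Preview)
Your plan is workable but heavier than the paper's, which avoids both the global classification of $\psi$ and the greedy $F_r$-rewalk. The paper branches on two local tests about $B=\{v\}$: whether $B$ is an end semiblock of $\psi$ (then prepend or replace one pivot of $\MP(\phi_1)$), and, if not, whether $B$ itself is long, detected by scanning $[B_l,B)$ for $F_r^\psi(B_r)$. If $B$ is long, the triple $B,B_r,B_l$ is already a valid semiblock path. Otherwise the paper locates by a single pass the extreme pivots $B_a,B_b$ of $\MP(\phi_1)$ in $[B_l,B)$ and $B_c,B_d$ of $\MP(\phi_2)$ in $(B,B_r]$, and either keeps $\MP(\phi_1)$ intact (when $F_r^\psi(B_b)\neq B$) or replaces the segment between $B_a$ and $B_d$ by $B_a,B,B_d$. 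The simplification you miss is that testing $B$ alone suffices: if some $W\neq B$ were long in $\psi$ but not in $\phi$, then necessarily $F_r^\psi(W)=B$ or $F_l^\psi(W)=B$, and tracing the definitions forces $B_r\to_\phi B_l$, so $B$ is long as well. Thus your second sub-case (``some other semiblock becomes long'') collapses into the first, and no general rewalk is needed.

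Two corrections to your long-case sketch. First, the classification of $\psi$ is not $O(1)$: deciding whether $B$ is long already costs $O(\#[B_l,B_r])$, and the paper pays exactly that. Second, your claim that ``the total number of semiblocks is itself $O(\#[B_l,B_r])$'' is false: $(B_r,B_l)$ can contain arbitrarily many semiblocks, all lying in $(B_r,F_r(B_r)]$. What is true---and is what you actually need---is that $(B_r,B_l)$ contains at most two \emph{pivots} of $\MP(\phi)$. Indeed, three consecutive pivots $P_i,P_{i+1},P_{i+2}$ all in $(B_r,B_l)$ would, by monotonicity of $F_r$ together with $B_r\to B_l$, give $P_i\to B_l$ and hence $P_i\to P_{i+2}$, contradicting the path axiom $P_i\nto P_{i+2}$. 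So $|\MP(\phi_1)\cup\MP(\phi_2)|\le \#[B_l,B_r]+O(1)$, and the stale path pointers can indeed be cleared within the time budget by walking the old doubly-linked list from any pivot found during the scan of $[B_l,B_r]$. The paper glosses over this clearing step, so your instinct that it needs an argument is correct; it is just the bound on pivots, not on semiblocks, that makes it go through.
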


\begin{proof}
  Recall that, by definition, $N[B] = [B_l, B_r]$ and $\{\phi_1, \phi_2\} = \{\psi\} \setminus \{B\}$.   Consider the following alternatives.
  \begin{description}
    \item [Alternative 1:] $B$ is an end semiblock of $\psi$.  Suppose $B$ is the left end semiblock as the other case is analogous.  By definition, $\phi_1 = \phi_2$, $B_l = R(B)$ is the left end semiblock of $\phi_1$, and $B_r = F_r(B)$.  Traversing $[B_l, B_r]$ in $\psi$, we can check if $B \to B_2$ for the semiblock $B_2$ that follows $B_l$ in $\MP(\phi_1)$.  If affirmative, then a semiblock path of $\psi$ is obtained by replacing $B_l$ with $B$ in $\MP(\phi_1)$; otherwise, a semiblock path of $\psi$ is obtained by inserting $B$ before $B_1$ in $\MP(\phi_1)$. 
    \item [Alternative 2:] $B$ is not an end block of $\psi$, thus $B \in (B_l, B_r)$ in $\psi$, $B_l = F_l(B)$, and $B_r = F_r(B)$.  Traversing $[B_l, B)$, we can check if $F_r(B_r) \in [B_l, B)$.  If affirmative, then $B$ is long and $B$, $B_r$, $B_l$ is a semiblock path.  Suppose, from now on, that $B_r \nto B_l$.  By traversing $[B_l, B)$ we can find the leftmost semiblock $B_a$ and the rightmost semiblock $B_b$ of $\MP(\phi_1)$ such that $B_a \to B$ and $B_b \to B$ (possibly $B_a = B_b$).  Similarly, we can obtain the leftmost semiblock $B_c$ and the rightmost semiblock $B_d$ of $\MP(\phi_2)$ such that $B \to B_c$ and $B \to B_d$.  Note that these semiblocks exist because $\psi$ is a contig.  If $F_r^\psi(B_b) \neq B$, then $B_b$ is not an end semiblock of $\phi_1$, thus $\phi_1 = \phi_2$ and $B_b \to B_c$; consequently, $\MP(\phi_1)$ is a semiblock path of $\MP(\psi)$.  Otherwise, $B_b$ is the right end semiblock of $\phi_1$ and $B_c$ is the leftmost end semiblock of $\phi_2$ (perhaps $\phi_1 = \phi_2$).  Then, the ordering obtained from $\MP(\phi_1)$ by inserting $B$ between $B_a$ and $B_d$ (removing $B_b$ if $B_a \neq B_b$ and $B_c$ if $B_c \neq B_d$) is a semiblock path of $\psi$.
  \end{description}
  Using the path pointers, we can apply all the modifications required on $\MP(\phi_1)$ and $\MP(\phi_2)$ in $O(1)$ time. We conclude, therefore, that $O(\#[B_l, B_r])$ time suffices to transform $\MP(\phi_1)$ and $\MP(\phi_2)$ into a semiblock path of $\psi$.
\end{proof}

Finally, to update $\MP(\psi)$ after {\tt remove($\DS{B}$)}, where $\psi$ is the contig referenced by $\DS{B}$, we have to revert the process done in the previous lemma.  After the completion of {\tt remove}, we obtain at most three contigs, namely $\phi_1$, $\phi_2$, and $\gamma$, where $\phi_1$ is referenced by $L(\DS{B})$, $\phi_2$ is referenced by $R(\DS{B})$, and $\gamma$ is referenced by $\DS{B}$.  The computation of $\MP(\gamma)$ is trivial; to compute $\MP(\phi_1)$ and $\MP(\phi_2)$ we apply the following lemma \textbf{before} invoking {\tt remove}.

\begin{lemma}\label{lem:semiblock path remove}
  Let $B$ be a semiblock of a contig $\psi$.  Given a semiblock pointer to $B$, it takes $O(\#[F_l(B), F_r(B)])$ time to transform $\MP(\psi)$ into $\{\MP(\phi_1), \MP(\phi_2)\}$, where $\{\phi_1, \phi_2\}$ is the family of contigs of $\{\psi\} \setminus \{B\}$.
\end{lemma}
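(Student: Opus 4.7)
The plan is to reverse the construction of the reception case in the previous lemma. Using the semiblock pointer $\DS{B}$, we traverse $[F_l(B), F_r(B)]$ in $O(\#[F_l(B), F_r(B)])$ time. During this traversal we determine the structure of $\{\psi\}\setminus\{B\}$: whether it consists of one contig or splits into two, whether each resulting contig is linear, and whether any contains a long semiblock.

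The decision rests on a small case analysis mirroring Alternatives~1 and~2 of the reception lemma. If $B$ is an end semiblock of $\psi$, then $\{\psi\}\setminus\{B\}$ is a single linear contig $\phi_1$; we update $\MP(\psi)$ by replacing $B$ with the new end semiblock (if $B$ currently appears in $\MP(\psi)$) and possibly adjusting the adjacent path element, testing adjacencies inside $[F_l(B), F_r(B)]$. If $B$ is not an end semiblock but $F_r(B) \to F_l(B)$, i.e.\ $B$ is long in $\psi$, then $\{\psi\}\setminus\{B\}$ is a single contig and $|\MP(\psi)|=3$; we discard the old path in $O(1)$ time and rebuild $\MP(\phi_1)$ from scratch using the construction of the first lemma of this subsection. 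Otherwise $B$ is a non-long middle semiblock; scanning $[F_l(B), F_r(B)]$ we test whether some semiblock distinct from $B$ covers both $L(B)$ and $R(B)$, which decides whether $\{\psi\}\setminus\{B\}$ is one linear contig or two.

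Once the shape of $\{\phi_1,\phi_2\}$ is known, we modify $\MP(\psi)$ locally. Using the path pointer of $B$ we locate $B$ in $\MP(\psi)$ (if at all), and using the path pointers of the semiblocks visited in $[F_l(B), F_r(B)]$ we locate those among them that must become new end semiblocks of $\phi_1$ or $\phi_2$. We then splice the doubly linked list accordingly: at most a constant number of semiblock pointers are inserted, removed, or separated into two lists, so this splicing costs $O(1)$ time per modified node, yielding an overall $O(1)$ for the list manipulations.

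The main obstacle is handling the interaction between the long/short condition and the connectivity change. When $\psi$ contains a long semiblock (so $|\MP(\psi)|=3$) but $\{\psi\}\setminus\{B\}$ contains none, the new path may need to grow to four semiblocks; conversely, the removal may create a fresh long semiblock in some $\phi_i$, forcing $\MP(\phi_i)$ to shrink to three. In every such case the relevant endpoints lie within $[F_l(B), F_r(B)]$, so they are seen during the initial traversal and the total work stays within $O(\#[F_l(B), F_r(B)])$, as claimed.
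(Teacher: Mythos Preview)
Your overall plan (traverse $[F_l(B),F_r(B)]$, case-split, then splice locally) matches the paper, but the branch where $B$ is long has a genuine gap. You propose to discard $\MP(\psi)$ and rebuild $\MP(\phi_1)$ via the first lemma of the subsection, yet that lemma presupposes a left co-end semiblock of $\phi_1$, i.e., that $G(\phi_1)$ is co-bipartite. Removing a long $B$ from $\psi$ does not force this: take a circular $\psi$ on $A_1,\dots,A_{10}$ with $F_r(A_1)=A_2$, $F_r(A_2)=A_{10}$, $F_r(A_{10})=A_1$, and $F_r(A_i)=A_{i+1}$ otherwise. Then $A_1$ is long, $\#[F_l(A_1),F_r(A_1)]=3$, but $\phi_1$ contains the independent triple $A_3,A_5,A_7$, so it has no co-end semiblock and the first lemma is inapplicable. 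A generic ``rebuild'' by iterating $F_r$ from an end would cost $\Theta(|\MP(\phi_1)|)$, which is not bounded by $\#[F_l(B),F_r(B)]$ in general.

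The paper handles this differently: it does not branch on whether $B$ is long but on whether $\phi_1$ has a long semiblock. When $|\MP(\psi)|=3$ (the only way this can happen, and then indeed $F_r(B)\to F_l(B)$), it marks all positions in $[F_l(B),F_r(B)]$ and tests, for each $W\in(B,F_r(B)]$, whether $(F_r^{\phi_1})^2(W)$ lands after $F_l(W)$; a witness $W$ yields the $3$-path $F_l(W),W,F_r(W)$ directly, and absence of a witness certifies $\phi_1$ long-free, after which the generic splice with the four explicit tests on $B_{a-1},B_a,B_b,B_{b+1}$ applies. Two smaller remarks: the trivial case $B\notin\MP(\psi)$ (where $\MP(\psi)$ is already a valid path of $\phi_1=\phi_2$) should be dispatched first, before any other analysis; and your statement that removal ``may create a fresh long semiblock'' is false, since removing $B$ can only shrink $F_r$ and enlarge $F_l$, so no new long semiblock can appear.
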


\begin{proof}
  If $B \not\in \MP(\psi)$, then $\phi_1 = \phi_2$ and $\MP(\psi)$ is a semiblock path of $\phi_1$.  Suppose, then, that $B \in \MP(\psi)$ and consider the following alternatives.
  \begin{description}
    \item [Alternative 1:] $B$ is an end semiblock of $\psi$. Suppose $B$ is a left end semiblock, as the other case is analogous, and note that $\phi_1 = \phi_2$ and $B_l = R(B)$ is the leftmost end semiblock of $\phi_1$.  Let $\MP$ be the ordering obtained by replacing $B$ with $B_l$ in $\MP(\phi_1)$.  In $O(1)$ time we can obtain the first two semiblocks $B_2$ and $B_3$ that follow $B$ in $\MP(\phi_1)$.  If $F_r(B_l) \neq B_3$, then $\MP$ is a semiblock ordering of $\phi_1$; otherwise, the ordering obtained by removing $B_2$ from $\MP(\phi_1)$ is a semiblock ordering of $\phi_1$.
    \item [Alternative 2:] $B$ is not an end semiblock of $\psi$, thus $B \in (B_l, B_r)$ for $B_l = F_l(B)$ and $B_r = F_r(B)$.  First we search if $\phi_1$ has a long semiblock.  This happens only if $|\MP(\psi)| = 3$, in which case $\phi_1 = \phi_2$, $B_r \to B_l$, and, thus, $F_r^2(W) \in [B_l, W]$ for every $W \in (B, B_r]$.  Marking the position of every semiblock in $[B_l, B_r]$, we can check in $O(1)$ time if $(F_r^{\phi_1})^2(W)$ appears after $F_l(W)$ in $[B_l, B_r]$ for some $W \in (B, B_r]$.  If affirmative, then $W$ is long and $F_l(W)$, $W$, $F_r(W)$ is a semiblock path of $\phi_1$; otherwise, $\phi_1$ has no long semiblocks.  When $\phi_1$ has no long semiblocks, we traverse $[B_l, B_r]$ to check if 1.~$B_a \to R(B)$, 2.~$L(B) \to B_b$, 3.~$R(B) \to B_{b+1}$, and $B_{a-1} \to L(B)$, where $B_{a-1}$, $B_a$, $B_b$, and $B_{b+1}$ are the semiblocks of $\MP(\psi)$ such that $B_{a-1} \to B_a$, $B_a \to B$, $B \to B_b$, and $B_b \to B_{b+1}$ (unless $L(B_a) = \bot$ or $R(B_{b}) = \bot$ in which case $B_{a-1} = \bot$ and $B_{b+1} = \bot$, respectively).  Replacing $B$ with $R(B)$ if 1.\ and removing $B_b$ if 3., or replacing $B$ with $L(B)$ if 2.~and removing $B_a$ if 4., we obtain a semiblock path of $\phi_1 = \phi_2$.  If neither 1.~nor 2.~holds, then we transform $\MP(\psi)$ into the ordering $\MP$ that is obtained by first replacing $B$ with $L(B), R(B)$ and then removing $B_b$ if 3.\ and $B_a$ if 4.  If $L(B) \to R(B)$ or $\MP(\psi)$ is circular, then $\MP$ is a semiblock path of $\phi_1 = \phi_2$; otherwise, $\phi_1 \neq \phi_2$, thus we split $\MP$ into the suborderings that have $L(B)$ as rightmost and $R(B)$ as leftmost to obtain semiblock paths of $\phi_1$ and $\phi_2$, respectively.
  \end{description}
\end{proof}

\subsection{Round representations}
\label{sec:data structure:round representations}

To implement a round representation $\Phi$ we use a pair of doubly linked list $\{\DS{\Phi}, \DS{\Phi}^{-1}\}$ and a \emph{connectivity} structure (see below).  For each $\phi \in \Phi$, a semiblock pointer associated to $\phi$ (resp.\ $\phi^{-1}$) is kept in $\DS{\Phi}$ (resp.\ $\DS{\Phi}^{-1}$).  (The semiblock $\DS{B}$ of $\phi$ plays the same role as the pointer to the first node in a linked lists when implementing the abstract data type; that is, $\DS{B}$ is used to access $\phi$.)  Thus, both \emph{physical} contigs $\phi$ and $\phi^{-1}$ are stored for each contig $\phi \in \Phi$.  The reason why $\phi^{-1}$ is kept is to avoid the cost of reversing $\phi$.  If $\DS{B} \in \DS{\Phi} \cup \DS{\Phi}^{-1}$ is associated to $\phi$, then $B \in \MP(\phi)$; moreover, $B$ is the left end semiblock of $\phi$ when $\phi$ is linear.  Conversely, $B$ keeps a \emph{contig pointer} to the position of $\DS{B}$ inside $\DS{\Phi} \cup \DS{\Phi}^{-1}$.  The contig pointer is used, among other things, to remove $\DS{B}$ from $\DS{\Phi} \cup \DS{\Phi}^{-1}$ when its associated contig is joined to some other contig.  Of course, this pointer has a null value when $B$ is not referenced by a pointer in $\DS{\Phi} \cup \DS{\Phi}^{-1}$.  Finally, each vertex $v$ of a contig $\phi \in \Phi$ keeps a \emph{reverse} pointer to its incarnation in $\phi^{-1}$.

Recall that all the contigs of a round representation $\Phi$ are linear when $|\Phi| > 1$; this invariant must be satisfied by the data structure. Thus, we need some way to detect if an operation on a linear contig $\phi$ yields a circular contig $\psi$.  Actually, the only operation in which we are ignorant about the linearity of $\psi$ is when we compute the $\{v\}$-reception of $\Bip{B_l}{B_r}$.  As it is shown in~\cite{HellShamirSharanSJC2001}, the only possibility for $\psi$ to be circular is when $[B_l, B_r)$ contains the right end semiblock of $\phi$.  To detect this case, we need to know if two end semiblocks belong to the same contig.  As it was proved in~\cite{HellShamirSharanSJC2001}, the connectivity problem is not solvable in $O(1)$ time when both insertions and removal of semiblocks are allowed.  Thus, a \emph{connectivity} data structure is kept to solve this problem.  Its interface provides the following operations:
\begin{description}
  \item [\tt create()] returns an empty connectivity structure. Complexity: $O(1)$ time.
  \item [\tt add($\DS{B}$)] adds $\DS{B}$ to the connectivity structure. Complexity: $O(\epAdd)$ time.
  \item [\tt remove($\DS{B}$)] removes $\DS{B}$ from the connectivity structure. Complexity: $O(\epRemove$) time.
  \item [\tt opposite($\DS{B}$)] returns a pointer the other end semiblock of the contig that contains $B$ if $B$ is an end semiblock, or $\bot$ if $B$ is not an end semiblock.  Complexity: $O(\epTest)$ time.
\end{description}
There are three flavors of the structure according to the operations supported by the main algorithm.  In the \emph{incremental} structure $O(\epTest) = O(1)$ and $O(\epRemove) = O(n)$, in the \emph{decremental} structure $O(\epTest) = O(n)$ and $O(\epRemove) = O(1)$, and in the \emph{fully dynamic} structure $O(\epAdd) = O(\epRemove) = O(\log n)$~\cite{HellShamirSharanSJC2001,SoulignacA2015}.

Let $B_1$ and $B_2$ be semiblocks of a round representation $\Phi$ that belong to the contigs $\phi_1$ and $\phi_2$ of $\Phi$, respectively.  To traverse the range $[B_1, B_2]$ of $\{\phi_1, \phi_2\}$, we have to provide a semiblock pointer $\DS{B}_i$ to $B_i$ associated to $\phi_i$, for $i \in \{1,2\}$.  If, in turn, the semiblock pointer $\DS{B}_1$ is associated to $\phi_1^{-1}$, we obtain the range $[B_1, B_2]$ of $\{\phi_1^{-1}, \phi_2\}$.  Thus, to describe the effects of an algorithm, we must specify the contig to which a pointer is associated.  We say that $\DS{B}$ has \emph{type $\Phi$} when $\DS{B}$ is associated to some contig $\phi \in \Phi$.  Those pointers of type $\Phi$ are sometimes referred to as \emph{$\Phi$-pointers}.  Of course, every semiblock pointer has type either $\Phi$ of $\Phi^{-1}$.  Each semiblock pointer to $B$ of type $\Phi^{-1}$ is called a \emph{reverse} of $\DS{B}$.  We recall that there is no efficient way to know the type of semiblock pointer and, in general, the type is not important in the implementation.  The purpose of this terminology is to aid in the specification of the different operations.

To avoid dealing with the pointers of round representations, we implement several operations that define an interface similar to the one used for contigs.  As usual, we use capital Greek letters for round representations, capital Roman letters for semiblocks, and tildes for pointers.  
\begin{description}
  \item [\tt straight($\Phi$)] returns true if $\Phi$ is straight.  For the implementation we test if a pointer in $\DS{\Phi}$ references an end semiblock.  Complexity: $O(1)$ time.
  \item [\tt newContig($\Phi$)] adds to $\Phi$ a new contig whose only block is $B = \{v\}$, and returns $\DS{v}$.  Requires $\Phi$ to be straight.  For the implementation, we add a new physical contig to both $\DS{\Phi}$ and $\DS{\Phi}^{-1}$.  Complexity: $O(1)$ time.
  \item [\tt reversed($\DS{B}$)] returns a reverse of $\DS{B}$.  For the implementation, we call {\tt semiblock($\DS{w}$)} where $\DS{w}$ is the reverse pointer of any $v \in B$.  Complexity: $O(1)$ time.
  \item [\tt type($\DS{B}$)] returns the type of $\DS{B}$, i.e., a pointer to either $\Phi$ or $\Phi^{-1}$.  Requires $B$ to be an end semiblock.  Let $\phi$ be the contig referenced by $\DS{B}$.  For the implementation we access the representation pointer of the physical semiblock referenced by either $\DS{B}$ (if $B$ is a left end semiblock in $\phi$) or a reverse of $\DS{B}$ (if $B$ is a right end semiblock in $\phi^{-1}$).  Complexity: $O(1)$ time.
  %
  %
  \item [\tt reverse($\DS{B}$)] reverses the contig of $\Phi$ that contains $B$.  Requires $B$ to be an end semiblock.  Applying {\tt reversed} if required, assume $B$ is a left end semiblock of a contig $\phi$.  Moreover, suppose, w.l.o.g., that $\DS{B} \in \DS{\Phi}$, as the other case is analogous.  For the implementation, we use the contig and round pointers of $B$ to move $\DS{B}$ from $\DS{\Phi}$ to $\DS{\Phi}^{-1}$.  Then, we use the connectivity structure together with {\tt reversed} to obtain a $\Phi^{-1}$-pointer $\DS{W}$ to the right end semiblock of $\phi$.  Finally, we move $\DS{W}$ from $\DS{\Phi}^{-1}$ to $\DS{\Phi}$.  Complexity: $O(\epTest)$ time.
  \item [\tt receptive($\DS{B}_l$, $\DS{B}_r$)] takes two $\Phi$-pointers and returns true if $\Bip{B_l}{B_r}$ is receptive in $\Phi$.  Let $\phi_1$ and $\phi_2$ be the contigs referenced by $\DS{B}_l$ and $\DS{B}_r$, respectively.  For the implementation, observe that $\Bip{B_l}{B_r}$ is receptive if and only if $\Bip{B_l}{B_r}$ is receptive in $\{\phi_1, \phi_2\}$ and either the $\{v\}$-reception $\psi$ of $\Bip{B_l}{B_r}$ is linear or $\phi_1=\phi_2$ is the unique contig in $\Phi$.  Note that $\psi$ is linear if and only if $[B_l, B_r)$ has no right end semiblocks or $W$ and $\RR(W)$ lie in different contigs, where $W \in [B_l, B_r)$ a right end semiblock.  If $W \in [B_l, B_r)$ is an end semiblock, then we can check if $\phi_1$ is the unique contig in $\Phi$ using the representation pointer of $\RR(W)$.  To check if $W$ and $\RR(W)$ lie in different contigs, a query to the connectivity structure is required.  Complexity $O(\#[B_l, B_r] + \epTest)$ time.
  \item [\tt reception($\DS{B}_l$, $\DS{B}_r$)] takes two $\Phi$-pointers and updates $\Phi$ into the $\{v\}$-reception $\Psi$ of $\Bip{B_l}{B_r}$.  Requires $\Bip{B_l}{B_r}$ to be receptive in $\Phi$, and returns a pointer to $v$.  For the implementation, suppose $\phi_1$ and $\phi_2$ are the contigs referenced by $\DS{B}_l$ and $\DS{B}_r$, respectively.  The first step is to apply {\tt reception($\DS{B}_l$, $\DS{B}_r$)} to transform $\phi_1$ and $\phi_2$ into a contig $\psi$ that represents the $v$-reception of $\{\phi_1, \phi_2\}$.  Let $W_l = R^\psi(\{v\})$, and observe that $W_l$ is a left end block of $\Phi$ if and only $F_l(W_l) = \{v\}$.  In this case, we access the contig pointer of $W_l$ to remove it from $\DS{\Phi}$.  Similarly, if $\{v\}$ is a left end semiblock of $\Psi$, then we add a semiblock pointer to $\{v\}$ in $\Phi$ and we update the contig pointer of $\{v\}$.  Finally, we test if $\DS{\Phi} = \emptyset$.  This happens when $\phi_1 = \phi_2$ is linear and $\psi$ is circular, in which case $\{v\}$ belongs to $\MP(\psi)$.  Thus, again, we add a pointer to $\{v\}$ in $\DS{\Phi}$.  Once the update is completed, we apply the same procedure to {\tt reversed($\DS{B}_r$)} and {\tt reversed($\DS{B}_l$)} to update $\DS{\Phi}^{-1}$.  Finally, we update the connectivity structure.  Complexity: $O(\#[B_l, B_r] + \epAdd)$ time.  
  \item [\tt remove($\Psi$, $\DS{B}$)] transforms $\Psi$ into a round representation $\Phi$ of $G(\Psi) \setminus B$.  Requires $\DS{B}$ to be of type $\Psi' \in \{\Psi, \Psi^{-1}\}$.  For the implementation, we first call {\tt remove($\DS{B}$)} to transform the contig $\psi \in \Psi'$ referenced by $\DS{B}$ into two (possible equal) physical contigs $\phi_1$ and $\phi_2$, where $\phi_1$ contains $\LL(B)$.  The next step depends on whether the contig pointer of $B$ is null or not.  In the latter case, we have to replace $B$ in $\DS{\Psi}'$ with a pointer to a semiblock $W_l$ in $\MP(\phi_2)$.  Note that $W_l$ must be a left end semiblock if $\phi_2$ is linear; such a case occurs only when $W_l = R(B)$ is a left end semiblock of $\phi_2$.  In the former case, we check if $W_l = R^\psi(B)$ is a left end semiblock.  If affirmative, then there are two possibilities.  If $\Psi = \{\psi\}$ is not straight, then we replace the unique pointer in $\DS{\Psi}$ with a pointer to $W_l$.  If negative, then we insert $W_l$ to $\DS{\Psi}$.  After completion, we apply the same transformations to $\DS{\Psi}^{-1}$ using a reverse of $\DS{B}$.  Finally, we remove $B$ from the connectivity structure.  We remark that the obtained round representation $\Phi$ is not necessarily equal to $\Psi \setminus \{B\} = (\Psi \setminus \{\psi\}) \cup \{\phi_1, \phi_2\}$.  The reason is that we are not aware of the type of $\DS{B}$.  Thus, if $\DS{B}$ has type $\Psi^{-1}$, then when we decide to insert $R^\psi(B)$ to $\DS{\Psi}$, we are actually computing $\Psi \setminus \{\psi\} \cup \{\phi_1, \phi_2^{-1}\}$.  Complexity: $O(\#[F_l(B), F_r(B)] + \epRemove)$. 
  \item [{\tt separate($\DS{B}$, $W$)}, {\tt separate($W$, $\DS{B}$)}, and {\tt compact($\DS{B}$)}] have the same effects as their contig versions on the contig of $\Phi$ that contains $B$.  For the implementation, we apply the corresponding operations on $\phi$ and $\phi^{-1}$, where $\phi$ is the contig referenced by $\DS{B}$.  Also, we take care of the contig and round pointers when $B$ is an end semiblock.  The details are similar to those described for the previous operations.  Complexity: $O(|W|)$ time for {\tt separate} and $O(\min\{|B|, |R(B)|\}$ time for {\tt compact}.
  \item [\tt join($\DS{B}_1$, $\DS{B}_2$)] takes a $\Phi_1$-pointer $B_1$ and a $\Phi_2$-pointer $B_2$ and builds the round representation $\Psi = \{\psi\}$ such that $\psi$ satisfies the same specifications as the contig version of {\tt join}.  Requires $B_1$ and $B_2$ to be left co-end blocks and $\Phi_1 \neq \Phi_2$.  The differences between this version of {\tt join} and the one for contigs are the following.  First, $\Phi_i$ could be a disconnected co-contig for $i \in \{1,2\}$.  In this case, $\Phi_i$ has only two contigs, each of which represents a clique.  Second, the output $\Psi$ is implemented as a round representation.  To compute {\tt join}, we apply one, two, or three calls to the contig version of {\tt join}, according to whether $\Phi_1$ and $\Phi_2$ are disconnected or not.  Since $\psi$ is connected, it takes $O(1)$ time to restore all the pointers required by the data structure of $\Psi$.  Finally, the connectivity structure can be updated in $O(\min\{\epAdd,\epRemove\})$ time as discussed in~\cite{SoulignacA2015}.  Time complexity: $O(u + \min\{\epAdd,\epRemove\})$ time, where $u$ is the number of universal semiblocks in $\Psi$.  
  \item [\tt split($B_1$, $B_2$)] has the same effects as its contig version, but it returns the co-contigs implemented as round representations.  Requires $B_1$ and $B_2$ to be left end blocks of the same type.  Note that $\Psi$ has a unique contig $\{\psi\}$, thus generating the pointer of the output representations is trivial.  The connectivity structure can be updated in $O(\min\{\epAdd,\epRemove\})$ time as well~\cite{SoulignacA2015}.  Complexity: $O(u + \min\{\epAdd,\epRemove\})$ time, where $u$ is the number of universal semiblocks in $\Psi$.
\end{description}

\subsection{The witnesses}
\label{sec:data structure:witnesses}

From the point of view of the end user, the dynamic algorithm keeps a round block representation $\Gamma$ of the dynamic graph $G(\Gamma)$.  To work with $\Gamma$, users can iterate through the semiblock pointers associated to contigs in $\Gamma$, while they execute the operations described in Section~\ref{sec:data structure:contig}.  However, only those operations that do not modify the internal structure of contigs are available, e.g., {\tt vertices}, $L$, $F_l$, etc.  To update $G(\Gamma)$, one of the following operations is applied.
\begin{description}
  \item [\tt create()] returns an empty round block representation $\Gamma$. Complexity: $O(1)$ time.
  \item [\tt insert($\Gamma$, $N$)] transforms $\Gamma$ into a round block representation $\Psi$ of a graph $H$ such that $H \setminus \{v\} = G(\Gamma)$ for some vertex $v \not\in V(\Gamma)$ with $N(v) = N$.  Returns the new vertex $v$.  The operation fails if $H$ is not PCA and, in this case, a \emph{minimally forbidden} of $H$ is obtained (see below).  Complexity: $O(|N| + \epTest)$ time.
  \item [\tt remove($v$)] transforms $\Gamma$ into a round block representation $\Phi$ of $G(\Gamma) \setminus \{v\}$, where $\Gamma$ is the round block representation containing $v$.  Complexity: $O(d(v) + \epRemove)$ time.
\end{description}

Our goal is not only to implement the above operations that deal with PCA graphs, but also to provide a certifying algorithm for the recognition of PIG graphs.  With respect to the positive witness, the latter problem is solved by satisfying the \emph{straightness invariant} that guarantees that every contig of $\Gamma$ is linear when $G(\Gamma)$ is a PIG graph.  Regarding the negative certificate, we implement the following operation.
\begin{description}
  \item [\tt forbiddenPIG($\Gamma$)] returns a minimally forbidden witnessing that $G(\Gamma)$ is not PIG (i.e., a structure that represents a graph of Theorem~\ref{thm:forbiddens PIG}).  Complexity: $O(1)$ time. 
\end{description}

When {\tt insert($\Gamma$, $N(v)$)} is executed and $H$ is not PCA, for $H \setminus \{v\} = G(\Gamma)$, the end user obtains a negative witness.  We say that a pair $\Bip{\Phi}{\N}$ is a \emph{forbidden} of $H$ (w.r.t. $\Gamma$) when:
\begin{itemize}
  \item $\Phi$ is a round block representation of an induced subgraph of $H$,
  \item $v$ is fully adjacent to every block in $\N$ and not adjacent to every block outside $\N$, and
  \item $H' = H[V(\Phi) \cup \{v\}]$ is not PCA.
\end{itemize}
If every subgraph of $H'$ obtained by removing $B \in \B(\Phi)$ is PCA, then $\Bip{\Phi}{\N}$ is a \emph{minimally} forbidden of $H$.

In case of failure, the output of {\tt insert} is a minimally forbidden $\Bip{\Phi}{\N}$.  To be useful to the end user, $\Bip{\Phi}{\N}$ has to be as efficient as possible.  The least a user can expect is that $L^\Phi$, $R^\Phi$, $F_l^\Phi$, and $F_r^\Phi$ take constant time.  This allows the user to traverse the corresponding forbidden graph in $O(1)$ time per edge, and to take advantage of the PCA structure of $G(\Phi)$.  Therefore, $\Phi$ is implemented with a data structure that satisfies these time bounds.  As a consequence, finding a minimal $\B \subseteq \B(\Gamma)$ such that $H[\B \cup \{v\}]$ is not PCA is not enough.  We also have to find the near and far neighbors of $\Gamma|\B$, and decide which vertices of $B \in \B$ survive when $v$ is both adjacent and co-adjacent to $B$.  As it is expected due to the time bounds, $\Phi$ shares some of the internal structure of $\Gamma$ and, consequently, $\Phi$ must be discarded (or copied) before applying further operations on $\Gamma$.

\section{An incremental and certified algorithm}
\label{sec:incremental}

This section is devoted to the implementation of {\tt insert} (Section~\ref{sec:data structure:witnesses}), whose aim is to insert a vertex $v \in V(H)$ into a round block representation $\Gamma$ of $G = H \setminus \{v\}$ in $O(d(v)+\epAdd)$ time.  An algorithm for this problem was given in~\cite{SoulignacA2015}, and the method we present takes advantage of the tools developed in~\cite{SoulignacA2015}.  However, the algorithm in~\cite{SoulignacA2015} is unable to output a minimally forbidden (or any witness whatsoever) when the insertion fails.  The purpose of this section is to complete the algorithm by providing the negative witness.  To show that our algorithm is correct, we prove of the Reception Theorem, following the same path as Deng et al.~\cite{DengHellHuangSJC1996}.  That is, we show a minimally forbidden of $H$ when \textbf{no} round representation of $G$ is $v$-receptive.

Because of the $O(d(v)+\epAdd)$ time bound, we face two major inconveniences.  First, we cannot traverse all the blocks of $\Gamma$.  Thus, it is impossible to determine whether $B \to W$ (in $O(d(v)+\epAdd)$ time) when $B$ and $W$ are arbitrary blocks.  This means that we need to infer some of the adjacencies by making appropriate queries on $\Gamma$.  For this reason, in this section we assume that $\Gamma$, and every round representation obtained by transforming $\Gamma$, have been preprocessed as in the next observation, even if we are not explicit about this fact.  This allows us to answer basic adjacencies queries as in Observation~\ref{obs:basic queries}.

\begin{observation}[see e.g.~\cite{SoulignacA2015}]\label{obs:preprocessing}
  Let $H$ be a graph and $\Phi$ be a round representation of $H \setminus \{v\}$ for some $v \in V(H)$.  Given $N(v)$ as input, it is possible to preprocess the semiblocks of $\Phi$ in $O(d(v))$ time so that determining whether $v$ is (fully) adjacent to $B$ can be answered in $O(1)$ time for any $B \in \B(\Phi)$ when a semiblock pointer to $B$ is given.
\end{observation}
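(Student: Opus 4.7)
The plan is to mark, for each neighbor of $v$, the semiblock that contains it, and to record on that semiblock how many neighbors of $v$ it has received. Since the abstract data type for contigs allows us to attach arbitrary auxiliary data to any semiblock, I would store on each $B \in \B(\Phi)$ two fields: an integer counter $c(B)$ and a version token $t(B)$. Independently, the algorithm maintains a global version counter $\tau$ that is incremented once each time this preprocessing routine is invoked. The purpose of $\tau$ is to avoid having to reset the counters after each call: a semiblock $B$ is considered ``currently marked'' iff $t(B) = \tau$, so all $c(B)$ from previous invocations are automatically treated as stale.

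First, I would increment $\tau$. Then, for each vertex pointer $\DS{u}$ representing $u \in N(v)$, I call \texttt{semiblock}$(\DS{u})$ in $O(1)$ time to obtain a pointer $\DS{B}$ to the semiblock containing $u$. If $t(B) \neq \tau$, I set $t(B) \gets \tau$ and $c(B) \gets 0$; in either case, I then increment $c(B)$. The whole preprocessing touches $N(v)$ exactly once, doing constant work per neighbor, for a total cost of $O(d(v))$. As an invariant I also assume that each semiblock maintains its cardinality $|B|$, which is trivial to update inside every primitive operation of Section~\ref{sec:data structure:contig} without altering the stated running times.

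To answer the queries on any $B$ given a semiblock pointer $\DS{B}$: $v$ is adjacent to $B$ iff some neighbor of $v$ was counted into $B$ during preprocessing, which is equivalent to $t(B) = \tau$; and $v$ is fully adjacent to $B$ iff $t(B) = \tau$ and $c(B) = |B|$. Both checks are $O(1)$, regardless of whether $B$ was touched by the preprocessing loop. The only mildly subtle point, which I expect to be the main (and really only) obstacle, is handling semiblocks that were never touched without paying a cost proportional to $|\B(\Phi)|$; the version-token trick solves this, since uninitialised or stale semiblocks are immediately identified by $t(B) \neq \tau$, so no global reset is required between successive invocations of the preprocessing.
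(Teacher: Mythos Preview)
The paper does not actually prove this observation; it is stated with only a citation to~\cite{SoulignacA2015}, so there is nothing to compare against here. Your argument is correct and is exactly the standard implementation one would expect: iterate over $N(v)$, use \texttt{semiblock}$(\DS{u})$ to locate the containing semiblock, count hits per semiblock, and use a global version stamp so that stale data from previous invocations need not be cleared. The assumption that each semiblock knows its own cardinality is harmless (the contig data structure already stores the vertex set of each semiblock, and maintaining $|B|$ under \texttt{separate}, \texttt{compact}, \texttt{reception}, and \texttt{remove} is trivial within the stated time bounds). Nothing is missing.
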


\begin{observation}\label{obs:basic queries}
  Let $H$ be a graph, $\Phi$ be a round representation of $H \setminus \{v\}$ for some $v \in V(H)$, and $W_l$, $W_r$ be (possibly equal) semiblocks of $\Phi$.  Given semiblock $\Phi$-pointers to $W_l, W_r$, the following problems can be solved in $O(\#[W_l, W_r])$ time:
  \begin{enumerate}[(a)]
    \item obtain a $\Phi$-pointer to the leftmost (resp.\ rightmost) semiblock of $(W_l, W_r)$ co-adjacent to $v$.\label{obs:basic queries:first co-adjacent}
    \item determine whether $W_l \to W$ and $W \to W_l$ (resp.\ $W \to W_r$ and $W_r \to W$), when a semiblock $\Phi$-pointer to $W \in [W_r, F_r(W_r)]$ (resp. $W \in [F_l(W_l), W_l]$) is given.\label{obs:basic queries:triangle adjacency}
  \end{enumerate}
\end{observation}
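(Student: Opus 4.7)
The plan is to perform a single traversal of the range $[W_l, W_r]$ that leaves behind $O(1)$-queryable annotations; after this $O(\#[W_l, W_r])$ preprocessing, every remaining test reduces to constant-time pointer comparisons together with a few calls to the $L, R, F_l, F_r$ primitives, each of which the contig interface answers in $O(1)$.

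For (\ref{obs:basic queries:first co-adjacent}), I walk through $(W_l, W_r)$ one semiblock at a time using $R$ on the current $\Phi$-pointer. Observation~\ref{obs:preprocessing} lets me test in $O(1)$ whether $v$ is fully adjacent to the current semiblock $B$; the co-adjacency of $B$ to $v$ is the negation of this. The leftmost co-adjacent semiblock is returned as soon as it is found on the forward walk; the rightmost is obtained by the symmetric backward walk from $W_r$ using $L$. In the worst case we exhaust $(W_l, W_r)$, so the total cost is $O(\#[W_l, W_r])$.

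For (\ref{obs:basic queries:triangle adjacency}), I exploit the two equivalent characterizations $W_l \to W \Longleftrightarrow W \in (W_l, F_r(W_l)] \Longleftrightarrow W_l \in [F_l(W), W)$. During the initial traversal I tag each semiblock of $[W_l, W_r]$ with a membership flag, so that afterwards I can check in $O(1)$ whether an arbitrary $\Phi$-pointer lies in that range. I then compute $X = F_r(W_l)$ and $Y = F_l(W)$ in $O(1)$ and split into cases. If $X \in [W_l, W_r]$, then $(W_l, X] \subseteq (W_l, W_r]$, so $W_l \to W$ holds iff $X = W_r = W$, a constant-time test. Otherwise (using that $F_r$ is monotone along the contig of $W_l$), either $W_l$ and $W_r$ lie in different contigs, detected by $X$ failing the in-contig check, in which case $W_l \nto W$ trivially, or $X$ is strictly right of $W_r$, and the answer is read off $Y$: $W_l \to W$ exactly when the flag on $Y$ shows that $Y$ is not strictly to the right of $W_l$ in $[W_l, W_r]$. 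The test for $W \to W_l$ is false whenever the contig containing $W$ is linear, and in the circular case is handled analogously using $F_l(W_l)$ instead of $F_r(W_l)$; the symmetric claims for $W \in [F_l(W_l), W_l]$ follow verbatim by working in the reverse representation $\Phi^{-1}$.

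The main obstacle is the bookkeeping required for the different geometric configurations that the ranges $[W_l, W_r]$ and $[W_r, F_r(W_r)]$ can exhibit, especially when $\Phi$ has several contigs or a single circular contig and the two ranges overlap only at $W_r$ or wrap around; the robust mappings $\LL, \RR$ together with the position/membership markers laid down during the initial pass are designed precisely so that each such configuration can be disambiguated by a constant number of pointer queries, keeping the total cost within $O(\#[W_l, W_r])$.
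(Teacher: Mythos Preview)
The paper states this as an \emph{Observation} and gives no proof, so there is nothing to compare your argument against; I can only check whether your sketch stands on its own.

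Part~(a) is fine and is exactly what the paper has in mind: a single pass over $(W_l,W_r)$ using the $O(1)$ adjacency test from Observation~\ref{obs:preprocessing}.

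Part~(b) has a genuine gap. In your first branch you assume that when $X=F_r(W_l)\in[W_l,W_r]$ the only way to have $W\in(W_l,X]$ with $W\in[W_r,F_r(W_r)]$ is $X=W_r=W$. This is false in a circular contig when $[W_r,F_r(W_r)]$ wraps back into $[W_l,W_r)$. Concretely, take the circular contig on $B_1,\dots,B_6$ with $F_r=(B_2,B_3,B_5,B_2,B_2,B_2)$ (one checks that conditions (i)--(iii) hold), and set $W_l=B_1$, $W_r=B_4$, $W=B_2$. Then $F_r(W_r)=B_2$, so $W=B_2\in[W_r,F_r(W_r)]$; also $X=F_r(W_l)=B_2\in[W_l,W_r]$ but $X\neq W_r$. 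Your test outputs $W_l\nto W$, yet $B_1\to B_2$ holds. The last paragraph of your proposal acknowledges that wraparound is the delicate point, but the earlier case analysis is stated unconditionally and is simply wrong in this configuration.

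The fix is cheap and stays within your framework: store \emph{positions} $1,\dots,k$ during the pass over $[W_l,W_r]$ rather than mere membership flags. If $W$ itself is marked (which is exactly the wraparound situation, forcing $W_r\to W_l$ and hence $F_r(W_l)\in(W_l,W_r)$ by condition~(iii)), decide $W_l\to W$ by comparing the position of $W$ with that of $F_r(W_l)$; otherwise your existing tests for the two branches are correct. The symmetric check for $W\to W_l$ needs the same adjustment via $F_l(W_l)$.
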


The second inconvenience that arises when we want to compute a minimally forbidden, is that doing so requires a heavy amount of case by case analysis.  The case by case analysis is somehow inherent to these kinds of proofs, as we need to proceed differently according to whether some edge exists or not (and the existence of such an edge may or may not imply the existence of other edges).  To alleviate this situation, we make use of \emph{adequate} forbidden.  We say a family $\B$ of semiblocks of $H$ is \emph{forbidden} when $H[\B]$ is not PCA; $\B$ is \emph{adequate} (with respect to a round representation $\Phi$) when all the adjacencies between the semiblocks in $\B$ can be computed in $O(d(v))$ time when $N(v)$ and $\Phi$ are given as input.  Clearly, if $\B$ is an adequate forbidden, then a minimally forbidden of $H$ can be obtained in $O(d(v))$ time when $\B$, $N(v)$, and $\Phi$ are given.  To prove that $\B$ is an adequate forbidden family we still have to prove that $H'$ is not PCA.  This task, however, can done by a computer (see Appendix~\ref{app:adequacy}).

We divide our exposition in two major sections, according to whether $G$ and $H$ are co-connected or not.  In the remaining of this section, we always use $v$ to denote the vertex being inserted.  So, for $B \in \B(\Phi)$ we write $+B$ and $-B$ as shortcuts for $B \cap N(v)$ and $B \setminus N(v)$, respectively, and we write $\pm B$ to mean a nonempty semiblock in $\{+B,-B\}$.  Also, we write $v$ as a shortcut for $\{v\}$ when a semiblock of $H$ is expected.

\subsection{\texorpdfstring{Both $H$ and $G$ are co-connected}{Both {\it H} and {\it G} are co-connected}}

Throughout this section we consider that both $H$ and $G = H \setminus \{v\}$ are co-connected.  The advantage of this case is that $\Psi \cup \Psi^{-1} = \Gamma \cup \Gamma^{-1}$ for every pair of block co-contigs $\Psi$ and $\Gamma$ representing $G$~\cite{HuangJCTSB1995}.  By~\receptionref{1}, we obtain that $H$ is PCA only if the blocks fully adjacent to $v$ are consecutive in some block co-contig representing $G$, say $\Gamma$.  In this case, $\Gamma$ (and $\Gamma^{-1}$) can be associated to at most two co-contigs that simultaneously satisfy \receptionref{1} and \receptionref{2}.  Therefore, it suffices to consider only these $O(1)$ co-contigs to prove the Reception Theorem.  We first show how to obtain a minimally forbidden when $N(v)$ is consecutive in none of the block co-contigs representing $G$.  But, before dealing with the consecutiveness of $N(v)$, we solve a rather restricted case in which the input representation $\Gamma$ contains some ``bad'' blocks.  The existence of such ``bad'' blocks is what makes it hard to test whether two blocks of $\Gamma$ are adjacent.  Without this hurdle, we can answer more powerful adjacencies queries.

We say that a semiblock $B$ of a co-contig $\Phi$ is \emph{good} when $v$ is not adjacent to $B$ or $v$ is fully adjacent to all the semiblocks in either $[F_l(B), B)$ or $(B, F_r(B)]$.  If $B$ is good and $v$ is either fully or not adjacent to $B$, then $B$ is \emph{perfect}, while $B$ is \emph{bad} when it is not good.  It is not hard to see that $\Phi$ satisfies \receptionref{1} only if all its semiblocks are perfect.  For such a co-contig $\Phi$ to exist, all the blocks of $\Gamma$ must be good.  Lemma~\ref{lemma:bad block} shows how to obtain a minimally forbidden when some block in $\Gamma$ is bad.  We consider two prior cases in Lemmas \ref{lemma:bad block long}~and~\ref{lemma:bad block short} whose common parts appear in the next lemma.

\begin{lemma}\label{lemma:subprocedures}
  Let $H$ be a graph with a vertex $v$, $\Gamma$ be a block co-contig representing $H \setminus \{v\}$, and $T_1, T_2, T_3$ be blocks of $\Gamma$ such that $T_1 \to T_2$ and $T_2 \to T_3$.  Given semiblock $\Gamma$-pointers to $T_1$, $T_2$, and $T_3$, a minimally forbidden of $H$ can be obtained in $O(d(v))$ time when either of the following conditions holds.
  
  \begin{enumerate}[(a)]
    \item $v$ is co-adjacent to $T_1$ and $T_3$ and adjacent to every block in $(T_1, T_3)$, and $T_1 \nto T_3$.\label{lemma:subprocedures:left to right}
    \item $v$ is co-adjacent to $T_3$ and to $W \in [T_1, T_2]$, and $T_3 \to T_1$ \label{lemma:subprocedures:two triangle}
    \item $v$ is co-adjacent to $T_1$ and $T_3$ and adjacent to every block in $(T_1, T_3)$, $T_3 \to U_r(T_2)$, $T_1 \to T_3$, $U_l(T_2) \to T_1$, and $U_r(T_2) \to U_l(T_2)$ is obtainable in $O(d(v))$ time. \label{lemma:subprocedures:center not dominating}
    \item $v$ is co-adjacent to $T_1$ and $T_3$ and adjacent to every block in $(T_1, T_3)$, $T_1 \to T_3$, $F_r(T_1) = F_r(T_3)$, $U_l(T_1) \to F_l(T_1)$, and $F_r(T_1) \to U_l(T_1)$ is obtainable in $O(d(v))$ time.\label{lemma:subprocedures:center and right dominated}
  \end{enumerate}
\end{lemma}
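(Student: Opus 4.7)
The plan is to exhibit, in each of the four cases (a)--(d), an explicit adequate forbidden family $\B$ of bounded size, consisting of blocks drawn from a constant-radius neighborhood of $T_1,T_2,T_3$ in $\Gamma$, and then to invoke the reduction mentioned just before the lemma that turns an adequate forbidden into a minimally forbidden of $H$ in $O(d(v))$ time. By the definition of adequacy, the proof then reduces, in each case, to three tasks: (i)~to name $\B$; (ii)~to certify that all the adjacencies among the members of $\B$ and the (co-)adjacencies of $v$ with those blocks are computable in $O(d(v))$ time, using the preprocessing of Observation~\ref{obs:preprocessing} together with the $O(\#[W_l,W_r])$-time queries of Observation~\ref{obs:basic queries}; and (iii)~to verify that $H[(\bigcup\B)\cup\{v\}]$ is not PCA, i.e.\ contains as induced subgraph one of the graphs of Theorem~\ref{thm:forbiddens PCA}. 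Task~(iii) can be delegated to the finite mechanical check of Appendix~\ref{app:adequacy}, as indicated in the paragraph preceding the lemma.

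For case~(a), the chain $T_1 \to T_2 \to T_3$ with $T_1 \nto T_3$ and $v$ co-adjacent to $T_1,T_3$ but adjacent to every block of $(T_1,T_3)$ lets me take $\B=\{T_1,T_2,T_3\}$, pick $t_i\in -T_i$ for $i\in\{1,3\}$, and read off in $O(1)$, via part~(b) of Observation~\ref{obs:basic queries}, the triangle-adjacencies between $T_1$, $T_3$ and the far neighbors of $T_2$. The resulting configuration already forces one of $\overline{S_3}$ or an $\overline{H_i}$, so $\B$ is adequate. For case~(b), the wrap-around $T_3 \to T_1$ together with the two hypothesized co-adjacencies drives $\{W,T_2,T_3\}$ (augmented by $T_1$ when $W\neq T_1$) into an asteroidal-triple-like obstruction; the relevant internal queries again fall within the scope of Observation~\ref{obs:basic queries}.

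Cases~(c) and~(d) use the same template on a larger but still constant-size window. For (c), I adjoin the unreached blocks $U_l(T_2),U_r(T_2)$ to $\{T_1,T_2,T_3\}$; the hypotheses $T_3\to U_r(T_2)$, $U_l(T_2)\to T_1$, together with the explicit promise that $U_r(T_2)\to U_l(T_2)$ be decidable in $O(d(v))$ time, are precisely what make this enlarged $\B$ adequate, while the resulting configuration embeds an $\overline{C_{2n}}$-type obstruction from Theorem~\ref{thm:forbiddens PCA}. For (d), the collapse $F_r(T_1)=F_r(T_3)$ shortens the window, so I instead adjoin $U_l(T_1)$ and $F_l(T_1)$, with the analogous $O(d(v))$-time promise on $F_r(T_1)\to U_l(T_1)$ supplying adequacy. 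The main obstacle is exactly these last two cases: several mutually consistent adjacencies and non-adjacencies must be pinned down simultaneously, and adequacy would fail without the explicit time promises on the peripheral $\to$-queries, since otherwise answering them on arbitrary blocks cannot be guaranteed within $O(d(v))$ time. Once $\B$ is certified adequate in every case, the stated $O(d(v))$-time minimization yields the required minimally forbidden of $H$.
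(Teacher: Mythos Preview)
Your high-level template---name a bounded family $\B$, certify adequacy, defer the ``is it forbidden?'' check to the appendix---matches the paper's strategy for cases~(c) and~(d), but it breaks down in case~(a), and your treatment of~(b) is too vague to count as a proof.

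The concrete gap is in~(a). You propose $\B=\{T_1,T_2,T_3\}$ and claim the configuration ``already forces one of $\overline{S_3}$ or an $\overline{H_i}$''. This is false. The hypotheses of~(a) do \emph{not} exclude $T_3\to T_1$; when that edge is present, $T_1,T_2,T_3$ are pairwise adjacent and $\{v,-T_1,+T_2,-T_3\}$ induces a paw, which is PCA. So your $\B$ is not forbidden in that subcase. The paper handles~(a) by first testing $T_3\to T_1$: if it fails, $\{+T_2,v,-T_1,-T_3\}$ is a $K_{1,3}$ and you are done; if it holds, the argument is substantially harder. One must iterate $U_l^i(T_1)$ (and later $U_r^i(T_1)$) to locate either an induced co-cycle of length depending on $d(v)$, or a specific seven-block family inducing $\overline{H_2}$. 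In particular the forbidden family here is \emph{not} of bounded size---it can have $\Theta(d(v))$ blocks---so no constant-radius window around $T_1,T_2,T_3$ suffices, contrary to the premise of your plan.

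Case~(b) in the paper is also not a one-shot constant family: it is reduced to two auxiliary subcases~(e) and~(f) (three pairwise-adjacent co-adjacent blocks, and two co-adjacent blocks with $T_3\to T_1$), each of which requires its own search for the right $W_l,W_r$ via Observation~\ref{obs:basic queries} and several branchings before a forbidden is named. Your sentence about an ``asteroidal-triple-like obstruction'' does not supply any of this. For~(c) and~(d) your outline is closer, but even there the paper first tests extra adjacencies (e.g.\ $U_r(T_2)\to T_1$ in~(c), $F_r(T_1)\to F_l(T_1)$ in~(d)) and, when they hold, recurses into~(b) rather than outputting a constant family directly.
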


\begin{proof}
  We provide $O(d(v))$ time algorithms to find an adequate forbidden in each case.  
  
  \begin{enumerate}[(a)]
    \item First, we query if $T_3 \to T_1$ as in Observation~\ref{obs:basic queries}~(\ref{obs:basic queries:triangle adjacency}) with input $W_l = T_1$, $W_r = T_2$, $W = T_3$; if false, then \{$+T_2$, $v$, $-T_1$, $-T_3$\} induces a $K_{1,3}$ of $H$.  Suppose, then, that $T_3 \to T_1$.  Let $a \geq 1$ be the minimum such that either $U_l^{2a+1}(T_1) = U_l^{2a-1}(T_1)$ or $T_3 \nto U_l^{2a}(T_1)$.  It is not hard to observe that such a value $a$ always exists because the blocks not adjacent to $W$ form the range $(F_r(W), F_l(W))$ for every $W \in \B(\Gamma)$.  Moreover (see Figure~\ref{fig:left to right}~(a)):
    \begin{enumerate}[(i)]
      \item $U_l(T_1) \in (T_2, T_3)$ and $U_l^{2i+1}(T_1) \in [U_l^{2i-1}(T_1), T_3)$ for every $2 \leq i < a$,
      \item $U_l^{2i}(T_1) \in (U_l^{2i-2}(T_1), T_2]$ for every $1 \leq i \leq a$.
    \end{enumerate}
    Thus, by marking every block in $[U_r(T_3), U_l(T_3)]$, the sequence $U_l(T_1)$, \ldots, $U_l^{2a+1}(T_1)$ can be obtained in $O(|(T_1, T_3)|) = O(d(v))$ time.  
    
    \begin{figure}
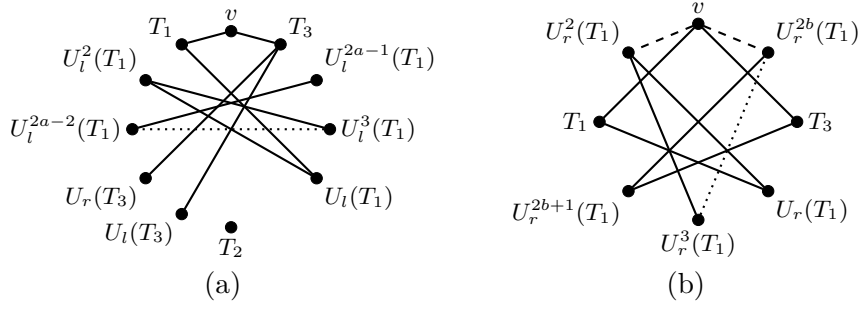

      \centering
      \begin{tabular}{c@{\hspace{1cm}}c}
        \input{srcFigLeftToRight} & \input{srcFigLeftToRight2} \\
         (a)  &  (b)
      \end{tabular}
      \caption{Adjacencies of $\overline{G(\Gamma)}$ in Lemma~\ref{lemma:subprocedures} (\ref{lemma:subprocedures:left to right}).  The blocks are drawn as they appear in the circular ordering $\B(\Gamma)$.  Dotted lines are used for continuity, while dashed lines means an edge could be present or absent.}\label{fig:left to right}
    \end{figure}

    If $T_3 \nto U_l^{2a}(T_1)$, then, by (i)~and~(ii), it follows that \{$v$, $-T_1$, $+U_l(T_1)$, \ldots, $+U_l^{2a}(T_1)$, $-T_3$\} induces a co-cycle in $H$, whose semiblocks are all adjacent to $T_2$.  Consequently, a minimally forbidden can be obtained in $O(|(T_1, T_3)|) = O(d(v))$ time.  Suppose, then, that $T_3 \to U_l^{2a}(T_1)$, thus $U_l^{2a+1}(T_1) = U_l^{2a-1}(T_1)$ and $U_l^{2a}(T_1)$ are the right co-end block of $\Gamma$.  
    
    By (i), the co-end blocks of $\Gamma$ belong to $(T_1, T_3)$.  Since $\Gamma$ is co-contig, we conclude that $T_1$ and $T_3$ belong to the same co-contig range, thus $v$ is fully adjacent to all the blocks in the co-contig range that contains $T_2$.  Let $b$ be the minimum such that $T_1$, $U_r(T_1)$, \ldots, $U_r^{2b+1}(T_1)$, $T_3$ induces a co-path (see Figure~\ref{fig:left to right}~(b)).  This co-path can be found in $O(d(v))$ time because $v$ is fully adjacent to $U_r^{2i+1}(T_1)$ for every $0 \leq i \leq b$.  Moreover, since $T_2 \to T_3$, it follows that $b \geq 1$.
    
    If $b > 1$ or $v$ is fully adjacent to $U_r^2(T_1)$, then \{$v$, $-T_1$, $\pm U_r^1(T_1)$, \ldots, $\pm U_r^{2b+1}(T_1)$, $-T_3$\} is an adequate forbidden (as it contains an induced $\overline{C_{2k}}$ ($k \geq 3$) or $\overline{H_2}$).  If $b = 1$ and $v$ is co-adjacent to $U_r^2(T_1)$, then, as $T_1 \to T_2$ and $T_2 \to T_3$, it follows that $T_2$, $U_r(T_1)$, $T_3$, $U_r^2(T_1)$, $T_1$, $U_r^3(T_1)$ appear in this order, thus $T_2$ is not adjacent to $U_r^2(T_1)$.  Hence, \{$v$, $+T_2$, $+U_r(T_1)$, $-T_3$, $-U_r^2(T_1)$, $-T_1$, $+U_r^3(T_1)$\} induces an $\overline{H_2}$ of $H$ (\ref{app:subprocedures:left to right}).
  \end{enumerate}
    
  Before dealing with~(\ref{lemma:subprocedures:two triangle}), we consider two subcases that are also solved in $O(d(v))$ time.
  \begin{enumerate}[(a)]
    \setcounter{enumi}{4}
    \item \label{lemma:subprocedures:zero triangle} $v$ is co-adjacent to $T_1$, $T_2$, and $T_3$, and $T_3 \to T_1$.  By inspection, it can be observed that $\B$ = \{$v$, $-T_1$, $-T_2$, $-T_3$, $\pm U_r(T_1)$, $\pm U_r(T_2)$, $\pm U_r(T_3)\}$ is a forbidden (\ref{app:subprocedures:zero triangle}).  Moreover, $\B$ is adequate because we can determine all the adjacencies in $O(1)$ time.  Indeed, as $U_r(T_i) \in (T_{i+1}, T_{i-1})$, it follows that $U_r(T_i) \to T_{i-1}$; since $U_r(T_i) \nto T_i$, then $U_r(T_i) \nto T_{i+1}$, while the adjacencies between $U_r(T_i)$ and $U_r(T_{i+1})$ are obtained in $O(1)$ time by Observation~\ref{obs:basic queries} (\ref{obs:basic queries:triangle adjacency}) (with input $W_l = F_r(T_{i+1})$, $W_r = U_r(T_{i+1})$, and $W = U_r(T_i)$). 
    
    \item \label{lemma:subprocedures:one triangle} $v$ is co-adjacent to $T_1$ and $T_3$ and $T_3 \to T_1$.  Let $W_l$ be the rightmost block in $[T_1, T_2]$ co-adjacent to $v$, and $W_r$ be the leftmost block in $[T_2, T_3]$ co-adjacent to $v$.  By Observation~\ref{obs:basic queries} (\ref{obs:basic queries:first co-adjacent}), $W_l$ and $W_r$ can be obtained in $O(d(v))$ time.  By Observation~\ref{obs:basic queries} (\ref{obs:basic queries:triangle adjacency}), the query of whether $W_l \to T_3$ can also be answered in $O(d(v))$ time.  If affirmative, then we obtain a minimally forbidden by invoking (\ref{lemma:subprocedures:zero triangle}) with input $T_1, W_l, T_3$.  Otherwise, we obtain that $W_l \neq T_2$ and, thus, $W_r \neq T_2$.  Again, by Observation~\ref{obs:basic queries} (\ref{obs:basic queries:triangle adjacency}), $O(d(v))$ time suffices to find out whether $W_l \to W_r$; if negative, then we obtain a minimally forbidden by invoking (\ref{lemma:subprocedures:left to right}) with input $W_l, T_2, W_r$.  When $W_l \to W_r$, we query whether $T_3 \to W_l$ using Observation~\ref{obs:basic queries} (\ref{obs:basic queries:triangle adjacency}); if negative, then \{$+T_2$, $v$, $-W_l$, $-T_3$\} induces a $K_{1,3}$, while if affirmative, then $W_r \neq T_3$ and we can obtain a minimally forbidden by invoking (\ref{lemma:subprocedures:zero triangle}) with input $W_l, W_r, T_3$.
  \end{enumerate}

  \begin{enumerate}[(a)]
    \setcounter{enumi}{1}
    \item Let $W$ be the rightmost block in $[T_1, T_2]$ that is co-adjacent to $v$.  If $W = T_1$, then we obtain a minimally forbidden by invoking (\ref{lemma:subprocedures:one triangle}).  When $W \neq T_1$, we query whether $W \to T_3$ and $T_3 \to W$ using Observation~\ref{obs:basic queries} (\ref{obs:basic queries:triangle adjacency}).  If $W \to T_3$, then we obtain a minimally forbidden by invoking (\ref{lemma:subprocedures:one triangle}) with input $T_3, T_1, W$.  If $T_3 \to W$, then we run (\ref{lemma:subprocedures:one triangle}) with input $W, T_2, T_3$.  Finally, if $W \nto T_3$ and $T_3 \nto W$, then \{$+T_2$, $v$, $-W$, $-T_3$\} induces a $K_{1,3}$.
    
    \item By Observation~\ref{obs:basic queries} (\ref{obs:basic queries:triangle adjacency}), we can query in $O(d(v))$ whether $U_r(T_2) \to T_1$ (resp.\ $T_3 \to U_l(T_2)$); if so, then we obtain a forbidden using (\ref{lemma:subprocedures:two triangle}) with input $T_1$, $T_3$, $U_r(T_2)$ (resp.\ $U_l(T_2)$). Otherwise, it follows that $U_l(T_2) \neq U_r(T_2)$ and, so, \{$v$, $-T_1$, $+T_2$, $-T_3$, $\pm U_l(T_2)$, $\pm U_r(T_2)$\} is a forbidden (\ref{app:subprocedures:center not dominating}).  Moreover, it is adequate as the only unknown edge is $U_r(T_2) \to U_l(T_2)$ and this edge can be queried in $O(d(v))$ time.
    
    \item As $\Gamma$ is a block representation, it cannot contain indistinguishable blocks.  Thus, $F_l(T_2) \nto T_3$ and $F_l(T_1) \nto T_2$ (hence $F_l(T_1) \neq F_l(T_2)$ and $F_l(T_2) \neq T_1$).  Observe that $F_r(T_1) \to F_l(T_1)$ if and only if $F_r(T_1) \to U_l(T_1)$ and $F_r^2(T_1) \neq U_l(T_1)$.  By hypothesis, we can query, in $O(d(v))$ time, whether $F_r(T_1) \to U_l(T_1)$; hence, we can also find out if $F_r(T_1) \to F_l(T_1)$ in $O(d(v))$ time.  Two cases then follow.
    \begin{description}
      \item [Case 1:] $F_r(T_1) \to F_l(T_1)$.  We first check if $v$ is co-adjacent to $F_r(T_1)$ or $F_l(T_1)$.  If so, then we find a forbidden by calling (\ref{lemma:subprocedures:two triangle}) with input $T_1 = F_l(T_1)$, $T_2 = T_1$, and $T_3 = F_r(T_1)$ (in the former case) or $T_1 = T_1$, $T_2 = F_r(T_1)$, $T_3 = F_l(T_1)$ (in the latter case).  Otherwise, $\B$ $=$ \{$v$, $-T_1$, $+T_2$, $-T_3$, $+U_l(T_1)$, $+F_l(T_1)$, $\pm F_l(T_2)$\} is a forbidden (\ref{app:subprocedures:center and right dominated:case 1}).
      \item [Case 2:] $F_r(T_1) \nto F_l(T_1)$.  In this case, $\B$ $=$ \{$\pm U_l(T_1)$, $\pm F_l(T_1)$, $\pm F_l(T_2)$, $-T_1$, $+T_2$, $-T_3$, $\pm F_r(T_1)$\} is a forbidden (\ref{app:subprocedures:center and right dominated:case 2}).  
    \end{description}
    
    Whichever the case, the forbidden $\B$ is adequate as it has at most two unknown edges, namely $F_r(T_1) \to U_l(T_1)$ and $U_l(T_1) \to F_l(T_2)$.  The former can be queried in $O(d(v))$ time by hypothesis.   To find out if the latter edge exists, we observe that $U_l(T_1) \to F_l(T_2)$ if and only if $F_r(U_l(T_1)) \to T_2$.  Since $U_l(T_1) \to F_l(T_1)$, then $F_r(U_l(T_1)) \in [F_l(T_1), T_1)$.  So, by Observation~\ref{obs:basic queries} (\ref{obs:basic queries:triangle adjacency}) ---with input $W_l = T_1$, $W_r = T_2$, $W = F_r(U_l(T_1))$---, $O(d(v))$ time suffices to determine if $F_r(U_l(T_1)) \to T_2$.  
  \end{enumerate}

\end{proof}

The next lemma describes how to find a forbidden when a long bad block $B$ is given.  Recall that $B$ is long when $F_r(B) \to F_l(B)$.  How $B$ was found, or why do we know that $B$ is long are irrelevant questions at this point.

\begin{lemma}\label{lemma:bad block long}
  Let $H$ be a graph with a vertex $v$, $\Gamma$ be a block co-contig representing $H \setminus \{v\}$, and $B \in \B(\Gamma)$ be a long bad block.  Given a semiblock $\Gamma$-pointer to $B$, a minimally forbidden of $H$ can be obtained in $O(d(v))$ time.
\end{lemma}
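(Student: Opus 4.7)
The plan is to use the badness of $B$ to extract two ``co-adjacent witnesses'' $W_l$ and $W_r$ lying on either side of $B$ inside its closed neighborhood, and then to combine them with the long property $F_r(B) \to F_l(B)$ in order to invoke a suitable case of Lemma~\ref{lemma:subprocedures}.

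First I would locate $W_l$, the rightmost block in $[F_l(B), B)$ co-adjacent to $v$, and $W_r$, the leftmost block in $(B, F_r(B)]$ co-adjacent to $v$. Both exist because $B$ is bad. They can be found by scanning leftward from $L(B)$ and rightward from $R(B)$ via Observation~\ref{obs:basic queries}~(\ref{obs:basic queries:first co-adjacent}). Although the ranges $[F_l(B), B)$ and $(B, F_r(B)]$ may be much larger than $d(v)$, every block strictly between $W_l$ (resp.\ $W_r$) and $B$ is, by construction, fully adjacent to $v$ and therefore contains at least one vertex of $N(v)$; thus the combined scan touches only $O(d(v))$ blocks. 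After this step we have $W_l \to B \to W_r$ with $v$ co-adjacent to both $W_l$ and $W_r$, fully adjacent to every block in $(W_l, B) \cup (B, W_r)$, and at least partially adjacent to $B$.

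Next, I would dispatch to Lemma~\ref{lemma:subprocedures} according to whether $v$ is also co-adjacent to $B$. If $-B \neq \emptyset$, then $B$ itself plays the role of the second co-adjacent witness $W \in [W_l, W_r]$ required by part~(\ref{lemma:subprocedures:two triangle}); it remains to verify the triangular adjacency $W_r \to W_l$, which follows from the long property (since $F_r(B) \to F_l(B)$ while $W_l$ and $W_r$ lie in $[F_l(B), B)$ and $(B, F_r(B)]$ respectively) and can be checked in $O(d(v))$ time via Observation~\ref{obs:basic queries}~(\ref{obs:basic queries:triangle adjacency}), after which part~(\ref{lemma:subprocedures:two triangle}) is invoked with $T_1 = W_l$, $T_2 = B$, $T_3 = W_r$. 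If instead $-B = \emptyset$, then $v$ is fully adjacent to all of $(W_l, W_r)$, so I would test whether $W_l \to W_r$: if not, part~(\ref{lemma:subprocedures:left to right}) applies directly with $T_1 = W_l$, $T_2 = B$, $T_3 = W_r$; if so, the long property again yields $W_r \to W_l$, allowing a fallback to part~(\ref{lemma:subprocedures:two triangle}).

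The hard part will be confirming that every adjacency check between $W_l$ and $W_r$, which may lie on opposite sides of the potentially wide closed neighborhood of $B$, can be performed within the $O(d(v))$ budget. Observation~\ref{obs:basic queries}~(\ref{obs:basic queries:triangle adjacency}) answers such a query only when the queried block lies in a specified local range of one of the endpoints, and \emph{a priori} $W_r$ and $W_l$ are on opposite arcs. The long property is precisely what makes this feasible: because $F_r(B) \to F_l(B)$, the relevant far-neighbor ranges of $W_l$ and $W_r$ are anchored inside $[F_l(B), F_r(B)]$, which has already been traversed when locating $W_l, W_r$, and every remaining adjacency query stays inside a window of $O(d(v))$ blocks.
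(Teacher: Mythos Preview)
Your central claim---that $W_r \to W_l$ ``follows from the long property''---is false, and this breaks both branches of your argument. For a concrete counterexample, take a circular contig on six blocks $B_1,\ldots,B_6$ with $F_r(B_i)=B_{i+2}$ (indices mod~$6$). Then $B=B_1$ is long, since $F_r(B_1)=B_3\to B_5=F_l(B_1)$; yet with $W_l=B_6\in[F_l(B),B)$ and $W_r=B_2\in(B,F_r(B)]$ one gets $W_l\to W_r$ and $W_r\nto W_l$. The reason your inference fails in general is that monotonicity of $F_r$ only gives $F_r(W_r)\succeq F_r(B)$; while $F_r(B)\to F_l(B)$, nothing forces $F_r(W_r)$ to reach past $W_l$, which may sit arbitrarily close to $B$ inside $[F_l(B),B)$.

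This leaves the case $W_l\to W_r$ completely unhandled. In your first branch you invoke Lemma~\ref{lemma:subprocedures}~(\ref{lemma:subprocedures:two triangle}), whose hypothesis $T_3\to T_1$ then fails. In your second branch you explicitly test $W_l\to W_r$ and, when it holds, still assert $W_r\to W_l$; this is directly contradictory, since contig axiom~(iii) forbids both $W_l\to W_r$ and $W_r\to W_l$. Neither part~(\ref{lemma:subprocedures:left to right}) nor part~(\ref{lemma:subprocedures:two triangle}) of Lemma~\ref{lemma:subprocedures} applies with your choice of $T_1,T_2,T_3$ in this situation.

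The paper's proof is considerably more elaborate precisely because of this difficulty. It first disposes of the case where $v$ is co-adjacent to some block in the wider window $[F_l^2(B),F_r^2(B)]$ (via Lemma~\ref{lemma:subprocedures}~(\ref{lemma:subprocedures:two triangle})), then redefines $W_l,W_r$ in the complementary ranges $(F_r^2(B),B)$ and $(B,F_l^2(B))$, applies several further reductions through parts~(\ref{lemma:subprocedures:center not dominating}) and~(\ref{lemma:subprocedures:center and right dominated}) and an explicit $\overline{H_4}$, and finally builds an iterative co-path/co-cycle through the sequence $U_r^i(B_l)$ to exhibit the forbidden subgraph in the residual case. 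Your two-line dispatch cannot replace this case analysis.
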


\begin{proof}
  If $v$ is co-adjacent to a block $W$ in $[F_l^2(B), F_r(B)]$, then a minimally forbidden can be obtained by invoking Lemma~\ref{lemma:subprocedures} (\ref{lemma:subprocedures:two triangle}) with input $T_1 = F_l(B)$, $T_2 = B$, $T_3 = W$.  Analogously, a minimally forbidden is obtained in $O(d(v))$ time when $v$ is co-adjacent to a block in $[F_l(B), F_r^2(B)]$.  Let $W_l$ be the rightmost block in $(F_r^2(B), B)$ that is co-adjacent to $v$, and $W_r$ be the leftmost block in $(B, F_l^2(B))$ that is co-adjacent to $v$.  By Observation~\ref{obs:basic queries}, $W_l$ and $W_r$ are found in $O(d(v))$ time, while we can also query whether $W_l \to W_r$ in $O(d(v))$ time.  If $W_l \nto W_r$, then we compute a minimally forbidden by invoking Lemma~\ref{lemma:subprocedures}~(\ref{lemma:subprocedures:left to right}).  Using Observation~\ref{obs:basic queries}~(\ref{obs:basic queries:first co-adjacent}), we search for a block $W \in (F_r(B), F_l(B))$ co-adjacent to $v$.  Moreover, when such a block exists, we query whether $W \to W_l$ and $W_r \to W$ with Observation~\ref{obs:basic queries} (\ref{obs:basic queries:triangle adjacency}).  When $W \to W_l$ and $W_r \to W$, we find a forbidden by calling Lemma~\ref{lemma:subprocedures} (\ref{lemma:subprocedures:two triangle}), while when $W \nto W_l$ (resp.\ $W_r \nto W$), the family \{$+F_l(B)$, $v$, $-W_l$, $-W$\} (resp.\ \{$+F_r(B)$, $v$, $-W_r$, $-W$\}) induces a $K_{1,3}$. Therefore, we may suppose from now on that $v$ is fully adjacent to every block in $[F_l^2(B), F_r^2(B)]$.

  Note that, by hypothesis, $F_r(B) \to U_l(B)$ and $U_r(B) \toeq U_l(B)$.  Hence, we can obtain a minimally forbidden in $O(d(v))$ by invoking Lemma~\ref{lemma:subprocedures} (\ref{lemma:subprocedures:center not dominating}) with input $T_1 = W_l$, $T_2 = B$, and $T_3 = W_r$ when $F_l(W_l) \neq F_l(B)$ and $F_r(B) \neq F_r(W_l)$.  Analogously, we obtain a minimally forbidden in $O(d(v))$ time when $F_r(W_l) = F_r(W_r)$, by calling Lemma~\ref{lemma:subprocedures} (\ref{lemma:subprocedures:center and right dominated}).  By exchanging the roles of $\Gamma$ and $\Gamma^{-1}$ if required, suppose $F_r(B) = F_r(W_r)$ and, hence, $F_r(W_l) \neq F_r(B)$.
  
  Recall that $F_r(B) \nto W_l$ by definition.  For the next step, we proceed according to whether $U_r(B) = F_l(W_l)$ or not.  If $U_r(B) \neq F_l(W_l)$, then \{$v$, $-W_l$, $+B$, $-W_r$, $+F_r(B)$, $+U_r(B)$, $+F_l(B)$\} induces an $\overline{H_4}$ (\ref{app:bad block long}).  When $U_r(B) = F_l(W_l)$, we determine in $O(1)$ time if $U_r(B)$ is a left co-end block by querying whether $U_r^3(B) = U_r(B)$.  If $U_r(B)$ is not a left co-end block, then $B' = U_r^2(B) \in (W_l, B)$ is a block such that $F_r(B') \neq F_r(W_r) = F_r(B)$ and $F_l(B') \neq F_l(W_l) = U_r(B)$.  In this case, we obtain a minimally forbidden by calling Lemma~\ref{lemma:subprocedures}~(\ref{lemma:subprocedures:center not dominating}) with input $T_1 = W_l$, $T_2 = B'$, and $T_3 = W_r$. (As $F_r(B') \to U_r(B)$ and $F_l(B') \in (U_r(B), F_l(B))$, we can find out whether $F_r(B') \to F_l(B')$ by traversing $(U_r(B), F_l(B))$.  Recall that $v$ is adjacent to every block in $(U_r(B), F_l(B))$, thus $O(d(v))$ time suffices to answer the query).  We can suppose, then, that $B'$ and $U_r(B)$ are left co-end blocks.   
  
  \begin{figure}
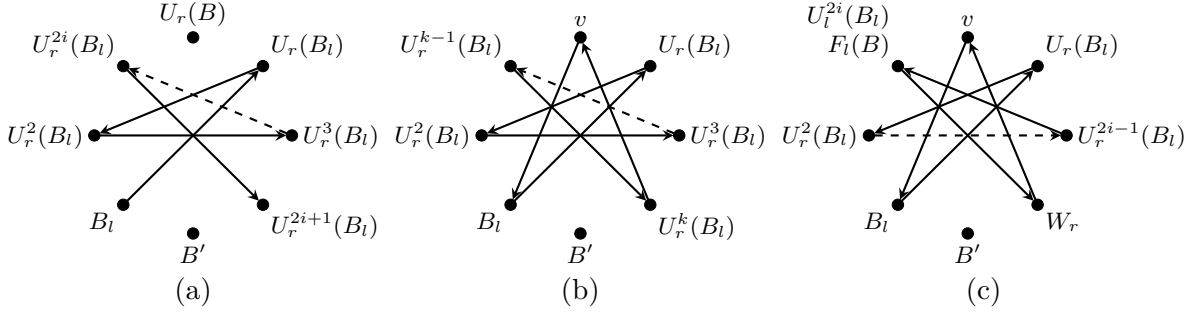

    \begin{tabular}{c@{\hspace{2mm}}c@{\hspace{2mm}}c}
      \input{srcFigLongBadBlockPath} & \input{srcFigLongBadBlockCase1} & \input{srcFigLongBadBlockCase2} \\
      (a) & (b) & (c)
    \end{tabular}
    \caption{Co-paths of $H$ in Lemma~\ref{lemma:bad block long}: (a) shows the position of the blocks in $\B(\Gamma)$, (b) depicts Case~1 for $k$ odd, and (c) shows Cases 2~and~3.}\label{fig:bad block long}
  \end{figure}

  Let $B_l = U_r(F_r(B))$ if $v$ is co-adjacent to $U_r(F_r(B))$, and $B_l = W_l$ otherwise.  Since $\Gamma$ is a co-contig, $U_r(B)$ and $B'$ are the only left co-end blocks.  Consequently $U_r^0(B_l)$, \ldots, $U_r^k(B_l)$ induce a co-path for every $k>0$ such that $U_r^k(B_l) \not\in \{B', U_r(B)\}$.  Moreover, the order of these blocks in $\B(\Gamma)$ is as in Figure~\ref{fig:bad block long}~(a).  Let $k$ be the minimum such that either:
  \begin{enumerate}[(i)]
    \item $U_l^{k}(B_l)$ is co-adjacent to $v$,
    \item $k$ is odd and $U_r^k(B_l) \nto F_l(B)$, or
    \item $k$ is even and $U_l^{k}(B_l) \nto W_r$,
  \end{enumerate}
  and consider the following cases.
  \begin{description}
    \item [Case 1:] (i) holds; see Figure~\ref{fig:bad block long} (b). Then, $\B$ $=$ \{$v$, $-B_l$, $+U_r(B_l)$, \ldots, $+U_r^{k-1}(B_l)$, $-U_r^{k}(B_l)$\} induces a co-cycle of length $k+2$.  Suppose $k$ is even and note that $U_r(B_l) \in [B', F_r(B)]$.  Hence $U_r^2(B_l) \in [F_r(B), U_r(F_r(B))] \subseteq [F_r(B), B_l]$.  Moreover, $U_r^2(B_l) \neq B_l$ because otherwise $B_l$ would be a left co-end block different to  $B'$ and $U_r(B)$.  Therefore, either $F_r(B) \to U_r^2(B_l)$ or $U_r^2(B_l) = U_r(F_r(B)) \neq B_l$, which implies that $v$ is fully adjacent to $U_r^2(B_l)$.  Consequently, $k > 2$ and $\B$ is a forbidden.  When $k$ is odd, $B' \to U_r^i(B_l)$ for every odd $1 \leq i \leq k$, while, because of the minimality of $k$, $U_r^i(B_l) \to B'$ for every even $1 \leq i < k$.   Therefore, $\B \cup \{B'\}$ is a forbidden.  
 
    \item [Case 2:] (ii) holds and (i) does not; see Figure~\ref{fig:bad block long} (c) with $k = i-1$.  Then, $\B$ $=$ \{$v$, $-B_l$, $+U_r(B_l)$, \ldots, $+U_r^{k}(B_l)$\} induces an odd co-path.  Moreover, by the minimality of $k$, we obtain that $F_l(B)$ is adjacent to every block in $\B \setminus \{U_r^{k}(B_l)\}$, while $B'$ and $W_r$ are adjacent to every block in $\B \setminus \{v\}$.  Hence, $\B \cup \{F_l(B), W_r, B'\}$ is a forbidden.
    
    \item [Case 3:] (iii) holds and (i) does not; see Figure~\ref{fig:bad block long} (c) with $k = i$.  This time, $\B$ $=$ \{$v$, $-B_l$, $-W_r$, $+U_r(B_l)$, \ldots, $+U_r^k(B_l)$\} induces an odd co-cycle.  As in case (i), $B'$ is adjacent to every block of $\B \setminus \{U_r^k(B_l)\}$, while, by the minimality of $k$, $U_r^k(B_l) \in [F_l(B), B]$, thus $U_r^k(B_l) \to B'$.  Consequently, $\B \cup \{B'\}$ is a forbidden.   
  \end{description}
  To compute $U_r(B_l), \ldots, U_r^k(B_l)$, we proceed as follows.  First, we mark with $1$ the blocks in $[U_r(B), F_l(B))$ and with $2$ the blocks in $[B, W_r]$.  Then we traverse $U_r^k(B_l)$ for each $k > 0$ until we find a block co-adjacent to $v$ or a marked block.  In the former case (i) holds for $k$, while in the latter case (ii) or (iii) holds for $k-1$ when $U_r^k(B)$ is marked with $1$ or $2$, respectively.  We conclude that $O(d(v))$ time is enough to find the forbidden.
\end{proof}

The case in which $B$ is a short bad block is solved next.  Note that, a priori, $\Gamma$ could contain long bad blocks; yet, we have no evidence about the existence or non-existence of long bad blocks.

\begin{lemma}\label{lemma:bad block short}
  Let $H$ be a graph with a vertex $v$, $\Gamma$ be a block co-contig representing $H \setminus \{v\}$, and $B \in \B(\Gamma)$ be a short bad block.  Given a semiblock $\Gamma$-pointer to $B$, a minimally forbidden of $H$ can be obtained in $O(d(v))$ time.
\end{lemma}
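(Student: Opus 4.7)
Plan. The strategy mirrors the long bad block case (Lemma~\ref{lemma:bad block long}): locate extremal co-adjacent witnesses on each side of $B$, dispatch most configurations to the uniform subprocedures of Lemma~\ref{lemma:subprocedures}, and clean up the residual cases by exhibiting a small adequate forbidden family whose existence is forced by the shortness $F_r(B) \nto F_l(B)$.

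First, I would localize the co-adjacent witnesses. Because $B$ is bad, $v$ is adjacent to $B$ while there exist blocks $W_l \in [F_l(B), B)$ and $W_r \in (B, F_r(B)]$ with $v$ co-adjacent to both. Using Observation~\ref{obs:basic queries}\,(\ref{obs:basic queries:first co-adjacent}), I compute the rightmost such $W_l$ and the leftmost such $W_r$ in $O(d(v))$ time. By the extremality of $W_l, W_r$, every block in $(W_l, W_r) \setminus \{B\}$ is fully adjacent to $v$, and $W_l \to B \to W_r$ since $W_l, W_r \in N[B]$. Next, I query $W_l \to W_r$ in $O(d(v))$ time via Observation~\ref{obs:basic queries}\,(\ref{obs:basic queries:triangle adjacency}). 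If $W_l \nto W_r$, then Lemma~\ref{lemma:subprocedures}\,(\ref{lemma:subprocedures:left to right}), invoked with $(T_1, T_2, T_3) = (W_l, B, W_r)$, produces the minimally forbidden directly.

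Suppose instead $W_l \to W_r$. I would then dispatch to Lemma~\ref{lemma:subprocedures}\,(\ref{lemma:subprocedures:center not dominating}) with the same triple whenever $F_l(W_l) \neq F_l(B)$ and $F_r(W_r) \neq F_r(B)$, since these inequalities give exactly the hypotheses $U_l(B) \to W_l$ and $W_r \to U_r(B)$ required there; and to Lemma~\ref{lemma:subprocedures}\,(\ref{lemma:subprocedures:center and right dominated}) with the same triple when $F_r(W_l) = F_r(W_r)$ (the symmetric equality $F_l(W_l) = F_l(W_r)$ is obtained by swapping $\Gamma$ and $\Gamma^{-1}$). In each subcase, the extra adjacency required by the subprocedure ---either $U_r(B) \to U_l(B)$ for (\ref{lemma:subprocedures:center not dominating}) or $F_r(W_l) \to U_l(W_l)$ for (\ref{lemma:subprocedures:center and right dominated})--- concerns blocks separated by a range of size $O(d(v))$ inside the fully adjacent region $(W_l, W_r)$, so Observation~\ref{obs:basic queries}\,(\ref{obs:basic queries:triangle adjacency}) answers it within budget.

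What remains is the degenerate subcase in which $W_l \to W_r$ but all far-neighbor equalities collapse, e.g.\ $F_l(W_l) = F_l(B)$ together with $F_r(W_r) = F_r(B)$, so that none of (\ref{lemma:subprocedures:center not dominating})--(\ref{lemma:subprocedures:center and right dominated}) applies. Here the short-block hypothesis $F_r(B) \nto F_l(B)$ produces a concrete non-edge between $F_l(B)$ and $F_r(B)$, and I would exhibit an explicit adequate forbidden supported on $\{v, -W_l, +B, -W_r, \pm F_l(B), \pm F_r(B)\}$, possibly augmented by $U_l(B)$ or $U_r(B)$; this family induces one of the graphs listed in Theorem~\ref{thm:forbiddens PCA} (typically an $\overline{H_2}$, an $\overline{H_4}$, or a short co-cycle), and all of its adjacencies are either determined by the circular positions of its members or are among the $O(1)$ queries already discussed. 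The main obstacle, as in Lemma~\ref{lemma:bad block long}, is precisely this final enumeration: one must verify that every configuration escaping (\ref{lemma:subprocedures:left to right})--(\ref{lemma:subprocedures:center and right dominated}) under the shortness assumption contains such a bounded-size adequate forbidden, and that every unknown edge in the chosen family can in fact be decided in $O(d(v))$ time ---a routine but delicate case analysis that can be discharged mechanically as in Appendix~\ref{app:adequacy}.
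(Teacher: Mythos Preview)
Your plan tracks the paper's proof through the first reductions: locate $W_l, W_r$; use part~(\ref{lemma:subprocedures:left to right}) when $W_l \nto W_r$; then try parts (\ref{lemma:subprocedures:center not dominating}) and (\ref{lemma:subprocedures:center and right dominated}). Two gaps remain. The minor one is your justification for the query $U_r(B) \to U_l(B)$: both blocks lie \emph{outside} $[F_l(B), F_r(B)]$, hence outside $(W_l, W_r)$, so Observation~\ref{obs:basic queries}\,(\ref{obs:basic queries:triangle adjacency}) does not apply as you claim. The paper instead exploits shortness directly: $F_r(B) \nto F_l(B)$ forces $F_r(B) \to U_l(B)$ iff $F_r^2(B) = U_l(B)$, and similarly $U_r(B) \to U_l(B)$ reduces to $O(1)$ pointer comparisons.

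The substantive gap is that part~(\ref{lemma:subprocedures:center and right dominated}) carries the hypothesis $U_l(T_1) \to F_l(T_1)$, which fails exactly when $F_l(W_l)$ is a left end semiblock. The paper's residual Case~2 is precisely this situation (together with $F_r(W_l) = F_r(B)$), and its adequate forbidden $\{v, \pm F_l(W_l), \pm F_l(B), -W_l, +B, -W_r, \pm X\}$ needs an auxiliary block $X$ lying \emph{outside} $N[B]$ --- taken either as $U_r$ of an end block, or from a \emph{different contig} of $\Gamma$ when the current one is exhausted. Your proposed family, even augmented by $U_l(B)$ or $U_r(B)$, does not cover this end-block configuration, and your description of the residual case (``$F_l(W_l) = F_l(B)$ together with $F_r(W_r) = F_r(B)$'') is not the correct dichotomy: after reducing by symmetry to $F_r(W_r) = F_r(B)$, the paper splits on whether $F_r(W_l) = F_r(B)$, and only the equality branch bifurcates further on the end-block condition.
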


\begin{proof}
  Let $W_l$ be the rightmost block in $[F_l(B), B)$ co-adjacent to $v$, and $W_r$ be the leftmost block in $(B, F_r(B)]$ co-adjacent to $B$.  We can find $W_l$ and $W_r$ and test whether $W_l \to W_r$ with Observation~\ref{obs:basic queries}. If $W_l \nto W_r$, we find a minimally forbidden as in Lemma~\ref{lemma:subprocedures} (\ref{lemma:subprocedures:left to right}).  
  
  By hypothesis, $F_r(B) \nto F_l(B)$; hence, $F_r(B) \to U_l(B)$ if and only if $F_r^2(B) = U_l(B)$.  Similarly, $U_r(B) \to U_l(B)$ if and only if $U_r(B) \neq U_l(B)$ and either $F_r(B) \to U_l(B)$ or $F_l(U_l(B)) = U_r(B)$.  We conclude, therefore, that $O(1)$ time suffices to determine whether $F_r(B) \to U_l(B)$ and whether $U_r(B) \to U_l(B)$.  Therefore, by Lemma~\ref{lemma:subprocedures} (\ref{lemma:subprocedures:center not dominating}), we can suppose that $F_r(B) = F_r(W_r)$.  Otherwise, we obtain a minimally forbidden in $O(d(v))$ time.  Similarly, by Lemma~\ref{lemma:subprocedures}~(\ref{lemma:subprocedures:center and right dominated}) we find a minimally forbidden if $F_r(W_l) = F_r(B)$ and $F_l(W_l)$ is not an end block.  Then, two cases remain:
  \begin{description}
    \item[Case 1:] $F_r(W_l) \neq F_r(B)$.  Thus, $F_r(B) \neq W_r$ and, as $\Gamma$ has no pair of indistinguishable blocks, $F_l(B) \nto W_r$ and, so, $F_l(B) \neq W_l$.  Under this situation, \{$v$, $-W_l$, $+B$, $-W_r$, $\pm F_l(B)$, $\pm F_r(B)$\} is an adequate forbidden (\ref{app:bad block short:case 1}).
    \item[Case 2:] $F_r(W_l) = F_r(B)$ and $F_l(W_l)$ is an end block.  Note that $F_l(W_l) \neq F_l(B)$ and $F_l(B) \neq W_l$ because $\Gamma$ has no indistinguishable blocks.  If $F_r(W_l)$ is not a right end block, then let $X = U_r(B_l)$ and note that $W_r \nto X$.  Otherwise, since $W_l$ is not universal, it follows that $\Gamma$ has some other contig with a block $X$.  Whichever the case, \{$v$, $\pm F_l(W_l)$, $\pm F_l(B)$, $-W_l$, $+B$, $-W_r$, $\pm X$\} is an adequate forbidden (\ref{app:bad block short:case 2}).  We remark that $X$ can be obtained in $O(1)$ time by using the representation and contig pointers of $F_l(W_l)$. 
  \end{description}
\end{proof}

We are now ready to deal with the existence of bad blocks, regardless of their type.  The main idea is to find a bad block $B$ which can be used as input of either Lemma~\ref{lemma:bad block long} or Lemma~\ref{lemma:bad block short}.  In order to apply either lemma, we need to find out whether $B$ is short or long.  However, we do not know how to test, in $O(d(v))$ time, whether $F_r(B) \to F_l(B)$ when only $B$ is given.  The solution to this problem is to take advantage of the dynamic nature of the algorithm.  That is, the answer to $F_r(B) \to F_l(B)$ was found when the last vertex of $G$ was inserted and it is implicitly encoded in the semiblock paths.

\begin{lemma}\label{lemma:bad block}
  Let $H$ be a graph with a vertex $v$, and $\Gamma$ be block co-contig representing $H \setminus \{v\}$.  Given $N(v)$, it costs $O(d(v))$ time to test if $\Gamma$ has bad blocks.  Furthermore, if $\Gamma$ has bad blocks, then a minimally forbidden of $H$ can be obtained within the same amount of time.
\end{lemma}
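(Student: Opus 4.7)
The plan is to exhibit a bad block of $\Gamma$ (when one exists) in time $O(d(v))$ and then feed it into Lemma~\ref{lemma:bad block long} or Lemma~\ref{lemma:bad block short}. First I preprocess $\Gamma$ by Observation~\ref{obs:preprocessing}, spending $O(d(v))$ so that adjacency and full-adjacency of $v$ to any block of $\Gamma$ can be queried in $O(1)$. The semiblock paths $\MP(\phi)$ already maintained by the data structure provide the long/short classification essentially for free: a contig has a long block only when its semiblock path has exactly three elements, and in that case the long candidate must lie in $\MP(\phi)$. Since $\Gamma$ contains at most one circular contig (a round representation has either a single contig or only linear contigs, and linear contigs have no long blocks), there are only $O(1)$ long-block candidates, and for each the answer to ``is $B$ long?'' is determined in $O(1)$ time from the pointers stored in $\MP(\phi)$ together with $F_l,F_r,L,R$.

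For each long candidate $B^{*}$ I test badness by a simultaneous leftward and rightward walk from $B^{*}$: each side halts at the first block co-adjacent to $v$ or at $F_l(B^{*})$ (respectively $F_r(B^{*})$). Every non-halting visit is to a block fully adjacent to $v$, and such blocks are pairwise distinct and contribute at least one vertex each to $N(v)$, so the walks cost $O(d(v))$ in total. If both walks halt on co-adjacent blocks, $B^{*}$ is bad and Lemma~\ref{lemma:bad block long} produces a minimally forbidden within the budget. Otherwise every bad block of $\Gamma$ must be short, and I search for one among $A=\{B : +B \neq \emptyset\}$, which has cardinality at most $d(v)$ and is assembled from $N(v)$ in $O(d(v))$ via {\tt semiblock}, using per-block marks for deduplication. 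For each $B \in A$ I run the same two-sided walk; the first $B \in A$ certified bad is necessarily short and is passed to Lemma~\ref{lemma:bad block short}. If no candidate is certified bad, $\Gamma$ has no bad blocks and we report accordingly.

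The main obstacle is keeping the aggregate cost of the short-block scan at $O(d(v))$: naively, the walks from different $B \in A$ could revisit the same stretches of fully-adjacent blocks many times. I handle this with a ``next co-adjacent block'' skip structure (essentially a union-find on fully-adjacent blocks already traversed) that, once a maximal run of consecutive fully-adjacent blocks has been crossed, lets subsequent walks jump over it in amortized constant time. The accounting observation is that fully-adjacent blocks are pairwise distinct and at most $d(v)$ in number, so each contributes to the walks only $O(1)$ amortized visits, which bounds the total work by $O(d(v))$. Once this amortization is in place, correctness is immediate from the definitions: the scan certifies ``no bad block'' exactly when every $B \in A$ is good, and otherwise surfaces a genuine bad block whose long/short tag, read off the semiblock path in $O(1)$, selects the appropriate preceding lemma.
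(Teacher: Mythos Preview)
Your detection phase (finding \emph{some} bad block in $O(d(v))$ time) is reasonable, but the long/short classification step is where the argument breaks. You assert that ``the long candidate must lie in $\MP(\phi)$'' and hence that there are $O(1)$ long-block candidates; from this you conclude that if none of the three blocks in $\MP(\phi)$ is bad, then every bad block is short. Neither claim is correct. When $|\MP(\phi)|=3$ the invariant only guarantees that \emph{some} long block exists; it does not say that long blocks are confined to $\MP(\phi)$, and in fact a circular contig can have $\Theta(n)$ long blocks while $\MP(\phi)$ contains only three of them. So a long bad block $B\notin\MP(\phi)$ can exist even when all of $T_1,T_2,T_3$ are good, and your scan would then feed $B$ to Lemma~\ref{lemma:bad block short}, whose proof uses the hypothesis $F_r(B)\nto F_l(B)$ essentially. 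Likewise, ``read off the long/short tag from the semiblock path in $O(1)$'' is exactly what cannot be done: the paper states explicitly, just before this lemma, that it does not know how to test $F_r(B)\to F_l(B)$ in $O(d(v))$ time given only $B$.

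The paper's proof confronts this obstacle head-on rather than bypassing it. After detecting bad blocks via a labelling scheme, it uses $T_1,T_2,T_3$ as anchors: if $v$ is co-adjacent to all three, Lemma~\ref{lemma:subprocedures}\,(\ref{lemma:subprocedures:two triangle}) applies directly; otherwise, say $v$ is fully adjacent to $T_2$, and either $T_2$ itself is bad (then $T_2$ is long and Lemma~\ref{lemma:bad block long} applies), or one locates the extreme co-adjacent blocks $W_l,W_r$ in $(F_r(T_2),T_2)$ and pinpoints the bad block $B$ relative to them. The crucial trick is that in this last situation one can arrange $F_r(B),F_r^2(B)\in(W_l,W_r)$, which lets Observation~\ref{obs:basic queries}\,(\ref{obs:basic queries:triangle adjacency}) decide whether $F_r^2(B)\to B$ (equivalently, whether $B$ is long) in $O(d(v))$ time. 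Your plan needs an analogue of this step; the semiblock path alone does not supply it.
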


\begin{proof}
  The algorithm has two main phases.  In the first phase, all the bad blocks of $\Gamma$ are marked; in the second phase a minimally forbidden is obtained.
  
  To find the bad blocks we first mark all the blocks of $\Gamma$ that are fully adjacent to $v$ in such a way that $B, W$ have the same mark if and only if $B$ and $W$ lie in the same contig, $v$ is fully adjacent to all the blocks in $[B, W]$ (or $[W, B]$), and $[B, W)$ (or $[W, B)$) has no right end blocks.  Then, a block $B$ adjacent to $v$ is bad if and only if $B$ is not an end block and either:
  \begin{itemize}
    \item $L(B)$ and $R(B)$ are unmarked, or
    \item $R(B)$ and $F_r(B)$ have different marks, and $L(B)$ and $F_l(B)$ have different marks.
  \end{itemize}
  It is not hard to both steps can be achieved in $O(d(v))$ time (see e.g.~\cite{SoulignacA2015}).  After these steps we can test if any block in $\Gamma$ is bad in $O(1)$ time.
  
  Let $T_1, \ldots, T_k$ be the blocks of $\MP(\gamma)$ for any $\gamma \in \Gamma$.  Recall that, by definition, $\Gamma$ has a long block if and only if $k = 3$ and $\MP(\gamma)$ is circular.  Hence, if $k \neq 3$ or $\MP(\gamma)$ is linear, then we obtain a minimally forbidden by invoking Lemma~\ref{lemma:bad block short} with and any bad block as input.  When $k = 3$ and $T_3 \to T_1$, we first check if $v$ is co-adjacent to $T_1$, $T_2$, and $T_3$; if true, then we obtain the minimally forbidden via Lemma~\ref{lemma:subprocedures} (\ref{lemma:subprocedures:two triangle}) with input $T_1$, $T_2$, $T_3$.  Suppose, then, that $v$ is fully adjacent to $T_2$.  If $T_2$ is a bad block, then we obtain the forbidden with Lemma~\ref{lemma:bad block long} using $T_2$ as input.  Otherwise, $v$ is fully adjacent to all the blocks in either $[T_2, F_r(T_2)]$ or $[F_l(T_2), T_2]$. Assume the former, as the proof for the latter is analogous. 
  
  Let $W_r$ and $W_l$ be the leftmost and rightmost blocks co-adjacent to $v$ in $(F_r(T_2), T_2)$, respectively, and observe that $W_l \nto W_r$ because $T_2 \nto W_r$.  We can both find $W_l$ and $W_r$ and query if $W_r \to W_l$ in $O(d(v))$ time by Observation~\ref{obs:basic queries}.  If any $B \in (W_l, W_r)$ is bad, then $W_l \to B$ and $B \to W_r$ because $v$ is fully adjacent to all the blocks in $(W_l, W_r)$.  When $W_r \nto W_l$, the family \{$v$, $+B$, $-W_l$, $-W_r$\} induces a $K_{1,3}$, while when $W_r \to W_l$ we find a minimally forbidden by calling Lemma~\ref{lemma:bad block long} with input $B$.  Whichever the case, $O(d(v))$ time is enough to obtain a minimally forbidden when some block in $(W_l, W_r)$ is bad.
  
  For the final case, suppose no block in $(W_l, W_r)$ is bad, thus the bad block $B$ belongs to $(W_r, W_l)$.  Note that either $F_r(B) \in (W_l, W_r)$ or $F_l(B) \in (W_l, W_r)$, we assume the former as the other case is analogous.  This means that $F_r(B)$ is good, thus $F_r^2(B)$ belongs to $(W_l, W_r)$ as well.  Hence, we can check if $F_r^2(B) \to B$ as in Observation~\ref{obs:basic queries}~(\ref{obs:basic queries:triangle adjacency}) (with input $W_l = F_r(B)$, $W_r = F_r^2(B)$, and $W = B$).  Then we find the minimally forbidden calling Lemma~\ref{lemma:bad block long} (if affirmative) or Lemma~\ref{lemma:bad block short} (if negative) with input $B$.  We conclude, therefore, that a minimally forbidden can be obtained in $O(d(v))$ time.
\end{proof}

Having dealt with bad blocks, we now consider the case in which $N(v)$ is not consecutive in $\Gamma$.  That is, we discuss how to find a forbidden when no co-contig representing $G$ satisfies~\receptionref{1}.  The core of the proof is given in the next lemma.

\begin{lemma}\label{lemma:range}
  Let $H$ be a graph with a vertex $v$, and $\Gamma$ be a block co-contig representing $H \setminus \{v\}$ with no bad blocks.  If $B_l \neq B_r$, and $X$ are blocks of $\Gamma$ such that:
  \begin{itemize}
    \item $X \not\in [B_l, B_r] \cup \{L(B_l), R(B_r)\}$, and $v$ is adjacent to every block in $[B_l, B_r] \cup \{X\}$,
    \item if $L(B_l) \neq \bot$, then $v$ is co-adjacent $L(B_l)$; otherwise $v$ is co-adjacent to $B_l$, and
    \item if $R(B_r) \neq \bot$, then $v$ is co-adjacent $R(B_r)$; otherwise $v$ is co-adjacent to $B_r$,
  \end{itemize}
  then a (minimally) forbidden of $H$ can be obtained in $O(d(v))$ time when semiblock $\Gamma$-pointers to $B_l$ and $B_r$ and a semiblock pointer $\DS{X}$ to $X$ are given.
\end{lemma}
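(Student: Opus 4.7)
The plan is to produce a forbidden of $H$ in $O(d(v))$ time, exploiting the fact that the hypotheses directly witness a violation of condition \receptionref{1}: $N(v)$ cannot be consecutive in $\Gamma$ because $v$ is adjacent to both the range $[B_l, B_r]$ and the outside block $X$, while the walls surrounding $[B_l, B_r]$ (either $L(B_l), R(B_r)$ or $B_l, B_r$ at the ends) are co-adjacent to $v$. Since $\Gamma$ has no bad blocks, $X$ is good, so $v$ is fully adjacent to one of $[F_l(X), X)$ or $(X, F_r(X)]$; WLOG, by possibly passing to $\Gamma^{-1}$, I assume the latter and that moving right from $R(B_r)$ we reach $X$ before $L(B_l)$.

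The analysis proceeds via a constant-size case split on the positions of the key blocks $B_l, B_r, X, L(B_l), R(B_r)$, together with $F_l(X), F_r(X), U_l(X), U_r(X)$, all obtainable in $O(1)$ time from the given pointers. When the ``chain'' of $v$-adjacent blocks extending from $X$ up to $F_r(X)$ stops short of $[B_l, B_r]$ (i.e., when $U_r(X)$ is co-adjacent to $v$), I would directly exhibit a small forbidden such as a $K_{1,3}$ of the form $\{+T_2, v, -T_1, -T_3\}$ or an $\overline{H_i}$, using $X$ or $F_r(X)$, a block of $[B_l, B_r]$, and one of the surrounding co-adjacent blocks; symmetric constructions handle the analogous situations on the other side. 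When the chain from $X$ would otherwise ``reach into'' $[B_l, B_r] \cup \{L(B_l), R(B_r)\}$, the stated co-adjacencies at $L(B_l)$ or $R(B_r)$ contradict the goodness of $X$, which in turn lets me extract a triple $T_1, T_2, T_3$ satisfying the hypotheses of one of the cases \textbf{(a)}--\textbf{(d)} of Lemma~\ref{lemma:subprocedures}; invoking that lemma produces a minimally forbidden within the remaining budget.

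The main obstacle is the tight $O(d(v))$ time bound: the range $(R(B_r), X)$ (and, potentially, $(F_r(X), L(B_l))$) may be arbitrarily long and contain many blocks outside $N(v)$ that the algorithm cannot afford to traverse. I would sidestep this by refusing to scan over such hidden ranges and relying only on the handful of key blocks reachable by constant-time pointer jumps ($L$, $R$, $F_l$, $F_r$, $U_l$, $U_r$) from $B_l$, $B_r$, and $X$. Every adjacency test needed among these key blocks either reduces to an equality check on those pointers or falls into a range fully contained in $N(v)$ (such as $[B_l, B_r]$ or $(X, F_r(X)]$, both of total size $O(d(v))$), so Observation~\ref{obs:basic queries} applies within the budget. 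A secondary subtlety is the endpoint case $L(B_l) = \bot$ or $R(B_r) = \bot$: the hypothesis then supplies co-adjacency at $B_l$ or $B_r$ directly, and the same case structure goes through with $B_l$ or $B_r$ playing the role of its (missing) near neighbor.
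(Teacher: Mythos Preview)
Your plan has the right high-level shape, but there is a concrete gap at the very first step. You write ``WLOG, by possibly passing to $\Gamma^{-1}$, I assume \ldots\ that moving right from $R(B_r)$ we reach $X$ before $L(B_l)$.'' But the hypothesis only gives you a semiblock pointer $\DS{X}$ of \emph{unknown type}: you do not know whether $L^{\DS X}, R^{\DS X}, F_l^{\DS X}, F_r^{\DS X}$ agree with the $\Gamma$-orientation in which $B_l$ and $B_r$ are given, or with $\Gamma^{-1}$. So you cannot simply ``pass to $\Gamma^{-1}$'' to normalize---you first have to \emph{discover} the relative orientation, and you cannot do that by scanning $(R(B_r),X)$ because, as you yourself note, that range can be long and outside $N(v)$. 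The paper's key trick is to test $X\to L(B_l)$ (and symmetrically $R(B_r)\to X$) without knowing the type of $\DS{X}$: since $B_l$ is good and $v$ is co-adjacent to $B_r$ or $R(B_r)$, one has $F_r(B_l)\in[B_l,B_r]$ whenever $L(B_l)\ne\bot$, so $X\to L(B_l)$ holds iff $F_r(X)$ lands in the traversable range $[L(B_l),B_r]$; this is checked by computing $F_r$ from both $\DS X$ and its reverse and scanning $[L(B_l),B_r]$ once, all in $O(d(v))$.

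Your second case is also not quite right. You say that when the chain $(X,F_r(X)]$ ``reaches into'' $\{L(B_l),R(B_r)\}$, the co-adjacency there ``contradicts the goodness of $X$, which in turn lets me extract a triple $T_1,T_2,T_3$'' for Lemma~\ref{lemma:subprocedures}. But if it truly contradicted goodness, the case would simply not occur (all blocks are good by hypothesis), and there would be nothing to extract; Lemma~\ref{lemma:subprocedures} is not invoked anywhere in the paper's proof of this lemma. What the paper does instead is branch on the three outcomes of the orientation tests: if neither $X\to L(B_l)$ nor $R(B_r)\to X$, then $\{v,+B_l,+B_r,+X\}$ or $\{v,+B_l,+B_r,+X,-L(B_l),-R(B_r)\}$ already gives a $K_{1,3}$ or an $\overline{S_3}$; otherwise, say $R(B_r)\to X$, one now \emph{knows} the type of $\DS X$ and works with $X_r=F_r(R(B_r))$ and $W_l\in\{L(B_l),B_l\}$, producing a $C_4^*$ or a small adequate forbidden by a short sequence of $O(1)$-time equality/adjacency checks. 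Your reliance on Lemma~\ref{lemma:subprocedures} would require verifying its rather specific preconditions (e.g.\ $v$ adjacent to \emph{every} block in $(T_1,T_3)$), which your setup does not obviously supply.
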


\begin{proof}
  The first step of the algorithm is to decide if $X \to L(B_l)$.  Even though we are unaware of the type of $\DS{X}$, we can answer this query in $O(d(v))$ time by observing that, since $B_l$ is good and $v$ is co-adjacent to $B_r$ or $R(B_r)$, then either $L(B_l) = \bot$ or $F_r(B_l) \in [B_l, B_r]$.  In the former case $X \nto L(B_l)$.  In the latter case $X \to L(B_l)$ if and only if $F_r(X) \in [L(B_l), B_r]$, which happens if and only if $\DS{X} \in [L(B_l), B_r]$ or $\DS{X}^{-1} \in [L(B_l), B_r]$.  (Here $\DS{X}^{-1}$ is a reverse of $\DS{X}$.)  In a similar way, we can test if $R(B_r) \to X$ in $O(d(v))$ time.  If $X \nto L(B_l)$ and $R(B_r) \nto X$, then $R(B_r) \nto L(B_l)$ and:
  \begin{itemize}
    \item \{$v$, $+B_l$, $+B_r$, $+X$\} induces a $K_{1,3}$ if $B_l \nto B_r$, and
    \item \{$v$, $+B_l$, $+B_r$, $+X$, $-L(B_l)$, $-R(B_r)$\} induces an $\overline{S_3}$ if $B_l \to B_r$ (\ref{app:range:case 1}).  We remark that $B_l$ is not the left end block in this case, thus $L(B_l) \neq \bot$.  Otherwise, $B_r$ is not the right end block and, since $B_r$ is good and $v$ is co-adjacent to $B_l$, it follows that $v$ is fully adjacent to $(B_r, F_r(B_r)]$.  This is a contradiction because $v$ is co-adjacent to $R(B_r)$.  In a similar way $R(B_r) \neq \bot$.
  \end{itemize}
  From now on suppose $R(B_r) \to X$, as the proof when $L(B_l) \to X$ is analogous.  Hence, $X$ and $B_r$ lie in the same contig.  Moreover, by applying {\tt reversed} if required, we may assume that $\DS{X}$ is a $\Gamma$-pointer, thus we can invoke Observation~\ref{obs:basic queries} whenever it is required.
  
  Let $X_r = F_r(R(B_r))$, $W_l = B_l$ if $L(B_l) = \bot$, and $W_l = L(B_l)$ otherwise.  Since $B_r$ is good and $B_r \to R(B_r)$, we observe that $W_l \nto B_r$.  Similarly, since $R(B_r) \to X$ and $X$ is good, we obtain that $v$ is adjacent to $X_r$ and, since $X_r$ is good, it follows that $X_r \nto W_l$.   By checking if $F_l(X_r) \neq R(B_r)$, we can decide if $B_r \to X_r$; if negative, then $\{v, +B_r, -R(B_r), +X_r, -W_l\}$ induces a $C_{4}^*$.  Suppose, then, that $B_r \to X_r$ (hence $B_r\to X$), thus $F_l(B_r) \nto R(B_r)$ because $\Gamma$ has no indistinguishable blocks.  Next we query if $F_r(X) = X_r$.  If affirmative, then $U_l(X) \to R(B_r)$ because $\Gamma$ has no indistinguishable semiblocks.  So, $U_l(X) \in (F_l(B_l), B_r)$ is adjacent to $v$, which implies that $W_l \nto U_l(X)$.  Consequently, \{$v$, $+U_l(X)$, $-R(B_r)$, $+X$, $-W_l$\} induces a $C_4^*$.  Finally, if $F_r(X) \neq X_r$, then \{$v$, $-W_l$, $+F_l(B_r)$, $+B_r$, $-R(B_r)$, $+X$, $+F_r(X)$\} is a forbidden (\ref{app:range:case 2}) which is adequate by Observation~\ref{obs:basic queries}.
\end{proof}

We are now ready to find a minimally forbidden when $N(v)$ is not consecutive in $\Gamma$.  Before doing so, it is convenient to state precisely what we mean by consecutive.  We remark that the definition holds for any round representation and not only for co-contigs.  We say that $N(v)$ is \emph{consecutive} in a round representation $\Phi$ when there exist two (possibly equal) semiblocks $B_a$ and $B_b$ such that $N(v) \subseteq \bigcup [B_a, B_b] \cup +B_a \cup +B_b$.  In such a case, $\Bip{B_a}{B_b}$ \emph{witnesses} that $N(v)$ is consecutive in $\Phi$.  Clearly, if $\Bip{B_l}{B_r}$ satisfies~\receptionref{1}, then $N(v) = \bigcup [B_l, B_r]$ is consecutive in $\Phi$.  However, consecutiveness is a slight generalization of condition~\receptionref{1} that allows $v$ to be co-adjacent to $B_a$ and $B_b$.  The next result applies Lemma~\ref{lemma:range} to find a minimally forbidden when $N(v)$ is not consecutive in any co-contig representing $G(\Gamma)$.

\begin{lemma}\label{cor:consecutive}
  Let $H$ be a graph with a non-isolated vertex $v \in V(H)$, and $\Gamma$ be block co-contig representing $H \setminus \{v\}$ with no bad blocks.  Given $N(v)$, it takes $O(d(v) + \epTest)$ time to transform $\Gamma$ into a block co-contig $\Gamma'$ representing $G(\Gamma)$ in which $N(v)$ is consecutive.  The algorithm outputs $\Gamma'$-pointers to the blocks $\Bip{B_a}{B_b}$ witnessing that $N(v)$ is consecutive in $\Gamma'$, or a minimally forbidden of $H$ when such a representation $\Gamma'$ does not exist.
\end{lemma}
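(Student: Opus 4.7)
The plan is to scan the blocks adjacent to $v$ once, identify the maximal runs of consecutive adjacent blocks in the ordering of $\B(\Gamma)$, and either output consecutiveness witnesses (when exactly one run exists) or invoke Lemma~\ref{lemma:range} to produce a minimally forbidden (otherwise).

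First I would iterate through $N(v)$ and preprocess $\Gamma$ as in Observation~\ref{obs:preprocessing}, marking in $O(d(v))$ time each block of $\Gamma$ adjacent to $v$. Because $\Gamma$ has no bad blocks, every marked block $B$ is either perfect or an end block to which $v$ is partially adjacent only on the end side; in both situations the transitions between adjacency and non-adjacency along $\B(\Gamma)$ are detectable by examining $L(B)$ and $R(B)$ for the $O(d(v))$ marked blocks, together with $O(1)$ queries to the connectivity structure at cost $O(\epTest)$ to match the end blocks of the same contig. From this I would build the list of maximal runs of adjacent blocks of $\B(\Gamma)$ together with pointers to their leftmost and rightmost blocks.

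If there is a single run $[B_a, B_b]$, then $\Bip{B_a}{B_b}$ certifies consecutiveness directly; if $\Gamma$ needs to be reoriented so that $B_a$ precedes $B_b$, I would invoke {\tt reverse} on the appropriate contig to obtain $\Gamma'$. If two or more runs exist, I would select $[B_l, B_r]$ to be one of the runs with $B_l \ne B_r$ and take $X$ from a different run. By construction $v$ is adjacent to every block of $[B_l, B_r] \cup \{X\}$ and $X \notin [B_l, B_r] \cup \{L(B_l), R(B_r)\}$, while the no-bad-block assumption together with the existence of a non-adjacent block between consecutive runs forces $L(B_l)$ and $R(B_r)$ to be either $\bot$ or co-adjacent to $v$. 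Lemma~\ref{lemma:range} then returns a minimally forbidden in $O(d(v))$ additional time, keeping the total within $O(d(v) + \epTest)$.

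The main obstacle is the treatment of edge cases. First, ensuring $B_l \ne B_r$ when one or more runs are singletons may require selecting $B_l$ and $B_r$ from different runs or combining a singleton with a neighbor adjacent to $v$; for the degenerate case where every run is a single block, a direct search for a small forbidden (a $K_{1,3}$ or $C_4^*$ formed by $v$, two singleton-run blocks, and a non-adjacent block between them) must be carried out by $O(1)$ additional adjacency queries of the kind provided by Observation~\ref{obs:basic queries}. Second, when $\Gamma$ is a straight representation with one or two linear contigs, marked end blocks may be only partially adjacent to $v$ and a run could ``wrap'' from the right end of one contig to the left end of another; resolving whether two apparent runs should be merged requires the connectivity structure, but since $\Gamma$ has at most two contigs, only $O(1)$ such queries are ever needed, preserving the stated $O(d(v) + \epTest)$ budget.
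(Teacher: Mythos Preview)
Your overall strategy matches the paper's: compute the maximal runs $[B_i,W_i]$ of blocks adjacent to $v$, output the unique run when $k=1$, and otherwise appeal to Lemma~\ref{lemma:range} or exhibit a small forbidden. The concrete gap is in checking the hypotheses of Lemma~\ref{lemma:range}. You assert that maximality of the chosen run forces ``$L(B_l)$ and $R(B_r)$ to be either $\bot$ or co-adjacent to $v$'' and treat this as sufficient. But Lemma~\ref{lemma:range} asks for more: when $L(B_l)=\bot$ it requires $v$ to be \emph{co-adjacent to $B_l$ itself} (and symmetrically for $B_r$). If $B_l$ is a left end block to which $v$ is \emph{fully} adjacent, that hypothesis fails and the lemma is inapplicable to this run. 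Your edge-case paragraph does not cover this situation, and the proposed fallback (``a $K_{1,3}$ or $C_4^*$ using a non-adjacent block between them'') cannot work here either, since such a $B_l$ has no left neighbour in its contig.

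The paper fixes precisely this point: if \emph{no} run $[B_i,W_i]$ satisfies the full hypotheses of Lemma~\ref{lemma:range}, then every run contains an end block $E_i\in\{B_i,W_i\}$, fully adjacent to $v$ whenever $B_i\neq W_i$. From this, two outcomes follow. If $k\ge 3$, three such end blocks are pairwise non-adjacent and $\{v,+E_1,+E_2,+E_3\}$ already induces a $K_{1,3}$; your proposal never produces this certificate. If $k=2$, the two runs abut through their end blocks and therefore \emph{merge} into a single range $[B_1,W_2]$ after possibly reversing one contig, so the correct output is a consecutiveness witness, not a forbidden. Your ``wrap'' paragraph gestures at this merge but does not separate it from the $k\ge3$ case, and your singleton discussion conflates ``every run is a singleton'' with ``no run is usable for Lemma~\ref{lemma:range}'', which are different conditions.
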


\begin{proof}
  As discussed in~\cite{HellShamirSharanSJC2001,SoulignacA2015}, $O(d(v))$ time suffices to find a set $\{\DS{B}_i, \DS{W}_i \mid 1 \leq i \leq k\}$ of semiblock pointers such that:
  \begin{enumerate}[(a)]
    \item $\DS{B}_i$ and $\DS{W}_i$ are associated to the same contig $\gamma_i$,
    \item $v$ is fully adjacent to every block in $(B_i, W_i)$, \label{cor:consecutive:fully}
    \item $[B_1, W_1], \ldots, [B_k, W_k]$ is a partition of the blocks adjacent to $v$, and \label{cor:consecutive:range}
    \item either $L_i(B_i) = \bot$ (resp.\ $R_i(W_i) = \bot$) or $v$ is co-adjacent to $L_i(B_i)$ (resp.\ $R_i(W_i)$).\label{cor:consecutive:maximal}
  \end{enumerate}
  We remark that the type of $\DS{B}_i$ is unknown, as we are unaware of whether $\gamma_i \in \Gamma$ or $\gamma_i \in \Gamma^{-1}$.  In (\ref{cor:consecutive:fully})~and~(\ref{cor:consecutive:range}) above, $[B_1, W_1]$ refers to the range in $\gamma_i$, that could be a range of $\Gamma^{-1}$.  For the sake of notation, we write $L_i$ and $R_i$ as shortcuts for $L^{\gamma_i}$ and $R^{\gamma_i}$, as in~(\ref{cor:consecutive:maximal}) above.  
  
  When $k=1$, $\Bip{B_1}{W_1}$ witnesses that $N(v)$ is consecutive in $\Gamma'$, where $\Gamma'$ is the type of $\DS{B}_1$.  Suppose, then, that $k \geq 2$.  If either none of $\{B_i, W_i\}$ ($1\leq i\leq k$) are end blocks or $B_i \neq W_i$ and $v$ is co-adjacent to the end blocks in $\{B_i, W_i\}$, then we obtain a minimally forbidden by invoking Lemma~\ref{lemma:range} with input $B_l = B_i$, $B_r = W_i$, and $X = B_{j}$ for $j \neq i$.  If $k \geq 3$, then \{$v$, $+E_1$, $+E_2$, $+E_3$\} induces a $K_{1,3}$, where $E_i$ is the end block of $\{B_i, W_i\}$.  Therefore, $k = 2$, $E_i \in \{B_i, W_i\}$ ($i \in \{1,2\}$) is an end block, and $v$ is fully adjacent to $E_i$ if $B_i \neq W_i$.  Exchanging the roles of $\gamma_i$ and $\gamma_i^{-1}$ if required, we may assume that $E_1 = W_1$ is a right end block and $E_2 = B_2$ is a left end block.  If $B_2$ has the same type $\Gamma'$ as $W_1$, then $\Bip{B_1}{W_2}$ witnesses that $N(v)$ is consecutive in $\Gamma'$; otherwise, $\Bip{B_1}{W_2}$ witnesses that $N(v)$ is consecutive in $(\Gamma' \setminus \{\gamma_2\}) \cup \{\gamma_2^{-1}\}$.  As discussed in Section~\ref{sec:data structure:round representations}, $O(1)$ time is enough to query the types of $W_1$ and $B_2$, while replacing $\gamma_2$ with $\gamma_2^{-1}$ costs $O(\epTest)$ time.
\end{proof}

By definition, $N(v) \neq \emptyset$ is consecutive in a round block representation $\Gamma$ when it has two (possibly equal) blocks $B_a$ and $B_b$ such that $N(v) = \bigcup(B_a, B_b) \cup +B_a \cup +B_b$.  We can separate $B_a$ and $B_b$ in pairs of consecutive indistinguishable semiblocks $\Bip{-B_a}{+B_a}$ and $\Bip{+B_b}{-B_b}$, respectively, to obtain a new round representation of $G$.  Of course, if either $+B_a$ (resp.\ $-B_b$) or $-B_a$ (resp. $+B_b$) is empty, then nothing is separated out of $B_a$ (resp.\ $B_b$).  Similarly, if $B_a = B_b$, then $+B_a$ is separated to either the left or the right of $-B_a$.  We refer to the round representation so obtained as being \emph{$v$-associated} to $\Gamma$.  Observe that $v$ is simultaneously adjacent and co-adjacent to at most two blocks of $\Gamma$, namely $B_a$ and $B_b$.  Thus, no matter which pair witnesses that $N(v)$ is consecutive in $\Gamma$, only $B_a$ and $B_b$ could be separated.  Moreover, both $B_a$ and $B_b$ get separated only if $v$ is co-adjacent to both $B_a$ and $B_b$, in which case either $B_a = B_b$ or $\Bip{B_a}{B_b}$ is the only pair witnessing that $N(v)$ is consecutive.  We conclude, therefore, that at most two round representations $v$-associated to $\Gamma$ exist.

Recall that, when $v$ is not isolated, the co-connected graph $H$ is PCA if and only if $G = H \setminus \{v\}$ admits a $v$-receptive round representation $\Phi$.  By the Reception Theorem (applied to the component that contains $v$), $\Phi$ has a pair of semiblocks $\Bip{B_l}{B_r}$ that satisfies \receptionref{1}--\receptionref{3}.  Recall that $\Bip{B_l}{B_r}$ satisfies \receptionref{1} if and only if $N(v) = \bigcup[B_l, B_r]$, while $\Bip{B_l}{B_r}$ satisfies \receptionref{2} when no pair of semiblocks in $\Phi \setminus \{B_l, B_r\}$ are indistinguishable.  It is not hard to see that $\Phi$ satisfies \receptionref{1} and \receptionref{2} if and only if $\Phi$ is a round representation $v$-associated to $\Gamma$, for some round block representation $\Gamma$ of $G$.  Indeed, by~\receptionref{2}, $\Phi$ has at most two pair of indistinguishable semiblocks, namely $\{L(B_l), B_l\}$ and $\{B_r, R(B_r)\}$.  By compacting $L(B_l) \cup B_l$ into $B_a$ and $B_r \cup R(B_r)$ into $B_b$, we obtain a round block representation $\Gamma$ that has $\Phi$ as its $v$-associated representation.  We record this fact for future reference.

\begin{observation}\label{obs:v-associated}
  Let $H$ be a co-connected graph with a vertex non-isolated vertex $v$.  A round representation $\Phi$ of $H \setminus \{v\}$ satisfies \receptionref{1}~and~\receptionref{2} if and only if $\Phi$ is $v$-associated to a round block representation of $H \setminus \{v\}$.
\end{observation}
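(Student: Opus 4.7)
The plan is to verify the stated equivalence via the natural back-and-forth between the separation that defines $v$-association and the compaction already sketched in the text right before the statement.

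For the $(\Leftarrow)$ direction, I would suppose $\Phi$ is $v$-associated to a round block representation $\Gamma$, witnessed by blocks $B_a, B_b$ of $\Gamma$ making $N(v)$ consecutive. Let $B_l$ be the semiblock of $\Phi$ coming from the ``$+$''-part of $B_a$ (i.e.\ $+B_a$ if $B_a$ was separated, and $B_a$ itself otherwise), and define $B_r$ symmetrically from $B_b$. Then the identity $N(v) = +B_a \cup \bigcup(B_a, B_b) \cup +B_b$ becomes $N(v) = \bigcup[B_l, B_r]$, which is \receptionref{1}. For \receptionref{2}, observe that $\Gamma$, being a block representation, has no indistinguishable pairs; the only new pairs the $v$-association can introduce are $\{-B_a, +B_a\} = \{L(B_l), B_l\}$ and $\{+B_b, -B_b\} = \{B_r, R(B_r)\}$, both of which involve $B_l$ or $B_r$. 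Hence no indistinguishable pair of $\Phi$ lies entirely in $\B(\Phi) \setminus \{B_l, B_r\}$.

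For the $(\Rightarrow)$ direction, I would suppose $\Phi$ satisfies \receptionref{1} and \receptionref{2} with witnesses $B_l, B_r$. Construct $\Gamma$ from $\Phi$ by first compacting $\{L(B_l), B_l\}$ into a block $B_a$ when $L(B_l) \neq \bot$ and they are indistinguishable (setting $B_a = B_l$ otherwise), and then symmetrically compacting $\{B_r, R(B_r)\}$ into a block $B_b$. By \receptionref{2}, every indistinguishable pair of $\Phi$ touches $\{B_l, B_r\}$, so these two compactions eliminate all of them; hence $\Gamma$ is a round block representation of $G = H \setminus \{v\}$. The resulting decompositions $B_a = (-B_a) \cup (+B_a)$ and $B_b = (+B_b) \cup (-B_b)$, with $+B_a = B_l$ and $+B_b = B_r$, satisfy $+B_a \cup \bigcup(B_a, B_b) \cup +B_b = N(v)$ by \receptionref{1}, so $\Bip{B_a}{B_b}$ witnesses that $N(v)$ is consecutive in $\Gamma$. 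Reversing the two compactions is exactly the separation prescribed by the definition of $v$-association, so $\Phi$ is $v$-associated to $\Gamma$.

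The main care needed is in handling the degenerate configurations: $B_l = B_r$, so that $B_a$ and $B_b$ coincide and the separation sends $+B_a$ entirely to one side of $-B_a$; $L(B_l) = \bot$ or $R(B_r) = \bot$, so that no compaction occurs on that side; and $v$ being fully adjacent to $B_a$ or $B_b$, so that $-B_a$ or $-B_b$ is empty and no separation takes place on the corresponding side. Each case is anticipated by the ``nothing is separated'' clauses in the definition of $v$-association, and \receptionref{2} prevents any indistinguishable pair strictly outside $\{B_l, B_r\}$ from surviving, so the constructed $\Gamma$ indeed lands in the class of round block representations and the correspondence is exact.
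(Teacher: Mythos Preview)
Your proposal follows exactly the compaction argument the paper sketches in the paragraph preceding the Observation; the $(\Leftarrow)$ direction and your handling of the degenerate cases are correct and match the paper.

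There is, however, a step in the $(\Rightarrow)$ direction that neither you nor the paper actually justifies. From \receptionref{2} you correctly infer that every indistinguishable pair of $\Phi$ meets $\{B_l,B_r\}$, and then assert that the two compactions ``eliminate all of them''. This inference is too quick: \receptionref{2} does not forbid $\{B_l,R(B_l)\}$ from being indistinguishable when $R(B_l)\neq B_r$. If that happens, $B_l$ and $R(B_l)$ are twins in $G$ that are both fully adjacent to $v$, so they lie in a single block of every block representation $\Gamma$; the $+/-$ split can never separate them, and hence no $v$-association of any $\Gamma$ recovers $\Phi$ --- yet $\Phi$ still satisfies \receptionref{1} and \receptionref{2}. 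A concrete witness is the circular contig on $B_1,\dots,B_6$ with $F_r$-values $B_3,B_3,B_5,B_5,B_6,B_2$, taking $B_l=B_1$, $B_r=B_4$; one checks that $H$ is co-connected, \receptionref{1} and \receptionref{2} hold, but $\{B_1,B_2\}=\{B_l,R(B_l)\}$ is the sole indistinguishable pair. The paper's sketch makes the stronger bare assertion that the only indistinguishable pairs are $\{L(B_l),B_l\}$ and $\{B_r,R(B_r)\}$, so you are in good company; but as written the $(\Rightarrow)$ direction needs either an argument ruling out such ``interior'' indistinguishable pairs, or a slight weakening of the statement. (This does not affect the paper's later use of the Observation, which only invokes the $(\Leftarrow)$ direction.)
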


By definition, a round representations is just a family of contigs with no order.  Consequently, $G$ admits only two round block representations in which $N(v)$ is consecutive, namely $\Gamma$ and $\Gamma^{-1}$.  Therefore, the only round representations of $G$ that satisfy \receptionref{1} and \receptionref{2} are those $v$-associated with $\Gamma$ and $\Gamma^{-1}$.  We show how to obtain a minimally forbidden when none of the representations $v$-associated to $\Gamma$ satisfies \receptionref{3}.  Before doing so, we find it convenient to recall that $\Bip{B_l}{B_r}$ satisfies \receptionref{3} when there exists $B_m \in [B_l, B_r]$ witnessing that $\Bip{B_l}{B_r}$ is receptive.  Also, recall that $B_m$ witnesses that $\Bip{B_l}{B_r}$ is receptive when (see Figure~\ref{fig:receptive}):
\begin{description}
  \item[\witnessref{1}]  $B_m$ is an end semiblock, $B_l \toeq B_m$, and $B_m \toeq B_r$, or
  \item[\witnessref{2}]  $B_l \toeq B_m$, $B_m \nto R(B_r)$, $L(B_l) \nto R(B_m)$, and $\RR(B_m) \toeq B_r$. 
\end{description}

\begin{lemma}\label{lemma:reception fails}
 Let $H$ be a graph with a non-universal vertex $v \in V(H)$, $\Gamma$ be a block co-contig representing $H \setminus \{v\}$ with no bad blocks, and $\Bip{B_a}{B_b}$ be a pair witnessing that $N(v) \neq \emptyset$ is consecutive in $\Gamma$. Given semiblock $\Gamma$-pointers to $B_a$ and $B_b$, it takes $O(d(v))$ time to transform $\Gamma$ into a co-contig $\Phi$ representing $H \setminus \{v\}$ that satisfies \receptionref{1}--\receptionref{3}.  The algorithm outputs the semiblock $\Phi$-pointers to a pair $\Bip{B_l}{B_r}$ that satisfies \receptionref{1}--\receptionref{3}, or a minimally forbidden of $H$ when such a representation $\Phi$ does not exist.
\end{lemma}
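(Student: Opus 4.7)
The plan is to construct the at most two round representations $\Phi$ that are $v$-associated to $\Gamma$, test each against the Receptive Pair Lemma, and extract a minimally forbidden of $H$ if both checks fail. To build $\Phi$, apply {\tt separate} to split $B_a$ into $\Bip{-B_a}{+B_a}$ whenever $v$ is co-adjacent to $B_a$, and analogously split $B_b$ into $\Bip{+B_b}{-B_b}$; set $B_l$ and $B_r$ to be the resulting blocks fully adjacent to $v$ closest to the positions of $B_a$ and $B_b$, respectively. By Observation~\ref{obs:v-associated}, the pair $\Bip{B_l}{B_r}$ then satisfies \receptionref{1} and \receptionref{2}. Two candidate $\Phi$'s arise only when $B_a = B_b$ and $v$ is co-adjacent to both sides of this block, in which case $+B_a$ may sit to the left or to the right of $-B_a$; otherwise $\Phi$ is unique. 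All separations together cost $O(d(v))$ time.

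For each candidate $\Phi$, traverse $[B_l, B_r]$, whose size is $O(d(v))$ since its interior is fully adjacent to $v$, in search of a witness $B_m$ for \receptionref{3}. Condition \witnessref{1} applies only to the $O(1)$ end semiblocks of the contig hosting $[B_l, B_r]$, each of which can be located in $O(1)$ time through the representation pointers and tested against $B_l \toeq B_m \toeq B_r$. Condition \witnessref{2} is tested at each interior $B_m$ by evaluating the four predicates $B_l \toeq B_m$, $B_m \nto R(B_r)$, $L(B_l) \nto R(B_m)$ and $\RR(B_m) \toeq B_r$, each of which reduces to an $O(1)$ comparison among $F_r(B_m)$, $R(B_m)$, $F_r(L(B_l))$, $B_r$, and $R(B_r)$. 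If some $B_m$ succeeds in some $\Phi$, output the pair $\Bip{B_l}{B_r}$ in that $\Phi$.

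Otherwise we construct a minimally forbidden of $H$ from the structural obstruction. The failure of \witnessref{1} at every end semiblock $B_e \in [B_l, B_r]$ means either $B_l \nto B_e$ or $B_e \nto B_r$, producing a block inside $[B_l, B_r]$ that $B_l$ or $B_r$ cannot reach. The failure of \witnessref{2} at every interior $B_m$ partitions the range into those $B_m$ whose right far-neighbour overshoots $B_r$ (equivalently $B_m \to R(B_r)$) and those whose $\RR(B_m)$ stops strictly before $B_r$. Picking boundary semiblocks on either side of this partition, together with the blocks outside $[B_l, B_r]$ co-adjacent to $v$ (and $v$ itself), yields an induced configuration of $H$ isomorphic to $S_3^*$, a short $C_n^*$ whose length is bounded by the overshoot, or one of the $\overline{H_i}$ obstructions when the blockage is co-bipartite in nature. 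Such a family is adequate: its interior adjacencies are certified by the no-bad-blocks hypothesis on $\Gamma$, while its few outer adjacencies reduce to the $O(d(v))$-time queries of Observation~\ref{obs:basic queries}, many of which can be dispatched through Lemmas~\ref{lemma:subprocedures} and~\ref{lemma:range}.

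The main obstacle is the case analysis in the last paragraph: the failure of \witnessref{2} can arise from several distinct placements of the overshoot block relative to $B_l$ and $B_r$, and these interact with whether the host contig is linear or circular and with which of the two $v$-associated candidates has failed. Ensuring that the extracted family is not only forbidden but actually minimal after pruning, and that its construction only accesses blocks within or immediately adjacent to $[B_l, B_r]$ so as to respect the $O(d(v))$ budget, is where most of the bookkeeping will live.
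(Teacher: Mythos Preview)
Your plan has a genuine gap: you assume that if the natural pair $\Bip{+B_a}{+B_b}$ fails \receptionref{3} in every $v$-associated $\Phi$ you build, then $H$ is not PCA and a forbidden must exist. The paper's proof shows this is false. In its Case~2 (when $\#(B_r,B_l)=1$, $B_r$ is indistinguishable from $R(B_r)$, and $\UU_r(B_r)\to F_r(B_m)=\LL(B_r)$), the pair $\Bip{B_l}{B_r}$ is \emph{not} receptive in $\Phi$, yet $H$ \emph{is} PCA: swapping the indistinguishable semiblocks $B_r$ and $R(B_r)$ yields an ``equal'' representation $\Psi$ in which the rotated pair $\Bip{B_r}{\LL(B_r)}$ satisfies \receptionref{1}--\receptionref{3}. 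This $\Psi$ is not among the one-or-two candidates you enumerate from the input pair $\Bip{B_a}{B_b}$, so your algorithm would reach the ``extract a forbidden'' branch in a situation where none exists.

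The paper avoids this trap by first proving three structural claims ($B_l\nto B_r$, $F_r(B_l)\nto B_r$, $\UU_r(B_l)\nto B_r$) that force $(B_r,B_l)$ to be tiny, then splitting on $\#(B_r,B_l)\in\{1,2\}$ and, in the one-block case, explicitly testing the condition $\UU_r(B_r)\to\LL(B_r)$ to decide between outputting the swapped $\Psi$ and outputting a forbidden. Your overshoot-partition sketch does not capture this dichotomy; you need either to detect the swap situation before committing to the forbidden branch, or to enumerate receptive pairs more carefully than just $\Bip{+B_a}{+B_b}$. The rest of your forbidden-extraction outline is also considerably looser than the paper's Claim~4 analysis, but you already flag that as the hard part; the swap case is the piece you are actually missing.
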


\begin{proof}
  If $B_a = B_b$, then $B_a$ is an end block because it is good.  Thus, one of the representations $\Phi$ $v$-associated to $\Gamma$ has $+B_a$ as an end semiblock and, so, $+B_a$ satisfies \witnessref{1}.  Then, by Observation~\ref{obs:v-associated}, $\Bip{+B_a}{+B_a}$ satisfies \receptionref{1}--\receptionref{3} in $\Phi$.  Suppose, from now on, that $B_a \neq B_b$.  Let $\Phi$ be the co-contig $v$-associated to $\Gamma$ such that $N(v) = \bigcup[B_l, B_r]$ for $B_l = +B_a$ and $B_r = +B_b$.  By Observation~\ref{obs:v-associated}, $\Phi$ satisfies both \receptionref{1}~and~\receptionref{2}.  With the operations discussed in Section~\ref{sec:data structure:round representations}, we can transform $\Gamma$ into $\Phi$ and test if $\Phi$ satisfies \receptionref{3} in $O(d(v))$ time.  If affirmative, then we return $\Bip{B_l}{B_r}$.  Suppose, then, that $\Phi$ does not satisfy~\receptionref{3}.  
  
  \begin{description}
    \newcounter{lemma:reception fails:claims}
    \refstepcounter{lemma:reception fails:claims}
    \item [Claim~\arabic{lemma:reception fails:claims}:] \label{lemma:reception fails:minimal range} $B_l \nto B_r$.  If $B_l \to B_r$, then neither $B_l$ nor $B_r$ is an end semiblock because $B_m \in \{B_l, B_r\}$ does not satisfy \witnessref{1}.  So, $L(B_l) \to B_l$ and $B_r \to R(B_r)$ and, since $B_l$ and $B_r$ are good, it follows that $B_l = F_l(B_r)$ and $B_r = F_r(B_l)$.  Consequently, $B_m = U_l(R(B_r)) \in [B_l, B_r)$, thus $B_l \toeq B_m$ and $R(B_m) \to B_r$.  Moreover, since $B_m \nto R(B_r)$ and $B_m$ does not satisfy \witnessref{2}, it follows that $L(B_l) \to R(B_m) = F_l(R(B_r))$, a contradiction because $R(B_m)$ is good.

    \refstepcounter{lemma:reception fails:claims}
    \item [Claim~\arabic{lemma:reception fails:claims}:] \label{lemma:reception fails:short range} $F_r(B_l) \nto B_r$.  Suppose $F_r(B_l) \to B_r$, thus $B_l \neq F_r(B_l)$ and $F_r(B_l) \neq B_r$.  If $F_r^2(B_l) = B_r$, then $F_r(B_m)$ witnesses that $\Bip{B_l}{B_r}$ is receptive by \witnessref{2}.  Otherwise, $F_r(B_l) \to R(B_r)$, thus $B_m = U_l(R(B_r)) \in [B_l, B_r)$.  Consequently, $B_l \toeq B_m$ and, since $B_m \nto R(B_r)$ and $R(B_m) \toeq B_r$ by definition, we obtain that $L(B_l) \to R(B_m)$.  This is a contradiction, because $R(B_m)$ is good and $R(B_m) \to R(B_r)$.
    
    \refstepcounter{lemma:reception fails:claims}
    \item [Claim~\arabic{lemma:reception fails:claims}:] \label{lemma:reception fails:mid range} $\UU_r(B_l) \nto B_r$.  Otherwise, $B_m = F_r(B_l)$ satisfies \witnessref{2}.  Just note that, by Claim~\ref{lemma:reception fails:short range}, $B_m \nto B_r$.
  \end{description}

  Let $B_m = \UU_r(B_l)$ and $B_r' = \UU_r(B_m)$.  By Claims~\ref{lemma:reception fails:minimal range}--\ref{lemma:reception fails:mid range}, we observe that $B_l$, $F_r(B_l)$, $B_m$, $F_r(B_m)$, $B_r'$, and $B_r$ appear in this order in a traversal of $[B_l, B_r]$ where, possibly, $B_l = F_r(B_l)$, $B_m = F_r(B_m)$, and $B_r' = B_r$.  In $O(d(v))$ time we can check whether $B_r' \to B_l$; if negative, then $\{v, B_l, B_m, B_r'\}$ induces a $K_{1,3}$.  Thus, we assume that $B_r' \to B_l$.  Hence, $B_r'$ is not an end block and, by Claim~\ref{lemma:reception fails:mid range}, $B_r', R(B_r), B_l$ are pairwise different and appear in this order in a traversal of the contig that contains $B_l$ and $B_r'$.
  
  \begin{description}
    \refstepcounter{lemma:reception fails:claims}
    \item [Claim~\arabic{lemma:reception fails:claims}:] \label{lemma:reception fails:opposite block} if $(B_r, B_l)$ has some semiblock $W$ that is indistinguishable to neither $B_l$ nor $B_r'$, then a minimally forbidden can be obtained in $O(d(v))$ time.  By~\receptionref{2}, there are $O(1)$ blocks that are indistinguishable to either $B_l$ or $B_r'$, thus $W$ can be obtained in $O(1)$ time.  Clearly, $v$ is not adjacent to $W$ by \receptionref{1}.  First we test if $B_r' \to F_r(B_l)$ by looking whether $F_r(B_r') = F_r(B_l)$.  If affirmative, then, since no pair of $B_r'$, $W$, and $B_l$ are indistinguishable, it follows that $W_a = U_l(W) \to B_r'$ and $W_b = U_l(B_l) \to W$.  Note that, consequently, $B_m$, $W_a$, $W_b$, and $B_r'$ are pairwise different.  Hence, \{$v$, $+B_l$, $+B_m$, $+W_a$, $+W_b$, $+B_r'$, $-W$\} induces an $\overline{H_4}$ (\ref{app:reception fails:opposite block:case 1}).  Suppose, then, that $B_r' \nto F_r(B_l)$, thus $B_l \neq F_r(B_l)$.  Then we check whether $W \to F_r(B_l)$, i.e., whether $F_r(W) = F_r(B_l)$.  If affirmative, then, as before, $W_b = U_l(B_l) \to W$ is different to $B_m$ and $B_r'$.  Then, \{$v$, $+B_l$, $+F_r(B_l)$, $+B_m$, $+W_b$, $+B_r'$, $-W$\} is a forbidden (\ref{app:reception fails:opposite block:case 2}).  When $W \nto F_r(B_l)$, we test if $W_a = F_l(B_r') \to W$ by looking if $F_l(W) = W_a$.  If false, then $B_r' \neq W_a$ and \{$v$, $+B_l$, $+F_r(B_l)$, $+B_m$, $+W_a$, $+B_r'$, $-W$\} is a forbidden (\ref{app:reception fails:opposite block:case 3}).  Otherwise, since $W$ and $B_r'$ are not indistinguishable, we obtain that $B_r' \nto F_r(W)$.  Consequently, $F_r(W) \in (B_l, F_r(B_l))$ and \{$v$, $+B_l$, $+F_r(W)$, $+F_r(B_l)$, $+B_m$, $+B_r'$, $-W$\} is a forbidden (\ref{app:reception fails:opposite block:case 4}).  Note that every computed forbidden is adequate by Observation~\ref{obs:basic queries}.
  \end{description}

  By Claim~\ref{lemma:reception fails:opposite block}, we may assume that every semiblock in $(B_r, B_l)$ is indistinguishable to either $B_r'$ or $B_l$.  By~\receptionref{2}, $(B_r, B_l)$ has at most two semiblocks, namely $R(B_r)$ and $L(B_l)$, that are indistinguishable to $B_l$ and $B_r$.  Then, two cases remain.

  \begin{description}
    \item [Case 1:] $\#(B_r, B_l) = 2$.  In this case, $\{B_l, L(B_l)\}$ and $\{B_r, R(B_r)\}$ are both pairs of indistinguishable semiblocks and, since $B_l$ y $B_r$ are not indistinguishable, then either $B_r \nto F_r(B_l)$ or $F_l(B_r) \nto B_l$.  Suppose the former, as the other case is analogous.  Then $B_l \neq F_r(B_l)$ and \{$v$, $-L(B_l)$, $+B_l$, $+F_r(B_l)$, $+B_m$, $+B_r$, $-R(B_r)$\} is an adequate forbidden (\ref{app:reception fails:case 1}).
    
    \item [Case 2:] $\#(B_r, B_l) = 1$.  Applying Claims \ref{lemma:reception fails:minimal range}--\ref{lemma:reception fails:opposite block} to $\Phi^{-1}$, we observe that $R(B_r)$ is indistinguishable to either $B_r$ or $B_l' = \UU_l^2(B_r)$.  As $(B_r, B_l) \subseteq (B_r', B_l')$, we observe that either $B_r = B_r'$ is indistinguishable to $R(B_r)$ or $B_l = B_l'$ is indistinguishable to $R(B_r)$.  Assume the former, as the other case is analogous.  Now, if $\UU_r(B_r) \nto F_r(B_m)$, then $\UU_r(B_r) \neq B_m$ and \{$v$, $+B_l$, $+\UU_r(B_r)$, $+B_m$, $+F_r(B_m)$, $+B_r$, $-R(B_r)$\} is an adequate forbidden (possibly $B_m = F_r(B_m)$; \ref{app:reception fails:case 2}).  Suppose, then, that $\UU_r(B_r) \to F_r(B_m) = \LL(B_r)$, and let $\Psi$ be the co-contig that is obtained from $\Phi$ by exchanging the order between $B_r$ and $R(B_r)$. (For the rest of the proof, whenever we write $f$ without a superscript we mean $f^\Phi$.)  Clearly, $\Psi$ represents $H \setminus \{v\}$, $N(v) = [B_r, F_r(B_m)]$ in $\Psi$, while $B_r$ and $R(B_r)$ are the only possible pair of indistinguishable semiblocks.  That is, $\Bip{B_r}{F_r(B_m)}$ satisfies \receptionref{1}--\receptionref{2} for $\Psi$.  Moreover, since $\UU_r(B_r) \to F_r(B_m)$, we obtain, by Claim~\ref{lemma:reception fails:mid range} applied to $\Psi$, that $\Bip{B_r}{F_r(B_m)}$ satisfies \receptionref{3}.
  \end{description}
\end{proof}

The main theorem of this section follows by combining Lemmas \ref{lemma:bad block}, \ref{cor:consecutive},~and~\ref{lemma:reception fails} while filling the missing cases.

\begin{theorem}\label{thm:both prime}
  Let $H$ be a graph with a non-universal vertex $v \in V(H)$, and $\Gamma$ be block co-contig representing $H \setminus \{v\}$.  Given $\Gamma$ and $N(v)$, it costs $O(d(v) + \epTest)$ time to determine if $H$ is PCA.  Furthermore, within the same amount of time, $\Gamma$ can be transformed into a block co-contig representing $H$, unless a minimally forbidden of $H$ is obtained.
\end{theorem}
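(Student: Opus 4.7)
The plan is to reduce the theorem to a sequential application of the three previously established lemmas, capped off by a call to the {\tt reception} operation from Section~\ref{sec:data structure:round representations}. Throughout, the assumption that $\Gamma$ is a block co-contig means both $G$ and (upon success) $H$ are co-connected. The case $N(v) = \emptyset$ is handled separately using the linearity of $\Gamma$: if $\Gamma$ is linear, we simply add $\{v\}$ as a fresh contig; otherwise $\Gamma$ must contain a long block, and that long block's induced cycle together with $v$ exhibits an induced $C_n^{*}$ of $H$ that is minimally forbidden. Assume henceforth $N(v) \neq \emptyset$.

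The three lemmas are chained as follows. First, invoke Lemma~\ref{lemma:bad block} on $\Gamma$ with input $N(v)$: in $O(d(v))$ time this either yields a minimally forbidden of $H$ or certifies that $\Gamma$ has no bad blocks. Second, with no bad blocks in $\Gamma$, apply Lemma~\ref{cor:consecutive} to transform $\Gamma$ in $O(d(v)+\epTest)$ time into a block co-contig $\Gamma'$ (possibly by reversing one of its contigs) in which $N(v)$ is consecutive, together with $\Gamma'$-pointers to witnesses $\Bip{B_a}{B_b}$, or else output a minimally forbidden of $H$. Third, pass $\Gamma'$ and $\Bip{B_a}{B_b}$ into Lemma~\ref{lemma:reception fails}, which in $O(d(v))$ time either produces a minimally forbidden or transforms $\Gamma'$ into a co-contig $\Phi$ admitting a pair $\Bip{B_l}{B_r}$ that simultaneously satisfies \receptionref{1}--\receptionref{3}.

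Once $\Phi$ and $\Bip{B_l}{B_r}$ are in hand, the Reception Theorem (Corollary~\ref{thm:reception}) guarantees that $H$ is PCA and that its representation is the $\{v\}$-reception of $\Bip{B_l}{B_r}$ in $\Phi$. I would compute this reception via {\tt reception($\DS{B}_l, \DS{B}_r$)} at cost $O(\#[B_l, B_r] + \epAdd) = O(d(v) + \epTest)$, since \receptionref{1} implies $\#[B_l, B_r] \le d(v)$. To restore the block invariant, I would then apply {\tt compact} to the at most two pairs involving $\{v\}$ that could become indistinguishable (namely $\{L(\{v\}), \{v\}\}$ and $\{\{v\}, R(\{v\})\}$); this costs $O(d(v))$ additional time since $|\{v\}| = 1$. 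The result is a block co-contig representing $H$, and the total cost is $O(d(v) + \epTest)$ as required.

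The main obstacle is essentially bookkeeping: the pointer types returned by Lemma~\ref{cor:consecutive} must match the input requirements of Lemma~\ref{lemma:reception fails}, and the potential contig reversal hidden inside Lemma~\ref{cor:consecutive} must be carried through so that the receptive pair $\Bip{B_l}{B_r}$ produced by Lemma~\ref{lemma:reception fails} truly refers to the current representation $\Phi$ of $G$. Nothing deep remains to prove — correctness of the inserted representation follows from the Reception Theorem, and the time bound follows by summing the per-lemma bounds together with the $O(d(v)+\epTest)$ cost of the reception and compaction.
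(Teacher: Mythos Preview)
Your chain of Lemmas~\ref{lemma:bad block}, \ref{cor:consecutive}, and~\ref{lemma:reception fails} matches the paper's main line, but you omit a case that the paper treats explicitly and that can actually arise. After Lemma~\ref{lemma:reception fails} returns $\Phi$ and $\Bip{B_l}{B_r}$, condition \receptionref{3} only guarantees that some $B_m$ \emph{witnesses} receptivity, which by the Receptive Pair Lemma means that $\Bip{B_l}{B_r}$ is receptive in the at-most-two-contig family $\{\phi,\phi_r\}$ --- not necessarily in the full co-contig $\Phi$. Because a co-contig may comprise several linear contigs ($G(\Phi)$ is co-connected, not necessarily connected), the $\{v\}$-reception of $\Bip{B_l}{B_r}$ can produce a \emph{circular} contig while $\Phi$ still contains another linear contig $\rho$; the resulting family is then not a valid round representation at all, and in fact $H$ is not PCA. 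The paper therefore inserts an extra test --- the round-representation version of {\tt receptive} --- before calling {\tt reception}, and in the negative case (its Case~2) it computes the circular $\{v\}$-reception $\psi$ of $\{\phi\}$ alone, picks any $x\in V(\rho)$, and re-enters the isolated-vertex routine on $\{\psi\}$ and $x$ to extract a minimally forbidden of $H$. Without this branch your argument declares $H$ to be PCA in situations where it is not, and your call to {\tt reception($\DS{B}_l,\DS{B}_r$)} violates its precondition.

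A second, smaller issue: in the $N(v)=\emptyset$ case you assert that a non-straight $\Gamma$ ``must contain a long block.'' That is false --- a circular contig with $|\MP(\gamma)|\geq 4$ (for instance a representation of $C_4$) has no long block. The paper instead branches on $|\MP(\gamma)|$: when $|\MP(\gamma)|\geq 4$ the semiblock path together with $v$ already yields an induced $C_k^*$; when $|\MP(\gamma)|=3$ it invokes Lemma~\ref{lemma:subprocedures}~(\ref{lemma:subprocedures:two triangle}) on the three path blocks (here $v$ is co-adjacent to each of them and $B_3\to B_1$).
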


\begin{proof}
  First suppose $v$ is isolated in $H$.  In this case there are three possibilities.  First, if $\Gamma$ is straight, then $H$ is PCA and $\Gamma \cup \{\psi\}$ is a block co-contig representing $H$ for any contig $\psi$ whose only vertex is $v$.  Second, if $\Gamma = \{\gamma\}$ is not straight and $|\MP(\gamma)| \geq 4$, then $\MP(\gamma) \cup \{v\}$ induces a cycle plus an isolated vertex.  Thus, $\Bip{\MP(\gamma)}{\emptyset}$ is a minimally forbidden of $H$.  Finally, if $\Gamma = \{\gamma\}$ is not straight and $\MP(\gamma) = B_1, B_2, B_3$, then we can obtain a minimally forbidden by invoking Lemma~\ref{lemma:subprocedures} (\ref{lemma:subprocedures:two triangle}) with input $B_1, B_2, B_3$.
    
  Now suppose $N(v) \neq \emptyset$.  To compute a co-contig representing $H$, we first apply Lemma~\ref{lemma:bad block} with input $N(v)$ to verify that $\Gamma$ has only good blocks.  If negative, then we obtain a minimally forbidden.  Otherwise, we apply Lemma~\ref{cor:consecutive} with input $N(v)$ to transform $\Gamma$ into a block co-contig $\Gamma'$ representing $H \setminus \{v\}$ in which $N(v)$ is consecutive.  This time we obtain either a minimally forbidden or a pair of blocks $\Bip{B_a}{B_b}$ witnessing that $N(v)$ is consecutive in $\Gamma'$.  In the latter case, we apply Lemma~\ref{lemma:reception fails} with input $\Bip{B_a}{B_b}$ to transform $\Gamma'$ into a co-contig $\Phi$ representing $H \setminus \{v\}$ that satisfies \receptionref{1}--\receptionref{3}.  By Lemma~\ref{lem:receptive representation}, we obtain either a minimally forbidden or a pair $\Bip{B_l}{B_r}$ that is $v$-receptive in $\{\phi,\phi_r\}$, where $\phi, \phi_r \in \Phi$ are the contigs that contain $B_l$ and $B_r$, respectively.  Finally, we check if $\Bip{B_l}{B_r}$ is receptive in $\Phi$ and we proceed as follows according to the answer.
  \begin{description}
    \item [Case 1:] $\Bip{B_l}{B_r}$ is receptive in $\Phi$.  We first transform $\Phi$ into the $\{v\}$-reception $\Psi$ of $\Bip{B_l}{B_r}$ in $\Phi$.  Then, we compact $\{v\}$ if it is indistinguishable to either $R^\Psi(\{v\})$ or $L^\Psi(\{v\})$.  As a result, $\Psi$ is a round block representation of $H$.
    \item [Case 2:] $\Bip{B_l}{B_r}$ is not receptive in $\Phi$.  As discussed in Section~\ref{sec:data structure:round representations}, the only case in which $\Bip{B_l}{B_r}$ is receptive in $\{\phi, \phi_r\}$ and not receptive in $\Phi$ is when $\phi = \phi_r$, $(B_l, B_r]$ contains the left end semiblock $W_l$ of $\phi$, and $\Phi\setminus \{\phi\}$ has some contig $\rho$.  Moreover, the $v$-reception $\{\psi\}$ of $\{\phi\}$ is not straight in this case.  Thus, $H[V(\psi) \cup \{x\}]$ is not PCA for every $x \in V(\rho)$.  Then, we can obtain a minimally forbidden of $H$ by trying to insert a vertex $x \in V(\rho)$ into $\{\psi\}$.  Since $x$ has no neighbors in $V(\psi)$, this insertion trial costs $O(1)$ time, while we can find $x \in V(\rho)$ in $O(1)$ time by using the representation pointer of $W_l$.
  \end{description}
  As discussed in Section~\ref{sec:data structure:round representations}, $O(d(v) + \epTest)$ time suffices to test if $\Bip{B_l}{B_r}$ is receptive in $\Phi$ and to compute the $\{v\}$-reception of $\Bip{B_l}{B_r}$ in $\Phi$ and $\{\phi\}$.  By Lemmas~\ref{lemma:bad block}, \ref{cor:consecutive},~and~\ref{lemma:reception fails}, we conclude that the whole algorithm costs $O(d(v) + \epTest)$ time.
\end{proof}

\subsection{\texorpdfstring{$H$ and $G$ need not be co-connected}{{\it H} and {\it G} need not be co-connected}}

In this section we deal with the general case in which $H$ and $G = H \setminus \{v\}$ need not be co-connected.  In other words, $G = G[N] + G[V_1]+ \ldots + G[V_k]$ where:
\begin{itemize}
  \item $V_u$ contains the universal vertices of $G$ in $V(G) \setminus N(v)$, for some $1 \leq u \leq k$,
  \item For $i \neq u$, $G[V_i]$ is a co-component with $V_i \setminus N(v) \neq \emptyset$, and
  \item $N = V(G) \setminus (V_1 \cup \ldots \cup V_k)$ contains only vertices in $N(v)$.
\end{itemize}
We are taking a loose definition of $G[\bullet]$ and $+$ here, as it could happen that $V_u = \emptyset$, $N = \emptyset$, or $k = 1$; the missing details are obvious though.  The algorithm in~\cite{SoulignacA2015} builds a round block representation of $H$ in two phases.  The first phase finds a block co-contig $\psi_v$ of $H[W_k \cup\{v\}]$, where $W_j = \bigcup_{i=1}^j V_i$ for every $0 \leq j \leq k$.  The second phase joins $\psi_v$ and a round block representation $\Gamma_N$ of $H \setminus V$ into a round block representation $\Psi$ of $H$.  Our certifying algorithm mimics these two phases; the internal details are different, though.

The purpose of the first phase is to find a co-contig $\psi_v$ of $H[W_k \cup \{v\}]$.  To fulfill its goal, the algorithm in~\cite{SoulignacA2015} computes all the round block representations of $H[W_k]$ to see if $N(v)$ is consecutive in one of them.  For those in which $N(v)$ is consecutive, it checks if some of its $v$-associated representations is $v$-receptive.  The algorithm is correct by the Reception Theorem and Observation~\ref{obs:v-associated}, but it could require exponential time.  A key observation in~\cite{SoulignacA2015} is that $H$ is not PCA when $k > 3$, thus only $O(1)$ round representations need to be examined, hence the algorithm is efficient.  The problem with this ``brute force'' strategy is that it makes it difficult to find a negative certificate when $H$ is not PCA.  An alternative approach is to note that, as $H[W_k \cup \{v\}]$ is co-connected, at most two of the generated representations, $\Phi$ and $\Phi^{-1}$, are $v$-receptive.  The idea is to characterize how does $\Phi$ look like so that a minimally forbidden can be obtained when $H[W_k]$ has no $v$-receptive representations.  

Instead of dealing with $H[W_k] = H[V_1] + \ldots + H[V_k]$ as a whole, we use an iterative approach.  Before the algorithm is executed, we have a round block representation $\Gamma_0 = \Gamma$ of $G$ and we build a new block co-contig $\Psi_v$ of $H[\{v\}]$.  After $i$ iterations, we have transformed $\Gamma$ into a round block representation $\Gamma_{i}$ of $G \setminus W_i$ and $\psi_v$ into a block co-contig of $H[W_i \cup \{v\}]$.  To cope with iteration $i+1$, we use Steps~1--3 below.  In brief terms, this procedure works a follows:
\begin{description}
  \item [Step 1] splits from $\Gamma_i$ a block co-contig $\gamma_{i+1}$ having blocks co-adjacent to $v$. Let $V_{i+1} = V(\gamma_{i+1})$.
  \item [Step 2] updates $\gamma_{i+1}$ into a block co-contig $\psi_{i+1}$ of $H[V_{i+1} \cup \{v\}]$.
  \item [Step 3] joins $\psi_{i+1}$ and $\psi_v$ to obtain a block co-contig of $H[W_{i+1} \cup \{v\}]$.
\end{description}
Once the iterative process is completed, we have round block representations $\Gamma_k$ of $G \setminus W_k$ and $\psi_v$ of $H[W_k \cup \{v\}]$.  We use Phase~2 below to combine these representations into a representation of $H$.  Of course, any of these steps can fail, and a minimally forbidden is provided if so.

\subsubsection{Step 1: split \texorpdfstring{$\gamma_{i+1}$}{\textbackslash gamma\_\{i+1\}} out of \texorpdfstring{$\Gamma_i$}{\textbackslash Gamma\_i}}

To split $\gamma_{i+1}$ out of $\Gamma_i$, we traverse $\B(\Gamma_{i})$ until the first block $B$ co-adjacent to $v$ is found.  If no such block exists, then Phase~1 concludes and Phase~2 begins.  Otherwise, we invoke Lemma~\ref{lemma:is cobipartite} below to obtain the family $E$ of co-end blocks of $\gamma_{i+1}$, where $\gamma_{i+1}$ is the co-contig of $\Gamma_i$ that contains $B$.  If Lemma~\ref{lemma:is cobipartite} outputs a minimally forbidden, then the algorithm halts; otherwise, we check if $B$ is a universal block.  If affirmative, then we separate $B$ into $+B$ and $-B$, and we update $\gamma_{i+1}$ to be the co-contig containing $-B$.  The separation is done in $O(|{+B}|)$ time, as discussed in Section~\ref{sec:data structure:round representations}.  Finally, we split $\gamma_{i+1}$ out of $\Gamma_{i}$.  Note that the case $E = \emptyset$ is trivial, as $\gamma_{i+1} = \Gamma_i$ and $\Gamma_{i+1} = \emptyset$, while the split when $E \neq \emptyset$ costs $O(1)$ time as discussed in Section~\ref{sec:data structure:contig}.  Therefore, Step~1 costs $O(d(v))$ time.

\begin{lemma}\label{lemma:is cobipartite}
  Let $H$ be a graph with a vertex $v$, $\phi$ be a co-contig of a round representation $\Phi$ of $H \setminus \{v\}$, and $B \in \B(\phi)$ be co-adjacent to $v$.  Given a $\Phi$-pointer to $B$, it takes $O(d(v))$ time to determine if $G(\Phi)$ is co-bipartite when $H$ is PCA.  The algorithm outputs either a minimally forbidden of $H$ or a set containing $\Phi$-pointers to all the co-end semiblocks of $\phi$.
\end{lemma}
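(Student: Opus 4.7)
Plan: The algorithm probes $\phi$ locally starting from $B$, using the robust mappings $\UU_r$ and $\UU_l$, and either reconstructs the (at most four) co-end semiblocks of $\phi$ or extracts a bounded-size forbidden witness when the structure around $B$ is incompatible with $H$ being PCA.

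First I would apply the preprocessing of Observation~\ref{obs:preprocessing} on $\B(\phi)$ so that ``$v$ is (co-)adjacent to $W$'' can be answered in $O(1)$. Next, I would exploit the stored semiblock path $\MP(\phi)$, which by construction contains at most four semiblocks. When $G(\phi)$ is co-bipartite and co-connected, the co-end semiblocks of $\phi$ are located within $O(1)$ $\UU_r$- or $\UU_l$-steps of the semiblocks of $\MP(\phi)$. I would test each candidate $W \in \MP(\phi)$ against the defining characterization ``$W$ is a left (resp.\ right) co-end iff $W = \UU_r^2(W)$ (resp.\ $W = \UU_l^2(W)$)''; each such test runs in $O(1)$ using the data structure of $\Phi$. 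If all tests pass consistently, I would collect the up-to-four co-end semiblocks and return them.

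If some test fails, the failure provides four pairwise distinct semiblocks $B_0, B_1, B_2, B_3$ of $\phi$ with $B_i \nto B_{i+1}$, obtained by a constant number of $\UU_r$-applications starting from $B$ or from the failing candidate. Since $v$ is co-adjacent to $B_0 = B$, I would then feed a triple drawn from $\{B_0, B_1, B_2, B_3\}$ into Lemma~\ref{lemma:subprocedures}; the hypotheses of one of cases (a)--(d) of that lemma are met, and it produces a minimally forbidden of $H$ in $O(d(v))$ time.

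The main obstacle is handling uniformly the corner cases of the semiblock path---when $\phi$ is linear (so the endpoints of $\MP(\phi)$ are the end semiblocks of $\phi$), when $\phi$ contains a long semiblock (so $|\MP(\phi)| = 3$), and when the candidate lies near a robust wrap-around---and arguing in each case that a failed co-end test pinpoints an adequate forbidden. The argument rests on the fact that if $G(\phi)$ is co-connected but not co-bipartite, then $\overline{G(\phi)}$ contains an odd cycle; the four-block slice of this cycle located around $B$, together with $v$, realizes one of the forbidden patterns of Theorem~\ref{thm:forbiddens PCA}.
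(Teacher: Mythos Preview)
Your approach has a genuine gap. You treat $\MP(\phi)$ as if it were stored, but semiblock paths are maintained per \emph{contig}, not per \emph{co-contig}; $\phi$ here is a co-contig, and in general no path $\MP(\phi)$ exists in the data structure. Even if you mean $\MP(\psi)$ for the contig $\psi$ containing $B$, you cannot reach it from $B$ in $O(1)$ time unless $B$ happens to lie on the path, and more seriously, when $\Phi$ has several co-components the at-most-four semiblocks of $\MP(\psi)$ may all sit in a co-component different from the one containing $B$. In that situation $\MP(\psi)$ tells you nothing about the co-end semiblocks of $\phi$, so the claim that those co-ends are within $O(1)$ $\UU_r/\UU_l$-steps of $\MP(\psi)$ is simply false.

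The paper's argument is quite different and hinges on an idea you never invoke: it iterates $\UU_r$ starting from $B$ itself and stops at the first index $i$ where either a left co-end appears, a repetition occurs (yielding an odd co-cycle), or four of the last five iterates $\UU_r^{i-5},\UU_r^{i-4},\UU_r^{i-2},\UU_r^{i-1}$ are all co-adjacent to $v$. The last condition directly produces an induced $C_4^*$ in $H$ (with $v$ as the isolated vertex), so it is a minimally forbidden. Crucially, this third condition is what forces the $O(d(v))$ bound: if it never triggers, then among every window of five consecutive iterates at least one is fully adjacent to $v$, and there are at most $d(v)$ such blocks. Nothing in your plan ties the running time to $d(v)$; without that link the walk to a co-end can take $\Theta(n)$ steps. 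Finally, your failure branch hands ``some triple'' to Lemma~\ref{lemma:subprocedures}, but the cases of that lemma require $T_1\to T_2\to T_3$ together with specific (co-)adjacency of $v$ to ranges between them, none of which follows from merely having four semiblocks with $B_i\nto B_{i+1}$.
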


\begin{proof}
  The algorithm outputs $\emptyset$ when $\Phi$ is not robust, and $\{\DS{B}\}$ when $B$ is universal.  In the remaining case, the algorithm tries to locate the left co-end semiblocks of $\phi$.  For this, it computes the minimum $i \geq 0$ such that:
  \begin{enumerate}
    \item $\UU_r^i(B)$ is a left co-end semiblock,\label{lemma:is cobipartite:co-end}
    \item $\UU_r^i(B) = \UU_r^j(B)$ for some $j < i$, or\label{lemma:is cobipartite:cycle}
    \item $i \geq 5$ and $v$ is co-adjacent to $\UU_r^{i-5}(B)$, $\UU_r^{i-4}(B)$, $\UU_r^{i-2}(B)$, and $\UU_r^{i-1}(B)$.\label{lemma:is cobipartite:forbidden}
  \end{enumerate}
  Observe that $\UU_r^i(B) \in \B(\phi)$ because $B$ is not universal.  Therefore: if \ref{lemma:is cobipartite:co-end}.\ holds, then $\UU_r^i(B)$ and $\UU_r^{i+1}(B)$ are the left co-end semiblock of $\phi$; if \ref{lemma:is cobipartite:cycle}.\ holds, then $G(\Phi)$ is not co-bipartite because $\UU_r^j(B), \ldots, \UU_r^i(B)$ induces a co-cycle of odd length; and if \ref{lemma:is cobipartite:forbidden}.\ holds (and \ref{lemma:is cobipartite:cycle}.\ does not), then $H$ is not PCA because \{$v$, $-\UU_r^{i-5}(B)$, $-\UU_r^{i-4}(B)$, $-\UU_r^{i-2}(B)$, $-\UU_r^{i-1}(B)$\} induces a $C_4^*$.  Clearly, $i$ can be obtained in $O(d(v))$ time.  Indeed, each semiblock is traversed $O(1)$ times by \ref{lemma:is cobipartite:co-end}.\ and \ref{lemma:is cobipartite:cycle}., while at most $6d(v)$ blocks co-adjacent to $v$ are visited by \ref{lemma:is cobipartite:forbidden}. (See~\cite{SoulignacA2015} for a better bound.)  When \ref{lemma:is cobipartite:co-end}.\ holds, the algorithm computes the right co-end semiblocks of $\phi$ by replacing $\UU_r$ with $\UU_l$ in \ref{lemma:is cobipartite:co-end}--\ref{lemma:is cobipartite:forbidden}.   
\end{proof}

\subsubsection{Step 2: update of \texorpdfstring{$\gamma_{i+1}$}{\textbackslash gamma\_\{i+1\}} into \texorpdfstring{$\psi_{i+1}$}{\textbackslash psi\_\{i+1\}}}

There are two possibilities for Step~2, according to whether $\gamma_{i+1}$ has a unique (universal) block or not.  In the former case, $\{v\}$ is a block of $H[V_{i+1} \cup \{v\}]$ co-adjacent to the clique $V_{i+1}$, thus computing the block co-contig $\psi_{i+1}$ in $O(1)$ time is trivial.  In the latter case, both $H[V_{i+1}]$ and $H[V_{i+1} \cup \{v_i\}]$ are co-connected.  Thus, we invoke Theorem~\ref{thm:both prime}, with input $\gamma_{i+1}$ and $V_{i+1}\cap N(v)$, to transform $\gamma_{i+1}$ into a round block representation $\Psi_{i+1}$ of $H[V_{i+1} \cup \{v\}]$.  By Lemma~\ref{lemma:restricted neighborhood}, $O(d(v))$ time suffices to compute $V_{i+1} \cap N(v)$, thus Step~2 requires $O(d(v) + \epTest)$ time.

\begin{lemma}\label{lemma:restricted neighborhood}
  Let $H$ be a graph with a vertex $v$, $\phi$ be a co-contig of a round representation $\Phi$ of $H \setminus \{v\}$, and $B_l$ be a left co-end block of $\phi$.  Given $N(v)$ and a semiblock pointer to $B_l$, it takes $O(d(v))$ time to compute $V(\phi) \cap N(v)$ when $H$ is PCA.  When $H$ is not PCA, the algorithm outputs either $V(\phi) \cap N(v)$ or a minimally forbidden of $H$.
\end{lemma}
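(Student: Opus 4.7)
The plan is to make a single pass through $N(v)$, spending $O(1)$ time per vertex to test membership in $V(\phi)$; this yields the claimed $O(d(v))$ total cost.

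First, I would identify the physical contigs of $\Phi$ comprising $\phi$. Since $\phi$ is a co-contig, $G(\phi)$ is co-connected; in the calling context of this lemma, $\phi$ is also co-bipartite (established by the preceding Step~1, which invokes Lemma~\ref{lemma:is cobipartite}). Consequently $\phi$ consists of either a single contig $\phi_1$ or two contigs $\phi_1,\phi_2$, each of which represents a clique (the disconnected co-bipartite case). Reading $B_l$'s contig pointer returns $\phi_1 \ni B_l$ in $O(1)$; the second contig, if present, contains the other left co-end block of $\phi$, which is $\UU_r(B_l)$ in the robust sense and is accessible in $O(1)$ from $B_l$. A short $O(1)$ case analysis handles the universal-block subcase (where $\phi=\{\phi_1\}$ and $B_l$ is the only block of $\phi_1$) and the single-contig subcase (where $\UU_r(B_l)$ already lies in $\phi_1$).

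Next, I would execute the main loop. For each $w \in N(v)$, the call {\tt semiblock}($w$) returns in $O(1)$ the block $B_w$ containing $w$; then $B_w$'s contig pointer returns in $O(1)$ the physical contig $\phi_w \in \Phi$ containing $B_w$; and $w$ is added to the output iff $\phi_w$ coincides with $\phi_1$ or with $\phi_2$. Each iteration is constant-time, so the loop terminates in $O(d(v))$ time and returns $V(\phi) \cap N(v)$.

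The main subtlety concerns the alternative output clause. The core algorithm above succeeds in producing $V(\phi) \cap N(v)$ within the time budget as long as $\phi$'s co-bipartite structure remains consistent with the pointers and co-end data computed in Step~1. Should a silent inconsistency be detected at any point---for instance, the putative second left co-end block $\UU_r(B_l)$ failing the co-bipartite pattern, or a vertex of $N(v)$ landing in a contig that breaks the expected clique decomposition---then $H$ cannot be PCA, and I would re-invoke Lemma~\ref{lemma:is cobipartite} on a block of $\phi$ co-adjacent to $v$ (guaranteed to exist by the Step~1 context in which the present lemma is called) to extract a minimally forbidden subgraph of $H$ within the same $O(d(v))$ budget, thereby discharging the alternative clause of the statement.
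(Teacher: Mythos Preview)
Your approach has a genuine gap: it assumes a capability the data structure does not provide. You write that ``$B_w$'s contig pointer returns in $O(1)$ the physical contig $\phi_w \in \Phi$ containing $B_w$,'' but Section~\ref{sec:data structure:contig} states explicitly that a semiblock pointer ``knows nothing about the other semiblock pointers associated to $\phi$'' and that ``there is no way to answer, in $O(1)$ time, if $\DS{B}$ is associated to the same contig as other pointer $\DS{W}$.'' The \emph{contig pointer} of Section~\ref{sec:data structure:round representations} is not what you think: it is non-null only for the single semiblock that happens to be stored in the list $\DS{\Phi}$ (the left end semiblock when the contig is linear); for every other block it is null. So your main loop cannot be implemented. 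Even if it could, contig membership is not the same as co-contig membership: a single contig may contain several co-contigs (Figure~\ref{fig:co-contigs}), so identifying the contig of $B_w$ would still not decide whether $B_w \in \B(\phi)$.

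The paper's proof avoids this obstacle by navigating the co-contig structure via $\UU_r$: for each block $B$ adjacent to $v$ it walks $B, \UU_r(B), \UU_r^2(B), \ldots$ until it reaches a left co-end semiblock (which identifies the co-contig), or reaches a block already processed (memoization), or traverses four consecutive blocks co-adjacent to $v$ (which yields an induced $C_4^*$, discharging the ``minimally forbidden'' alternative). The memoization ensures each block adjacent to $v$ is charged $O(1)$ times, and the $C_4^*$ exit bounds the number of visited blocks co-adjacent to $v$ by $5d(v)$, giving the $O(d(v))$ total. Your final paragraph about ``silent inconsistencies'' and re-invoking Lemma~\ref{lemma:is cobipartite} does not supply this mechanism; that lemma tests co-bipartiteness of $G(\Phi)$ and does not produce a forbidden from the failure modes you describe.
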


\begin{proof}
  For each $B \in \B(\Phi)$ adjacent to $v$, we find a pointer $E(B)$ to a left co-end semiblock of the co-contig $\phi_B$ that contains $B$; initially, $E(B) = \bot$.  To compute $E$, we traverse each $w \in N(v)$ to process the semiblock $B$ that contains $w$.  If $B$ is universal, then we set $E(B) = B$ and pass to the next vertex.  Otherwise, we look for the minimum $i \geq 0$ such that:
  \begin{enumerate}
    \item $E(\UU_r^i(B)) \neq \bot$,\label{lemma:restricted neighborhood:pointer}
    \item $\UU_r^i(B)$ is a left co-end semiblock, or\label{lemma:restricted neighborhood:co-end}
    \item $i \geq 4$ and $v$ is co-adjacent to $\UU_r^{i-4}(B)$, $\UU_r^{i-3}(B)$, $\UU_r^{i-1}(B)$, and $\UU_r^i(B)$.\label{lemma:restricted neighborhood:forbidden}
  \end{enumerate}
  Since $B$ is not universal, it follows that $B$ and $\UU_r^j(B)$ belong to the same co-component for every $0 \leq j \leq i$.  Hence, $E(\UU_r^i(B))$ is a left co-end semiblock of $\phi_B$ if~\ref{lemma:restricted neighborhood:pointer}., while $\UU_r^i(B)$ is a left co-end semiblock of $\phi_B$ if~\ref{lemma:restricted neighborhood:co-end}.  Therefore: if \ref{lemma:restricted neighborhood:pointer}., then we set $E(\UU_r^j(B)) = E(\UU_r^i(B))$ for every $0 \leq j \leq i$; if~\ref{lemma:restricted neighborhood:co-end}., then we set $E(\UU_r^j(B)) = \UU_r^i(B)$ for every $0 \leq j \leq i$; and if~\ref{lemma:restricted neighborhood:forbidden}., then we output that $H$ is not PCA because \{$v$, $-\UU_r^{i-4}(B)$, $-\UU_r^{i-3}(B)$, $-\UU_r^{i-1}(B)$, $-\UU_r^{i}(B)$\} induces a $C_4^*$.  The computation of $E$ ends after all the vertices in $N(v)$ have been considered.  After $E$ is computed, the algorithm outputs $V(\phi) \cap N(v) = \{w \in N(v) \mid E(B(w)) \in \{B_l, U_r(B_l)\}\}$, where $B(w)$ is the semiblock that contains $w$.  Clearly, by \ref{lemma:restricted neighborhood:pointer}.~and~\ref{lemma:restricted neighborhood:co-end}., the algorithm traverses each semiblock $B$ adjacent to $v$ only $O(|B \cap N(v)|)$ times, while, by~\ref{lemma:restricted neighborhood:forbidden}., it traverses at most $5d(v)$ blocks co-adjacent to $v$. 
\end{proof}

\subsubsection{Step 3: join of \texorpdfstring{$\psi_{i+1}$}{\textbackslash psi\_\{i+1\}} and \texorpdfstring{$\psi_v$}{\textbackslash psi\_v}}

Step~3 has to join $\psi_v$ and $\psi_{i+1}$ into a block co-contig representing $H[W_{i+1} \cup \{v\}]$.  This is trivial when $i = 1$ as we replace $\psi_v$ with $\psi_1$.  When $i > 1$, at most one between $H[W_i]$ and $H[V_{i+1}]$ is a clique.  Thus, we can combine $\psi_v$ and $\psi_{i+1}$ in $O(1)$ time with the following lemma.

\begin{lemma}\label{lemma:insertion join}
  Let $H$ be a co-connected graph with a vertex $v$ such that $H \setminus \{v\}$ is a PCA graph isomorphic to $H[V_1] + H[V_2]$ for some $\emptyset \subset V_1 \subset V(H)$ and $V_2 = V(H) \setminus (V_1 \cup \{v\})$, and let $B_i$ be the block that contains $v$ in a block co-contig $\psi_i$ representing $H[V_i \cup \{v\}]$, for $i \in \{1,2\}$.  Suppose $V_1$ is not a block of $\psi_1$.  Then, $H$ is a PCA graph if and only if either:
  \begin{enumerate}[(i)]
    \item $B_1$ and $B_2$ are co-end blocks of $\psi_1$ and $\psi_2$,\label{lemma:insertion join:co-end}
    \item $V_2$ is a block of $\psi_2$, $\psi_1$ is robust, and $F_r(\RR(B_1)) = U_l(\LL(B_1)) \neq \bot$, or\label{lemma:insertion join:clique}
    \item $V_2$ is a block of $\psi_2$ and $V_1$ has exactly three non-adjacent blocks: \{$v$, $W_2$, $W_3$\}.\label{lemma:insertion join:O(1)}
  \end{enumerate}
  Consequently, $O(1)$ time suffices to determine if $H$ is PCA, when $\psi_i$-pointers to $B_i$ are given.  The algorithm either transforms $\psi_1$ and $\psi_2$ into a block co-contig representing $H$ or outputs a minimally forbidden of $H$.
\end{lemma}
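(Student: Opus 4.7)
The plan is to establish sufficiency of each of (i)--(iii) by constructing an explicit block co-contig representing $H$, and necessity by exhibiting a minimally forbidden induced subgraph from Theorem~\ref{thm:forbiddens PCA} whenever none of the three conditions holds. Because $H$ is co-connected and $H\setminus\{v\} = H[V_1] + H[V_2]$, every vertex of $V_1$ is fully adjacent to every vertex of $V_2$ in $H$, while $v$ must have at least one non-neighbor in each $V_i$; otherwise $H$ itself would factor as a join and fail to be co-connected. This fixes $B_i$ as the unique block of $\psi_i$ containing $v$ and makes the role of the co-end structure of each $\psi_i$ around $B_i$ decisive for whether the two representations can be merged.

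For sufficiency in case (i), I would apply \texttt{separate} to pull $\{v\}$ out of $B_i$ inside each $\psi_i$ (leaving $-B_i$ as a co-end block), and then invoke \texttt{join} from Section~\ref{sec:data structure:round representations} on pointers to $-B_1$ and $-B_2$, collapsing the two $\{v\}$-semiblocks into a single block of the result. The co-end hypothesis forces the blocks of $V_1$ and $V_2$ to interleave in the join's circular order so that every $V_1$--$V_2$ edge of $H$ is realized while $v$'s neighborhood remains intact. In case (ii), $V_2$ is a single block fully adjacent to $V_1$ and co-adjacent only to $v$, so I insert $V_2$ as a new semiblock into $\psi_1$ at the unique position ``opposite'' $B_1$ in the robust co-cyclic order of $\psi_1$; the equality $F_r(\RR(B_1)) = U_l(\LL(B_1)) \neq \bot$ precisely asserts that this slot is free and legitimate, so the insertion produces a block co-contig representing $H$. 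Case (iii) is a bounded-size configuration handled by direct construction.

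For necessity, suppose none of (i)--(iii) holds. I would split on whether $V_2$ is a block of $\psi_2$. If $V_2$ is not a block, then $B_2$ cannot be a co-end block (otherwise $\psi_2$ would have only the two blocks $\{v\}$ and $V_2$), so $\psi_2$ has non-$B_2$ blocks on both sides of $B_2$ in its co-cyclic order. Selecting one such block from each side, together with a symmetric choice of non-universal blocks of $\psi_1$ surrounding $B_1$ and the vertex $v$, yields an induced $K_{1,3}$, $\overline{S_3}$, or $\overline{H_k}$ from Figure~\ref{fig:forbiddens PCA}. If $V_2$ is a block of $\psi_2$, then failure of (i) forces $B_1$ to be a non-co-end block of $\psi_1$; failure of (ii) then means either $\psi_1$ is not robust (impossible here since $H[V_1\cup\{v\}]$ is co-bipartite and hence robust by the theorem after Figure~\ref{fig:co-contigs}), or the arithmetic equality fails, leaving no valid slot for $V_2$ opposite $B_1$. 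Combined with failure of (iii), this forces $V_1$ to contain enough non-universal blocks of $\psi_1$ to produce an $\overline{H_2}$-like or $\overline{C_{2k}}$-like induced configuration together with $v$ and $V_2$.

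The main obstacle is the necessity direction: cataloguing each configuration of $\psi_1$ around a non-co-end block $B_1$ that defeats the arithmetic condition in (ii) while simultaneously violating (iii), and associating each with a specific \emph{minimally} forbidden subgraph from Theorem~\ref{thm:forbiddens PCA}. Care is needed to trim candidate vertices so that the exhibited witness is truly minimal rather than merely forbidden. Each such witness involves only $O(1)$ semiblocks of $\psi_1$ and $\psi_2$ near $B_1$ and $B_2$, so the $O(1)$ running time follows automatically once the case analysis is in place: testing which of (i)--(iii) holds inspects a constant number of pointers, and either the constructive merge or the output of the forbidden subgraph takes $O(1)$ time.
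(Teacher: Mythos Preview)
Your sufficiency sketch for (i)--(iii) is in the right spirit, though ``collapsing the two $\{v\}$-semiblocks'' after a \texttt{join} is not a well-defined contig operation (the two copies of $\{v\}$ land at non-adjacent positions in the resulting circular order). The paper instead joins $\psi_1$ and $\psi_2$ with both copies of $v$ still present, performs a single \texttt{reception} to create a fresh vertex $w$ with $N(w)=N_H(v)$, and then removes the two old copies; that is the clean way to realize the merge.

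The necessity direction has two genuine gaps. First, your implication ``$V_2$ is not a block $\Rightarrow$ $B_2$ is not a co-end block'' is false: a co-contig can have arbitrarily many blocks with $B_2$ sitting at a co-end position. When (i) fails, it may well be that $B_2$ is a co-end block while $B_1$ is not. The paper handles this by the opposite (and correct) implication: if $B_2$ is \emph{not} a co-end block then $V_2$ is not a single block, so one may swap the roles of $V_1$ and $V_2$ without violating the hypothesis; hence one may assume without loss of generality that $B_1$ is not a co-end block and carry out the entire case analysis inside $\psi_1$.

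Second, your claim that $\psi_1$ is necessarily robust ``since $H[V_1\cup\{v\}]$ is co-bipartite'' is also false. Only $H[V_1]$ is guaranteed co-bipartite (by Theorem~\ref{thm:forbiddens PCA}, because $H\setminus\{v\}$ is PCA and not co-connected); adding $v$ can destroy co-bipartiteness. Concretely, if $v$ is non-adjacent to two already non-adjacent blocks $W_2,W_3$ of $V_1$, then $\overline{H[V_1\cup\{v\}]}$ contains a triangle. The paper explicitly treats this: $\psi_1\setminus\{B_1\}$ is robust, so $\psi_1$ fails to be robust precisely when $B_1=\{v\}$ is isolated in $\psi_1$ and $\psi_1\setminus\{B_1\}$ has exactly two non-adjacent blocks $W_2,W_3$; in that situation (and with (iii) failing, so $Y\neq\{v\}$ exists in $\psi_2$) one exhibits a $K_{1,3}$ or $C_4^*$ on $\{B_1,X,W_2,W_3,Y\setminus\{v\}\}$. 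Your proposal dismisses this case as ``impossible'', which leaves a hole.
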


\begin{proof}
  First we prove that $H$ is a PCA graph when some of (\ref{lemma:insertion join:co-end})--(\ref{lemma:insertion join:O(1)}) holds.
  \begin{description}
    \item [(\ref{lemma:insertion join:co-end}) holds.]  The proof is implicit in~\cite{SoulignacA2015}.  By reversing $\psi_1$ and $\psi_2$ if required, suppose $B_1$ is a right co-end block and $B_2$ is a left co-end block.  As discussed in Section~\ref{sec:data structure:round representations}, we can join $\psi_1$ and $\psi_2$ into a block co-contig $\rho$ representing $G(\rho) = G(\psi_1) + G(\psi_2) = H[V_1 \cup \{v\}] + H[V_2 \cup \{v\}]$ in which $B_1$ and $B_2$ are consecutive.  Clearly, $B_1$ witnesses that $\Bip{F_l(B_1)}{F_r(B_2)}$ is receptive in $\rho$, thus the $\{w\}$-reception of $\Bip{F_l(B_1)}{F_r(B_2)}$ is a block co-contig representing a graph $H'$ with three vertices $v_1 \in B_1$, $v_2 \in B_2$ and $w$ such that: $N(w) = N_H(v)$ and $H' \setminus \{v_1, v_2, w\} = H \setminus \{v\}$.  Consequently, $H = H' \setminus \{v_1, v_2\}$ is PCA.
    
    \item [(\ref{lemma:insertion join:clique}) holds.]  Let $\rho$ be the co-contig that is obtained from $\psi_1$ by a separation of $B_1$ into $\Bip{B_1 \setminus \{v\}}{\{v\}}$, and note that $B_m = F_r^\rho(\RR^\rho(\{v\}))$ witnesses that $\Bip{\RR^\rho(\{v\})}{\LL^\rho(\{v\})}$ is receptive in $\rho$.  Consequently, the $V_2$-reception of $\Bip{\RR^\rho(\{v\})}{\LL^\rho(\{v\})}$ is a block co-contig that represents $H$ because $v$ is the unique vertex not adjacent to $V_2$.
    
    \item [(\ref{lemma:insertion join:O(1)}) holds.] Trivial.
  \end{description}
  It is not hard to obtain the block co-contig $\rho$ in $O(1)$ time using the operations described in Section~\ref{sec:data structure:round representations} with some low-level manipulation of the contigs (i.e., avoiding \texttt{reception}); see~\cite{SoulignacA2015}.
    
  Now suppose none of (\ref{lemma:insertion join:co-end})--(\ref{lemma:insertion join:O(1)}) holds.  To prove that $H$ is not PCA we show an $O(1)$ time algorithm that computes a minimally forbidden of $H$.  If $B_2$ is not a co-end block, then $V_2$ is not a block, thus we may replace $V_1$ and $V_2$ without affecting the hypothesis of the lemma.  Hence, as (\ref{lemma:insertion join:co-end}) is false, we suppose $B_1$ is not a co-end block.  Through the proof we work only with $\psi_1$ and two blocks of $\psi_2$, called $X$ and $Y$, which are not adjacent to $B_2$ and $X$, respectively.  Also, $Y \neq \{v\}$ unless $V_2 = X$.  Note that $X$ and $Y$ are obtainable in $O(1)$ time.
  
  The first step of the algorithm is to verify if $\psi_1$ is robust.  By Theorem~\ref{thm:forbiddens PCA}, $H[V_1]$ is co-bipartite because $H \setminus \{v\}$ is not co-connected.  Then, $\psi_1 \setminus \{B_1\}$ is robust, thus either $\psi_1$ is robust or $B_1 = \{v\}$ is isolated in $\psi_1$ and $\psi_1 \setminus \{B_1\}$ has exactly two non-adjacent blocks $W_2$, $W_3$.  Therefore, we can decide if $\psi_1$ is robust in $O(1)$ time, obtaining pointers to $W_2$ and $W_3$ if negative.  Moreover, $Y \neq \{v\}$ because (\ref{lemma:insertion join:O(1)}) is false.  Consequently, $\{B_1, X, W_2, W_3, Y \setminus\{v\}\}$ contains either a $K_{1,3}$ or a $C_4^*$.  Such a minimally forbidden can be obtained in $O(1)$ time.  From now on we assume $\psi_1$ is robust, hence $\LL$, $\RR$, $\UU_l$, and $\UU_r$ are well defined for $\psi_1$.  Moreover, as $B_1$ is not a co-end block, we obtain that $\UU_r(B_1) \nto \LL(B_1)$ and $\RR(B_1) \nto \UU_l(B_1)$.  
  
  The second step is to check if $\UU_l(B_1) \toeq \LL(B_1)$ and if $\RR(B_1) \toeq \UU_r(B_1)$.  If $\UU_l(B_1) \neq \LL(B_1)$ and $\UU_l(B_1) \nto \LL(B_1)$, then $\LL(B_1) \nto F_r(B_1)$ because $B_1$ and $\LL(B_1)$ are not indistinguishable.   So, $F_r(B_1) \to \UU_l(B_1)$ because, otherwise, $\UU_l(B_1)$, $\LL(B_1)$, $F_r(B_1)$ are pairwise non-adjacent blocks, contradicting the fact that $H[V_1]$ is co-bipartite.  Similarly, $\LL(B_1) \to \RR(B_1)$ because $\UU_l(B_1)$, $\LL(B_1)$, $\RR(B_1)$ cannot be pairwise non-adjacent.  Hence, $B_1$, $\RR(B_1)$, $F_r(B_1)$, $U_r(B_1)$ are pairwise different and appear in this order in a traversal of $[B_1, \UU_r(B_1)]$.  The minimally forbidden we generate depends on whether $\RR(B_1) \to \UU_r(B_1)$ or not.  In the affirmative case, \{$B_1$, $\LL(B_1)$, $\RR(B_1)$, $F_r(B_1)$, $\UU_r(B_1)$, $\UU_l(B_1)$, $X$\} induces an $\overline{H_4}$ (\ref{app:insertion join:case 1}).  In the negative case we observe that, as before, $F_l(B_1) \nto \RR(B_1)$ and $F_r(B_1) \to \UU_l(B_1)$.  This implies $F_l(B_1) \to F_r(B_1)$ because \{$F_l(B_1)$, $\LL(B_1)$, $\RR(B_1)$, $F_r(B_1)$, $\UU_r(B_1)$\} does not induce a $C_5$, thus \{$B_1$, $\LL(B_1)$, $\RR(B_1)$, $F_r(B_1)$, $F_l(B_1)$, $\UU_l(B_1)$, $X$\} induces an $\overline{H_5}$ (\ref{app:insertion join:case 2}).  From now on, we assume $\UU_l(B_1) \toeq \LL(B_1)$ and, similarly, $\RR(B_1) \toeq \UU_r(B_1)$.  Hence, $\UU_l(B_1) \neq \UU_r(B_1)$.

  Note that $F_r(\RR(B_1)), F_l(\LL(B_1))$ are either equal or appear in this order in a traversal of $[B_1, \LL(B_1)]$; otherwise, any block inside $(F_l(\LL(B_1)), F_r(\RR(B_1)))$ would be indistinguishable to $F_r(\RR(B_1))$.  For the third step, the algorithm tests if $(F_r(\RR(B_1)), F_l(\LL(B_1)))$ has some block $W$.  If affirmative, then $\LL(B_1) \to \RR(B_1)$ since, otherwise, $W$, $\LL(B_1)$, $\RR(B_1)$ are pairwise non-adjacent.  Consequently, $\LL(B_1)$, $\RR(B_1)$, $\UU_r(B_1)$, and $\UU_l(B_1)$ are all different.  This implies that $\UU_r(B_1) \to \UU_l(B_1)$ because no subset of $\{W, \UU_l(B_1), \LL(B_1), \RR(B_1), \UU_r(B_1)\}$ induces an $\overline{C_3}$ or $C_5$.  Consequently \{$B_1$, $X$, $W$, $\UU_l(B_1)$, $\UU_r(B_1)$, $\LL(B_1)$, $\RR(B_1)$\} induces an $\overline{H_2}$ (\ref{app:insertion join:case 3}).  
  
  Finally, note that $Y \neq \{v\}$ when either $F_r(\RR(B_1)) = F_l(\LL(B_1))$ or $F_r(\RR(B_1)) = \UU_l(\LL(B_1))$.  Indeed, in the former case $Y \neq \{v\}$ because $F_r(\RR(B_1))$ and $X$ are not twins, while in the latter case $Y \neq \{v\}$ because (\ref{lemma:insertion join:clique}) is false.  Consequently, $\B = \{B_1, Y \setminus \{v\}$, $X$, $\UU_l(B_1)$, $\LL(B_1)$, $\RR(B_1)$, $\UU_r(B_1)$\} is a forbidden (\ref{app:insertion join:case 4}) whose edges can be obtained in $O(1)$ time.  We remark that not all the blocks in $\B$ are pairwise different.  
\end{proof}

\subsubsection{Phase 2: join of \texorpdfstring{$\Psi_v$}{\textbackslash Psi\_v} and \texorpdfstring{$\Gamma$}{\textbackslash Gamma}}

After the first phase is completed, we have a round block representation $\Gamma$ of $G \setminus W_k$ and a block co-contig $\psi_v$ representing $H[W_k \cup \{v\}]$ for $W_k = \bigcup_{i=1}^k V_i$.  The goal of the second phase is to find a round block representation of $H$. This is trivial when $W_k = V(G)$, as $\psi_v$ is the desired representation.  For the other case, we invoke Theorem~\ref{thm:H not co-connected} using $v$ and $w \in V(\Gamma)$ as input.

\begin{theorem}\label{thm:H not co-connected}
  Let $H$ be a graph such that $H = H[V_1] + H[V_2]$ for $\emptyset \subset V_1, V_2 \subset V(H)$, and $\Phi_i$ be a round block representation of $H[V_i]$, for $i \in \{1,2\}$.  Then, $H$ is PCA if and only if $H[V_1]$ and $H[V_2]$ are PCA and co-bipartite.  Furthermore, if semiblock $\Phi_i$-pointers to $B_i\in \B(\Phi_i)$ are given, then $O(|N(B_i)|)$ time suffices to determine if $H$ is PCA.  The algorithm either transforms $\Phi_1$ and $\Phi_2$ into a round block representation of $H$ or outputs a minimally forbidden of $H$.
\end{theorem}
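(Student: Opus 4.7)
The plan is to establish the structural equivalence first, then describe the algorithm; the structural part splits into necessity and sufficiency, each short.

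For necessity, I would use that PCA is hereditary (so $H[V_i]$ is PCA), together with the forbidden list of Theorem~\ref{thm:forbiddens PCA}. If some $H[V_i]$ were not co-bipartite then $\overline{H[V_i]}$ would contain an induced odd cycle $C_{2k+1}$ ($k\ge1$), equivalently $H[V_i]$ contains an induced $\overline{C_{2k+1}}$ on vertices $F$. Any $w\in V_j$ ($j\ne i$) is adjacent in $H$ to every vertex of $F$ by the join, so $H[F\cup\{w\}]\cong\overline{C_{2k+1}^*}$, which is forbidden, a contradiction. For sufficiency I would observe that a co-bipartite graph is either connected (in which case $\Phi_i$ is a single co-contig) or a disjoint union of two cliques (so $\Phi_i$ consists of two linear clique-contigs); in either case $\Phi_i$ has a left co-end semiblock $C_i$, and invoking \texttt{join}$(\DS{C_1},\DS{C_2})$ from Section~\ref{sec:data structure:round representations} yields $\Psi$ with $G(\Psi)=G(\Phi_1)+G(\Phi_2)=H$.

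For the algorithm I would independently certify co-bipartiteness of each $\Phi_i$ starting from $B_i$, closely following Lemma~\ref{lemma:is cobipartite}. If $B_i$ is universal in $\Phi_i$ it is itself a co-end semiblock; otherwise I iterate $\UU_r^j(B_i)$ and halt on one of three events analogous to conditions \ref{lemma:is cobipartite:co-end}--\ref{lemma:is cobipartite:forbidden} of that lemma: (a)~a left co-end semiblock is reached, (b)~a repetition $\UU_r^a(B_i)=\UU_r^b(B_i)$ with $b-a$ odd occurs, or (c)~four non-adjacencies among recent iterates accumulate. In case~(b) the walk extracts an induced $\overline{C_{b-a}}$ in $H[V_i]$, and in case~(c) a short induced $\overline{C_{2\ell+1}}$ is built explicitly; in both cases, adjoining any $w\in V_j$ produces an induced $\overline{C_{2\ell+1}^*}$ of $H$, which appears verbatim in the forbidden list of Theorem~\ref{thm:forbiddens PCA} and is thus minimally forbidden. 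If both certifications succeed and return left co-end pointers $C_1,C_2$, I invoke \texttt{join}$(\DS{C_1},\DS{C_2})$ to produce $\Psi$. The cost per side is $O(|N(B_i)|)$ because the $\UU_r$-walk in a co-bipartite co-contig only visits semiblocks of $N[B_i]$; \texttt{join} contributes $O(u+\min\{\epAdd,\epRemove\})$ with $u\le|N(B_1)|+|N(B_2)|$, since every universal semiblock of $\Psi$ must already be universal in some $H[V_i]$ and is therefore adjacent to $B_i$.

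The main obstacle I anticipate is tightening the $\UU_r$-walk to $O(|N(B_i)|)$ steps: one must show that when $\Phi_i$ is a co-bipartite co-contig with complement-bipartition sides $\{A,B\}$ and $B_i\in A$, every semiblock visited by the walk lies in $N[B_i]$, since $A$ is a clique containing $B_i$ and $B$ is entered only through opposite-side blocks, each of which must be adjacent to $B_i$ to avoid creating an early non-adjacency that would trigger case~(c). The disconnected co-contig case is a secondary subtlety but is handled in constant time by reading off the left end block of each clique-contig via its contig pointer.
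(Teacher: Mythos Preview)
Your structural equivalence is correct and matches the paper, which simply cites Theorem~\ref{thm:forbiddens PCA} for it. The gap is in the algorithm's time bound. You adapt the walk of Lemma~\ref{lemma:is cobipartite}, but its halting condition~\ref{lemma:is cobipartite:forbidden} relied on the inserted vertex $v$ being \emph{co}-adjacent to four of the $\UU_r^j(B)$; here any $w\in V_j$ is fully adjacent to $V_i$, so that condition can never fire, and your replacement ``four non-adjacencies among recent iterates accumulate'' is left undefined. Your fallback justification---that every visited block lies in $N[B_i]$---is false at the very first step: $U_r(B_i)=R(F_r(B_i))$ is by construction the first block \emph{not} adjacent to $B_i$, and in general the odd-indexed iterates need not lie in $N[B_i]$ at all. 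Without a bound on the walk length, detecting the repetition in your condition~(b) may take $\Theta(|\B(\Phi_i)|)$ time.

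The paper's walk uses a different invariant: stop at the least $q$ with $U_r^{2q}(B_i)\nto B_i$ or $U_r^{2q}(B_i)=U_r^{2q-2}(B_i)$. The \emph{even}-indexed iterates $U_r^0(B_i),U_r^2(B_i),\ldots,U_r^{2q-2}(B_i)$ all satisfy $U_r^{2p}(B_i)\to B_i$, hence lie in $[F_l(B_i),B_i]$ and are pairwise distinct; this alone gives $q\le |N(B_i)|$, with no control needed over where the odd iterates land. The test ``$U_r^{2p}(B_i)\to B_i$'' becomes $O(1)$ after pre-marking $[B_i,F_r(B_i)]$ and checking whether $F_r(U_r^{2p}(B_i))$ is marked. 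If termination is by $U_r^{2q}(B_i)\nto B_i$, one further scan of $[F_l(B_i),B_i]$ locates the largest $p$ with $U_r^{2q}(B_i)\nto U_r^{2p}(B_i)$, and $\{U_r^{2p}(B_i),\ldots,U_r^{2q}(B_i)\}\cup\{B_j\}$ is the minimally forbidden (an odd co-cycle plus a universal vertex); otherwise $U_r^{2q-2}(B_i)$ is a co-end block and one proceeds to \texttt{join}.
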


\begin{proof}
  The fact that $H$ is PCA if and only if $H[V_1]$ and $H[V_2]$ are co-bipartite PCA graphs follows from Theorem~\ref{thm:forbiddens PCA}.  
  
  The algorithm to detect if $H$ is PCA is as follows.  Let $B_i$ be any block of $\Phi_i$ ($\{i,j\} = \{1,2\}$), and $q$ be the minimum such that either $U_r^{2q}(B_i) \nto B_i$ or $U_r^{2q}(B_i) = U_r^{2q-2}(B_i)$.  Note that, since $U_r^{2p}(B_i) \toeq B_i$, the blocks $B_i, U_r^{2p+1}(B_i), U_r^{2p+2}(B_i), U_r^{2p}(B_i)$ appear in this order in $\B(\Phi_i)$ for every $0 \leq p < q$ (see Figure~\ref{fig:H not co-connected}~(a)).  Consequently, the value $q$ is well defined, and the blocks of $\Phi_i$ appear as in Figure~\ref{fig:H not co-connected}~(b).  Therefore, if $p$ is the maximum such that $U_r^{2q}(B_i) \nto U_r^{2p}(B_i)$, then either $p = q-1$ or $\B = \{U_r^{2p}(B_i), \ldots, U_r^{2q}(B_i)\}$ induces an odd co-cycle.  In the former case, $U_r^{2q}(B_i) = U_r^{2q-2}(B_i)$ is a co-end block, while, in the latter case, $\B \cup \{B_j\}$ is a minimally forbidden of $H$ for every $B_j \in \B(\Phi_j)$.  Replacing $i$ by $j$, we can find a minimally forbidden when $\Phi_j$ has no co-end blocks.  When both $\Phi_1$ and $\Phi_2$ have co-end blocks, we can {\tt join} $\Phi_1$ and $\Phi_2$ into a round block representation of $H$ as in Section~\ref{sec:data structure:round representations}.
  
      \begin{figure}
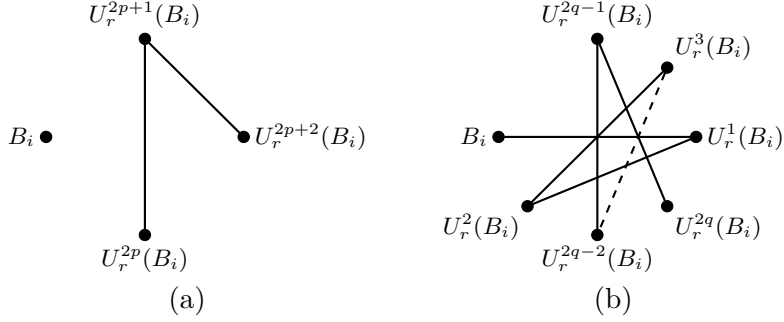

      \centering
      \begin{tabular}{c@{\hspace{1cm}}c}
        \input{srcFigHnotCoconnectedOrderI} & \input{srcFigHnotCoconnectedOrder} \\
         (a)  &  (b)
      \end{tabular}
      \caption{Adjacencies of $\overline{H[V_i]}$ in Theorem~\ref{thm:H not co-connected}.  The blocks are drawn as they appear in the circular ordering $\B(\Phi_i)$.  Note that $U_r^{2q}(B_i) = U_r^{2q-2}(B_i)$ when $H[V_i]$ is co-bipartite.}\label{fig:H not co-connected}
    \end{figure}

  To compute the sequence $B_i, \ldots, U_r^{2q}(B_i)$ we proceed as follows.  First, we mark all the blocks in $[B_i, F_r(B_i)]$. Then, $U_r^{2i}(B_i) \to B_i$ if and only if $F_r(U_r^{2i}(B_i))$ is marked; thus, $q$ is the minimum value such that $F_r(U_r^{2q}(B))$ is not marked.  Then, to obtain the value $p$, first note that $p = 0$ if $q = 1$.  When $q > 1$, we traverse $[F_l(B_i), B_i]$ while looking for $F_r(U_r^{2q}(B_i))$; then $U_r^{2p+2}(B_i)$ is the last block of the traversed sequence.  Since $B_i$ and $B_j$ are adjacent to every block in $[F_l(B_i), F_r(B_i)]$, the cost of this algorithm is $O(\min\{|N(B_i)|, |N(B_j)|\})$.
\end{proof}

\section{The certifying recognition algorithms}
\label{sec:recognition algorithms}

By Theorem~\ref{thm:forbiddens PCA}, at most three iterations of Phase~1 in Section~\ref{sec:incremental} can be completed without finding a minimally forbidden.  Hence, since each iteration of Phase~1 costs $O(d(v) + \epTest)$ time, and Phase~2 costs $O(d(v))$ time, we obtain the main result of the previous section: there is an $O(d(v) + \epTest)$ algorithm that transforms a round block representation $\Gamma$ of $H \setminus \{v\}$ into a round block representation $\Psi$ of $H$, unless a minimally forbidden is obtained.  Note that the algorithm ignores the straightness invariant of $\Gamma$, and it does not ensure the straightness invariant for $\Psi$.  The straightness invariant, instead, is required for the recognition of PIG graphs.  Fortunately, we can restore the straightness invariant in $O(1)$ time with Corollary~\ref{cor:straightness invariant} below.  Before describing this corollary, we define what a locally straight representation is.

Recall that a semiblock $B$ of a round representation $\Phi$ is long when $F_r(B) \to F_l(B)$.  When no block of $\Phi$ is long, $\Phi$ is said to be a \emph{locally straight representation}.  A graph $G$ is a \emph{proper Helly circular-arc (PHCA)} graph if it is isomorphic to $G(\Phi)$ for some locally straight representation $\Phi$.  As it is shown in~\cite{LinSoulignacSzwarcfiterDAM2013}, $G$ is a PHCA graph if and only if it admits a PCA model in which no two nor three arcs cover the circle.  The following results imply Corollary~\ref{cor:straightness invariant} below.

\begin{theorem}[\cite{LinSoulignacSzwarcfiterDAM2013}]\label{thm:forbiddens PHCA}
  A PCA graph is a PHCA graph if and only if it contains no $W_4$ or $S_3$ as induced subgraphs, where $W_4$ is the graph obtained after inserting a universal vertex in $C_4$.
\end{theorem}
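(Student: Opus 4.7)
The plan is to prove the iff by standard forbidden-subgraph reasoning, treating the two directions separately with very different arguments. Throughout, ``PHCA'' is captured combinatorially as admitting a locally straight representation, i.e., one with no long semiblock $B$ (no $B$ such that $F_r(B)\to F_l(B)$). Since every locally straight representation is in particular a round representation, the class inclusion PHCA $\subseteq$ PCA is immediate, so the content of the theorem is exactly what characterizes the gap.

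For the forward direction, I would verify directly that neither $W_4$ nor $S_3$ admits a locally straight representation. For $W_4$, suppose for contradiction that $\Phi$ is a locally straight representation and let $B$ be the (universal) block containing the apex vertex. Universality forces $[F_l(B),F_r(B)]=\B(\Phi)$, so $\B(\Phi)\setminus\{B\}$ is a circular range of length four. These four blocks induce a $C_4$, and one checks by a direct case analysis over the positions in $\B(\Phi)$ that the alternating adjacency pattern of $C_4$ combined with each block being adjacent to $B$ forces at least one block $W\ne B$ to satisfy $F_r(W)\to F_l(W)$, contradicting local straightness. For $S_3$ the argument is analogous: enumerating the few possible cyclic arrangements of its blocks in any round representation, each yields a long semiblock.

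For the converse, take a round representation $\Phi$ of a PCA graph $G$ containing neither $W_4$ nor $S_3$ as induced subgraph; I would show that $\Phi$ can be modified to become locally straight. The natural strategy is to pick a long block $B$ of $\Phi$ (if any) and extract, from the neighborhood configuration around $B$, an explicit induced $W_4$ or $S_3$ of $G$, thus producing a contradiction and concluding that no long block exists. Concretely, the long-block relation $F_r(B)\to F_l(B)$ plus the definitions of $U_l(B)$ and $U_r(B)$ implies a rich family of forced adjacencies and non-adjacencies among $B$, $F_l(B)$, $F_r(B)$, $U_l(B)$, $U_r(B)$, mirroring the style of arguments used in Lemmas~\ref{lemma:bad block long} and~\ref{lemma:bad block short}. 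A case analysis on how many distinct blocks lie in $[F_r(B),F_l(B)]$ and on the adjacencies among $U_l(B), U_r(B)$ and their neighbors would split into subcases each producing either a $C_4$ together with a universal block (yielding $W_4$) or the specific bipartite-like configuration of $S_3$.

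The main obstacle is precisely this last case analysis. Unlike the PCA forbidden-subgraphs of Theorem~\ref{thm:forbiddens PCA}, which come in infinite families and are handled by ``unfolding'' long co-paths, here the two obstructions $W_4$ and $S_3$ are finite, so every long-block configuration must reduce, in $O(1)$ semiblocks, to one of these two graphs. The delicate point is covering every possibility for how the blocks around a long $B$ can coincide or split: in particular, the case in which $U_l(B)=U_r(B)$ must be handled separately from $U_l(B)\neq U_r(B)$, and within each one must further split according to whether $F_l(B)$ and $F_r(B)$ coincide and whether additional blocks sit between them. Working out these subcases carefully, exactly as done in~\cite{LinSoulignacSzwarcfiterDAM2013}, is what completes the proof.
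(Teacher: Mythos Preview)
The paper does not prove this statement at all: Theorem~\ref{thm:forbiddens PHCA} is quoted from~\cite{LinSoulignacSzwarcfiterDAM2013} and used as a black box (together with Theorem~\ref{thm:characterization PHCA} and Lemma~\ref{lemma:certification PHCA}, also quoted from the same source) in the proof of Corollary~\ref{cor:straightness invariant}. So there is no ``paper's own proof'' to compare against; your proposal is an attempt to reconstruct an argument for a cited external result.

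That said, your converse argument has a real gap. You write that you would take \emph{a} round representation $\Phi$ of $G$, pick a long block $B$, and derive an induced $W_4$ or $S_3$, ``thus producing a contradiction and concluding that no long block exists.'' But PHCA only requires that \emph{some} round representation be locally straight; it does not say that \emph{every} round representation of $G$ is locally straight. A PHCA graph can perfectly well admit a round representation with a long block, so the mere presence of a long $B$ in one particular $\Phi$ cannot by itself force $W_4$ or $S_3$ in $G$. The surrounding results quoted in this paper already hint at the right shape of the argument: Theorem~\ref{thm:characterization PHCA} shows that a universal block in a circular contig can sometimes be \emph{relocated} to produce a linear (hence locally straight) contig, and Lemma~\ref{lemma:certification PHCA} isolates the genuinely obstructive configuration (three pairwise $\to$-linked non-universal blocks). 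So the converse is not a pure ``long block $\Rightarrow$ forbidden subgraph'' contradiction; it is a mixture of re-representing when possible and extracting $W_4$ or $S_3$ only when re-representation is impossible. Your plan as stated skips this distinction.
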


\begin{theorem}[\cite{LinSoulignacSzwarcfiterDAM2013}]\label{thm:characterization PHCA}
  If $B$ is the universal block of a contig $\phi$, then either 1.~$\phi$ is linear, 2.~$F_r(L(B)) = B$ or 3.~$G(\phi)$ is not PHCA.  If 2., then $F_r(B)$ witnesses that $\Bip{R(B)}{L(B)}$ is receptive in $\phi \setminus \{B\}$, and its $B$-reception is a linear contig representing $G(\phi)$.
\end{theorem}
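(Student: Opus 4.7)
My plan is to handle the trichotomy first and then verify the receptive-pair claim. Case~1 being vacuous, assume $\phi$ is circular, and write $A = L(B)$, $D = R(B)$. Condition~(ii) of a contig puts $F_r(A) \in [B, F_r(B)]$, so $A \to B$ always, and case~2 is precisely $F_r(A) = B$; when case~2 fails, $F_r(A)$ extends strictly past $B$, forcing $A \to D$.

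Assume case~2. The representation $\phi \setminus \{B\}$ is a linear contig: removing $B$ strips $A$ of its only right-reach, so $A$ becomes a right end, and $D$ is the left end since the only $W$ with $W \to D$ in $\phi$ was $B$. Taking $B_l = D$, $B_r = A$ and $B_m = F_r(B)$, I would verify that $B_m$ satisfies either \witnessref{1} (when $B_m = D$) or \witnessref{2}. The key clauses are $B_l \toeq B_m$, which follows from condition~(i) applied to $B$ (giving $F_r(B) \in (B, F_r(D)]$, whence $D \toeq F_r(B)$), and $\RR(B_m) \toeq B_r$, which follows by iterating condition~(i) along the monotone chain $F_r(B), F_r(R(B)), F_r(R^2(B)), \ldots$ until it reaches $A$; the remaining clauses $B_m \nto R(B_r)$ and $L(B_l) \nto R(B_m)$ hold vacuously since $R(B_r) = L(B_l) = \bot$ in the linear $\phi \setminus \{B\}$. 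The $B$-reception then reinserts a block with neighborhood $\bigcup[B_l, B_r] = V(\phi) \setminus B$, matching the universal adjacencies of $B$ and producing a linear contig for $G(\phi)$.

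When case~2 fails I would exhibit an induced $W_4$, which by Theorem~\ref{thm:forbiddens PHCA} shows $G(\phi)$ is not PHCA. Take $B$ as the universal center and $W_1 = A$, $W_4 = D$, connected by the new edge $AD$ obtained from case~2 failing. For the $C_4$, I need $W_2 \in N(A) \setminus N(D)$ and $W_3 \in N(D) \setminus N(A)$ with $W_2 \sim W_3$. Such $W_2, W_3$ exist because $B$ is the \emph{unique} universal block (any two universal blocks would be twins and so merge by block-maximality), so both $A$ and $D$ must miss some block. Concretely, pick $W_2$ as the rightmost block in $[F_l(A), A)$ non-adjacent to $D$ and $W_3$ as the leftmost block in $(D, F_r(D)]$ non-adjacent to $A$; in the minimal five-block configuration these reduce to $L(A)$ and $R(D)$, which are consecutive and hence adjacent.

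The main obstacle is securing the middle edge $W_2 W_3$ in contigs with many blocks. My plan is a reduction argument: if any block $X$ lies strictly between $W_2$ and $W_3$ in the cyclic order, the extremal choices of $W_2, W_3$ force $X$ to be adjacent to exactly one of $A$, $D$; replacing the corresponding $W_i$ by $X$ then shortens the gap between $W_2$ and $W_3$ while preserving the required $C_4$ pattern. Iterating collapses the configuration to where $W_2$ and $W_3$ are consecutive blocks, making $W_2 W_3$ an edge by condition~(i) of the contig definition and completing the induced $W_4$.
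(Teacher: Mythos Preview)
The paper does not prove this theorem; it is quoted verbatim from \cite{LinSoulignacSzwarcfiterDAM2013} and used as a black box in Corollary~\ref{cor:straightness invariant}.  So there is no ``paper's own proof'' to compare against, and your attempt has to stand on its own.

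Your treatment of case~2 is essentially correct.  One small correction: the appeal to \witnessref{1} when $B_m = D$ does not work, since $B_m \toeq B_r$ would read $D \toeq A$ and force $D$ to be universal (hence a twin of $B$).  But this is harmless: your verification of \witnessref{2} already covers $B_m = D$, because in the linear contig $\phi \setminus \{B\}$ both $L(B_l)$ and $R(B_r)$ are $\bot$.

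The genuine gap is in case~3.  Your reduction argument asserts that any block $X$ strictly between $W_2$ and $W_3$ is adjacent to \emph{exactly one} of $A, D$, but this is false in general.  Concretely, $W_3$ is chosen as the leftmost block of $(D, F_r(D)]$ non-adjacent to $A$, so $A \to W$ for every $W \in (D, W_3)$; but blocks $X \in (W_3, F_r(A)]$ (when this range is nonempty) still satisfy $A \to X$ \emph{and} $D \to X$, i.e.\ they are adjacent to both, and your replacement step cannot process them.  Symmetrically, blocks in $(F_r(D), F_l(A))$ may be adjacent to neither $A$ nor $D$, which also stalls the reduction.  So the argument that the gap between $W_2$ and $W_3$ collapses to consecutive blocks does not go through as written.

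A cleaner route exploits Lemma~\ref{lemma:certification PHCA} rather than hunting directly for a $W_4$.  If some $W \in [F_l(A), A)$ satisfies $D \to W$, then $A \to D \to W \to A$ is a $\to$-cycle of non-universal blocks (note $W \neq B$ because condition~(iii) for $A$ gives $B \nto A$), and the lemma finishes.  The same applies to any $W \in (D, F_r(D)]$ with $W \to A$.  If neither of these occurs, one checks that $F_r(A) = F_r(D)$ (forced by condition~(i)) and $F_l(A) = F_l(D)$ (any $Z \in [F_l(D), F_l(A))$ would have $Z \to D$ hence $Z \to A$, contradicting $Z \notin [F_l(A), A)$), making $A$ and $D$ indistinguishable and contradicting block-maximality.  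This dichotomy --- either a $\to$-triangle of non-universal blocks, or $A$ and $D$ indistinguishable --- is what actually drives case~3, and your $W_4$ construction does not capture it.
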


\begin{lemma}[\cite{LinSoulignacSzwarcfiterDAM2013}]\label{lemma:certification PHCA}
  If a round representation $\Phi$ has three non-universal blocks $B_1$, $B_2$, $B_3$ such that $B_1 \to B_2$, $B_2 \to B_3$, and $B_3 \to B_1$, then $G(\Phi)$ is not PHCA.  
\end{lemma}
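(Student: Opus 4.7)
By Theorem~\ref{thm:forbiddens PHCA}, it suffices to exhibit an induced $W_4$ or $S_3$ inside $G(\Phi)$. The relation $B_3 \to B_1$ forces the contig $\phi$ containing the three blocks to be circular, so $\Phi = \{\phi\}$ and $V(G(\Phi)) = V(\phi)$; in particular non-universality of $B_i$ can be witnessed inside $\phi$ itself. The hypotheses $B_1 \to B_2$, $B_2 \to B_3$, $B_3 \to B_1$ place the three blocks in this cyclic order in $\B(\phi)$, and condition~(iii) of the contig definition forces the opposite non-relations $B_1 \nto B_3$, $B_2 \nto B_1$, $B_3 \nto B_2$.

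The first substantive step I would carry out is to locate the non-neighborhood of each $B_i$. From $B_2 \in (B_1, F_r(B_1)]$ together with $B_3 \notin (B_1, F_r(B_1)]$ one reads off $F_r(B_1) \in [B_2, B_3)$; dually, $B_3 \to B_1$ makes $B_3$ a left-neighbor of $B_1$ while $B_2 \nto B_1$ forbids $B_2$, yielding $F_l(B_1) \in (B_2, B_3]$. Hence $N[B_1] = [F_l(B_1), F_r(B_1)]$ is the cyclic arc running from $F_l(B_1)$ through $B_3, B_1, B_2$ to $F_r(B_1)$, and its complement is a non-empty (by non-universality) sub-arc of the open cyclic interval $(B_2, B_3)$. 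The symmetric arguments show that the blocks non-adjacent to $B_2$ lie in $(B_3, B_1)$ and those non-adjacent to $B_3$ lie in $(B_1, B_2)$. Since the three arcs $(B_2, B_3)$, $(B_3, B_1)$, $(B_1, B_2)$ are pairwise disjoint, \emph{every} non-neighbor of $B_i$ is automatically adjacent to both of the remaining $B_j, B_k$.

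With this in hand, I would pick one non-neighbor $W_i$ of $B_i$ for each $i$ (each exists by non-universality). The three $W_i$'s inhabit disjoint open arcs, so $B_1, B_2, B_3, W_1, W_2, W_3$ are six pairwise distinct blocks. In the induced subgraph on these six blocks only the edges among $\{W_1, W_2, W_3\}$ are not yet determined. If that triple is independent, the induced subgraph is exactly the $3$-sun $S_3$. Otherwise some edge $W_i W_j$ is present, and then $\{B_i, B_j, B_k, W_i, W_j\}$ induces $W_4$: the block $B_k$ is adjacent to the four others, while $B_i, B_j, W_i, W_j$ form an induced $C_4$ with missing diagonals $B_i W_i$ and $B_j W_j$. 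Either outcome, via Theorem~\ref{thm:forbiddens PHCA}, yields that $G(\Phi)$ is not PHCA. The one point that requires care is the positional analysis of $F_l(B_i)$ and $F_r(B_i)$ in paragraph two; once that is settled, the forbidden induced subgraph drops out by straightforward bookkeeping.
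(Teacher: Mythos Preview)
Your argument is correct. The paper does not give its own proof of this lemma; it is quoted from \cite{LinSoulignacSzwarcfiterDAM2013}. The only hint the paper provides is in the proof of the theorem that follows, where it asserts that $H[\B]$ is not PHCA for $\B = \{B_1,B_2,B_3,U_r(B_1),U_r(B_2),U_r(B_3)\}$, again citing \cite{LinSoulignacSzwarcfiterDAM2013}. That is precisely your construction with the specific choice $W_i = U_r(B_i)$, so your approach matches what the paper implicitly relies on; your version is slightly more general in that any non-neighbor of $B_i$ works, not only the first one to the right.
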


\begin{corollary}\label{cor:straightness invariant}
  Given a round block representation $\Psi$, it takes $O(1)$ time to transform $\Psi$ into a round block representation $\Psi'$ of $G(\Psi)$ that satisfies the straightness invariant.  Moreover, $\Psi'$ is locally straight when $G(\Psi)$ is PHCA.
\end{corollary}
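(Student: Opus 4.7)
The plan is to enumerate the structural possibilities for $\Psi$ and, whenever the straightness invariant is compromised, use Theorem~\ref{thm:characterization PHCA} to repair it in $O(1)$ time. If $|\Psi| > 1$, or $\Psi = \{\psi\}$ with $\psi$ linear, every contig is already linear; because $F_l(B) \leq B \leq F_r(B)$ inside any linear ordering, we cannot have $F_r(B) \to F_l(B)$, so no block is long and $\Psi$ is simultaneously straight-invariant-compliant and locally straight. I would just set $\Psi' = \Psi$ in this case.

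The only interesting case is $\Psi = \{\psi\}$ with $\psi$ circular. The semiblock path $\MP(\psi)$ has size at most $4$ and is already stored by the data structure, so in $O(1)$ time I can test whether $\psi$ has a universal block $B$ and, if so, whether $F_r(L(B)) = B$. When such $B$ exists with $F_r(L(B)) = B$, Theorem~\ref{thm:characterization PHCA} tells us that $F_r(B)$ witnesses $\Bip{R(B)}{L(B)}$ to be receptive in $\psi \setminus \{B\}$, and that its $B$-reception is a linear contig representing $G(\psi)$. Since $R(B)$, $F_r(B)$, and $L(B)$ are consecutive in the cyclic order, this reception is a local surgery that the primitives of Section~\ref{sec:data structure} perform in $O(1)$ time; the output contig is linear, hence free of long blocks, yielding a $\Psi'$ that meets both requirements.

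In every remaining subcase I would leave $\Psi$ unchanged and justify this by showing $G(\Psi)$ falls outside the relevant class. If $\psi$ has a universal block $B$ with $F_r(L(B)) \neq B$, Theorem~\ref{thm:characterization PHCA} directly gives that $G(\psi)$ is not PHCA, so neither the straightness invariant (which only concerns PIG graphs) nor the local-straightness conclusion is triggered. If $\psi$ has no universal block, I split on the presence of a long block. If one exists, $\MP(\psi) = B_1, B_2, B_3$ is a triangle of non-universal blocks satisfying the hypothesis of Lemma~\ref{lemma:certification PHCA}, so $G(\psi)$ is not PHCA. Otherwise $\psi$ has no long block (so $\Psi = \{\psi\}$ is already locally straight), and $\MP(\psi) = B_1, B_2, B_3, B_4$ is a circular semiblock path with $B_i \nto B_j$ for non-consecutive $i,j$; picking one vertex from each block exhibits an induced $C_4$, so by Theorem~\ref{thm:forbiddens PIG} the graph $G(\psi)$ is not PIG and nothing needs to be done.

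The main obstacle is that the corollary couples two separate obligations — straightness when $G(\Psi)$ is PIG, local straightness when $G(\Psi)$ is PHCA — and both must be discharged in $O(1)$ time without traversing $\psi$. The key leverage is that every certificate we need (either a universal block suitable for reception, or a $C_k$ or triangle-of-non-universal-blocks forbidden witness) already lives inside $\MP(\psi)$; the reception of Theorem~\ref{thm:characterization PHCA} is precisely the single operation available to convert a circular PHCA contig into a linear one, and the cases where it is unavailable are exactly those in which $G(\psi)$ fails to be PHCA.
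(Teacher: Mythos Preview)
Your case analysis is close in spirit to the paper's, but there is a concrete gap: the claim that the semiblock path $\MP(\psi)$ has size at most $4$ is false. For the cycle $C_n$ the unique round block representation has $\MP(\psi)$ consisting of all $n$ blocks, so $|\MP(\psi)|$ is unbounded. This breaks two steps of your argument. First, it invalidates your $O(1)$ justification for testing whether $\psi$ has a universal block --- you are scanning $\MP(\psi)$, and that scan is not constant time. Second, your last case (``no universal block, no long block'') concludes $\MP(\psi) = B_1,B_2,B_3,B_4$ and finds an induced $C_4$; in fact $|\MP(\psi)|$ can be any $k\geq 4$, and what you actually obtain is an induced $C_k$. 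The latter still certifies non-PIG, but you need to say so.

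The paper handles this by making $|\MP(\psi)|$ the \emph{primary} discriminator rather than the existence of a universal block. Checking whether $|\MP(\psi)| > 3$ is genuinely $O(1)$ (peek at the first four list entries). If $|\MP(\psi)| > 3$, the contig has no long block by the definition of a semiblock path, so $\Psi$ is already locally straight; and $\MP(\psi)$ yields an induced $C_k$ with $k\geq 4$, so $G(\psi)$ is not PIG and the straightness invariant is vacuously met. If $|\MP(\psi)| = 3$, only the three blocks $P_1,P_2,P_3$ of $\MP(\psi)$ are inspected. The fix to your argument is therefore to branch on $|\MP(\psi)|$ first; once you are in the $|\MP(\psi)|=3$ branch, the three path blocks satisfy $P_i\to P_{i+1}$ and $P_i\nto P_{i+2}$ by definition, so each is non-universal and Lemma~\ref{lemma:certification PHCA} applies directly --- you never need to search the whole of $\psi$ for a universal block.
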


\begin{proof}
  By Theorems~\ref{thm:forbiddens PIG}, \ref{thm:forbiddens PHCA},~\ref{thm:characterization PHCA}, and Lemma~\ref{lemma:certification PHCA}, the algorithm has nothing to do in the following situations because either 1.~$\Psi$ is straight, 2.~$\Psi$ is locally straight and $G(\Psi)$ has an induced cycle, or 3.~$G(\Psi)$ is not PHCA:
  \begin{itemize}
    \item $|\Psi| > 1$,
    \item $\Psi = \{\psi\}$ and $|\MP(\psi)| > 3$,
    \item $\Psi = \{\psi\}$, $|\MP(\psi)| = 3$ and no block of $\MP(\psi)$ is universal, or
    \item $\Psi = \{\psi\}$, $|\MP(\psi)| = 3$, $B \in \MP(\psi)$ is universal, and $F_r(L(B)) \neq B$.
  \end{itemize}
  Finally, if $\Psi = \{\psi\}$, $|\MP(\psi)| = 3$, $B \in \MP(\psi)$ is universal, and $F_r(L(B)) = B$, the algorithm moves $B$ to the position that follows $F_r(B)$ in a traversal of $\B(\Psi)$.  The block representation $\Psi'$ so obtained is straight by Theorem~\ref{thm:characterization PHCA}. Clearly, $O(1)$ time is enough to test the above conditions and to apply the required move using {\tt split} and {\tt join} (see Section~\ref{sec:data structure:round representations}).
\end{proof}

The main theorems of this article then follow.

\begin{theorem}
  Let $H$ be a graph with a vertex $v$, and $\Gamma$ be a round block representation of $H \setminus \{v\}$.  Given $\Gamma$ and $N(v)$, it takes $O(d(v) + \epTest)$ time to determine if $H$ is a PCA graph.  The algorithm transforms $\Gamma$ into a round block representation of $H$ that satisfies the straightness invariant, unless a minimally forbidden of $H$ is obtained.
\end{theorem}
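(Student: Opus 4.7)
The plan is to assemble the pieces developed in Section~\ref{sec:incremental} together with Corollary~\ref{cor:straightness invariant} into a single recognition procedure, and then verify that the running time adds up to $O(d(v) + \epTest)$. The algorithm has three stages: (a)~iteratively split off co-contigs whose blocks are co-adjacent to $v$ and merge them with a growing block co-contig $\psi_v$ of $H[W_i \cup \{v\}]$ (Phase~1 of Section~\ref{sec:incremental}); (b)~join $\psi_v$ with the remainder $\Gamma$ using Theorem~\ref{thm:H not co-connected} (Phase~2); (c)~normalize the resulting round block representation via Corollary~\ref{cor:straightness invariant} to restore the straightness invariant. Along the way, if any step reports failure it does so by emitting a minimally forbidden of $H$, and the procedure halts.

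For correctness I would proceed as follows. Initially $\Gamma_0 = \Gamma$ and $\psi_v$ is a fresh contig whose unique block is $\{v\}$. At iteration $i+1$, Step~1 locates a block of $\Gamma_i$ co-adjacent to $v$ and, with the help of Lemma~\ref{lemma:is cobipartite}, either produces a minimally forbidden or isolates the co-contig $\gamma_{i+1}$; Step~2 uses Lemma~\ref{lemma:restricted neighborhood} to compute $V_{i+1} \cap N(v)$ and then invokes Theorem~\ref{thm:both prime} on the co-connected pair $(\gamma_{i+1}, v)$, yielding a block co-contig $\psi_{i+1}$ of $H[V_{i+1} \cup \{v\}]$; Step~3 uses Lemma~\ref{lemma:insertion join} to merge $\psi_{i+1}$ into $\psi_v$. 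When Phase~1 exits, $\psi_v$ represents $H[W_k \cup \{v\}]$ and $\Gamma_k$ represents $G \setminus W_k$, and Theorem~\ref{thm:H not co-connected} delivers either a round block representation of $H$ or a minimally forbidden. Finally, Corollary~\ref{cor:straightness invariant} rewrites the output so that the straightness invariant holds without changing the represented graph.

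For the running time, the key observation (already noted in the paragraph preceding the theorem) is that by Theorem~\ref{thm:forbiddens PCA} the number of iterations of Phase~1 that can terminate without exhibiting a forbidden subgraph is bounded by a constant: if more than three non-trivial co-components $V_i$ arise, then $H$ contains an induced $K_{1,3}$ on universal representatives together with $v$, and this minimally forbidden is produced in $O(1)$ time at Step~3 via Lemma~\ref{lemma:insertion join}. Each iteration of Phase~1 costs $O(d(v) + \epTest)$ because Steps~1--3 call Lemma~\ref{lemma:is cobipartite}, Lemma~\ref{lemma:restricted neighborhood}, Theorem~\ref{thm:both prime}, and Lemma~\ref{lemma:insertion join}, whose running times have this bound. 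Phase~2 adds $O(d(v))$ time by Theorem~\ref{thm:H not co-connected}, and Corollary~\ref{cor:straightness invariant} contributes $O(1)$. Summing gives the claimed $O(d(v) + \epTest)$ bound.

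The main obstacle is really the bookkeeping: one must verify that the pointers, semiblock paths, and connectivity structure of Section~\ref{sec:data structure} remain consistent across the Split--Update--Join pattern of Phase~1 and that, whenever a subroutine reports a minimally forbidden of the subgraph it examines, that family is also a minimally forbidden of $H$. The latter follows because each subroutine's forbidden involves only blocks outside the parts already merged or is combined with universal representatives of the other pieces, whose adjacency to all examined blocks is fixed by the join decomposition $H = H[V_1] + \cdots + H[V_k] + H[N]$. Beyond that verification, the theorem is essentially a bookkeeping corollary of the results already in place.
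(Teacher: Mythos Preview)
Your proposal is correct and follows essentially the same approach as the paper: the paper does not give a formal proof of this theorem but instead notes (in the paragraph opening Section~\ref{sec:recognition algorithms}) that it follows by assembling Phases~1 and~2 of Section~\ref{sec:incremental}, bounding the number of Phase~1 iterations via Theorem~\ref{thm:forbiddens PCA}, and then applying Corollary~\ref{cor:straightness invariant} to restore the straightness invariant. Your write-up is actually more detailed than the paper's own justification; the only minor imprecision is the phrase ``$K_{1,3}$ on universal representatives together with $v$'' --- the forbidden that arises when too many co-components appear is detected inside Lemma~\ref{lemma:insertion join} rather than built directly from universal vertices, but this does not affect the correctness of your argument.
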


\begin{theorem}
  When a vertex $v$ of a round block representation $\Psi$ is given, $O(d(v) + \epRemove)$ time is enough to transform $\Psi$ into a round block representation of $G(\Psi) \setminus \{v\}$ that satisfies the straightness invariant.
\end{theorem}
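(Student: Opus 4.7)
The plan is to reduce the task to the round-representation-level {\tt remove} of Section~\ref{sec:data structure:round representations}, followed by a short cleanup to re-establish the block and straightness invariants. Let $B$ be the semiblock of $\Psi$ containing $v$ (located in $O(1)$ via {\tt semiblock}) and $\psi\in\Psi$ its contig.

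The first case is $|B|>1$: I would simply delete $v$ from $B$'s vertex list in $O(1)$ time. A direct neighborhood calculation shows that no new twin pair can arise: any potential twin $Y$ of $B\setminus\{v\}$ in $G(\Psi)\setminus\{v\}$ would have to satisfy $N_G[Y] = N_G[B]\setminus\{v\}$, but then $Y\cap N(v)=\emptyset$ (else $v\in N_G[Y]\setminus\{v\}$), whereas $B\setminus\{v\}\subseteq N_G[Y]$ and $Y\cap N(v)=\emptyset$ jointly imply that $Y$ is adjacent to $B\setminus\{v\}$ despite being non-adjacent to $v$---contradicting $B$ being a block of $\Psi$. So no semiblock structure or near-neighborhood changes, and $\Psi$ remains a round block representation of $G(\Psi)\setminus\{v\}$ satisfying the straightness invariant.

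The main case is $|B|=1$. I would first apply Lemma~\ref{lem:semiblock path remove} to rewrite the semiblock path of $\psi$ into those of the contigs that will appear after removal, in $O(\#[F_l(B),F_r(B)])=O(d(v))$ time. Then I would invoke {\tt remove($\Psi$,$\DS{B}$)}, which splices $\psi$ into at most two contigs, updates the pointers $\DS{\Psi}$ and $\DS{\Psi}^{-1}$, and maintains the connectivity structure, all in $O(\#[F_l(B),F_r(B)]+\epRemove)=O(d(v)+\epRemove)$ time, producing a round representation $\Phi$ of $G(\Psi)\setminus\{v\}$.

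To restore the block invariant in $\Phi$, I would use the following localization: any newly indistinguishable pair $\{X,Y\}$ must have exactly one of them, say $X$, adjacent to $B$ in $\Psi$, with $N[X]=N[Y]\cup\{B\}$. Since $N[X]$ is a contiguous range, $B$ must be either $F_l(X)$ or $F_r(X)$, which pins $X$ to a semiblock immediately adjacent to $B$ in $\psi$ (so $X\subseteq N(v)$) and forces $Y$ into the range $[R^\psi(B),F_r(X)]$ or $[F_l(X),L^\psi(B)]$ accordingly. Thus the candidates are found by walking the (already-traversed) ranges $(F_l(B),B)$ and $(B,F_r(B))$ and comparing $F_l,F_r$ values in $\Phi$; each required {\tt compact} costs $O(\min(|X|,|Y|))\le O(|X|)$, and because the $X$'s are pairwise disjoint subsets of $N(v)$, the total compaction cost is $O(d(v))$. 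Finally, a call to Corollary~\ref{cor:straightness invariant} restores the straightness invariant in $O(1)$ time.

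The hard part will be correctly localizing the post-removal fusion candidates and bounding their total cost by $O(d(v))$; once that localization is in hand, everything else is charged either against vertices in $N(v)$ or against the $\epRemove$ term already budgeted by the round-representation interface.
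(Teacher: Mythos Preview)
Your proposal is correct and follows essentially the same two-case split as the paper: delete $v$ from its block when $|B|>1$, otherwise invoke the round-representation {\tt remove} and then apply Corollary~\ref{cor:straightness invariant}. The paper's proof is extremely terse and simply asserts that {\tt remove} yields a round \emph{block} representation; you are more careful in explicitly identifying and compacting the (at most two) newly indistinguishable pairs and in bounding their cost by $O(d(v))$ via $X\subseteq N(v)$, which is a detail the paper elides.
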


\begin{proof}
  Let $B$ be the block that contains $v$.  If $|B| > 1$, then we remove $v$ out of $B$; otherwise we call {\tt remove($B$)} to transform $\Psi$ into a round block representation $\Phi$ of $H \setminus \{v\}$.  Afterwards, we apply Corollary~\ref{cor:straightness invariant} on $\Phi$ to restore the straightness invariant.  
\end{proof}

\begin{theorem}
  Given a round block representation $\Gamma$ of a graph $H$ that satisfies the straightness invariant, it takes $O(1)$ time to determine if $H$ is PHCA.  If $H$ is not PHCA, then the algorithm outputs $\Gamma|\B$ for a family of blocks $\B$ such that $H[\B]$ is isomorphic to either $W_4$ or $S_3$.
\end{theorem}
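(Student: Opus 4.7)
The plan is to reduce the PHCA test to a purely syntactic inspection of $\Gamma$, exploiting the invariant maintained by Corollary~\ref{cor:straightness invariant}: not only is $\Gamma$ straight whenever $G(\Gamma)$ is PIG, but, more strongly, $\Gamma$ is locally straight whenever $G(\Gamma)$ is PHCA. Since PHCA graphs are exactly those admitting a locally straight representation, checking PHCA amounts to checking whether $\Gamma$ itself is locally straight, and this can be read off the stored semiblock paths in constant time.

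First I would dispose of the easy cases. If $|\Gamma|>1$, or $\Gamma=\{\gamma\}$ with $\gamma$ linear, then $\Gamma$ is straight, so $G(\Gamma)$ is PIG and hence PHCA, and we return YES. Otherwise $\Gamma=\{\gamma\}$ with $\gamma$ circular, and I would inspect $|\MP(\gamma)|$. When $|\MP(\gamma)|>3$, the definition of the semiblock path forbids long semiblocks in $\gamma$, so $\Gamma$ is locally straight and again we return YES. The only remaining case is $\Gamma=\{\gamma\}$, $\gamma$ circular, $\MP(\gamma)=B_1,B_2,B_3$ with $B_1\to B_2\to B_3\to B_1$. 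Here I claim $G(\Gamma)$ is not PHCA: if none of $B_1,B_2,B_3$ is universal, Lemma~\ref{lemma:certification PHCA} applies directly; if some $B_i$ is universal, then necessarily $F_r(L(B_i))\neq B_i$ (otherwise the last bullet of Corollary~\ref{cor:straightness invariant} would have linearized $\gamma$), so Theorem~\ref{thm:characterization PHCA} yields non-PHCAness.

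To produce $\Gamma|\B$ in the non-PHCA case I would assemble $\B$ explicitly from $\{B_1,B_2,B_3\}$ plus a constant number of extra blocks reached through the $O(1)$-time navigators $L,R,F_l,F_r,U_l,U_r$. If some $B_i$ is universal, say $B_1$, then the condition $F_r(L(B_1))\neq B_1$ combined with the circular structure of $\gamma$ exhibits four blocks of $\gamma\setminus\{B_1\}$ that induce a $C_4$; together with $B_1$, which is adjacent to all of them, they induce $W_4$. If no $B_i$ is universal, each $B_i$ admits a non-neighbor block $C_i$ locatable via $F_r$/$F_l$ in constant time, and the six blocks $B_1,B_2,B_3,C_1,C_2,C_3$ induce $S_3$, following the witness construction underlying Lemma~\ref{lemma:certification PHCA} in~\cite{LinSoulignacSzwarcfiterDAM2013}. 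Since $|\B|\le 6$ and every navigation on $\Gamma$ is $O(1)$, both $\B$ and the induced sub-representation $\Gamma|\B$ are produced in constant time.

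The main subtlety, more than a technical difficulty, is being precise about which invariant is in force. The bare straightness invariant of Section~\ref{sec:data structure:witnesses} speaks only about PIG graphs, whereas the argument above leans on the strictly stronger property delivered by Corollary~\ref{cor:straightness invariant}; making this explicit is exactly what turns the cheap syntactic check ``$|\MP(\gamma)|=3$ in the unique circular contig'' into a certificate of non-PHCAness, and what pins down the small set of auxiliary blocks needed to expose the forbidden $W_4$ or $S_3$.
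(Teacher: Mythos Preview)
Your detection argument (answer YES unless $\Gamma=\{\gamma\}$ with $\gamma$ circular and $|\MP(\gamma)|=3$) is correct and coincides with the paper's. You are also right that the reasoning leans on the locally-straight-when-PHCA guarantee delivered by Corollary~\ref{cor:straightness invariant}, not merely the bare straightness invariant; the paper's phrase ``by the straightness invariant'' is to be read in that stronger sense.

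The gap is in your witness construction. In the non-universal branch you assert that $\{B_1,B_2,B_3,C_1,C_2,C_3\}$, with each $C_i$ a non-neighbour of $B_i$, induces $S_3$. This is false in general: nothing prevents the $C_i$ from being pairwise adjacent. Concretely, take a circular contig on six blocks in which $F_r$ always points two steps ahead; then $\MP(\gamma)=B_1,B_2,B_3$, no $B_i$ is universal, $C_i=U_r(B_i)$, and the six blocks induce $K_{2,2,2}$, which contains an induced $W_4$ but no induced $S_3$. Your universal branch is likewise underspecified: Theorem~\ref{thm:characterization PHCA} certifies non-PHCAness but does not hand you four concrete blocks forming an induced $C_4$ reachable via $O(1)$ navigations.

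The paper sidesteps the case split. It takes the single family $\B=\{B_1,B_2,B_3,U_r(B_1),U_r(B_2),U_r(B_3)\}$, cites~\cite{LinSoulignacSzwarcfiterDAM2013} for the fact that $H[\B]$ is not PHCA, and notes (as in the proof of Lemma~\ref{lemma:subprocedures}) that every adjacency inside $\B$ is computable in $O(1)$ time. Since $|\B|\le 6$ and $H[\B]$ is a non-PHCA PCA graph, Theorem~\ref{thm:forbiddens PHCA} forces an induced $W_4$ or $S_3$ inside $\B$, which is then located by brute force in $O(1)$. Your argument is easily repaired the same way: keep your six blocks, but replace ``they induce $S_3$'' by ``they induce a non-PHCA PCA graph on at most six vertices, hence an induced $W_4$ or $S_3$ can be extracted in $O(1)$''.
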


\begin{proof}
  The algorithm answers yes when $|\Gamma| > 1$ or $\Gamma = \{\gamma\}$ and $|\MP(\gamma)| > 3$.  Conversely, if $\Gamma = \{\gamma\}$ and $\MP(\gamma) = B_1, B_2, B_3$, then, by the straightness invariant, $H$ is not PHCA.  Moreover, $H[\B]$ is not PHCA for $\B$ $=$ \{$B_1$, $B_2$, $B_3$, $U_r(B_1)$, $U_r(B_2)$, $U_r(B_3)$\}~\cite{LinSoulignacSzwarcfiterDAM2013}.  As discussed in Lemma~\ref{lemma:subprocedures}, we can compute all the adjacencies of $H[\B]$ in $O(1)$ time.  
\end{proof}

\begin{theorem}
  Given a round block representation $\Gamma$ of a graph $H$ that satisfies the straightness invariant, it takes $O(1)$ time to determine if $H$ is PIG.  If $H$ is not PIG, then the algorithm outputs $\Gamma|\B$ for a family of blocks $\B$ such that $H[\B]$ is either a $S_3$ or a $C_k$ ($k \geq 4$).  
\end{theorem}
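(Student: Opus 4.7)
The plan is to leverage the straightness invariant: $H$ is PIG if and only if $\Gamma$ is straight. This follows since the invariant forces $\Gamma$ to be straight whenever $H$ is PIG, while conversely any straight block representation yields an interval model. Accordingly, the decision is made in $O(1)$ time by a single call to \texttt{straight($\Gamma$)}; if true, return YES. The rest of the work lies in producing a negative witness when $\Gamma$ is not straight.

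When $\Gamma$ is not straight, the definition of round representations forces $\Gamma=\{\gamma\}$ with $\gamma$ circular, so I would inspect the stored semiblock path $\MP(\gamma) = B_1, \ldots, B_k$, whose size is available in $O(1)$. If $k\geq 4$, the defining conditions of a semiblock path (consecutive blocks related by $\to$, non-consecutive pairs satisfying $\nto$) together with the contig axioms ensure that $B_1,\ldots,B_k$ induce a $C_k$ in $H$, and I would return $\Gamma|\MP(\gamma)$.

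The remaining case $k=3$ is where the difficulty concentrates, since $\MP(\gamma) = B_1, B_2, B_3$ just forms a triangle and is not itself a PIG forbidden. My plan is first to argue that $G(\gamma)$ fails to be PHCA: if no block in $\MP(\gamma)$ is universal, Lemma~\ref{lemma:certification PHCA} applies directly; otherwise some $B\in\MP(\gamma)$ is universal, and Corollary~\ref{cor:straightness invariant} combined with Theorem~\ref{thm:characterization PHCA} forces $F_r(L(B))\neq B$ (else the invariant-maintenance procedure would already have flattened $\gamma$), so the case~3 of Theorem~\ref{thm:characterization PHCA} triggers and $G(\gamma)$ is again non-PHCA. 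Having established non-PHCA status, I would invoke the PHCA-recognition theorem proved just before this one to obtain in $O(1)$ time a block family $\B'$ with $|\B'|\leq 6$ and $H[\B']$ isomorphic to $W_4$ or $S_3$. If $H[\B']\cong S_3$, return $\Gamma|\B'$; if $H[\B']\cong W_4$, identify its unique universal vertex $u$ by a constant-time scan (feasible since $|\B'|=O(1)$) and return $\Gamma|(\B'\setminus\{u\})$, which induces a $C_4$.

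The main obstacle is the $k=3$ case: one must carefully use the straightness invariant to rule out exception~2 of Theorem~\ref{thm:characterization PHCA}, then translate the $W_4$-or-$S_3$ witness supplied by the PHCA theorem into a $C_k$-or-$S_3$ witness for PIG. Once this reduction is made rigorous, the entire procedure reduces to a bounded case analysis and everything fits in $O(1)$ time.
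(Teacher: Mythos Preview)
Your proposal is correct and takes essentially the same approach as the paper: call \texttt{straight} to decide, and when $\Gamma=\{\gamma\}$ is circular, either output $\MP(\gamma)$ as a $C_k$ (for $k\geq4$) or, for $k=3$, reduce to the PHCA theorem and extract an $S_3$ or a $C_4$ from the $W_4$/$S_3$ witness. The only cosmetic difference is that the paper invokes the PHCA test first and then uses $\MP(\gamma)$ in the PHCA branch, whereas you branch on $k$ first and unpack the ``$k=3\Rightarrow$ not PHCA'' argument explicitly (reasoning the paper already packaged inside the PHCA theorem's proof).
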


\begin{proof}
  By the straightness invariant, all we need to do to test if $H$ is PIG is to call {\tt straight} (Section~\ref{sec:data structure:round representations}).  If $H$ is not straight, then we test if $H$ is PHCA.  If negative, then we extract an induced $S_3$ or $C_4$ from the output $\Gamma|\B$.  Otherwise, $\Gamma = \{\gamma\}$ and $\MP(\gamma)$ induces a $C_k$ ($k \geq 4$).
\end{proof}

\subsection{The authentication problems}

As discussed in Section~\ref{sec:introduction}, we can conceive three types of checkers, namely static, dynamic, and monitors.  The static checker, which has the simplest implementation, authenticates the witnesses against the static graph $G$.  The dynamic checker, instead, test the witness obtained after one operation is applied on a round block representation $\Phi$.  Although it is more efficient than applying the static checker for each update, the dynamic checker requires some extra effort as different tests are performed for the different updates.  Finally, the monitor is a new layer between the end user and the dynamic algorithm that checks the correct behavior of $\Phi$ and the witnesses it generates.  To do its work in the most efficient way, the monitor requires privileged access to some operations that are restricted to the final user~\cite{BrightSullivan1995}.  Thus, the implementation of the monitor is not as simple as for the checkers, as it requires some knowledge about the internal representation of $\Phi$.  In this section we briefly discuss the static and dynamic checkers, and we skim through a possible design of a monitor.

\subsubsection{The static checker}

The static checker authenticates that a witness $W$ is correct for a graph $G$.  Of course, the correctness depends on the recognition problem we are dealing with and on whether $W$ is positive or negative.  Since we consider three problems, i.e., the recognition of PIG, PHCA, and PCA graphs, the static checker has to solve six problems.

When $G$ is claimed to be PCA, the witness is a round block representation $\Phi$ of $G$.  To authenticate $\Phi$, the checker tests if:
\begin{itemize}
  \item each $\phi \in \Phi$ is a block contig,
  \item $G(\Phi)$ is isomorphic to $G$,
  \item every block of $G$ corresponds to a block of $\Phi$, and
  \item if required, $\Phi$ is (locally) straight.
\end{itemize}
On the other hand, if $G$ is declared as not being a member of the class, then the negative witness is a minimally forbidden $\Bip{\Phi}{\N}$, where $\N = \emptyset$ when the problem is the recognition of PIG or PHCA graphs.  To authenticate that $\Bip{\Phi}{\N}$ is correct, the checker builds the graph $F$ represented by $\Bip{\Phi}{\N}$, and then it tests that $F$ is isomorphic to an induced subgraph of $G$.

It is not hard to see that these problems can be solved in $O(n+m)$ time.  Moreover, the implementation of the checker is simple as desired.

\subsubsection{The dynamic checker}

The static checker is optimal for the authentication of static graphs.  However, its time complexity is excessive when compared to the time required by an update of the round block representation $\Gamma$.  Thus, the static checker is not well suited for the dynamic algorithm.  The dynamic checker, instead, tries to authenticate the witness against $\Gamma$.  Of course, the authentication depends on the applied update and the kind of witness obtained.

Suppose we want to authenticate a successful insertion.  In this case, the input is $N(v)$ and a round block representation $\Gamma$.  The output is the vertex $v$, and the witness is the round block representation $\Psi$ of $G$ that satisfies the straightness invariant.  Let $B$ be the block of $\Psi$ that contains $v$.  To authenticate $\Psi$, we check that:
\begin{enumerate}[(i)]
  \item $\Psi$ is a round block representation that satisfies the straightness invariant,
  \item $N(v) = \bigcup[F_l(B), F_r(B)]$,
  \item $G(\Psi) \setminus \{v\}$ is isomorphic to $G(\Gamma)$, and 
  \item the vertices of $G(\Gamma)$ appear in the same blocks in $\Psi \setminus \{\{v\}\}$ and $\Gamma$.
\end{enumerate}
There is a large asymmetry between the {\tt insert} operation and its authentication.  Whereas the former deals mostly with $N(v)$, the latter tests all the blocks of $\Psi$.  There are three reasons why the checker must look at the complete structure.  First, because the dynamic algorithm could modify $O(n)$ far neighbors even when $d(v) = O(1)$ (e.g., when the universal block $B \in \Gamma$ is separated into $-B$ and $+B$).  Second, and most important, because we cannot assure that a buggy implementation of {\tt insert} updates only the blocks it is supposed to.  Third, because the specification of {\tt insert} requires $G(\Gamma)$ to be isomorphic to $G(\Psi) \setminus \{v\}$, and no more guaranties are provided.

It is not hard to see that (i)~and~(ii) can be implemented in $O(n)$ time.  For (iii)~and~(iv), the checker works as follows.  Let $\Phi = \Psi \setminus \{\{v\}\}$.  First, the checker looks for all the co-end blocks in both $\Phi$ and $\Gamma$.  Note that $\Phi$ needs not be a block representation.  However, it is not hard to consider the twin semiblocks of $\Phi$ as being part of the same block; we omit the details.  Also, observe that $\Phi$ is not actually computed; instead $v$ is ignored in $\Psi$.  In the second step, the algorithm checks that the co-end blocks of $\Phi$ and $\Gamma$ coincide.  If not, the checker reports that the implementation is buggy.  When all the co-end blocks coincide, the checker traverses each co-contig $\phi$ of $\Phi$ to test that the remaining blocks appear in the same order as in $\Gamma$ (or its reversal).  If negative, then the checker outputs that the implementation is buggy; otherwise, both (iii)~and~(iv) hold.  The correctness of this algorithm follows from the fact that co-connected PCA graphs admit exactly two round block representations, one the reverse of the other~\cite{HuangJCTSB1995}.  Note that the dynamic checker for the insertion runs in $O(n)$ time.  Its implementation, however, is not as simple as the one for the static checker.

The authentication required for {\tt remove} is similar and can be implemented in $O(n)$ time as well.  Analogously, the authentication that $\Phi$ is either straight or locally straight, required for {\tt forbiddenPIG} and {\tt forbiddenPHCA}, takes $O(n)$ time.  Finally, to verify a negative witness $\Bip{\Phi}{\N}$, the checker tests that $\Phi$ is indeed a representation induced from $\Gamma$, and that $\N$ contains the neighbors of $v$ in $\Phi$.  Both of these tests can be easily implemented in $O(n)$ time.

\subsubsection{The monitor}

Although the dynamic checker is much faster than the static one, it is still too expensive when compared to the update operations.  Unfortunately, the dynamic checker is optimal when no details about the implementation can be exploited.  When we have access to the implementation of the data structure, we can \emph{monitor} each operation to ensure its correctness~\cite{BrightSullivan1995,McConnellMehlhornNaeherSchweitzerCSR2011}.  Recall that the dynamic algorithm deals with five data structures, namely contigs, semiblock paths, round representations, connectivity structures, and witnesses.  The idea is to implement these data types in a way that we can trust all of them.  

To make the above statement more precise, consider the {\tt separate} operation of contigs.  Recall that {\tt separate($\DS{B}$, $W$)} transforms the contig $\gamma$ referenced by $\DS{B}$ into the contig $\phi$ that represents $G(\phi)$ by splitting $B$ into two indistinguishable semiblocks $B \setminus W$ and $W$ in such a way that $R^\phi(W) = R^\gamma(B)$, $L^\phi(W) = B \setminus W$, and $L^\phi(B \setminus W) = L^\gamma(B)$.  To verify that {\tt separate} is correct, a checker must guarantee, among other things, that $\phi$ represents $G(\gamma)$.  There are at least two inconveniences that the checker must confront.  First, a buggy implementation could fail to update $F_r$ for some neighbor of $B$.  Second, there could be $O(n)$ semiblocks that have $B$ as its right far neighbor in $\Phi$, and all of them should reference $W$ in $\Gamma$.  Thus, if the data structure is unknown, then the checker must traverse $O(n)$ semiblocks to authenticate $\gamma$.  However, the implementation spends $O(1)$ time to simultaneously update all the right far neighbors.  In fact, the algorithm consists of swapping two \emph{self} pointers~\cite{HellShamirSharanSJC2001,SoulignacA2015}.  If we were given access to the self pointers, then we could test that the swap is correct.  A second and more pragmatic approach is to consider that such a swap is correct by definition.  The reason is that proving the correctness of an implementation of swap is as simple, if not simpler, than authenticating the output of swap.  Moreover, if we cannot trust the implementation of swap, then we cannot trust the implementation of the monitor either.  

In a similar way as described for the update of far pointers, we may assume that a contig $\phi$ provides other basic operations, which are accessible only to the monitor, that are correct by definition.  However, some operations are harder to implement and should be monitored.  We differentiate three types of errors that impact on the design of $\phi$ and its monitor.
\begin{description}
  \item [(Improper) access errors] arise when a portion of the data structure that should not be accessed is modified.  For instance, only the semiblocks in $[B_l, B_r]$ need to be updated in {\tt reception($B_l$, $B_r$)}.  We consider those modifications to semiblocks outside $[B_l, B_r]$ as access errors.   There are at least two basic methods for dealing with access errors.  The simplest one is to ignore the error; this strategy is appropriate if we can assure that the error will be caught when the modified portion of the structure is accessed.  The alternative method is to use some kind of supervised memory that tracks all the updates of the data structure.  Then, the monitor can refuse those operations that access a restricted portion of the memory.  The first approach is used in~\cite{McConnellMehlhornNaeherSchweitzerCSR2011} for ordered dictionaries.  When the monitor asks the dictionary to {\tt insert} a pair $(k, i)$, the dictionary could (erroneously) erase an item $(k', i')$.  Such misbehavior is not detected by the monitor until it tries to access $(k',i')$.  Thus, the monitor is not certifying the whole data structure for {\tt insert}, but only that the insertion takes place where it should.  We remark that missing such an error is not critical for dictionaries, because the elements that it holds are independent of each other.  For contigs, perhaps it is better to take actions immediately using the supervised memory solution.
  
  \item [Memory errors] occur when an uncontrolled memory location is accessed.  To deal with uncontrolled memory locations, we can follow the same technique as in~\cite{McConnellMehlhornNaeherSchweitzerCSR2011}.  That is, each semiblock $B \in \Phi$ keeps the position of a semiblock pointer $\DS{B}$ in an array $T$ of ``trusted'' memory.  This array is controlled by the monitor to ensure that each access to $B$ is correct.  To authenticate the access to $B$, the monitor access its position of $T$ and uses $\DS{B}$ to control that $B$ was under the control of the data structure.
  
  \item [Logical errors] happen when an operations does not behave as it is supposed to, but accessing only the portions of the data structure to which they have access.  Suppose, for instance, that the monitor is asked to perform {\tt reception($B_l$, $B_r$)} on $\phi$.  The monitor forwards this operation to the data structure and it obtains the semiblock $B$ that contains $v$ on $\psi$.  When $B$ is not an end semiblock, the monitor outputs that the implementation is buggy if some of the following check fails.
  \begin{itemize}
    \item $F_l(B) = B_l$ and $F_r(B) = B_r$,
    \item $F_r^\psi(W) = B$ for every $W \in [B_l, B)$ such that $F_r^\phi(W) = L^\psi(B)$,
    \item $F_r^\psi(W) = F_r^\phi(W)$ for every $W \in [B_l, B)$ such that $F_r^\phi(W) \neq L^\psi(B)$,
    \item $F_l^\psi(W) = \{v\}$ for every $W \in (B, B_r]$ such that $F_l^\phi(W) = R^\psi(\{v\})$, and
    \item $F_l^\psi(W) = F_l^\phi(W)$ for every $W \in (B, B_r]$ such that $F_l^\phi(W) \neq R^\psi(\{v\})$.
  \end{itemize}
  The case in which $B$ is an end semiblock is handled similarly.
\end{description}

Using the above techniques, we can authenticate all the operations on contigs.  Then, the remaining data types should be monitored as well.  Suppose we need to check that {\tt reception($B_l$, $B_r$)} works as specified for a round representation $\Phi$.  A priori, the only operation of contigs that should be invoked is the trusted {\tt reception} with inputs $B_l$ and $B_r$.  Thus, any other update on the contigs should be regarded as an access error.  Following the supervised memory solution, we may ask the monitor of contigs to track the updates that it performs.  Then, the monitor of $\Phi$ can observe that the only update on its contigs was the {\tt reception} of $B_l$ and $B_r$.  Since this operation is under supervision, we many assume it is correct, thus we only need to check the logical errors.  In this case, that the obtained contig is not circular when $|\Psi| > 1$ for the obtained round representation.

\section{Conclusions}
\label{sec:conclusions}

We designed a new dynamic algorithm for the recognition of PCA, PHCA, and PIG graphs that allows vertex updates.  The algorithm keeps a round block representation $\Phi$ of the input graph $G$ that can be regarded as being a positive witness.  When the insertion of $v$ into $G = H \setminus \{v\}$ fails, the algorithms exhibits a minimally forbidden induced subgraph $F$ of $H$.  To work as fast as possible, the algorithm keeps a partial view of $F \setminus \{v\}$ that contains all but $O(d(v))$ vertices of $F$.  The problem of finding a negative certificate when edges updates are allowed is left open.

The certifying algorithm is optimal when applied for the recognition of static graphs, as it runs in $O(d(v))$ time per inserted vertex.  The algorithm is almost optimal when both insertions and removal are allowed, as it requires $O(d(v) + \log n)$ time per operation and the lower bound in the cell probe model of computation with word-size $b$ is $\Omega(d(v) + \log n/(\log\log n + \log b))$~\cite{HellShamirSharanSJC2001}.  

Regarding the authentication problem, we considered three possibilities, each one giving rise to a different kind of checker.  Static checkers test the result of the algorithm for a static graph $G$.  Its input is $G$ together with either a round representation $\Phi$ or a graph $F$, and the goal is to verify that $\Phi$ is a round block representation of $G$ or that $F$ is a minimally forbidden induced subgraph of $G$.  Dynamic checkers, instead, test that an operation on a round block representation $\Phi$ is successful.  Its input, then, is $\Phi$ plus the input of the operation and either a round representation $\Psi$ or a minimally forbidden $F$.  The goal in this case is to verify that $\Psi$ is a round block representation of the graph $H$ that should be obtained from $G(\Phi)$ when the operation is applied or to test that $F$ represents a minimally forbidden induced subgraph of $H$.  By definition, the problems associated to the static and dynamic checkers are static and require $\Omega(n+m)$ and $\Omega(n)$ time in the worst case as either $G$ or $\Phi$ have to be traversed, respectively.  Monitors, instead, are dynamic algorithms (i.e., data structures) that sit between the user and the round block representation $\Phi$ of the dynamic graph $G$.  When a monitor has access to some privileged (query) operations on $\Phi$, the time required for the authentication can be reduced.  In this article we skim through the process of designing a monitor for the algorithm which, we believe, could be used to authenticate each operation in $O(t)$ time, where $t$ is the time required by the operation itself.  There is no proof of this fact, as the monitor is incomplete; yet, we discussed some issues that can arise when such a monitor is developed.

\small

\appendix
\newgeometry{left=1.5cm,right=1.5cm,top=2cm,bottom=2.5cm}
\small
\twocolumn
\section{Adequacy proofs}
\label{app:adequacy}

In this appendix we include the proofs that a family of semiblocks $\B$ is an adequate forbidden.  These proofs were generated by a computer program, which explains why all the sections have the same structure.  Each section references the lemma in which it is required.  Then, a summary of the \emph{current knowledge} of $H[\B \cup \{v\}]$ is depicted. By current knowledge we mean that the edges that we actually know that belong or could belong to $H[\B]$ are depicted.  Together with this graph, we describe four fields:
\begin{itemize}
  \item $\B$ contains all the semiblocks in $\B$ in the order in which they appear in the round representation.
  \item $F_r$ shows the values of $F_r$ but only for those semiblocks whose values of $F_r$ cannot be deduced otherwise.  Also, this is depicted taking into account only those edges that belong to $H[\B]$ using our current knowledge.
  \item $N(v)$ shows the semiblocks to which $v$ is adjacent, according to our current knowledge.
  \item Rest includes all the adjacencies that could be added into $H[\B]$.
\end{itemize}
With this information, we can enumerate all the possible subgraphs that $\B$ could induce in $H$.  Of course, there is only one such possibility for $H[\B]$ when Rest is empty; in such a case, this summary is not depicted (see e.g., Section~\ref{app:subprocedures:left to right}).

After the summary, there is one subsection dealing with each possibility for $H[\B]$, except for those that are duplicated.  Each subsection includes the case stating ``If $\langle$case$\rangle$, then $\langle$forbidden subgraph$\rangle$'', where the forbidden subgraph is highlighted in blue.  There are two kinds of duplicated possibilities: those in which $H[\B]$ is isomorphic to a case already examined, and those that are included in some other case.  The former are ignored, while the latter are described in a section entitled ``Implied cases''. 

\raggedright
\input{lemmaLeftToRight}
\input{lemmaZeroTriangle}
\input{lemmaCenterNotDominating}
\input{lemmaCenterAndRightDominatedCase1}
\subsection{Lemma~\ref{lemma:subprocedures}~(\ref{lemma:subprocedures:center and right dominated})}\label{app:subprocedures:center and right dominated:case 2}
\begin{description}
\item[]$\mathcal{B}$: $U_l(T_1)$, $F_l(T_1)$, $F_l(T_2)$, $T_1$, $T_2$, $T_3$, $F_r(T_1)$
\item[]$F_r$: \mbox{$U_l(T_1)$ $\to$ $F_l(T_1)$}, \mbox{$F_l(T_2)$ $\to$ $T_2$}, \mbox{$F_l(T_1)$ $\to$ $T_1$}, \mbox{$T_1$ $\to$ $F_r(T_1)$}
\item[]$N(v)$: $T_2$
\item[]Rest: \mbox{$F_r(T_1)$ $\to$ $U_l(T_1)$}, \mbox{$U_l(T_1)$ $\to$ $F_l(T_2)$}, \mbox{($v$, $U_l(T_1)$)}, \mbox{($v$, $F_r(T_1)$)}, \mbox{($v$, $F_l(T_1)$)}, \mbox{($v$, $F_l(T_2)$)}
\end{description}\opt{TIKZ}{
\tikzsetnextfilename{app-subprocedures-center-and-right-dominated-case-2}
	  \begin{center}
	  \begin{tikzpicture}[vertex/.style={shape=circle, draw=black, fill=black, line width=0pt, minimum size=1.5mm, inner sep=0mm, outer sep=0mm}, %
			      highlight vertex/.style={vertex,draw=blue, fill=blue},
			      essential edge/.style={thick,->,>=stealth}, 
			      implied edge/.style={thin,->,>=stealth, draw=gray},
			      highlight edge/.style={essential edge, ultra thick, draw=blue},
			      highlight implied edge/.style={implied edge, draw=blue},
			      every label/.style={font=\footnotesize, inner sep=0pt, outer sep=0pt},
			      label distance=2pt,
	  ]
	\node [vertex] (T_3) at (1.732,-0.499) [label=below right:$T_3$] {};
\node [vertex] (T_2) at (0.999,-0.866) [label=below right:$T_2$] {};
\node [vertex] (U_lT_1) at (-2.0,0.0) [label= left:$U_l(T_1)$] {};
\node [vertex] (F_lT_1) at (-1.732,-0.499) [label=below left:$F_l(T_1)$] {};
\node [vertex] (v) at (0.0,1.0) [label=above:$v$] {};
\node [vertex] (F_lT_2) at (-1.0,-0.866) [label=below left:$F_l(T_2)$] {};
\node [vertex] (F_rT_1) at (2.0,0.0) [label= right:$F_r(T_1)$] {};
\node [vertex] (T_1) at (0.0,-1.0) [label=below:$T_1$] {};
\draw [implied edge,bend right=20] (T_3) to (F_rT_1);
\draw [implied edge,bend right=20] (T_2) to (F_rT_1);
\draw [implied edge,bend right=20] (T_2) to (T_3);
\draw [essential edge,bend right=20] (U_lT_1) to (F_lT_1);
\draw [essential edge,bend right=20] (F_lT_1) to (T_1);
\draw [implied edge,bend right=20] (F_lT_1) to (F_lT_2);
\draw [essential edge] (v) to (T_2);
\draw [essential edge,bend right=20] (F_lT_2) to (T_2);
\draw [implied edge,bend right=20] (F_lT_2) to (T_1);
\draw [essential edge,bend right=20] (T_1) to (F_rT_1);
\draw [implied edge,bend right=20] (T_1) to (T_3);
\draw [implied edge,bend right=20] (T_1) to (T_2);
\draw [essential edge,dashed] (v) to (F_lT_2);
\draw [essential edge,dashed] (v) to (F_rT_1);
\draw [essential edge,bend right=20,dashed] (F_rT_1) to (U_lT_1);
\draw [essential edge,dashed] (v) to (U_lT_1);
\draw [essential edge,dashed] (v) to (F_lT_1);
\draw [essential edge,bend right=20,dashed] (U_lT_1) to (F_lT_2);
\end{tikzpicture}
\end{center}}
\opt{NOTIKZ}{\begin{center}\includegraphics{\imgs 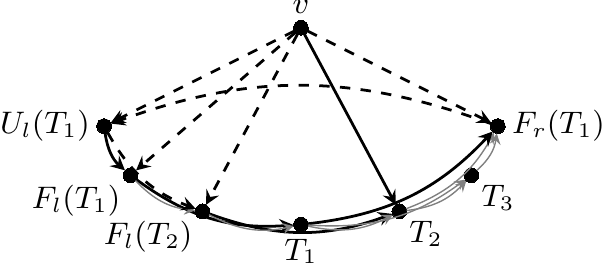}\end{center}}
\subsubsection{}\label{app:subprocedures:center and right dominated:case 2:1}If $\emptyset$, then $K_{1,3}$\opt{TIKZ}{
\tikzsetnextfilename{app-subprocedures-center-and-right-dominated-case-2-1}
	  \begin{center}
	  \begin{tikzpicture}[vertex/.style={shape=circle, draw=black, fill=black, line width=0pt, minimum size=1.5mm, inner sep=0mm, outer sep=0mm}, %
			      highlight vertex/.style={vertex,draw=blue, fill=blue},
			      essential edge/.style={thick,->,>=stealth}, 
			      implied edge/.style={thin,->,>=stealth, draw=gray},
			      highlight edge/.style={essential edge, ultra thick, draw=blue},
			      highlight implied edge/.style={implied edge, draw=blue},
			      every label/.style={font=\footnotesize, inner sep=0pt, outer sep=0pt},
			      label distance=2pt,
	  ]
	\node [highlight vertex] (T_3) at (1.732,-0.499) [label=below right:$T_3$] {};
\node [highlight vertex] (T_2) at (0.999,-0.866) [label=below right:$T_2$] {};
\node [vertex] (U_lT_1) at (-2.0,0.0) [label= left:$U_l(T_1)$] {};
\node [vertex] (F_lT_1) at (-1.732,-0.499) [label=below left:$F_l(T_1)$] {};
\node [highlight vertex] (v) at (0.0,1.0) [label=above:$v$] {};
\node [highlight vertex] (F_lT_2) at (-1.0,-0.866) [label=below left:$F_l(T_2)$] {};
\node [vertex] (F_rT_1) at (2.0,0.0) [label= right:$F_r(T_1)$] {};
\node [vertex] (T_1) at (0.0,-1.0) [label=below:$T_1$] {};
\draw [implied edge,bend right=20] (T_3) to (F_rT_1);
\draw [implied edge,bend right=20] (T_2) to (F_rT_1);
\draw [highlight implied edge,bend right=20] (T_2) to (T_3);
\draw [essential edge,bend right=20] (U_lT_1) to (F_lT_1);
\draw [essential edge,bend right=20] (F_lT_1) to (T_1);
\draw [implied edge,bend right=20] (F_lT_1) to (F_lT_2);
\draw [highlight edge] (v) to (T_2);
\draw [highlight edge,bend right=20] (F_lT_2) to (T_2);
\draw [implied edge,bend right=20] (F_lT_2) to (T_1);
\draw [essential edge,bend right=20] (T_1) to (F_rT_1);
\draw [implied edge,bend right=20] (T_1) to (T_3);
\draw [implied edge,bend right=20] (T_1) to (T_2);
\end{tikzpicture}
\end{center}}
\opt{NOTIKZ}{\begin{center}\includegraphics{\imgs 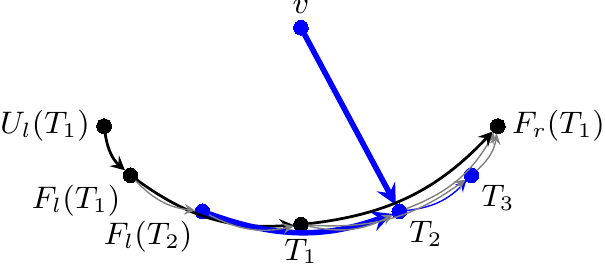}\end{center}}
\subsubsection{}\label{app:subprocedures:center and right dominated:case 2:2}If \mbox{($v$, $F_l(T_2)$)}, then $H_1$\opt{TIKZ}{
\tikzsetnextfilename{app-subprocedures-center-and-right-dominated-case-2-2}
	  \begin{center}
	  \begin{tikzpicture}[vertex/.style={shape=circle, draw=black, fill=black, line width=0pt, minimum size=1.5mm, inner sep=0mm, outer sep=0mm}, %
			      highlight vertex/.style={vertex,draw=blue, fill=blue},
			      essential edge/.style={thick,->,>=stealth}, 
			      implied edge/.style={thin,->,>=stealth, draw=gray},
			      highlight edge/.style={essential edge, ultra thick, draw=blue},
			      highlight implied edge/.style={implied edge, draw=blue},
			      every label/.style={font=\footnotesize, inner sep=0pt, outer sep=0pt},
			      label distance=2pt,
	  ]
	\node [highlight vertex] (T_3) at (1.732,-0.499) [label=below right:$T_3$] {};
\node [vertex] (T_2) at (0.999,-0.866) [label=below right:$T_2$] {};
\node [highlight vertex] (U_lT_1) at (-2.0,0.0) [label= left:$U_l(T_1)$] {};
\node [highlight vertex] (F_lT_1) at (-1.732,-0.499) [label=below left:$F_l(T_1)$] {};
\node [highlight vertex] (v) at (0.0,1.0) [label=above:$v$] {};
\node [highlight vertex] (F_lT_2) at (-1.0,-0.866) [label=below left:$F_l(T_2)$] {};
\node [vertex] (F_rT_1) at (2.0,0.0) [label= right:$F_r(T_1)$] {};
\node [highlight vertex] (T_1) at (0.0,-1.0) [label=below:$T_1$] {};
\draw [implied edge,bend right=20] (T_3) to (F_rT_1);
\draw [implied edge,bend right=20] (T_2) to (F_rT_1);
\draw [implied edge,bend right=20] (T_2) to (T_3);
\draw [highlight edge,bend right=20] (U_lT_1) to (F_lT_1);
\draw [highlight edge,bend right=20] (F_lT_1) to (T_1);
\draw [highlight implied edge,bend right=20] (F_lT_1) to (F_lT_2);
\draw [essential edge] (v) to (T_2);
\draw [highlight edge] (v) to (F_lT_2);
\draw [essential edge,bend right=20] (F_lT_2) to (T_2);
\draw [highlight implied edge,bend right=20] (F_lT_2) to (T_1);
\draw [essential edge,bend right=20] (T_1) to (F_rT_1);
\draw [highlight implied edge,bend right=20] (T_1) to (T_3);
\draw [implied edge,bend right=20] (T_1) to (T_2);
\end{tikzpicture}
\end{center}}
\opt{NOTIKZ}{\begin{center}\includegraphics{\imgs 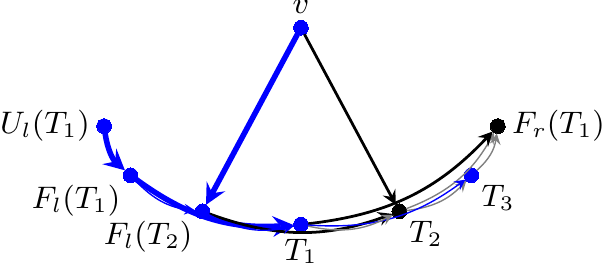}\end{center}}
\subsubsection{}\label{app:subprocedures:center and right dominated:case 2:3}If \mbox{($v$, $F_l(T_2)$)}, \mbox{($v$, $U_l(T_1)$)}, then $C_4^*$\opt{TIKZ}{
\tikzsetnextfilename{app-subprocedures-center-and-right-dominated-case-2-3}
	  \begin{center}
	  \begin{tikzpicture}[vertex/.style={shape=circle, draw=black, fill=black, line width=0pt, minimum size=1.5mm, inner sep=0mm, outer sep=0mm}, %
			      highlight vertex/.style={vertex,draw=blue, fill=blue},
			      essential edge/.style={thick,->,>=stealth}, 
			      implied edge/.style={thin,->,>=stealth, draw=gray},
			      highlight edge/.style={essential edge, ultra thick, draw=blue},
			      highlight implied edge/.style={implied edge, draw=blue},
			      every label/.style={font=\footnotesize, inner sep=0pt, outer sep=0pt},
			      label distance=2pt,
	  ]
	\node [highlight vertex] (T_3) at (1.732,-0.499) [label=below right:$T_3$] {};
\node [vertex] (T_2) at (0.999,-0.866) [label=below right:$T_2$] {};
\node [highlight vertex] (U_lT_1) at (-2.0,0.0) [label= left:$U_l(T_1)$] {};
\node [highlight vertex] (F_lT_1) at (-1.732,-0.499) [label=below left:$F_l(T_1)$] {};
\node [highlight vertex] (v) at (0.0,1.0) [label=above:$v$] {};
\node [highlight vertex] (F_lT_2) at (-1.0,-0.866) [label=below left:$F_l(T_2)$] {};
\node [vertex] (F_rT_1) at (2.0,0.0) [label= right:$F_r(T_1)$] {};
\node [vertex] (T_1) at (0.0,-1.0) [label=below:$T_1$] {};
\draw [implied edge,bend right=20] (T_3) to (F_rT_1);
\draw [implied edge,bend right=20] (T_2) to (F_rT_1);
\draw [implied edge,bend right=20] (T_2) to (T_3);
\draw [highlight edge,bend right=20] (U_lT_1) to (F_lT_1);
\draw [essential edge,bend right=20] (F_lT_1) to (T_1);
\draw [highlight implied edge,bend right=20] (F_lT_1) to (F_lT_2);
\draw [essential edge] (v) to (T_2);
\draw [highlight edge] (v) to (F_lT_2);
\draw [highlight edge] (v) to (U_lT_1);
\draw [essential edge,bend right=20] (F_lT_2) to (T_2);
\draw [implied edge,bend right=20] (F_lT_2) to (T_1);
\draw [essential edge,bend right=20] (T_1) to (F_rT_1);
\draw [implied edge,bend right=20] (T_1) to (T_3);
\draw [implied edge,bend right=20] (T_1) to (T_2);
\end{tikzpicture}
\end{center}}
\opt{NOTIKZ}{\begin{center}\includegraphics{\imgs 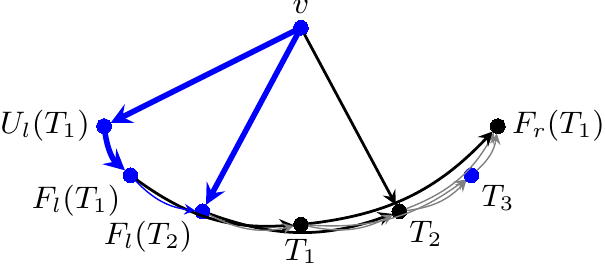}\end{center}}
\subsubsection{}\label{app:subprocedures:center and right dominated:case 2:4}If \mbox{($v$, $F_l(T_2)$)}, \mbox{($v$, $F_l(T_1)$)}, then $K_{1,3}$\opt{TIKZ}{
\tikzsetnextfilename{app-subprocedures-center-and-right-dominated-case-2-4}
	  \begin{center}
	  \begin{tikzpicture}[vertex/.style={shape=circle, draw=black, fill=black, line width=0pt, minimum size=1.5mm, inner sep=0mm, outer sep=0mm}, %
			      highlight vertex/.style={vertex,draw=blue, fill=blue},
			      essential edge/.style={thick,->,>=stealth}, 
			      implied edge/.style={thin,->,>=stealth, draw=gray},
			      highlight edge/.style={essential edge, ultra thick, draw=blue},
			      highlight implied edge/.style={implied edge, draw=blue},
			      every label/.style={font=\footnotesize, inner sep=0pt, outer sep=0pt},
			      label distance=2pt,
	  ]
	\node [vertex] (T_3) at (1.732,-0.499) [label=below right:$T_3$] {};
\node [vertex] (T_2) at (0.999,-0.866) [label=below right:$T_2$] {};
\node [highlight vertex] (U_lT_1) at (-2.0,0.0) [label= left:$U_l(T_1)$] {};
\node [highlight vertex] (F_lT_1) at (-1.732,-0.499) [label=below left:$F_l(T_1)$] {};
\node [highlight vertex] (v) at (0.0,1.0) [label=above:$v$] {};
\node [vertex] (F_lT_2) at (-1.0,-0.866) [label=below left:$F_l(T_2)$] {};
\node [vertex] (F_rT_1) at (2.0,0.0) [label= right:$F_r(T_1)$] {};
\node [highlight vertex] (T_1) at (0.0,-1.0) [label=below:$T_1$] {};
\draw [implied edge,bend right=20] (T_3) to (F_rT_1);
\draw [implied edge,bend right=20] (T_2) to (F_rT_1);
\draw [implied edge,bend right=20] (T_2) to (T_3);
\draw [highlight edge,bend right=20] (U_lT_1) to (F_lT_1);
\draw [highlight edge,bend right=20] (F_lT_1) to (T_1);
\draw [implied edge,bend right=20] (F_lT_1) to (F_lT_2);
\draw [essential edge] (v) to (T_2);
\draw [essential edge] (v) to (F_lT_2);
\draw [highlight edge] (v) to (F_lT_1);
\draw [essential edge,bend right=20] (F_lT_2) to (T_2);
\draw [implied edge,bend right=20] (F_lT_2) to (T_1);
\draw [essential edge,bend right=20] (T_1) to (F_rT_1);
\draw [implied edge,bend right=20] (T_1) to (T_3);
\draw [implied edge,bend right=20] (T_1) to (T_2);
\end{tikzpicture}
\end{center}}
\opt{NOTIKZ}{\begin{center}\includegraphics{\imgs 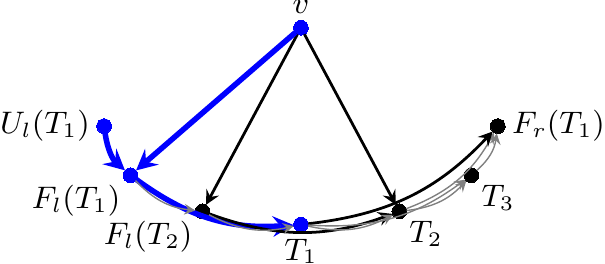}\end{center}}
\subsubsection{}\label{app:subprocedures:center and right dominated:case 2:5}If \mbox{$U_l(T_1)$ $\to$ $F_l(T_2)$}, \mbox{($v$, $F_l(T_2)$)}, then $K_{1,3}$\opt{TIKZ}{
\tikzsetnextfilename{app-subprocedures-center-and-right-dominated-case-2-5}
	  \begin{center}
	  \begin{tikzpicture}[vertex/.style={shape=circle, draw=black, fill=black, line width=0pt, minimum size=1.5mm, inner sep=0mm, outer sep=0mm}, %
			      highlight vertex/.style={vertex,draw=blue, fill=blue},
			      essential edge/.style={thick,->,>=stealth}, 
			      implied edge/.style={thin,->,>=stealth, draw=gray},
			      highlight edge/.style={essential edge, ultra thick, draw=blue},
			      highlight implied edge/.style={implied edge, draw=blue},
			      every label/.style={font=\footnotesize, inner sep=0pt, outer sep=0pt},
			      label distance=2pt,
	  ]
	\node [vertex] (T_3) at (1.732,-0.499) [label=below right:$T_3$] {};
\node [vertex] (T_2) at (0.999,-0.866) [label=below right:$T_2$] {};
\node [highlight vertex] (U_lT_1) at (-2.0,0.0) [label= left:$U_l(T_1)$] {};
\node [vertex] (F_lT_1) at (-1.732,-0.499) [label=below left:$F_l(T_1)$] {};
\node [highlight vertex] (v) at (0.0,1.0) [label=above:$v$] {};
\node [highlight vertex] (F_lT_2) at (-1.0,-0.866) [label=below left:$F_l(T_2)$] {};
\node [vertex] (F_rT_1) at (2.0,0.0) [label= right:$F_r(T_1)$] {};
\node [highlight vertex] (T_1) at (0.0,-1.0) [label=below:$T_1$] {};
\draw [implied edge,bend right=20] (T_3) to (F_rT_1);
\draw [implied edge,bend right=20] (T_2) to (F_rT_1);
\draw [implied edge,bend right=20] (T_2) to (T_3);
\draw [highlight edge,bend right=20] (U_lT_1) to (F_lT_2);
\draw [implied edge,bend right=20] (U_lT_1) to (F_lT_1);
\draw [essential edge,bend right=20] (F_lT_1) to (T_1);
\draw [implied edge,bend right=20] (F_lT_1) to (F_lT_2);
\draw [essential edge] (v) to (T_2);
\draw [highlight edge] (v) to (F_lT_2);
\draw [essential edge,bend right=20] (F_lT_2) to (T_2);
\draw [highlight implied edge,bend right=20] (F_lT_2) to (T_1);
\draw [essential edge,bend right=20] (T_1) to (F_rT_1);
\draw [implied edge,bend right=20] (T_1) to (T_3);
\draw [implied edge,bend right=20] (T_1) to (T_2);
\end{tikzpicture}
\end{center}}
\opt{NOTIKZ}{\begin{center}\includegraphics{\imgs 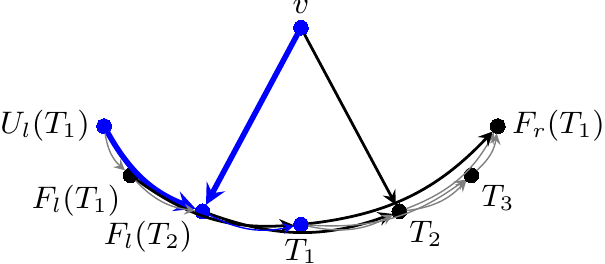}\end{center}}
\subsubsection{}\label{app:subprocedures:center and right dominated:case 2:6}If \mbox{($v$, $F_l(T_2)$)}, \mbox{($v$, $U_l(T_1)$)}, \mbox{($v$, $F_l(T_1)$)}, then $H_3$\opt{TIKZ}{
\tikzsetnextfilename{app-subprocedures-center-and-right-dominated-case-2-6}
	  \begin{center}
	  \begin{tikzpicture}[vertex/.style={shape=circle, draw=black, fill=black, line width=0pt, minimum size=1.5mm, inner sep=0mm, outer sep=0mm}, %
			      highlight vertex/.style={vertex,draw=blue, fill=blue},
			      essential edge/.style={thick,->,>=stealth}, 
			      implied edge/.style={thin,->,>=stealth, draw=gray},
			      highlight edge/.style={essential edge, ultra thick, draw=blue},
			      highlight implied edge/.style={implied edge, draw=blue},
			      every label/.style={font=\footnotesize, inner sep=0pt, outer sep=0pt},
			      label distance=2pt,
	  ]
	\node [highlight vertex] (T_3) at (1.732,-0.499) [label=below right:$T_3$] {};
\node [highlight vertex] (T_2) at (0.999,-0.866) [label=below right:$T_2$] {};
\node [highlight vertex] (U_lT_1) at (-2.0,0.0) [label= left:$U_l(T_1)$] {};
\node [highlight vertex] (F_lT_1) at (-1.732,-0.499) [label=below left:$F_l(T_1)$] {};
\node [highlight vertex] (v) at (0.0,1.0) [label=above:$v$] {};
\node [highlight vertex] (F_lT_2) at (-1.0,-0.866) [label=below left:$F_l(T_2)$] {};
\node [vertex] (F_rT_1) at (2.0,0.0) [label= right:$F_r(T_1)$] {};
\node [highlight vertex] (T_1) at (0.0,-1.0) [label=below:$T_1$] {};
\draw [implied edge,bend right=20] (T_3) to (F_rT_1);
\draw [implied edge,bend right=20] (T_2) to (F_rT_1);
\draw [highlight implied edge,bend right=20] (T_2) to (T_3);
\draw [highlight edge,bend right=20] (U_lT_1) to (F_lT_1);
\draw [highlight edge,bend right=20] (F_lT_1) to (T_1);
\draw [highlight implied edge,bend right=20] (F_lT_1) to (F_lT_2);
\draw [highlight edge] (v) to (T_2);
\draw [highlight edge] (v) to (F_lT_2);
\draw [highlight edge] (v) to (F_lT_1);
\draw [highlight edge] (v) to (U_lT_1);
\draw [highlight edge,bend right=20] (F_lT_2) to (T_2);
\draw [highlight implied edge,bend right=20] (F_lT_2) to (T_1);
\draw [essential edge,bend right=20] (T_1) to (F_rT_1);
\draw [highlight implied edge,bend right=20] (T_1) to (T_3);
\draw [highlight implied edge,bend right=20] (T_1) to (T_2);
\end{tikzpicture}
\end{center}}
\opt{NOTIKZ}{\begin{center}\includegraphics{\imgs 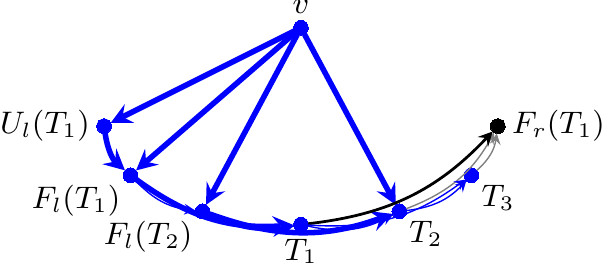}\end{center}}
\subsubsection{}\label{app:subprocedures:center and right dominated:case 2:7}If \mbox{$U_l(T_1)$ $\to$ $F_l(T_2)$}, \mbox{($v$, $F_l(T_2)$)}, \mbox{($v$, $U_l(T_1)$)}, then $W_5$\opt{TIKZ}{
\tikzsetnextfilename{app-subprocedures-center-and-right-dominated-case-2-7}
	  \begin{center}
	  \begin{tikzpicture}[vertex/.style={shape=circle, draw=black, fill=black, line width=0pt, minimum size=1.5mm, inner sep=0mm, outer sep=0mm}, %
			      highlight vertex/.style={vertex,draw=blue, fill=blue},
			      essential edge/.style={thick,->,>=stealth}, 
			      implied edge/.style={thin,->,>=stealth, draw=gray},
			      highlight edge/.style={essential edge, ultra thick, draw=blue},
			      highlight implied edge/.style={implied edge, draw=blue},
			      every label/.style={font=\footnotesize, inner sep=0pt, outer sep=0pt},
			      label distance=2pt,
	  ]
	\node [vertex] (T_3) at (1.732,-0.499) [label=below right:$T_3$] {};
\node [highlight vertex] (T_2) at (0.999,-0.866) [label=below right:$T_2$] {};
\node [highlight vertex] (U_lT_1) at (-2.0,0.0) [label= left:$U_l(T_1)$] {};
\node [highlight vertex] (F_lT_1) at (-1.732,-0.499) [label=below left:$F_l(T_1)$] {};
\node [highlight vertex] (v) at (0.0,1.0) [label=above:$v$] {};
\node [highlight vertex] (F_lT_2) at (-1.0,-0.866) [label=below left:$F_l(T_2)$] {};
\node [vertex] (F_rT_1) at (2.0,0.0) [label= right:$F_r(T_1)$] {};
\node [highlight vertex] (T_1) at (0.0,-1.0) [label=below:$T_1$] {};
\draw [implied edge,bend right=20] (T_3) to (F_rT_1);
\draw [implied edge,bend right=20] (T_2) to (F_rT_1);
\draw [implied edge,bend right=20] (T_2) to (T_3);
\draw [highlight edge,bend right=20] (U_lT_1) to (F_lT_2);
\draw [highlight implied edge,bend right=20] (U_lT_1) to (F_lT_1);
\draw [highlight edge,bend right=20] (F_lT_1) to (T_1);
\draw [highlight implied edge,bend right=20] (F_lT_1) to (F_lT_2);
\draw [highlight edge] (v) to (T_2);
\draw [highlight edge] (v) to (F_lT_2);
\draw [highlight edge] (v) to (U_lT_1);
\draw [highlight edge,bend right=20] (F_lT_2) to (T_2);
\draw [highlight implied edge,bend right=20] (F_lT_2) to (T_1);
\draw [essential edge,bend right=20] (T_1) to (F_rT_1);
\draw [implied edge,bend right=20] (T_1) to (T_3);
\draw [highlight implied edge,bend right=20] (T_1) to (T_2);
\end{tikzpicture}
\end{center}}
\opt{NOTIKZ}{\begin{center}\includegraphics{\imgs 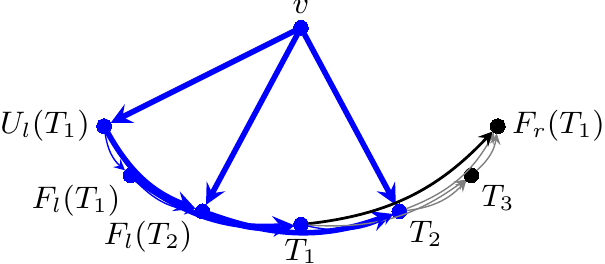}\end{center}}
\subsubsection{}\label{app:subprocedures:center and right dominated:case 2:8}If \mbox{$U_l(T_1)$ $\to$ $F_l(T_2)$}, \mbox{($v$, $F_l(T_2)$)}, \mbox{($v$, $U_l(T_1)$)}, \mbox{($v$, $F_l(T_1)$)}, then $H_2$\opt{TIKZ}{
\tikzsetnextfilename{app-subprocedures-center-and-right-dominated-case-2-8}
	  \begin{center}
	  \begin{tikzpicture}[vertex/.style={shape=circle, draw=black, fill=black, line width=0pt, minimum size=1.5mm, inner sep=0mm, outer sep=0mm}, %
			      highlight vertex/.style={vertex,draw=blue, fill=blue},
			      essential edge/.style={thick,->,>=stealth}, 
			      implied edge/.style={thin,->,>=stealth, draw=gray},
			      highlight edge/.style={essential edge, ultra thick, draw=blue},
			      highlight implied edge/.style={implied edge, draw=blue},
			      every label/.style={font=\footnotesize, inner sep=0pt, outer sep=0pt},
			      label distance=2pt,
	  ]
	\node [highlight vertex] (T_3) at (1.732,-0.499) [label=below right:$T_3$] {};
\node [highlight vertex] (T_2) at (0.999,-0.866) [label=below right:$T_2$] {};
\node [highlight vertex] (U_lT_1) at (-2.0,0.0) [label= left:$U_l(T_1)$] {};
\node [highlight vertex] (F_lT_1) at (-1.732,-0.499) [label=below left:$F_l(T_1)$] {};
\node [highlight vertex] (v) at (0.0,1.0) [label=above:$v$] {};
\node [highlight vertex] (F_lT_2) at (-1.0,-0.866) [label=below left:$F_l(T_2)$] {};
\node [vertex] (F_rT_1) at (2.0,0.0) [label= right:$F_r(T_1)$] {};
\node [highlight vertex] (T_1) at (0.0,-1.0) [label=below:$T_1$] {};
\draw [implied edge,bend right=20] (T_3) to (F_rT_1);
\draw [implied edge,bend right=20] (T_2) to (F_rT_1);
\draw [highlight implied edge,bend right=20] (T_2) to (T_3);
\draw [highlight edge,bend right=20] (U_lT_1) to (F_lT_2);
\draw [highlight implied edge,bend right=20] (U_lT_1) to (F_lT_1);
\draw [highlight implied edge,bend right=20] (F_lT_1) to (F_lT_2);
\draw [highlight edge,bend right=20] (F_lT_1) to (T_1);
\draw [highlight edge] (v) to (T_2);
\draw [highlight edge] (v) to (F_lT_2);
\draw [highlight edge] (v) to (F_lT_1);
\draw [highlight edge] (v) to (U_lT_1);
\draw [highlight edge,bend right=20] (F_lT_2) to (T_2);
\draw [highlight implied edge,bend right=20] (F_lT_2) to (T_1);
\draw [essential edge,bend right=20] (T_1) to (F_rT_1);
\draw [highlight implied edge,bend right=20] (T_1) to (T_3);
\draw [highlight implied edge,bend right=20] (T_1) to (T_2);
\end{tikzpicture}
\end{center}}
\opt{NOTIKZ}{\begin{center}\includegraphics{\imgs 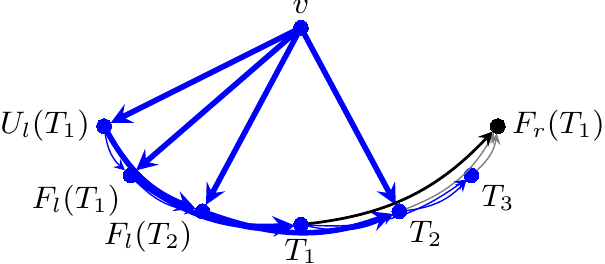}\end{center}}\subsubsection{Implied cases}\par By~\ref{app:subprocedures:center and right dominated:case 2:1}, at least one of \{\mbox{($v$, $F_l(T_2)$)}\} must belong to the graph when all of \{$\emptyset$\} are present.  Thus, the following cases are solved: \begin{enumerate} \setcounter{enumi}{8}\item \{\mbox{($v$, $F_r(T_1)$)}\}, \item \{\mbox{$F_r(T_1)$ $\to$ $U_l(T_1)$}\}, \item \{\mbox{($v$, $U_l(T_1)$)}\}, \item \{\mbox{($v$, $F_l(T_1)$)}\}, \item \{\mbox{$U_l(T_1)$ $\to$ $F_l(T_2)$}\}, \item \{\mbox{$F_r(T_1)$ $\to$ $U_l(T_1)$}, \mbox{($v$, $F_r(T_1)$)}\}, \item \{\mbox{($v$, $F_r(T_1)$)}, \mbox{($v$, $U_l(T_1)$)}\}, \item \{\mbox{($v$, $F_r(T_1)$)}, \mbox{($v$, $F_l(T_1)$)}\}, \item \{\mbox{$U_l(T_1)$ $\to$ $F_l(T_2)$}, \mbox{($v$, $F_r(T_1)$)}\}, \item \{\mbox{$F_r(T_1)$ $\to$ $U_l(T_1)$}, \mbox{($v$, $U_l(T_1)$)}\}, \item \{\mbox{$F_r(T_1)$ $\to$ $U_l(T_1)$}, \mbox{($v$, $F_l(T_1)$)}\}, \item \{\mbox{$F_r(T_1)$ $\to$ $U_l(T_1)$}, \mbox{$U_l(T_1)$ $\to$ $F_l(T_2)$}\}, \item \{\mbox{($v$, $U_l(T_1)$)}, \mbox{($v$, $F_l(T_1)$)}\}, \item \{\mbox{$U_l(T_1)$ $\to$ $F_l(T_2)$}, \mbox{($v$, $U_l(T_1)$)}\}, \item \{\mbox{$U_l(T_1)$ $\to$ $F_l(T_2)$}, \mbox{($v$, $F_l(T_1)$)}\}, \item \{\mbox{$F_r(T_1)$ $\to$ $U_l(T_1)$}, \mbox{($v$, $F_r(T_1)$)}, \mbox{($v$, $U_l(T_1)$)}\}, \item \{\mbox{$F_r(T_1)$ $\to$ $U_l(T_1)$}, \mbox{($v$, $F_r(T_1)$)}, \mbox{($v$, $F_l(T_1)$)}\}, \item \{\mbox{$F_r(T_1)$ $\to$ $U_l(T_1)$}, \mbox{$U_l(T_1)$ $\to$ $F_l(T_2)$}, \mbox{($v$, $F_r(T_1)$)}\}, \item \{\mbox{($v$, $F_r(T_1)$)}, \mbox{($v$, $U_l(T_1)$)}, \mbox{($v$, $F_l(T_1)$)}\}, \item \{\mbox{$U_l(T_1)$ $\to$ $F_l(T_2)$}, \mbox{($v$, $F_r(T_1)$)}, \mbox{($v$, $U_l(T_1)$)}\}, \item \{\mbox{$U_l(T_1)$ $\to$ $F_l(T_2)$}, \mbox{($v$, $F_r(T_1)$)}, \mbox{($v$, $F_l(T_1)$)}\}, \item \{\mbox{$F_r(T_1)$ $\to$ $U_l(T_1)$}, \mbox{($v$, $U_l(T_1)$)}, \mbox{($v$, $F_l(T_1)$)}\}, \item \{\mbox{$F_r(T_1)$ $\to$ $U_l(T_1)$}, \mbox{$U_l(T_1)$ $\to$ $F_l(T_2)$}, \mbox{($v$, $U_l(T_1)$)}\}, \item \{\mbox{$F_r(T_1)$ $\to$ $U_l(T_1)$}, \mbox{$U_l(T_1)$ $\to$ $F_l(T_2)$}, \mbox{($v$, $F_l(T_1)$)}\}, \item \{\mbox{$U_l(T_1)$ $\to$ $F_l(T_2)$}, \mbox{($v$, $U_l(T_1)$)}, \mbox{($v$, $F_l(T_1)$)}\}, \item \{\mbox{$F_r(T_1)$ $\to$ $U_l(T_1)$}, \mbox{($v$, $U_l(T_1)$)}, \mbox{($v$, $F_r(T_1)$)}, \mbox{($v$, $F_l(T_1)$)}\}, \item \{\mbox{$U_l(T_1)$ $\to$ $F_l(T_2)$}, \mbox{$F_r(T_1)$ $\to$ $U_l(T_1)$}, \mbox{($v$, $F_r(T_1)$)}, \mbox{($v$, $F_l(T_1)$)}\}, \item \{\mbox{$U_l(T_1)$ $\to$ $F_l(T_2)$}, \mbox{($v$, $U_l(T_1)$)}, \mbox{($v$, $F_r(T_1)$)}, \mbox{($v$, $F_l(T_1)$)}\}, \item \{\mbox{$U_l(T_1)$ $\to$ $F_l(T_2)$}, \mbox{$F_r(T_1)$ $\to$ $U_l(T_1)$}, \mbox{($v$, $U_l(T_1)$)}, \mbox{($v$, $F_l(T_1)$)}\}, \item \{\mbox{$U_l(T_1)$ $\to$ $F_l(T_2)$}, \mbox{$F_r(T_1)$ $\to$ $U_l(T_1)$}, \mbox{($v$, $U_l(T_1)$)}, \mbox{($v$, $F_r(T_1)$)}, \mbox{($v$, $F_l(T_1)$)}\}\end{enumerate}\par By~\ref{app:subprocedures:center and right dominated:case 2:2}, at least one of \{\mbox{$U_l(T_1)$ $\to$ $F_l(T_2)$}, \mbox{($v$, $U_l(T_1)$)}, \mbox{($v$, $F_l(T_1)$)}\} must belong to the graph when all of \{\mbox{($v$, $F_l(T_2)$)}\} are present.  Thus, the following cases are solved: \begin{enumerate} \setcounter{enumi}{38}\item \{\mbox{($v$, $F_l(T_2)$)}, \mbox{($v$, $F_r(T_1)$)}\}, \item \{\mbox{$F_r(T_1)$ $\to$ $U_l(T_1)$}, \mbox{($v$, $F_l(T_2)$)}\}, \item \{\mbox{$F_r(T_1)$ $\to$ $U_l(T_1)$}, \mbox{($v$, $F_l(T_2)$)}, \mbox{($v$, $F_r(T_1)$)}\}\end{enumerate}\par By~\ref{app:subprocedures:center and right dominated:case 2:3}, at least one of \{\mbox{$U_l(T_1)$ $\to$ $F_l(T_2)$}, \mbox{($v$, $F_l(T_1)$)}\} must belong to the graph when all of \{\mbox{($v$, $F_l(T_2)$)}, \mbox{($v$, $U_l(T_1)$)}\} are present.  Thus, the following cases are solved: \begin{enumerate} \setcounter{enumi}{41}\item \{\mbox{($v$, $F_l(T_2)$)}, \mbox{($v$, $F_r(T_1)$)}, \mbox{($v$, $U_l(T_1)$)}\}, \item \{\mbox{$F_r(T_1)$ $\to$ $U_l(T_1)$}, \mbox{($v$, $F_l(T_2)$)}, \mbox{($v$, $U_l(T_1)$)}\}, \item \{\mbox{$F_r(T_1)$ $\to$ $U_l(T_1)$}, \mbox{($v$, $F_l(T_2)$)}, \mbox{($v$, $U_l(T_1)$)}, \mbox{($v$, $F_r(T_1)$)}\}\end{enumerate}\par By~\ref{app:subprocedures:center and right dominated:case 2:4}, at least one of \{\mbox{($v$, $U_l(T_1)$)}\} must belong to the graph when all of \{\mbox{($v$, $F_l(T_2)$)}, \mbox{($v$, $F_l(T_1)$)}\} are present.  Thus, the following cases are solved: \begin{enumerate} \setcounter{enumi}{44}\item \{\mbox{($v$, $F_l(T_2)$)}, \mbox{($v$, $F_r(T_1)$)}, \mbox{($v$, $F_l(T_1)$)}\}, \item \{\mbox{$F_r(T_1)$ $\to$ $U_l(T_1)$}, \mbox{($v$, $F_l(T_2)$)}, \mbox{($v$, $F_l(T_1)$)}\}, \item \{\mbox{$U_l(T_1)$ $\to$ $F_l(T_2)$}, \mbox{($v$, $F_l(T_2)$)}, \mbox{($v$, $F_l(T_1)$)}\}, \item \{\mbox{$F_r(T_1)$ $\to$ $U_l(T_1)$}, \mbox{($v$, $F_l(T_2)$)}, \mbox{($v$, $F_r(T_1)$)}, \mbox{($v$, $F_l(T_1)$)}\}, \item \{\mbox{$U_l(T_1)$ $\to$ $F_l(T_2)$}, \mbox{($v$, $F_l(T_2)$)}, \mbox{($v$, $F_r(T_1)$)}, \mbox{($v$, $F_l(T_1)$)}\}, \item \{\mbox{$U_l(T_1)$ $\to$ $F_l(T_2)$}, \mbox{$F_r(T_1)$ $\to$ $U_l(T_1)$}, \mbox{($v$, $F_l(T_2)$)}, \mbox{($v$, $F_l(T_1)$)}\}, \item \{\mbox{$U_l(T_1)$ $\to$ $F_l(T_2)$}, \mbox{$F_r(T_1)$ $\to$ $U_l(T_1)$}, \mbox{($v$, $F_l(T_2)$)}, \mbox{($v$, $F_r(T_1)$)}, \mbox{($v$, $F_l(T_1)$)}\}\end{enumerate}\par By~\ref{app:subprocedures:center and right dominated:case 2:5}, at least one of \{\mbox{($v$, $U_l(T_1)$)}\} must belong to the graph when all of \{\mbox{$U_l(T_1)$ $\to$ $F_l(T_2)$}, \mbox{($v$, $F_l(T_2)$)}\} are present.  Thus, the following cases are solved: \begin{enumerate} \setcounter{enumi}{51}\item \{\mbox{$U_l(T_1)$ $\to$ $F_l(T_2)$}, \mbox{($v$, $F_l(T_2)$)}, \mbox{($v$, $F_r(T_1)$)}\}, \item \{\mbox{$F_r(T_1)$ $\to$ $U_l(T_1)$}, \mbox{$U_l(T_1)$ $\to$ $F_l(T_2)$}, \mbox{($v$, $F_l(T_2)$)}\}, \item \{\mbox{$U_l(T_1)$ $\to$ $F_l(T_2)$}, \mbox{$F_r(T_1)$ $\to$ $U_l(T_1)$}, \mbox{($v$, $F_l(T_2)$)}, \mbox{($v$, $F_r(T_1)$)}\}\end{enumerate}\par By~\ref{app:subprocedures:center and right dominated:case 2:6}, at least one of \{\mbox{$U_l(T_1)$ $\to$ $F_l(T_2)$}\} must belong to the graph when all of \{\mbox{($v$, $F_l(T_2)$)}, \mbox{($v$, $U_l(T_1)$)}, \mbox{($v$, $F_l(T_1)$)}\} are present.  Thus, the following cases are solved: \begin{enumerate} \setcounter{enumi}{54}\item \{\mbox{($v$, $F_l(T_2)$)}, \mbox{($v$, $U_l(T_1)$)}, \mbox{($v$, $F_r(T_1)$)}, \mbox{($v$, $F_l(T_1)$)}\}, \item \{\mbox{$F_r(T_1)$ $\to$ $U_l(T_1)$}, \mbox{($v$, $F_l(T_2)$)}, \mbox{($v$, $U_l(T_1)$)}, \mbox{($v$, $F_l(T_1)$)}\}, \item \{\mbox{$F_r(T_1)$ $\to$ $U_l(T_1)$}, \mbox{($v$, $F_l(T_2)$)}, \mbox{($v$, $U_l(T_1)$)}, \mbox{($v$, $F_r(T_1)$)}, \mbox{($v$, $F_l(T_1)$)}\}\end{enumerate}\par By~\ref{app:subprocedures:center and right dominated:case 2:7}, at least one of \{\mbox{($v$, $F_l(T_1)$)}\} must belong to the graph when all of \{\mbox{$U_l(T_1)$ $\to$ $F_l(T_2)$}, \mbox{($v$, $F_l(T_2)$)}, \mbox{($v$, $U_l(T_1)$)}\} are present.  Thus, the following cases are solved: \begin{enumerate} \setcounter{enumi}{57}\item \{\mbox{$U_l(T_1)$ $\to$ $F_l(T_2)$}, \mbox{($v$, $F_l(T_2)$)}, \mbox{($v$, $U_l(T_1)$)}, \mbox{($v$, $F_r(T_1)$)}\}, \item \{\mbox{$U_l(T_1)$ $\to$ $F_l(T_2)$}, \mbox{$F_r(T_1)$ $\to$ $U_l(T_1)$}, \mbox{($v$, $F_l(T_2)$)}, \mbox{($v$, $U_l(T_1)$)}\}\end{enumerate}\par By~\ref{app:subprocedures:center and right dominated:case 2:8}, at least one of \{$\emptyset$\} must belong to the graph when all of \{\mbox{$U_l(T_1)$ $\to$ $F_l(T_2)$}, \mbox{($v$, $F_l(T_2)$)}, \mbox{($v$, $U_l(T_1)$)}, \mbox{($v$, $F_l(T_1)$)}\} are present.  Thus, the following cases are solved: \begin{enumerate} \setcounter{enumi}{59}\item \{\mbox{$U_l(T_1)$ $\to$ $F_l(T_2)$}, \mbox{($v$, $F_l(T_2)$)}, \mbox{($v$, $U_l(T_1)$)}, \mbox{($v$, $F_r(T_1)$)}, \mbox{($v$, $F_l(T_1)$)}\}, \item \{\mbox{$U_l(T_1)$ $\to$ $F_l(T_2)$}, \mbox{$F_r(T_1)$ $\to$ $U_l(T_1)$}, \mbox{($v$, $F_l(T_2)$)}, \mbox{($v$, $U_l(T_1)$)}, \mbox{($v$, $F_l(T_1)$)}\}, \item \{\mbox{$U_l(T_1)$ $\to$ $F_l(T_2)$}, \mbox{$F_r(T_1)$ $\to$ $U_l(T_1)$}, \mbox{($v$, $F_l(T_2)$)}, \mbox{($v$, $U_l(T_1)$)}, \mbox{($v$, $F_r(T_1)$)}, \mbox{($v$, $F_l(T_1)$)}\}\end{enumerate}
\input{lemmaBadBlockLong}
\input{lemmaBadBlockShortCase1}
\input{lemmaBadBlockShortCase2}
\input{lemmaRangeCase1}
\input{lemmaRangeCase2}
\input{lemmaReceptionFailsOppositeBlockCase1}
\input{lemmaReceptionFailsOppositeBlockCase2}
\input{lemmaReceptionFailsOppositeBlockCase3}
\input{lemmaReceptionFailsOppositeBlockCase4}
\input{lemmaReceptionFailsCase1}
\input{lemmaReceptionFailsCase2}
\input{lemmaInsertionJoinCase1}
\input{lemmaInsertionJoinCase2}
\input{lemmaInsertionJoinCase3}
\input{lemmaInsertionJoinCase4}

\end{document}